\newtheorem{theorem}{Theorem}[section]
\newtheorem{definition}[theorem]{Definition}
\newtheorem{corollary}[theorem]{Corollary}
\newtheorem{proposition}[theorem]{Proposition}
\newtheorem{lemma}[theorem]{Lemma}
\newtheorem{remark}[theorem]{Remark}
\newtheorem*{conjecture*}{Conjecture}
\newtheorem*{theorem*}{Theorem}
\newtheorem*{corollary*}{Corollary}
\newcommand{\So}{\mathscr{S}}
\newcommand{\Sop}{\tilde{\mathscr{S}}}
\newcommand{\Sf}{\accentset{\land}{\mathscr{S}}}
\newcommand{\Si}{\underaccent{\lor}{\mathscr{S}}}
\newcommand{\ns}{\slashed{\nabla}}
\newcommand{\divs}{\slashed{\mathrm{div}}}
\newcommand{\ds}{\slashed{\Delta}}
\newcommand{\curls}{\slashed{\mathrm{curl}}}
\newcommand{\Dst}{\slashed{\mathcal{D}}_2^\star}
\newcommand{\Dso}{\slashed{\mathcal{D}}_1^\star}
\newcommand{\Ab}{\underline{A}}
\newcommand{\Olin}{\Omega^{-1}\accentset{\scalebox{.6}{\mbox{\tiny (1)}}}{\Omega}}
\newcommand{\Olino}{\accentset{\scalebox{.6}{\mbox{\tiny (1)}}}{\Omega}}
\newcommand{\glinh}{\accentset{\scalebox{.6}{\mbox{\tiny (1)}}}{\hat{\slashed{g}}}}
\newcommand{\glin}{\accentset{\scalebox{.6}{\mbox{\tiny (1)}}}{\slashed{g}}}
  \newcommand{\glinto}{\accentset{\scalebox{.6}{\mbox{\tiny (1)}}}{\sqrt{\slashed{g}}}}
\newcommand{\bmlin}{\accentset{\scalebox{.6}{\mbox{\tiny (1)}}}{b}}
\newcommand{\xblin}{\accentset{\scalebox{.6}{\mbox{\tiny (1)}}}{\underline{\hat{\chi}}}}
\newcommand{\xlin}{\accentset{\scalebox{.6}{\mbox{\tiny (1)}}}{{\hat{\chi}}}}
\newcommand{\eblin}{\accentset{\scalebox{.6}{\mbox{\tiny (1)}}}{\underline{\eta}}}
\newcommand{\elin}{\accentset{\scalebox{.6}{\mbox{\tiny (1)}}}{{\eta}}}
\newcommand{\otx}{\accentset{\scalebox{.6}{\mbox{\tiny (1)}}}{\left(\Omega \mathrm{tr} \chi\right)}}
\newcommand{\otxb}{\accentset{\scalebox{.6}{\mbox{\tiny (1)}}}{\left(\Omega \mathrm{tr} \underline{\chi}\right)}}
\newcommand{\olin}{\accentset{\scalebox{.6}{\mbox{\tiny (1)}}}{\omega}}
\newcommand{\olinb}{\accentset{\scalebox{.6}{\mbox{\tiny (1)}}}{\underline{\omega}}}
\newcommand{\ablin}{\accentset{\scalebox{.6}{\mbox{\tiny (1)}}}{\underline{\alpha}}}
\newcommand{\alin}{\accentset{\scalebox{.6}{\mbox{\tiny (1)}}}{{\alpha}}}
\newcommand{\bblin}{\accentset{\scalebox{.6}{\mbox{\tiny (1)}}}{\underline{\beta}}}
\newcommand{\blin}{\accentset{\scalebox{.6}{\mbox{\tiny (1)}}}{{\beta}}}
\newcommand{\rlin}{\accentset{\scalebox{.6}{\mbox{\tiny (1)}}}{\rho}}
\newcommand{\slin}{\accentset{\scalebox{.6}{\mbox{\tiny (1)}}}{{\sigma}}}
\newcommand{\Klin}{\accentset{\scalebox{.6}{\mbox{\tiny (1)}}}{K}}
\newcommand{\abh}{\hat{\underline{\mathrm{A}}}}
\DeclareMathAlphabet\mathbfcal{OMS}{cmsy}{b}{n}
\title{Uniform Boundedness for Solutions to the Teukolsky Equation on Schwarzschild from Conservation Laws of Linearised Gravity}
\author[1]{Sam C. Collingbourne\thanks{scc@math.columbia.edu}}
\author[2,3]{Gustav Holzegel\thanks{gholzegel@uni-muenster.de}}
\affil[1]{\small Columbia University, Department of Mathematics, \vskip.1pc \  2990 Broadway,~New York,~NY 10027,~USA  \vskip.2pc \ } 
\affil[2]{\small Westf\"alische Wilhelms-Universit\"at M\"unster,
Mathematisches~Institut,~Einsteinstrasse~62,~48149~M\"unster,~Bundesrepublik~Deutschland \vskip.2pc \ }
\affil[3]{\small Imperial College London,
Department of Mathematics,
South~Kensington~Campus,~London~SW7~2AZ,~United~Kingdom}
\begin{document}

\maketitle
\abstract{We consider the equations of linearised gravity on the Schwarzschild spacetime in a double null gauge. Applying suitably commuted versions of the conservation laws derived in earlier work of the second author we establish control on the gauge invariant Teukolsky quantities $\upalpha^{[\pm 2]}$ without any reference to the decoupled Teukolsky wave equation satisfied by these quantities. More specifically, we uniformly bound the energy flux of all first derivatives of $\upalpha^{[\pm 2]}$ along any outgoing cone from an initial data quantity at the level of first derivatives of the linearised curvature and second derivatives of the linearised connection components. Analogous control on the energy fluxes along any ingoing cone is established a posteriori directly from the Teukolsky equation using the outgoing bounds. 
}
\tableofcontents

\section{Introduction}

The linear Teukolsky equation of spin $\pm 2$ associated with the $2$-parameter family of Kerr spacetimes $g_{M,a}$ plays a central role in understanding the dynamical stability properties of the Kerr family of black holes as solutions to the vacuum Einstein equations~\cite{Teukolsky}. In Boyer-Lindquist coordinates it is given by
\begin{align} \label{Teuintro}
 &\Box_{g_{M,a}} \upalpha^{[\pm 2]}+\frac{2(\pm2)}{\rho^2}(r-M)\partial_r\upalpha^{[\pm 2]}+\frac{2(\pm 2)}{\rho^2}\Big[\frac{M(r^2-a^2)}{\Delta}-r-ia\cos\theta\Big]\partial_t\upalpha^{[\pm 2]}\\
 &\qquad\nonumber +\frac{2(\pm 2)}{\rho^2}\Big[\frac{a(r-M)}{\Delta}+\frac{i \cos\theta}{\sin^2\theta}\Big]\partial_{\phi}\upalpha^{[\pm 2]}+\frac{1}{\rho^2}\Big((\pm 2)-(\pm 2)^2\cot^2\theta\Big)\upalpha^{[\pm 2]}=0,
\end{align}
where the quantities $\upalpha^{[\pm2]}$ are spin-weighted functions 
corresponding to the two ``extremal" linearised null Riemann curvature components (with respect to the algebraically special null frame of Kerr) in the full linearisation of the Einstein equations near Kerr. Remarkably, the $\upalpha^{[\pm2]}$ remain invariant both under infinitesimal coordinate diffeomorphisms and under infinitesimal frame changes, i.e.~they only change quadratically in the linearisation parameter. Moreover, it is expected that vanishing of both $\upalpha^{[+2]}$ and $\upalpha^{[-2]}$ reduces the space of solutions of the full system of linearised Einstein equations to (the infinite dimensional family of) infinitesimal diffeomorphisms and frame changes and (the finite dimensional family of) linearised Kerr solutions~\cite{Wald}. These facts, together with the decoupled equation (\ref{Teuintro}) that they satisfy, make quantitative bounds on $\upalpha^{[\pm2]}$ an essential ingredient in many approaches to the stability problem~\cite{DHR,DHR2,DHRT,YakovRita2,ABBSM19,KS20,KSK21, Elena, Gabriele1, Gabriele2}.

\subsection{Previous Results on the Teukolsky Equation}
The difficulty in establishing quantitative estimates for solutions of (\ref{Teuintro}) is rooted in  the equation itself not originating from a Lagrangian, which reflects in the fact that no conserved energy at the level of first derivatives is known for (\ref{Teuintro}). Nevertheless, in the past decade a complete understanding of the global behaviour of solutions to the Teukolsky equation in the form of quantitative bounds in term of initial data has been obtained, culminating in the recent~\cite{YakovRita1, YakovRita2}. See also~\cite{Millet} for precise asymptotic decay rates. Progress started with a complete treatment of the Schwarzschild case  in~\cite{DHR}, where the authors prove (amongst other things) boundedness and integrated decay estimates for the Teukolsky equation. 
The key insight was a physical space version of the transformation theory of Chandrasekhar~\cite{Chandrasekhar}: By applying a second order differential operator, the Teukolsky equation is transformed into a Regge-Wheeler-type equation, which \emph{does} originate from a Lagrangian and for which estimates are known -- in particular, there is a conserved energy!  Estimates for the $\upalpha^{[\pm 2]}$ themselves then follow by interpreting the aforementioned transformation as two successive transport equations. The inherent loss of derivatives can finally be recovered by elliptic estimates using the original Teukolsky equation.\footnote{Note, however, that as soon as the higher order transformation theory is invoked, the most elementary energy boundedness statement for $\upalpha^{[\pm2]}$, that does not lose derivatives, is necessarily at the level of at least three derivatives of $\upalpha^{[\pm 2]}$.} This strategy has since been generalised to the $|a|\ll M$ case~\cite{DHR2, Ma} and, as already mentioned, recently to the full sub-extremal case $|a|<M$~\cite{YakovRita2}. 
It should be stressed that there is a significant increase in complexity in the analysis for $|a|<M$. In that case both the classical Teukolsky-Starobinski identities and the seminal result of Whiting \cite{Whiting}, who proved mode stability for solutions of (\ref{Teuintro}) (i.e.~the absence of solutions which grow exponentially in time) constitute important ingredients in the analysis. See \cite{Yakov2, AMPBReal, Rita, AHWModeLE} for further quantitative mode stability results and a discussion of their role in the stability problem. 

\subsection{Canonical Energy and Conservation Laws of Linearised Gravity}
In view of its non-Lagrangian structure, the Killing symmetries of the Kerr metric do not directly translate into conservation laws for the Teukolsky equation (\ref{Teuintro}). There is nevertheless a sense in which the linearised Einstein equations inherit conservation laws. However, these are necessarily formulated at the level of the \emph{entire} system of gravitational perturbations and hence involve many of the linearised Ricci-coefficients and curvature components, not the Teukolsky quantities alone. The most well-known approach to this is via the canonical energy, which originated in Friedman~\cite{Friedman78} (see also~\cite{Chandrasekhar2,Chandrasekhar3}) and was further developed some years ago by Hollands and Wald~\cite{HollandsWald}. The main difficulty to successfully apply the resulting conservation laws, is to establish appropriate coercivity for the associated canonical energy on spacelike slices, a property that in view of the coupled nature of the Einstein equations is generally difficult to establish, even in a suitably chosen gauge. In full generality, this has remained an open problem that has so far prevented the direct use of the canonical energy for the stability problem.\footnote{On the other hand, the canonical energy has been used successfully to establish certain instability results in the asymptotically anti-de Sitter setting. See~\cite{GHIW16}.}

A resolution of this problem was proposed and carried out in the Schwarzschild case, $a=0$, in~\cite{ghconslaw}. Central to this approach are two conservation laws, which are established ``by direct inspection" from the system of gravitational perturbations expressed in double null gauge. In the next step, these conservation laws are expressed for regions bounded by null cones as indicated in the Penrose diagram below. 

\begin{figure}[H]
\centering
\begin{tikzpicture}
      \draw [thick] (6,3) -- (8.95,5.95);
      \draw [dashed]  (11.95,3.05)--(10.5,4.5);
       \draw [dashed] (9.05,5.95) -- (9.5,5.5);
       \draw [dashed] (9.05,0.05) -- (11.95,2.95);
    \draw [thick] (6,3) -- (8.95,0.05);

    \draw [thick,teal,fill=lightgray,opacity=0.5] plot  coordinates {(7,3)(9,5)(10,4)(8,2)(7,3)};
    
     \draw [thick,teal,opacity=0.5] (6.5,3.5) -- (7,3);
     \draw [dotted, thick,teal,opacity=0.5] (9,5) -- (9.5,5.5);
      \draw [thick,teal,opacity=0.5] (10,4) -- (10.5,4.5);
       \draw [dotted,thick,teal,opacity=0.5] (9.5,5.5) -- (10.5,4.5);
       
           \draw[teal,->,opacity=0.5] (9.5,4.5) -- (10,5);

      
      \node[mark size=2pt,inner sep=2pt] at (12,3) {$\circ$};
      \node[mark size=2pt] at (9,6) {$\circ$};
          \node[mark size=2pt] at (9,0) {$\circ$};

      \node (scriplus) at (11.5,4.25) {\large $\mathcal{I}^{+}$};
        \node (Hplus) at (6.5,4.25) {\large $\mathcal{H}^{+}$};   
        \node (scriplus) at (11.5,1.75) {\large $\mathcal{I}^{-}$};
        \node (Hplus) at (6.5,1.75) {\large $\mathcal{H}^{-}$};   
      \node (iplus) at (9.25,6.3) {\large $i^{+}$};
      \node (inaught) at (12.3,3) {\large $i^{0}$};
       \node at (7.8,4.2) {$C_{u_f}$};
            \node  at (9.15,2.7) {$C_{u_0}$};
            \node at (7.25,2.4) {$\underline{C}_{v_0}$};
            \node at (9.35,4.35) {$\underline{C}_{{v_f}}$}; 
            \node[rotate=45] at (8.5,2.75) {$\mathrm{data}$};
            \node[rotate=-45] at (7.5,2.75) {$\mathrm{data}$};

      \end{tikzpicture}  
\end{figure}

The associated fluxes on these cones are then shown to be gauge invariant up to boundary terms on spheres, a fact that can be used -- by adding an appropriately constructed pure gauge solution normalised to the outgoing cone $C_{u_f}$ -- to establish positivity of the fluxes after sending also the ingoing cone $\underline{C}_{v_f}$ to null infinity.\footnote{One uses here the fact that the Einstein constraint equations become ODEs on null hypersurfaces, which considerably simplifies the task to show coercivity of the fluxes. In addition, exploiting the decay towards null infinity of the linearised quantities (as following from local existence) and a cancellation of terms on the limiting sphere $S^2_{u_f,\infty}$ is paramount. See~\cite{ghconslaw}.} This leads, for instance, to a priori control on the linearised shear of the outgoing null cone $C_{u_f}$ (the shear remaining invariant under the addition of the pure gauge solution) and to a priori control on the flux radiated to null infinity and hence a weak form of linear stability. We remark that the conservation laws central to this approach have recently been shown to be in direct correspondence with the canonical energy~\cite{scthesis,ColCan}. In particular, with every conservation law stated in~\cite{ghconslaw} one can associate a suitably modified canonical energy in the sense of~\cite{HollandsWald}. Moreover, novel conservation laws in the style of~\cite{ghconslaw}  can be derived from suitable modifications of the canonical energy in~\cite{HollandsWald}.

\subsection{The Main Uniform Boundedness Theorem}

While it is expected that the approach outlined in the previous section (using a double null gauge for the linearised equations, expressing the conservation laws on null cones and carrying out the limiting argument at null infinity) also leads to stability statements for axisymmetric perturbations in the Kerr case, we return in this paper to the Schwarzschild case. More specifically, we will revisit the conservation laws proven in~\cite{ghconslaw} to 
provide a proof of boundedness for the Teukolsky quantities, which does not use the Teukolsky equation at all. 

To state the main theorem we need some standard notation that the reader can pick up from the beginning of Section \ref{sec:Preliminaries}. In particular, we employ standard Eddington-Finkelstein $(u,v,\theta,\phi)$-coordinates on Schwarzschild and use the notion of $S^2_{u,v}$-tensors (instead of spin-weighted functions) to formulate the equations in a double null gauge. The quantities $\alin$, $\ablin$ below are therefore symmetric traceless tensors and in direct one-to-one correspondence with the quantities $\upalpha^{[+2]}$, $\upalpha^{[-2]}$ introduced earlier. 

\subsubsection{Controlling fluxes on outgoing cones} \label{sec:introoutgoing}
Our main theorem can be formulated as follows (see Theorem~\ref{theo:gaugeinvariantoutgoing} for a more precise statement):

\begin{theorem} \label{theo:gaugeinvariantoutgoingintro}
Given a smooth solution of the system of gravitational perturbations on the Schwarzschild exterior arising from characteristic initial data on $C_{u_0} \cup \underline{C}_{v_0}$, we have the following flux estimates for the gauge invariant Teukolsky quantities $A= r \Omega^2\alin$ and $\underline{A}= r\Omega^2\ablin$ on outgoing cones $C_{u_f}$ with $u_f \in [u_0, \infty)$: 
\begin{align} \label{finalalphaintro}
\int_{v_0}^{\infty} \int_{S^2} dv  d\theta d\phi \sin \theta  \left[ r^4|\Omega^{-1} \slashed{\nabla}_3 A|^2 + | \Omega \slashed{\nabla}_4 A|^2+ |r\slashed{\nabla} A|^2 + |A|^2 \right] \left(u_f\right)   \lesssim  \mathbb{E}^2_{data} (u_f)  \, ,
\end{align}
\begin{align} \label{finalalphabintro}
 \int_{v_0}^{\infty} \int_{S^2} dv  d\theta d\phi \sin \theta \Omega^2 \left[  r^2 | \Omega \slashed{\nabla}_4\underline{A}|^2+ | \slashed{\nabla}\underline{A}|^2 + \frac{1}{r^2} |\underline{A}|^2 \right]  \left(u_f\right)  \lesssim \mathbb{E}^2_{data} (u_f) \, ,
\end{align}
where $\mathbb{E}^2_{data} (u_f)$ is an initial data energy on $\left(C_{u_0} \cap \{u \leq u_f\} \right) \cup \underline{C}_{v_0}$ with $\sup_{u_f \in [u_0,\infty)} \mathbb{E}^2_{data} (u_f)<\infty$ for regular initial data.
\end{theorem}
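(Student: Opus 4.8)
The plan is to start from the two conservation laws for the full system of linearised gravity on Schwarzschild in double null gauge derived in~\cite{ghconslaw}, used in the form of exact spacetime divergence identities $\divs(\cdot)+\Omega^{-1}\slashed{\nabla}_3(\cdot)+\Omega\slashed{\nabla}_4(\cdot)=0$ whose fluxes through a null cone are quadratic in the linearised curvature and connection coefficients. From these I would generate a hierarchy of commuted conservation laws: commute with the stationary Killing field $T$, with the angular operators $\slashed{\nabla}$ (legitimate by spherical symmetry of the background), and with $r$-weighted, redshift-type multipliers tuned to reproduce the $r$-weights on the left-hand sides of~\eqref{finalalphaintro}--\eqref{finalalphabintro} near $\mathcal{I}^+$ and the $\Omega^2$-weights near $\mathcal{H}^+$. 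Only the first two commutations are exact symmetries; the $r$-weighted ones produce spacetime error terms, which must be shown to have a favourable sign or to be absorbable by fluxes already under control, in the spirit of the $r^p$-hierarchy.

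Next I would integrate each commuted conservation law over the spacetime region of the Penrose diagram above, bounded by the data cones $C_{u_0}\cup\underline{C}_{v_0}$ and the cones $C_{u_f}$, $\underline{C}_{v_f}$, obtaining an identity
\begin{align*}
\mathcal{F}[C_{u_f}]+\mathcal{F}[\underline{C}_{v_f}] = \mathcal{F}[C_{u_0}\cap\{u\le u_f\}] + \mathcal{F}[\underline{C}_{v_0}] + (\text{spacetime errors}) + (\text{boundary terms on spheres}) \, .
\end{align*}
The decisive step is then the gauge normalisation of~\cite{ghconslaw}: one adds to the solution an explicit pure gauge solution adapted to $C_{u_f}$, chosen -- by solving the linearised constraint (transport) equations along $C_{u_f}$ -- so that the non-gauge-invariant connection and curvature components entering $\mathcal{F}[C_{u_f}]$ are made to vanish on $C_{u_f}$. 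Since $\alin$, $\ablin$, hence $A$, $\underline{A}$ and all their derivatives, are invariant under addition of a pure gauge solution, and since the fluxes are gauge invariant up to boundary terms on spheres, the normalised $\mathcal{F}[C_{u_f}]$ becomes manifestly coercive in exactly the gauge-invariant quantities -- controlling from below the shear $\xlin$ on $C_{u_f}$ and, more to the point, the integrands of~\eqref{finalalphaintro} and~\eqref{finalalphabintro}. To capture the full set of first derivatives -- in particular the transversal derivative $r^2\Omega^{-1}\slashed{\nabla}_3 A$, which is not tangent to $C_{u_f}$ -- I would supplement the coercive tangential control with the linearised Bianchi and null structure equations restricted to $C_{u_f}$, which express $\Omega^{-1}\slashed{\nabla}_3 A$ through $\slashed{\nabla}\widehat{\otimes}\blin$ and lower-order quantities, together with elliptic/Poincar\'e estimates on the spheres $S^2_{u_f,v}$; this is the one place where the null structure of the full system -- though never the decoupled Teukolsky equation -- is invoked.

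Finally I would run the limiting argument $v_f\to\infty$, sending $\underline{C}_{v_f}$ to $\mathcal{I}^+$: using the decay of the linearised quantities towards null infinity (from local existence and propagation of the decay present in regular data) together with the cancellation of terms on the limiting sphere $S^2_{u_f,\infty}$ emphasised in~\cite{ghconslaw}, one shows that $\mathcal{F}[\underline{C}_{v_f}]$ tends to a non-negative boundary term at $\mathcal{I}^+$ (or vanishes), while the generator of $\mathcal{H}^+$ contributes with the correct sign thanks to the $\Omega^2$- and redshift-weights. Hence $\mathcal{F}[C_{u_f}]\lesssim\mathbb{E}^2_{data}(u_f)$, where $\mathbb{E}^2_{data}(u_f)$ is the explicit data flux produced by the procedure -- living at the level of one derivative of the linearised curvature and two derivatives of the linearised connection -- and one closes by checking that, despite its dependence on the normalisation cone $C_{u_f}$, the $r$-weights of the commuted data fluxes are integrable against the fall-off of regular data uniformly in $u_f\in[u_0,\infty)$.

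The main obstacle I anticipate is two-fold. First, the coercivity of the \emph{commuted} and \emph{gauge-normalised} fluxes: one must check that every lower-order term and every sphere boundary term generated by the commutations is non-negative or absorbable, and -- more subtly -- that the pure gauge solution normalising $C_{u_f}$ is compatible with the commutations, i.e.\ that the commuted flux is still annihilated by a suitably commuted pure gauge solution. Second, threading the single $r^4$-weighted transversal estimate for $\Omega^{-1}\slashed{\nabla}_3 A$ through both asymptotic regimes at once -- the $r$-weights near $\mathcal{I}^+$ and the horizon-degenerate weights near $\mathcal{H}^+$ -- without the crutch of the Teukolsky equation; this is where the bulk of the work will lie.
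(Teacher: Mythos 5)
Your outline captures the right scaffolding -- the two conservation laws from~\cite{ghconslaw}, commutation with Killing fields, normalisation to $C_{u_f}$ via a pure gauge solution, the limit $v_f\to\infty$, and finally Bianchi/null structure equations to recover $A$, $\underline{A}$ -- but two points in your plan are wrong or materially incomplete. First, the paper commutes \emph{only} with the exact Killing operators $\slashed{\mathcal{L}}_T$ and $\slashed{\mathcal{L}}_{\Omega_k}$; there is no $r$-weighted or redshift commutator, and none is needed. The $r^4$-weight on $|\Omega^{-1}\slashed{\nabla}_3 A|^2$ in~\eqref{finalalphaintro} comes directly from the Bianchi equation~\eqref{Bianchi1}, $r^2\Omega\slashed{\nabla}_3 A = 2r^2\Omega^2[r\slashed{\mathcal{D}}_2^\star](\Omega\blin) - 6M\Omega^2\Omega\xlin$, while the $\Omega^2$-degeneration in~\eqref{finalalphabintro} is already intrinsic to the conservation-law fluxes (the redshift only appears later, in the proof of the \emph{ingoing} Corollary~\ref{cor:main}). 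Introducing $r^p$-type multipliers would reintroduce precisely the non-Lagrangian bulk errors the whole strategy is built to avoid, and your region has no horizon boundary at all since $u_f<\infty$, so the sentence about ``the generator of $\mathcal{H}^+$ contributing with the correct sign'' addresses a boundary that is not there.

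Second, and more seriously, your claim that after normalisation ``the normalised $\mathcal{F}[C_{u_f}]$ becomes manifestly coercive'' is false for the $T$-commuted fluxes, and your proposed remedy (a ``suitably commuted pure gauge solution'') is not what makes the argument close. Because $T$ is transversal to $C_{u_f}$, the quantities $T\otx_{\Sf'}$, $T\olin_{\Sf'}$, $T\Olin_{\Sf'}$ do \emph{not} vanish along $C_{u_f}$ even though $\otx_{\Sf'}=\olin_{\Sf'}=\Olin_{\Sf'}=0$ there, so the cross-terms in~\eqref{vflux}--\eqref{uflux2}, including the definitely negative $-\frac12|\slashed{\nabla}_T\otx_{\Sf'}|^2$, survive. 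Normalising $T$-commuted quantities to vanish would require a different gauge function at each commutation order, after which the quantities entering the final reconstruction of $\alin$, $\ablin$ live in mutually inconsistent gauges. What the paper actually does is keep a single gauge $\Sf'$ and run a bootstrap (Steps 1--8 of Section~\ref{sec:2oe}): each non-coercive cross-term is rewritten using the $T$-identities of Proposition~\ref{prop:TTransport} (e.g.\ $2T\otx_{\Sf'}=2\Omega^2(\divs\elin+\rlin)-\frac{\Omega^2}{r}\otxb$ along $C_{u_f}$) into fluxes of lower-commuted quantities already bounded, together with transport and elliptic estimates along $C_{u_f}$ (Propositions~\ref{prop:angularetab}--\ref{prop:higheromegab}). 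This sequential absorption of the cross-terms is the technical heart of the proof, and it is exactly the step your proposal leaves without a concrete mechanism.
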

We immediately remark that the fact that the \emph{transversal} derivative of $A$ is also appearing in the outgoing flux (\ref{finalalphaintro}) arises from the fact that there is a Bianchi equation relating that derivative to linearised curvature and Ricci coefficients whose flux we do control.

The energy $ \mathbb{E}^2_{data} (u_f)$  appearing on the right hand side of (\ref{finalalphaintro})--(\ref{finalalphabintro}) is defined in (\ref{energyinmaintheorem}) below. It contains bounds on initial data quantities on the cones $C_{u_0} \cap \{u \leq u_f\}$ and $\underline{C}_{v_0}$, which involve only first derivatives of linearised curvature components and up to second derivatives of connection coefficients.\footnote{While $ \mathbb{E}^2_{data} (u_f)$ is non-negative and uniformly bounded for smooth asymptotically flat data, it is not necessarily monotone in $u_f$. This is because it involves both (non-manifestly coercive) fluxes along $C_{u_0} \cup \underline{C}_{v_0}$ and bounds on the sphere $S^2_{u_f,v_0}$.} Hence, from the point of view of regularity of the full system of gravitational perturbations, there is no loss of regularity in the estimates (\ref{finalalphaintro})--(\ref{finalalphabintro}). It is in this sense that Theorem~\ref{theo:gaugeinvariantoutgoingintro} improves previous boundedness results for the Teukolsky equation in terms of regularity: Those results relied on the Chandrasekhar transformation theory, which requires certain third derivatives of $\alin$, $\ablin$ to be bounded initially even to establish boundedness of the first order fluxes appearing in Theorem~\ref{theo:gaugeinvariantoutgoingintro}.

We note that unlike (\ref{finalalphaintro}), the estimate (\ref{finalalphabintro}) degenerates strongly towards the horizon since it is $\Omega^{-2} \ablin$ and $\Omega^2 \alin$ which extend regularly to the event horizon. This degeneration is expected given that we are proving boundedness of the $T$-energy for the system of gravitational perturbations. If one is willing to invoke the Teukolsky equation itself, one can exploit the redshift (at the cost of including a weighted initial data energy for $\Omega^{-2} \ablin$ on the ingoing data cone $\underline{C}_{v_0}$) and obtain improved estimates leading finally to uniform boundedness of $\Omega^{-2} r\ablin$ as desired. See Corollary \ref{cor:main} below for more details.

\subsubsection{Controlling fluxes on ingoing cones}
Having bounded the fluxes on arbitrary \emph{outgoing} cones as in Theorem \ref{theo:gaugeinvariantoutgoingintro} it is easiest to obtain boundedness of the fluxes on arbitrary \emph{ingoing} cones from energy estimates for the Teukolsky equation.\footnote{The authors have so far been unable to prove an analogous ``ingoing” version of Theorem \ref{theo:gaugeinvariantoutgoingintro} directly from the conservation laws. This may or may not suggest that the structure of null infinity is essential to the argument.} We present the result for the quantity $A$ here. We first define the ingoing characteristic flux naturally associated with the wave equation for $A$ (see (\ref{TeuA})) and determined by a globally timelike vectorfield which behaves like the timelike Killing field $T$ asymptotically:
\begin{align}
\mathbb{F} [A] (v)= \int_{u_0}^\infty \int_{S^2} du d\theta d\phi \Omega^2 \sin \theta \left[ |\Omega^{-1} \slashed{\nabla}_3 A|^2 + |\slashed{\nabla} A|^2 + \frac{1}{r^2} |A|^2 \right] \, .
\end{align}
Using the estimate (\ref{finalalphaintro}) of Theorem \ref{theo:gaugeinvariantoutgoingintro} in conjunction with an energy estimate for $A$ we then obtain:
\begin{corollary}
Under the assumptions of Theorem \ref{theo:gaugeinvariantoutgoingintro} we have also the following estimate
\begin{align}
\sup_{v \geq v_0} \mathbb{F} [A] (v) \lesssim \mathbb{F} [A] (v_0) + \sup_{u_f \in [u_0, \infty)} \mathbb{E}^2_{data} (u_f) \, .
\end{align}
\end{corollary}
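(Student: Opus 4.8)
The plan is to pair the outgoing control already supplied by Theorem~\ref{theo:gaugeinvariantoutgoingintro} with a $T$-type energy estimate for the Teukolsky equation~(\ref{TeuA}) satisfied by $A$. Fix $v_f>v_0$ and work in the region $\mathcal{R}(v_0,v_f)=\{u\ge u_0,\ v_0\le v\le v_f\}$, whose boundary consists of the ingoing cones $\underline{C}_{v_0}$, $\underline{C}_{v_f}$ (restricted to $u\ge u_0$), the finite outgoing piece $C_{u_0}\cap\{v_0\le v\le v_f\}$, and the portion $\mathcal{H}^+\cap\{v_0\le v\le v_f\}$ of the horizon approached as $u\to\infty$; note $\mathcal{R}(v_0,v_f)$ does \emph{not} meet $\mathcal{I}^+$, so, in contrast to Theorem~\ref{theo:gaugeinvariantoutgoingintro}, no limiting argument at null infinity is involved. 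Choose a future-directed, globally timelike vectorfield $X$ asymptotic to $T$ such that the energy current $J^X[A]$ for~(\ref{TeuA}) has flux through a constant-$v$ cone equivalent — after a Hardy inequality for the zeroth-order term — to $\mathbb{F}[A](v)$, non-negative flux through $\mathcal{H}^+$, and flux through any $C_u$ dominated by the left-hand side of~(\ref{finalalphaintro}) at $u_f=u$. Applying the divergence theorem to $\nabla^\mu J^X_\mu[A]$ over $\mathcal{R}(v_0,v_f)$ yields
\begin{align}
\mathbb{F}[A](v_f)+\mathcal{F}_{\mathcal{H}^+}[A]\ \lesssim\ \mathbb{F}[A](v_0)+\mathcal{F}_{C_{u_0}}[A]+\Big|\,\iiint_{\mathcal{R}(v_0,v_f)}\nabla^\mu J^X_\mu[A]\,\Big|\,,
\end{align}
with $\mathcal{F}_{\mathcal{H}^+}[A]\ge 0$ (here one uses that, for the spin $+2$ quantity $A$, the chosen rescaling is the one extending regularly across the horizon).

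The remaining two boundary terms are harmless. The ingoing term at $v_0$ is, by construction, $\mathbb{F}[A](v_0)$. The outgoing term $\mathcal{F}_{C_{u_0}}[A]$ involves only $\Omega\slashed{\nabla}_4 A$, $\slashed{\nabla}A$ and $A$ on the bounded piece $C_{u_0}\cap\{v_0\le v\le v_f\}$, and since $r\ge 2M$ in the exterior these are all dominated by the left-hand side of~(\ref{finalalphaintro}) at $u_f=u_0$; hence $\mathcal{F}_{C_{u_0}}[A]\lesssim\mathbb{E}^2_{data}(u_0)\le\sup_{u_f\in[u_0,\infty)}\mathbb{E}^2_{data}(u_f)$. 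Dropping the non-negative horizon flux, the proof is thus reduced to controlling the spacetime error $\iiint_{\mathcal{R}(v_0,v_f)}\nabla^\mu J^X_\mu[A]$ by $\mathbb{E}^2_{data}$-type quantities together with a small multiple of $\mathbb{F}[A](v_f)$, with constants independent of $v_f$.

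This error term is the main obstacle, and it encodes the fact that~(\ref{TeuA}) does not arise from a Lagrangian: $\nabla^\mu J^X_\mu[A]$ is not zero even for $X=T$ but equals a quadratic form in the first derivatives of $A$ with coefficients given by the $r$-dependent first-order ("$\partial_t$-type") terms of~(\ref{TeuA}). I would integrate this by parts in $u$ and $v$ — disposing of the resulting $\slashed{\nabla}_3\slashed{\nabla}_4 A$ by substitution from~(\ref{TeuA}) itself — so as to trade $\nabla^\mu J^X_\mu[A]$ for (i) boundary fluxes on $\underline{C}_{v_0}$, $\underline{C}_{v_f}$, $C_{u_0}$ and $\mathcal{H}^+$ of strictly lower order than the corresponding principal fluxes, hence absorbed via Hardy inequalities into $\mathbb{F}[A](v_0)$, a small fraction of $\mathbb{F}[A](v_f)$, and $\mathbb{E}^2_{data}$; and (ii) a residual spacetime integral which, for the spin $+2$ equation, is controlled by a combination of its favourable sign, the non-negative horizon flux $\mathcal{F}_{\mathcal{H}^+}[A]$, and the \emph{outgoing} cone fluxes furnished by Theorem~\ref{theo:gaugeinvariantoutgoingintro}. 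Step (ii) is exactly where the outgoing estimate — as opposed to merely the initial data on $C_{u_0}\cup\underline{C}_{v_0}$ — enters, and is the source of the term $\sup_{u_f}\mathbb{E}^2_{data}(u_f)$; the delicate point I expect there is the bookkeeping of the integration-by-parts weights near the photon sphere and the horizon, arranging that the residual is genuinely $v_f$-uniformly absorbable and confirming the sign of $\mathcal{F}_{\mathcal{H}^+}[A]$ for $A$, the rest being a routine multiplier computation. This is also where the spin $+2$ choice matters: for $\underline{A}$ the horizon behaviour is the unfavourable one, which is why the analogous statement is obtained only for $A$ (and why the improved non-degenerate bound for $\underline{A}$ in Corollary~\ref{cor:main} requires the redshift). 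Collecting the estimates gives $\mathbb{F}[A](v_f)\lesssim\mathbb{F}[A](v_0)+\sup_{u_f}\mathbb{E}^2_{data}(u_f)$ uniformly in $v_f$; taking the supremum over $v_f\ge v_0$ completes the proof.
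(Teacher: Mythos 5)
You correctly identify the overall strategy — an energy estimate for the Teukolsky equation \eqref{TeuA} with a globally timelike multiplier, paired with the outgoing control of Theorem~\ref{theo:gaugeinvariantoutgoingintro} — but your treatment of the spacetime error has a genuine gap, and it is precisely the step the paper handles differently. You fix $v_1=v_0$ once and for all and then try to make the bulk over $\mathcal{R}(v_0,v_f)$ ``$v_f$-uniformly absorbable'' into a small multiple of $\mathbb{F}[A](v_f)$ plus data. This cannot work. However one integrates by parts and substitutes \eqref{TeuA}, the residual still contains terms of the schematic form $\int_{\mathcal{R}(v_0,v_f)}h(r)\,\Omega^2\big(|\slashed{\nabla}A|^2+r^{-2}|A|^2+\cdots\big)$ whose coefficient $h$ changes sign at $r=3M$ (it descends from the factor $r-3M$ in the first-order term of \eqref{TeuA}); the favourable-sign and horizon-flux mechanisms you invoke do nothing near the photon sphere. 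The only control there is the outgoing flux bound of Theorem~\ref{theo:gaugeinvariantoutgoingintro}, and integrating a per-cone bound over a $v$-range of length $v_f-v_0$ produces something $\lesssim D\cdot(v_f-v_0)+D$: \emph{linear} growth in $v_f$, not a uniform bound. Nothing on the left of your energy identity can absorb a term of that size without assuming the conclusion, and running Gr\"onwall on it gives exponential growth.

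The paper's proof (Section~\ref{sec:Aingoingest}) closes this by working on arbitrary slabs $\mathcal{A}(v_1,v_2)$ rather than only on $\mathcal{R}(v_0,v_f)$, obtaining $\mathbb{F}[A](v_2)\lesssim\mathbb{F}[A](v_1)+D\,(v_2-v_1)+D$, and — this is the ingredient you are missing — by converting the outgoing control into the spacetime estimate \eqref{ims}, which bounds the $\Omega^2$-degenerate energy over $\mathcal{A}(v_1,v_2)$ by $D(v_2-v_1)+D$. Since the integrand in \eqref{ims} dominates the one defining $\mathbb{F}[A]$, one gets $\int_{v_1}^{v_2}\mathbb{F}[A]\,dv\lesssim D(v_2-v_1)+D$; in particular $\int_{v-1}^{v}\mathbb{F}[A]\,dv\lesssim D$ uniformly in $v$, so by pigeonhole there is $v^*\in[v-1,v]$ with $\mathbb{F}[A](v^*)\lesssim D$, and the slab estimate on $[v^*,v]$ (width $\le 1$, so the linear-in-time term is harmless) gives $\mathbb{F}[A](v)\lesssim D$. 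So the fix is structural: keep $v_1$ free, abandon the absorption into $\mathbb{F}[A](v_f)$, and prove and use the analogue of \eqref{ims} together with the pigeonhole/localisation step. A smaller discrepancy: the paper does not integrate by parts in $u,v$ to treat the bulk at all — it simply applies Cauchy--Schwarz to the cross term and estimates the resulting degenerate spacetime energy by \eqref{ims}, which is considerably cleaner than the IBP scheme you sketch.
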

See Section \ref{sec:ingoing} for the analogous statement for $\underline{A}$ and the detailed proofs.

\subsubsection{Pointwise boundedness}
We finally observe that the estimate (\ref{finalalphaintro}) immediately implies
\begin{align} 
\sup_{u \geq u_0, v \geq v_0} \int_{S^2_{u,v}} \sin \theta d\theta d\phi |\Omega^2 \alin r|^2 \lesssim  \sup_{u_f \in [u_0, \infty)} \mathbb{E}^2_{data} (u_f) \label{coji} \, .
\end{align}
 From the fact that the equations of linearised gravity are invariant under suitable commutations with angular momentum operators one may obtain angular commuted versions of the estimates (\ref{finalalphaintro}) and (\ref{coji}) at the cost of a higher order angular commuted energies on the right hand side. Commuting twice in this way one obtains from (\ref{coji}) an $L^\infty$-estimate on $\Omega^2 \alin r$ through Sobolev embedding on the sphere. As this is standard, we will omit the details. There are analogous statements for $\ablin$, which a priori degenerate near the horizon but can be improved using the redshift effect. See Section \ref{sec:ubab}.

\subsection{The Main Ideas of the Proof}
We now indicate the main ideas to prove the estimates of Theorem~\ref{theo:gaugeinvariantoutgoingintro} focussing on the estimate (\ref{finalalphaintro}). We provide a discussion supplemented by some schematic formulae (using notation introduced later in the paper) hoping that readers familiar with the double null formalism can directly grasp the underlying ideas. The main steps are as follows:

\begin{enumerate}[(1)]
\item Apply the conservation laws and the coercivity argument of~\cite{ghconslaw} to appropriate angular commuted equations of linearised gravity (this exploits the spherical symmetry of the background) to estimate additional quantities \emph{in the gauge adapted to the cone $C_{u_f}$}, i.e.~where coercivity is manifest. This leads, for instance, to control of angular derivatives of $\xlin$, $\elin$, $\eblin$ on the outgoing cone in the notation of this paper. 

\item Apply the conservation laws and the coercivity argument of~\cite{ghconslaw} to the $T$-commuted equations of linearised gravity (this exploits the stationarity of the background) to estimate additional quantities \emph{in the gauge adapted to the cone $C_{u_f}$}. This leads, for instance, to control of the flux of $\slashed{\nabla}_T \xlin$ in the notation of this paper. The main difficulty in this step is that in the gauge adapted to the cone, coercivity of the outgoing flux is no longer manifest as certain cross-terms appearing in the flux no longer vanish after commutation with $T$ (since $T$ is not tangent to $C_{u_f}$). However, the bad terms can be controlled invoking the fluxes from Step 1 after inserting null structure equations.

\item Use the relevant transport equations in the linearised null structure equations to estimate additional quantities \emph{in the gauge adapted to the cone $C_{u_f}$}. For instance, $\xblin$ can be estimated on the outgoing cone $C_{u_f}$ using only the fluxes from Step 1 above.

\item Combine the control on quantities in the gauge adapted to the outgoing cone $C_{u_f}$ to estimate the \emph{gauge invariant} quantities $\alin$ and $\ablin$. For instance, we can write schematically in the notation of this paper (ignoring any weights in $r$ and at the horizon):
\[
 \alin \sim \Omega \slashed{\nabla}_4 \xlin + \xlin \sim \slashed{\nabla}_T \xlin + \xlin + \slashed{\nabla}\elin + \xblin \, , 
\]
where we have inserted the null structure equation for $\Omega \slashed{\nabla}_3 \xlin$ and used $2T = \Omega \slashed{\nabla}_3 + \Omega \slashed{\nabla}_4$. Since we control the flux of all quantities in the gauge adapted to the cone $C_{u_f}$ on the right, we control the flux of $\alin$ in the gauge adapted to the cone $C_{u_f}$ and hence -- by gauge invariance -- in any gauge. 
\end{enumerate} 
Of course keeping track of the weights in $r$ and $\Omega^2=1-\frac{2M}{r}$ as well as regularity is one of the technical challenges in the proof. Moreover, the above argument does not produce control on derivatives of $\alin$. These are obtained in a similar fashion using Bianchi and null structure equations with control on $T$-derivatives of certain curvature components on the outgoing cone.

\subsection{Some Open Directions}

From the  point of view of the stability problem, an immediate question is whether Theorem~\ref{theo:gaugeinvariantoutgoingintro} can be used in conjunction with the Teukolsky equation to derive an integrated decay estimate at the level of first derivatives of the Teukolsky quantities. Related to this, one may also ask whether the way to exploit the conversation laws presented here can be used to improve on (the regularity of) the scattering theory for the Teukolsky equations as developed in~\cite{MasaoodI,MasaoodII, Pham}, which so far also relies on the Chandrasekhar transformations. 

A further interesting direction is to try to generalise the argument to spacetime dimensions larger than four. Since the conservation laws persist in higher dimensions (but the decoupling of the Teukolsky quantities does not) following the strategy of this paper may lead to genuinely new boundedness results.

Finally, one would like to generalise the results to the case of axisymmetric perturbations of the Kerr metric, where in view of the absence of superradiance, positivity of the fluxes can reasonably be expected. There are various difficulties starting from the angular operators no longer commuting trivially and the required algebraic manipulations being much more challenging. We leave this for future work.

\subsection{Acknowledgements}
G.H.~acknowledges support by the Alexander von Humboldt Foundation in the framework of the Alexander von Humboldt Professorship endowed by the Federal Ministry of Education and Research, funding through ERC Consolidator Grant 772249 as well as Germany’s Excellence Strategy EXC 2044 390685587, Mathematics M\"unster: Dynamics–Geometry–Structure. The authors thank Gabriele Benomio for helpful comments on the manuscript.

\section{Preliminaries} \label{sec:Preliminaries}
We only provide the minimal set-up here to make the paper self-contained. The reader should consult~\cite{DHR, ghconslaw} for details and more background.

\subsection{The Schwarzschild Background}
\subsubsection{Coordinates}
We cover the exterior region of the Schwarzschild spacetime by double null Eddington--Finkelstein coordinates $(u,v,\theta,\phi)\in \mathbb{R}^2\times{S}^2_{u,v}$ with 
\begin{align*}
u=\frac{1}{2}(t-r_{\star}),\qquad v=\frac{1}{2}(t+r_{\star}),\qquad r_{\star}(r)\doteq (r-4M) + 2 M \ln\Big(\frac{r-2M}{2M}\Big) \, , 
\end{align*}
where $(t,r,\theta,\phi)$ are the standard Schwarzschild coordinates. The Schwarzschild metric in $(u,v,\theta,\phi)$ coordinates is
\begin{align}\label{eqn:SchDN}
g=-2\Omega^2(u,v)(du\otimes dv+dv\otimes du)+\slashed{g},\qquad \slashed{g}=r^2(u,v){\gamma},\qquad \Omega^2(u,v)=1-\frac{2M}{r(u,v)},
\end{align}
where~${{\gamma}}$ is the standard metric on the unit~$2$-sphere. In these coordinates constant $v$ or $u$ level sets, which we denote $\underline{C}_{v}$ and $C_u$, are ingoing or outgoing null hypersurfaces, respectively. Strictly speaking, the coordinates $(u,v,\theta,\phi)$ do not cover the future event horizon, $\mathcal{H}^+$, or future null infinity, $\mathcal{I}^+$. However, formally we may parameterise the future event horizon as $(\infty,v,\theta,\phi)$ and future null infinity as $(u,\infty,\theta,\phi)$.

\subsubsection{Null Frames and $S^2_{u,v}$-tensors}
We recall the standard future directed ingoing and outgoing null directions,
\begin{align*}
e_3\doteq\frac{1}{\Omega}\partial_u,\quad e_4\doteq\frac{1}{\Omega}\partial_v
\end{align*}
which may be completed to a local null frame (a basis for the tangent space at each point) by choosing a local frame $\{e_A\}_{A=1,2}$ tangent to the double null spheres $S^2_{u,v}$.

A major role in this paper will be played by (covariant) tensors on $S^2_{u,v}$, called 
$S^2_{u,v}$-tensors in short. A covariant $S^2_{u,v}$-tensor may be identified with a covariant spacetime tensor that vanishes when it acts on the vector fields $e_3$ or $e_4$. Two types of $S^2_{u,v}$-tensors are particularly important in this work, namely $S^2_{u,v}$ one-forms and $S^2_{u,v}$ symmetric-traceless $2$-tensors.

\subsubsection{Background Ricci and Curvature Quantities}
Using notation standard from~\cite{CK93}, we record the non-vanishing connection coefficients of Schwarzschild with respect to the above null frame. They are
\begin{align*}
\omega&\doteq-\frac{\Omega}{2}g(\nabla_{e_4}e_4,e_3)=\frac{M}{r^2},& \underline{\omega}&\doteq-\frac{\Omega}{2}g(\nabla_{e_3}e_3,e_4)=-\frac{M}{r^2},\\
\Omega\mathrm{tr}\chi&\doteq\Omega \slashed{g}^{AB}g(\nabla_{e_A}e_4,e_B)=\frac{2\Omega^2}{r},& \Omega\mathrm{tr}\underline{\chi}&\doteq\Omega \slashed{g}^{AB}g(\nabla_{e_A}e_3,e_B)=-\frac{2\Omega^2}{r} .
\end{align*}
The only non-vanishing null Weyl curvature component is
\begin{align*}
\rho\doteq\frac{1}{4}\mathrm{Riem}(e_3,e_4,e_3,e_4)=-\frac{2M}{r^3} \, .
\end{align*}
Finally, for $(S^2_{u,v},\slashed{g})$ we denote the connection coefficients by $\slashed{\Gamma}_{AB}^C$ and the Gaussian curvature by $K=\frac{1}{r^2}$.

\subsubsection{Killing Fields}
The Schwarzschild spacetime is static. We denote the Killing vector field associated to staticity with $T$, which in double null Eddington Finkelstein coordinates given by
\begin{align*}
T=\frac{1}{2}(\partial_u+\partial_v)=\frac{\Omega}{2}(e_3+e_4).
\end{align*}

Additionally, the Schwarzschild spacetime is spherically symmetric, so $SO(3)$ acts by isometry. We  define a basis of angular momentum operators $\Omega_1, \Omega_2, \Omega_3$ which generate the Lie algebra $\mathfrak{so}(3)$ by demanding that in the standard coordinates $(\theta,\phi)$ on the sphere we have
\begin{align} \label{angularoperators}
\Omega_1=\sin\phi\partial_{\theta}+\cot\theta\cos\phi\partial_{\phi},\qquad\Omega_2=\cos\phi\partial_{\theta}-\cot\theta\sin\phi\partial_{\phi},\qquad \Omega_3=\partial_{\phi}.
\end{align}

\subsubsection{Norms and Inner Products}
For $\Theta,\Phi\in\Gamma\big(\bigotimes^n_{i=1}TS^2_{u,v}\big)$ we define
\begin{align*}
\langle \Theta,\Phi\rangle&\doteq\slashed{g}^{A_1B_1}...\slashed{g}^{A_nB_n}\Theta_{A_1...A_n}\Phi_{B_1...B_n},\qquad
|\Theta|^2\doteq\langle\Theta,\Theta\rangle
\end{align*}
and the following $L^2$-inner product on the spheres $S^2_{u,v}$:
\begin{align*}
\langle \Theta,\Phi\rangle_{u,v}\doteq\int_{S^2}\langle \Theta,\Phi\rangle(u,v,\theta,\phi)\varepsilon_{S^2},
\end{align*}
where $\varepsilon_{S^2}$ denotes the induced volume form on $S^2$. This induces the $L^2$-norm on the sphere $S^2_{u,v}$, 
\begin{align*}
||\Theta||_{u,v}\doteq ||r^{-1}\Theta||_{S^2_{u,v}}=\sqrt{\int_{S^2}|\Theta|^2(u,v,\theta,\phi)\varepsilon_{S^2}}.
\end{align*}

\subsubsection{Differential Operators on $S^2_{u,v}$}
Let $\Theta$ be a $p$-covariant $S^2_{u,v}$-tensor field. We define the projected covariant derivative $\ns_3$, $\ns_4$, $\ns_A$ by extending $\Theta$ trivially to a spacetime $p$-covariant tensor field and projecting onto the local frame $\{e_A\}$ for $S_{u,v}^2$. One can then check that the following formulas hold:
\begin{align*}
\ns_3\Theta_{A_1...A_p}&=e_3(\Theta_{A_1...A_p})-\frac{1}{2}\mathrm{tr}\underline{\chi}\Theta_{A_1...A_p} \, , \\
\ns_4\Theta_{A_1...A_p}&=e_4(\Theta_{A_1...A_p})-\frac{1}{2}\mathrm{tr}{\chi}\Theta_{A_1...A_p} \, , \\
\ns_A\Theta_{A_1...A_p}&=\partial_A(\Theta_{A_1...A_p})-\slashed{\Gamma}_{AA_i}^B\Theta_{A_1\dots\hat{A}_iB\dots A_p}.
\end{align*}
Additionally, we define
\begin{align*}
\ns_T\Theta&\doteq\frac{\Omega}{2}\big(\ns_3\Theta+\ns_4\Theta\big).
\end{align*}
We define the projected Lie derivatives $\slashed{\mathcal{L}}_T$, $\slashed{\mathcal{L}}_{\Omega_k}$ as
\begin{align*}
(\slashed{\mathcal{L}}_T\Theta)_{A_1...A_p}&\doteq(\mathcal{L}_T\Theta)_{A_1...A_p}\qquad(\slashed{\mathcal{L}}_{\Omega_k}\Theta)_{A_1...A_p}\doteq({\mathcal{L}}_{\Omega_k}\Theta)_{A_1...A_p}
\end{align*}
and note that
\begin{align*}
(\slashed{\mathcal{L}}_T\Theta)_{A_1...A_p}=(\ns_T\Theta)_{A_1...A_p}=T(\Theta_{A_1...A_p}).
\end{align*}
We define the $(p-1)$-covariant tensor fields $\slashed{\mathrm{div}}\Theta$ and $\slashed{\mathrm{curl}}\Theta$ as
\begin{align*}
(\slashed{\mathrm{div}}\Theta)_{A_1...A_{p-1}}&\doteq\slashed{g}^{BC}(\slashed{\nabla}_B\Theta)_{CA_1...A_{p-1}},\qquad
(\slashed{\mathrm{curl}}\Theta)_{A_1...A_{p-1}}\doteq\slashed{\varepsilon}^{BC}(\slashed{\nabla}_B\Theta)_{CA_1...A_{p-1}},
\end{align*}
where $\slashed{\varepsilon}$ is the induced volume form on ${S}^2_{u,v}$. 

For a ${S}_{u,v}^2$ one-form $\xi$ we define
\begin{align*}
\slashed{\mathcal{D}}_2^{\star}\xi&\doteq-\frac{1}{2}\Big[(\slashed{\nabla}_A\xi)_B+(\slashed{\nabla}_B\xi)_A-(\slashed{\mathrm{div}}\xi)\slashed{g}_{AB}\Big] \, ,
\end{align*}
which is the formal $L^2$-adjoint of $\divs$. For $f_1,f_2\in C^{\infty}(S^2_{u,v})$ we define
\begin{align*}
    [\slashed{\mathcal{D}}_1^{\star}(f_1,f_2)]_A&\doteq-\slashed{\nabla}_Af_1+\slashed{\varepsilon}_{AB}\slashed{\nabla}^Bf_2,
\end{align*}
which is the formal $L^2$ adjoint of $\slashed{\mathcal{D}}_1$, which maps an $S^2_{u,v}$-$1$-form to the pair of functions $(\divs\xi,\curls\xi)$.

Lastly, we denote the Laplacian on the $S^2_{u,v}$ sphere by
\begin{align*}
\ds\Theta\doteq\divs(\ns\Theta). 
\end{align*}
Note that the Laplacian on the \textit{unit} sphere $S^2$, which we denote as $\Delta_{S^2}$, is related by $\Delta_{S^2}=r^2\ds$.

\subsubsection{Support on $\ell\geq 2$ Spherical Harmonics}

In this work, we will deal with functions, $S^2_{u,v}$ one-forms and $S^2_{u,v}$ symmetric-traceless $2$-tensors that have support on $\ell\geq 2$. For a function $f$ to be supported on $\ell\geq 2$ this simply means that $f$ has vanishing projection on the $\ell=0,1$ spherical harmonics.
For a $S^2_{u,v}$ one-form, $\xi$, we note that there exists unique decomposition into two functions $f,g$ supported on $\ell\geq 1$ such that
\begin{align*}
\xi=r\Dso(f,g).
\end{align*}
A $S^2_{u,v}$ one-form is supported on $\ell\geq 2$ if $f$ and $g$ in this unique representation are supported on $\ell\geq 2$.

For a $S^2_{u,v}$ symmetric-traceless $2$-tensor, $\Theta$, we note that there exists unique decomposition into two functions $f,g$ supported on $\ell\geq 2$ such that
\begin{align*}
\Theta=r^2\Dst\Dso(f,g).
\end{align*}
Therefore, $S^2_{u,v}$ symmetric-traceless $2$-tensors are automatically supported on $\ell\geq 2$. 

We remind the reader if a function on $S^2_{u,v}$ has vanishing spherical mean (as is the case for functions supported on $\ell\geq 2$) then the Poincar\'e inequality holds
\begin{align*}
\sqrt{2} \|f\|_{u,v}\leq \|[r\ns]f\|_{u,v}.
\end{align*} 
This will be used liberally throughout this work. 

\subsubsection{Angular Identities}
\begin{lemma}\label{lem:basicangid}
We have the following identities for the angular momentum operators defined in (\ref{angularoperators}):
\begin{align*}
 \sum_k(\slashed{\mathrm{curl}}\Omega_k)^2=4,\qquad \sum_k(\slashed{\mathrm{curl}}\Omega_k)\Omega_k=0,\qquad \sum_{k}\Omega_k^A\Omega_k^B&=r^2\slashed{g}^{AB}.
\end{align*}

\end{lemma}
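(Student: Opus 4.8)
\textbf{Proof plan for Lemma \ref{lem:basicangid}.} The three identities are all statements about the round metric $\gamma$ on $S^2$ and the Killing fields $\Omega_k$ generating $\mathfrak{so}(3)$, so the natural strategy is to verify them at one conveniently chosen point (say the north pole) and then invoke $SO(3)$-invariance. Concretely, each expression on the left -- $\sum_k(\curls\Omega_k)^2$, $\sum_k(\curls\Omega_k)\Omega_k$, and $\sum_k \Omega_k^A\Omega_k^B$ -- is built covariantly out of the $\Omega_k$ and the metric in a way that transforms tensorially under the adjoint action of $SO(3)$ on the span of the $\Omega_k$; since $SO(3)$ acts transitively on $S^2$ and the combinations are manifestly $SO(3)$-equivariant, it suffices to evaluate them at a single point. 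I would therefore first compute, using the explicit coordinate expressions (\ref{angularoperators}), the values $\curls\Omega_k=\slashed{\varepsilon}^{BC}\slashed{\nabla}_B(\Omega_k)_C$ and the components $\Omega_k^A$ at the point $\phi=0$, $\theta=\pi/2$ (the intersection of the equator with the $\phi=0$ meridian), where the trigonometric factors collapse.

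For the third identity I would note that $\sum_k \Omega_k^A\Omega_k^B$ is a symmetric $2$-tensor on $S^2$ invariant under all of $SO(3)$, hence proportional to $\slashed{g}^{AB}=r^{-2}\gamma^{AB}$ (the only such tensor up to scalars), and then fix the constant by taking the trace: $\sum_k \slashed{g}_{AB}\Omega_k^A\Omega_k^B=\sum_k|\Omega_k|^2$, which one evaluates at the north pole (where $\Omega_3$ vanishes and $\Omega_1,\Omega_2$ become an orthonormal pair scaled by $r$) to get $2r^2$, matching $\slashed{g}_{AB}\cdot r^2\slashed{g}^{AB}=2r^2$. For the first identity, the key observation is that for a Killing field $X$ on a surface, $\curls X$ is (up to the metric volume normalisation) twice the "rotation rate", and on the round sphere $\curls\Omega_k$ is a first spherical harmonic; the sum of squares of the three coordinate first harmonics is constant. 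I would make this precise by writing $\slashed{\varepsilon}_{AB}(\Omega_k)^A = \slashed{\nabla}_B \psi_k$ for the potential $\psi_k$ (the $\ell=1$ function, e.g.\ $\psi_3 = -r\cos\theta$ in suitable normalisation) so that $\curls\Omega_k = -r^{-2}\Delta_{S^2}\psi_k \cdot(\text{const}) = 2\psi_k/r$ by the eigenvalue relation, and then $\sum_k \psi_k^2$ is the squared norm of the position vector on the unit sphere times $r^2$, i.e.\ constant; plugging in gives $4$. The second identity, $\sum_k(\curls\Omega_k)\Omega_k=0$, then follows because $\curls\Omega_k\propto\psi_k$ and $\sum_k\psi_k\,\Omega_k$ is the generator of the trivial flow (equivalently $\Omega_k = r\,\Dso$-type object dual to $\psi_k$, and $\sum_k\psi_k\slashed{\nabla}^\perp\psi_k$ vanishes since it is $SO(3)$-equivariant of the wrong weight -- an invariant vector field on $S^2$ must be zero), which one again confirms at a point.

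The main obstacle is purely bookkeeping: getting the normalisation constants right. The operators in (\ref{angularoperators}) are normalised so that $\sum_k\Omega_k^A\Omega_k^B = r^2\slashed g^{AB}$ rather than $\slashed g^{AB}$, and correspondingly the potentials $\psi_k$ and the curl operator $\curls = \slashed\varepsilon^{BC}\slashed\nabla_B(\cdot)_C$ carry factors of $r$ (from $\slashed g = r^2\gamma$, $\slashed\varepsilon = r^2 \varepsilon_\gamma$, so $\slashed\varepsilon^{BC} = r^{-2}\varepsilon_\gamma^{BC}$). One must track these carefully so that $\curls\Omega_k$ comes out $r$-independent and the final constants are exactly $4$ and $0$. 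The cleanest route avoiding this is the direct one: just substitute (\ref{angularoperators}) into each left-hand side, compute $\slashed\Gamma$ for the round metric (only $\slashed\Gamma^\phi_{\theta\phi}=\cot\theta$ and $\slashed\Gamma^\theta_{\phi\phi}=-\sin\theta\cos\theta$ are nonzero), evaluate $\curls\Omega_k = r^{-2}(\partial_\theta(\Omega_k)_\phi - \partial_\phi(\Omega_k)_\theta)\cdot(\sin\theta)^{-1}\cdot r^2(\dots)$ -- in fact $\curls\Omega_k = \tfrac{1}{\sin\theta}\big(\partial_\theta[(\Omega_k)_\phi] - \partial_\phi[(\Omega_k)_\theta]\big)$ with $(\Omega_k)_A = \gamma_{AB}\Omega_k^B$ (the $r^2$ in $\slashed g$ cancels the $r^{-2}$ in $\slashed\varepsilon^{BC}$) -- and sum over $k$. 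The trigonometric sums then reduce to the Pythagorean identity $\cos^2\phi+\sin^2\phi=1$ and standard manipulations, yielding the stated values. I expect this to be a short, if fiddly, computation, and I would present it in that explicit form with perhaps a one-line remark that the $SO(3)$-equivariance gives the conceptual reason the answers are constants.
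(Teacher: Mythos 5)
Your proposal is correct, and the route you ultimately recommend — substituting the explicit expressions (\ref{angularoperators}), computing $\curls\Omega_k$ (with the $r$-factors cancelling so that $\curls\Omega_1=-2\cos\phi\sin\theta$, $\curls\Omega_2=2\sin\theta\sin\phi$, $\curls\Omega_3=2\cos\theta$), and reducing the three sums via $\cos^2\phi+\sin^2\phi=1$ — is exactly the paper's proof, which simply records those curl values and says ``direct computation.'' The $SO(3)$-equivariance discussion you prepend is a reasonable conceptual gloss, but the paper does not use it and you correctly set it aside as the less clean option here.
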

\begin{proof}
Direct computation noting that 
$
\curls\Omega_1=-2\cos\phi\sin\theta, \curls\Omega_2=2\sin\theta\sin\phi,\curls\Omega_3=2\cos\theta.
$
\end{proof}
\begin{proposition}\label{prop:angident}
We have the following identities for Lie derivatives of products of $S^2_{u,v}$ tensor fields: 
\begin{align*}
\sum_{k=1}^3\langle\slashed{\mathcal{L}}_{\Omega_k}\Theta_1,\slashed{\mathcal{L}}_{\Omega_k}\Theta_2\rangle&=\langle[r\ns]\Theta_1,[r\ns]\Theta_2\rangle \ \ \ \ \textrm{ \ \ \ \ \ \ \ \ \phantom{Xxi} \ if  $\Theta_1,\Theta_2$ are functions,}
\\
\sum_{k=1}^3\langle\slashed{\mathcal{L}}_{\Omega_k}\Theta_1,\slashed{\mathcal{L}}_{\Omega_k}\Theta_2\rangle&=\langle[r\ns]\Theta_1,[r\ns]\Theta_2\rangle+\langle\Theta_1,\Theta_2\rangle  \ \ \ \ \phantom{x} \textrm{if $\Theta_1,\Theta_2$ are $1$-forms,}
\\
\sum_{k=1}^3\langle\slashed{\mathcal{L}}_{\Omega_k}\Theta_1,\slashed{\mathcal{L}}_{\Omega_k}\Theta_2\rangle&=\langle[r\ns]\Theta_1,[r\ns]\Theta_2\rangle+4\langle\Theta_1,\Theta_2\rangle  \ \ \ \ \textrm{if $\Theta_1,\Theta_2$ are symmetric-traceless $2$-tensors.}
\end{align*}
\end{proposition}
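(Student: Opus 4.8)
The plan is to reduce both sides to the pointwise algebraic identities of Lemma~\ref{lem:basicangid}, using only that each $\Omega_k$ is a Killing field of $(S^2_{u,v},\slashed{g})$. First I would record the structure of $\ns\Omega_k$: since $SO(3)$ acts by isometries, each $\Omega_k$ satisfies the Killing equation on $S^2_{u,v}$, so the intrinsic derivative $\ns_A(\Omega_k)_B$ is antisymmetric in $A,B$; on the two-dimensional sphere every antisymmetric $2$-tensor is a multiple of $\slashed{\varepsilon}$, and contracting with $\slashed{\varepsilon}^{AB}$ identifies the multiple, yielding
\[
\ns_A(\Omega_k)_B=\tfrac12(\curls\Omega_k)\,\slashed{\varepsilon}_{AB},\qquad\text{equivalently}\qquad \ns_A\Omega_k^{C}=\tfrac12(\curls\Omega_k)\,\slashed{\varepsilon}_A{}^{C}.
\]

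Second, I would insert this into the standard formula expressing a Lie derivative through the Levi--Civita connection, namely $(\slashed{\mathcal{L}}_{\Omega_k}\Theta)_{A_1\dots A_p}=\Omega_k^{C}\ns_C\Theta_{A_1\dots A_p}+\sum_i(\ns_{A_i}\Omega_k^{C})\,\Theta_{A_1\dots C\dots A_p}$; this is legitimate because $\Omega_k$ is tangent to $S^2_{u,v}$, so $\slashed{\mathcal{L}}_{\Omega_k}$ is the intrinsic Lie derivative and $\ns$ restricted to tangential directions is the intrinsic connection. For a function the correction term vanishes; for a $1$-form $\xi$ it equals $\tfrac12(\curls\Omega_k)\,\xi^{\star}$, where $\xi^{\star}_A\doteq\slashed{\varepsilon}_A{}^{C}\xi_C$; for a symmetric traceless $2$-tensor $\Theta$ it equals $\tfrac12(\curls\Omega_k)\big(\slashed{\varepsilon}_A{}^{C}\Theta_{CB}+\slashed{\varepsilon}_B{}^{C}\Theta_{AC}\big)=(\curls\Omega_k)\,\Theta^{\star}$, where $\Theta^{\star}_{AB}\doteq\slashed{\varepsilon}_A{}^{C}\Theta_{CB}$. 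In all three cases the correction has the form $\mu_s(\curls\Omega_k)\,\Theta^{\star}$ with $\mu_0=0$, $\mu_1=\tfrac12$, $\mu_2=1$, and the key point is that $\Theta\mapsto\Theta^{\star}$ is a pointwise \emph{isometry}: $\langle\Theta_1^{\star},\Theta_2^{\star}\rangle=\langle\Theta_1,\Theta_2\rangle$. For $1$-forms this follows from $\slashed{\varepsilon}_A{}^{C}\slashed{\varepsilon}_B{}^{D}\slashed{g}^{AB}=\slashed{g}^{CD}$; for symmetric traceless $2$-tensors one must additionally check that $\slashed{\varepsilon}_A{}^{C}\Theta_{CB}$ is again symmetric and traceless of the same norm and that $\slashed{\varepsilon}_A{}^{C}\Theta_{CB}+\slashed{\varepsilon}_B{}^{C}\Theta_{AC}=2\,\Theta^{\star}_{AB}$. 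This last point is the one place where both two-dimensionality and the tracelessness of $\Theta$ genuinely enter, and I would verify it in a local orthonormal frame, where $\slashed{\varepsilon}$ is the standard rotation and a symmetric traceless $2$-tensor has components $(\Theta_{11},\Theta_{12},\Theta_{22})=(a,b,-a)$.

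Finally I would expand $\sum_k\langle\slashed{\mathcal{L}}_{\Omega_k}\Theta_1,\slashed{\mathcal{L}}_{\Omega_k}\Theta_2\rangle$ using $\slashed{\mathcal{L}}_{\Omega_k}\Theta=\Omega_k^{C}\ns_C\Theta+\mu_s(\curls\Omega_k)\Theta^{\star}$ and bilinearity of $\langle\cdot,\cdot\rangle$ into four groups of terms. The $\ns$--$\ns$ group contracts against $\sum_k\Omega_k^{A}\Omega_k^{B}=r^2\slashed{g}^{AB}$ and produces exactly $\langle[r\ns]\Theta_1,[r\ns]\Theta_2\rangle$. The two cross groups carry the factor $\sum_k(\curls\Omega_k)\Omega_k$, which vanishes by Lemma~\ref{lem:basicangid}. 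The remaining group carries $\sum_k(\curls\Omega_k)^2=4$ and, using the isometry property, contributes $4\mu_s^2\langle\Theta_1,\Theta_2\rangle=(2\mu_s)^2\langle\Theta_1,\Theta_2\rangle$, i.e.\ $0$, $1$, $4$ for functions, $1$-forms, and symmetric traceless $2$-tensors respectively. Summing the three groups gives the three identities. There is no deep obstacle; the only genuinely computational step is the orthonormal-frame verification mentioned above together with the elementary $\slashed{\varepsilon}$-identities. One could alternatively bypass the structural argument and verify the three identities by brute force in $(\theta,\phi)$-coordinates directly from \eqref{angularoperators}, but the derivation above makes transparent why the three constants are $0$, $1$, $4$.
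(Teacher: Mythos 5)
Your proposal is correct and follows essentially the same route as the paper's proof: both rewrite $\slashed{\mathcal{L}}_{\Omega_k}\Theta$ via the Killing equation as $\ns_{\Omega_k}\Theta$ plus a $(\curls\Omega_k)$-weighted $\slashed{\varepsilon}$-rotation term, then expand the inner product and invoke Lemma~\ref{lem:basicangid} together with the two-dimensional $\slashed{\varepsilon}$-contraction identity. You merely flesh out the steps the paper compresses into ``one can compute'' — in particular, deriving $\ns_A(\Omega_k)_B=\tfrac12(\curls\Omega_k)\slashed{\varepsilon}_{AB}$ and verifying the orthonormal-frame identity $\slashed{\varepsilon}_A{}^{C}\Theta_{CB}+\slashed{\varepsilon}_B{}^{C}\Theta_{AC}=2\slashed{\varepsilon}_A{}^{C}\Theta_{CB}$ for symmetric traceless $\Theta$ — and package the $\slashed{\varepsilon}$-identity as an isometry statement, which makes the origin of the constants $0$, $1$, $4$ transparent.
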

\begin{proof}
Since $\Omega_k$ are Killing, one can compute that for a $S^2_{u,v}$ covariant tensor field of rank $n$,
\begin{align*}
\slashed{\mathcal{L}}_{\Omega_k}\Theta_{A_1...A_n}&=\ns_{\Omega_k}\Theta_{A_1...A_n}+\frac{1}{2}\sum_{i=1}^n(\curls\Omega_k){\slashed{\varepsilon}_{A_i}}^B\Theta_{A_1...\hat{A}_iB...A_n} \, .
\end{align*}
The result now follows after using Lemma~\ref{lem:basicangid} and the basic identity 
$
{\slashed{\varepsilon}_{A_i}}^B{\slashed{\varepsilon}_{C_j}}^D=\slashed{g}_{A_iC_j}\slashed{g}^{BD}-\slashed{\delta}_{A_i}^D\slashed{\delta}_{C_j}^B \, .
$
\end{proof}

\subsection{The Equations of Linearised Gravity}\label{sec:lineqs}
The system of linearised gravitational perturbations in double null gauge is encoded by the linearised metric quantities
\begin{align} \label{metricq}
\glinto, \glinh_{AB}, \bmlin_A,\Olin,
\end{align}
the linearised connection coefficients
\begin{align} \label{connectionq}
\otx,\otxb,\olin,\olinb,\elin_A,\eblin_A, \xlin_{AB},\xblin_{AB} ,
\end{align}
and the linearised curvature components
\begin{align} \label{curvatureq}
\rlin,\slin,\blin_A,\bblin_A, \alin_{AB},\ablin_{AB}.
\end{align}
As in~\cite{ghconslaw}, we will speak of a solution $\mathscr{S}$ to the system of gravitational perturbations to mean a a collection of quantities 
\begin{align} \label{scollect}
\mathscr{S}=\bigg(\, \glinh \, , \, \glinto \, , \, \Olino \, , \,  \bmlin\, , \,  \otx \, , \,  \otxb\, , \,  \xlin\, , \, \xblin\, , \,  \eblin \, , \,  \elin \, , \, \olin \, , \,  \olinb \, , \,  \alin \, , \,  \blin \, , \,  \rlin \, , \,  \slin \, , \,  \bblin \, , \,  \ablin \, , \, \Klin \bigg)
\end{align}
satisfying the system (\ref{stos})--(\ref{Bianchi10}) below, which we call the system of linearised gravity on the Schwarzschild background.\footnote{The quantity $\Klin$ appearing in the above is the linearised Gaussian curvature of the double null spheres and given in terms of $\glin$, $\glinto$ by $\Klin=-\frac{1}{2}\ds \mathrm{tr}_{\slashed{g}} \glin +\divs \divs\glinh - \frac{1}{r^2}\mathrm{tr}_{\slashed{g}} \glin$ where $\glinto=\frac{1}{2} 
\sqrt{\slashed{g}} \cdot \mathrm{tr}_{\slashed{g}} \glin$. See~\cite{DHR}.} It consists of the following equations (see \cite{DHR} for a derivation):
\begin{align} \label{stos} 
\partial_u \bigg(\frac{\glinto}{\sqrt{\slashed{g}}}\bigg)  &= \otxb
\qquad , \qquad 
\partial_v \bigg(\frac{\glinto}{\sqrt{\slashed{g}}}\bigg) = \otx -  \divs\, \bmlin  \, ,\\ 
\sqrt{\slashed{g}}\, \partial_u \bigg( \frac{\glinh_{AB}}{\sqrt{\slashed{g}}} \bigg)   &=2\Omega\, \xblin_{AB}
\ \ \ ,  \ \ \
\sqrt{\slashed{g}}\, \partial_v \bigg( \frac{\glinh_{AB}}{\sqrt{\slashed{g}}} \bigg) =2\Omega\, \xlin_{AB} + 2\big(\slashed{\mathcal{D}}_2^\star \bmlin \big)_{AB}  ,\label{stos2}
\end{align}
\begin{align} \label{bequat} 
\partial_u \bmlin^A = 2 \Omega^2\big(\elin - \eblin\big)^A \, ,
\end{align}
\begin{align} \label{oml3}
\partial_v \big( \Olin \big) = \olin \textrm{ \ , \ }  \partial_u \big(\Olin\big)=\olinb \textrm{ \  , \ }   2 \slashed{\nabla}\big(\Olin\big) = \elin + \eblin.
\end{align}
\begin{align} 
\partial_v \otxb  &= \Omega^2 \Big( 2 \divs\, \eblin + 2\rlin + 4 \rho \, \Olin \Big) - \frac{1}{2}  \Omega \mathrm{tr}\chi \Big( \otxb - \otx  \Big) ,\label{dtcb}\\
\partial_u \otx & = \Omega^2 \Big( 2 \divs\, {\elin} + 2 \rlin + 4\rho \, \Olin \Big) - \frac{1}{2}  \Omega \mathrm{tr}\chi \Big( \otxb - \otx  \Big) ,\label{dbtc}
\end{align}
\begin{align} 
\partial_v \otx& = - \left(\Omega \mathrm{tr}\chi\right)\otx + 2 \omega \otx  + 2  \left(\Omega \mathrm{tr}\chi \right) \olin ,\label{uray}\\
\partial_u \otxb &= - \left(\Omega \mathrm{tr} \underline{\chi}\right) \otxb  + 2 \underline{\omega} \otxb + 2  \left(\Omega \mathrm{tr} \underline{\chi} \right) \olinb , \label{vray}
\end{align}
\begin{equation} \label{tchi} 
\begin{split}
\slashed{\nabla}_3  \big(\Omega^{-1} \xblin  \big)  +  \Omega^{-1} \left(\mathrm{tr} \underline{\chi}\right) \xblin = -\Omega^{-1} \ablin \, , \\
\slashed{\nabla}_4  \big(\Omega^{-1} \xlin \big)  +  \Omega^{-1} \left(\mathrm{tr}{\chi}\right) \xlin= -\Omega^{-1} \alin   \, ,
\end{split}
\end{equation}
\begin{align} \label{chih3}
\slashed{\nabla}_3  \left(\Omega \xlin \right)  + \frac{1}{2} \left(\Omega \mathrm{tr} \underline{\chi}\right) \xlin + \frac{1}{2} \left( \Omega \mathrm{tr}\chi\right) \xblin  &= -2 \Omega \slashed{\mathcal{D}}_2^\star \elin \, , \\
\slashed{\nabla}_4  \left(\Omega \xblin  \right) + \frac{1}{2} \left(\Omega \mathrm{tr}\chi \right) \xblin  + \frac{1}{2} \left( \Omega \mathrm{tr} \underline{\chi}\right) \xlin &= -2 \Omega \slashed{\mathcal{D}}_2^\star \eblin \, . \label{chih3b}
\end{align}
\begin{equation}
\begin{split}
\divs \xblin = -\frac{1}{2} \left(\mathrm{tr} \underline{\chi}\right)  \elin + \bblin + \frac{1}{2\Omega} \slashed{\nabla} \otxb , \\
\divs \xlin = -\frac{1}{2} \left( \mathrm{tr} {\chi}\right) \eblin  -\blin + \frac{1}{2\Omega} \slashed{\nabla} \otx \label{ellipchi} \, .
\end{split}
\end{equation}
\begin{align}
\slashed{\nabla}_3 \eblin& =  \frac{1}{2} \left(\mathrm{tr} \underline{\chi}\right) \left( \elin - \eblin\right)  + \bblin\quad ,\quad
\slashed{\nabla}_4 \elin =  -  \frac{1}{2} \left( \mathrm{tr} {\chi}\right) \left( \elin - \eblin\right) - \blin , \label{propeta}\\
\slashed{\nabla}_4 \eblin &= \frac{2}{\Omega}\ns\olin-\mathrm{tr}\chi\eblin  + \blin \quad\quad ,\quad
\slashed{\nabla}_3 \elin =  \frac{2}{\Omega}\ns\olinb-\mathrm{tr}\underline{\chi}\elin  - \bblin , \label{propeta2}
\end{align}
\begin{align} \label{curleta}
\curls \elin = \slin \ \ \ , \ \ \ \curls \eblin = -\slin \, .
\end{align}
\begin{align} 
\partial_v \olinb = -\Omega^2 \Big(\rlin + 2 \rho \Olin \Big) \, ,\label{oml1}\\
\partial_u \olin = -\Omega^2 \Big(\rlin + 2 \rho \Olin \Big) \, ,\label{oml2}
\end{align}
\begin{equation} \label{lingauss}
\Klin = -\rlin - \frac{1}{4} \frac{\mathrm{tr} {\chi}}{\Omega}\Big( \otxb - \otx  \Big) +\frac{1}{2}\Olin \left(\mathrm{tr}\chi \mathrm{tr} \underline{\chi} \right) \, ,
\end{equation}
along with the linearised Bianchi identities,
\begin{align}
\slashed{\nabla}_3 \alin + \frac{1}{2} \mathrm{tr} \underline{\chi}\alin + 2 \Omega^{-1} \underline{\omega} \alin &= -2 \slashed{\mathcal{D}}_2^\star \blin - 3 \rho\, \xlin \, ,  \label{Bianchi1} \\
\slashed{\nabla}_4 \blin + 2 (\mathrm{tr}\chi) \blin - \Omega^{-1}{\omega} \blin &= \divs\, \alin \, , \label{Bianchi2} \\
\slashed{\nabla}_3 \blin + (\mathrm{tr} \underline{\chi}) \blin + \Omega^{-1}\underline{{\omega}} \blin &= \slashed{\mathcal{D}}_1^\star \left(-\rlin \, , \, \slin \, \right) + 3\rho \, \elin \, ,   \label{Bianchi3}\\
\slashed{\nabla}_4 \rlin + \frac{3}{2} (\mathrm{tr}\chi) \rlin& = \divs\, \blin - \frac{3}{2} \frac{\rho}{\Omega}  \otx \, , \label{Bianchi4}\\
\slashed{\nabla}_3 \rlin + \frac{3}{2} (\mathrm{tr} \underline{\chi}) \rlin& = -\divs\, \bblin - \frac{3}{2} \frac{\rho}{\Omega} \otxb \, , \label{Bianchi5}\\
\slashed{\nabla}_4 \slin + \frac{3}{2} (\mathrm{tr}\chi) \slin&= -\curls\, \blin \, , \label{Bianchi6} \\
\slashed{\nabla}_3 \slin + \frac{3}{2} (\mathrm{tr} \underline{\chi}) \slin &= -\curls\, \bblin \, , \label{Bianchi7} \\
\slashed{\nabla}_4 \bblin + (\mathrm{tr}\chi)  \bblin + \Omega^{-1}{\omega} \bblin &= \slashed{\mathcal{D}}_1^\star \left(\rlin \, ,  \, \slin \, \right) - 3 \rho\, \eblin  \, ,  \label{Bianchi8} \\
\slashed{\nabla}_3 \bblin + 2 (\mathrm{tr} \underline{\chi})  \bblin - \Omega^{-1} {\underline{\omega}} \bblin &= - \divs\, \ablin \, , \label{Bianchi9} \\
\slashed{\nabla}_4 \ablin + \frac{1}{2} (\mathrm{tr}\chi) \ablin + 2 \Omega^{-1}{\omega} \ablin &=  2 \slashed{\mathcal{D}}_2^\star \bblin - 3 \rho\, \xblin \, .  \label{Bianchi10}
\end{align}

\subsection{Pure Gauge Solutions} \label{sec:pg}
There exist special solutions to the system of gravitational perturbations above which correspond to infinitesimal coordinate transformations preserving the double null form of the metric. These are called \emph{pure gauge solutions} of the system of gravitational perturbations. A particular subset of them is identified in the following lemma, which is proven as Lemma~6.1.1 of~\cite{DHR}. Recall the notation $\Delta_{S^2}=r^2 \slashed{\Delta}$, so $\Delta_{S^2}$ is the Laplacian on the unit sphere with metric $\gamma$.

\begin{lemma} \label{lem:exactsol} 
For any smooth function $f=f\left(v,\theta,\phi\right)$, the following is a pure gauge solution of the system of gravitational perturbations:
\begin{align}
\frac{\Olino}{\Omega} &= \frac{1}{2\Omega^2} \partial_v \left(\Omega^2f \right)  , & \glinh&= - \frac{4}{r}  [r\slashed{\mathcal{D}}_2^\star][r \slashed{\nabla}] f  \ , &\frac{\glinto}{\sqrt{\slashed{g}}} &= \frac{2\Omega^2 f}{r} +  2\Delta_{S^2}\frac{f}{r}  , \nonumber \\
\bmlin &= -2r[r \slashed{\nabla}] \left[ \partial_v \left(\frac{f}{r}\right)\right] , & \elin &= \frac{\Omega^2}{r^2} [r \slashed{\nabla}] f  , & \eblin &= \frac{1}{\Omega^2}  [r\slashed{\nabla}] \left[\partial_v \left(\frac{\Omega^2}{r}f\right) \right]   \nonumber \, ,
\nonumber \\
\xblin &= -\frac{2\Omega}{r^2} [r \slashed{\mathcal{D}}_2^\star] [r\slashed{\nabla}] f  , & \otx &= 2 \partial_v \left(\frac{f \Omega^2}{r}\right)  , & \otxb &=  \frac{2\Omega^2}{r^2} \left[\Delta_{\mathbb{S}^2} f +  \Big(1-\frac{4M}{r}\Big)f \right]  , \nonumber \\
\rlin &= \frac{6M \Omega^2}{r^4} f  , & \bblin&= \frac{6M\Omega}{r^4} [ r \slashed{\nabla} ]f , & \Klin &= -\frac{\Omega^2}{r^3}\left(\Delta_{\mathbb{S}^2} f + 2f\right) \nonumber
\end{align}
and
\[
\xlin = \alin = \ablin = 0 \ \ \ , \ \ \ \blin = 0 \ \ \ , \ \ \ \slin = 0 \nonumber \, .
\]
We will call $f$ a gauge function. 
\end{lemma}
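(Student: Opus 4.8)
The plan is to exploit the very definition of a pure gauge solution: it is the first-order variation of the Schwarzschild data under a one-parameter family $\phi_\varepsilon$ of diffeomorphisms preserving the double null form (\ref{eqn:SchDN}), and since linearised vacuum gravity is diffeomorphism invariant, any such variation automatically solves the system (\ref{stos})--(\ref{Bianchi10}). So I would split the proof into two tasks: (i) produce the family $\phi_\varepsilon$ -- equivalently its generator $V_f$ -- attached to the gauge function $f=f(v,\theta,\phi)$, and (ii) compute the linearised variation of every background metric, connection and curvature quantity under $V_f$ and check it matches the displayed expressions. Note that merely checking the system equations would \emph{not} certify the solution is pure gauge, so task (i) cannot be skipped.

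For (i): write $V_f=V^u\partial_u+V^v\partial_v+V^A\partial_A$; preservation of $g(e_3,e_3)=g(e_4,e_4)=0$ forces $\partial_v V^u=\partial_u V^v=0$, the remaining gauge conditions couple $V^A$ to $V^u,V^v$, and demanding that the output depend only on $f(v,\theta,\phi)$ removes all residual freedom -- one finds $V^u=0$, $V^v$ an $r$-weighted multiple of $f$, and $V^A$ the angular gradient of an $r$-weighted multiple of $f$ (so that its action on $\slashed g$ produces exactly the $[r\Dst][r\ns]f$ appearing in $\glinh$). For (ii): use $\delta g_{\mu\nu}=(\mathcal{L}_{V_f}g)_{\mu\nu}$ for the metric, the renormalisation of $e_3,e_4$ needed to keep the frame adapted to the new double null foliation (where the frame-change terms enter), and the induced transformation rules for the connection coefficients and the curvature components; then substitute the Schwarzschild values $\partial_v r=\Omega^2=-\partial_u r$, $\partial_r\Omega^2=2M/r^2$, $\Omega\mathrm{tr}\chi=-\Omega\mathrm{tr}\underline\chi=2\Omega^2/r$, $\omega=-\underline\omega=M/r^2$, $\rho=-2M/r^3$, $K=r^{-2}$ and simplify. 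The vanishings should fall out structurally: $\alin=\ablin=0$ because $\alpha,\underline\alpha$ vanish on Schwarzschild and, being gauge invariant, receive no frame-change contribution; $\slin=0$ because $\elin\propto\ns f$ is exact so $\curls\elin=0$; $\xlin=0$ because the effect of the angular part of $V_f$ on the outgoing shear is carried entirely by $\glinh,\bmlin$; and $\blin=0$ then follows from the elliptic shear equation (\ref{ellipchi}).

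I expect the hard part to be (ii), specifically the correct derivation and bookkeeping of the frame-change terms: a bare Lie derivative does not suffice, and renormalising $e_3,e_4$ feeds $\nabla^2 V_f$ -- two angular or transport derivatives of $f$ -- into $\otx,\otxb,\eblin,\xblin,\bblin$, so every $r$- and $\Omega^2$-weight must be tracked with care. A more hands-on alternative, perhaps better suited to a paper that works throughout with the explicit system, is to exhibit $\phi_\varepsilon$ only far enough to certify pure-gauge-ness and then verify the displayed formulas by direct substitution into (\ref{stos})--(\ref{Bianchi10}), organised by type: the metric transport equations (\ref{stos})--(\ref{oml3}); the $(\Omega\mathrm{tr}\chi)$-equations (\ref{dtcb})--(\ref{vray}); the shear equations (\ref{tchi})--(\ref{ellipchi}); the torsion equations (\ref{propeta})--(\ref{curleta}); the $\omega$-equations (\ref{oml1})--(\ref{oml2}); the Gauss equation (\ref{lingauss}); and the Bianchi system (\ref{Bianchi1})--(\ref{Bianchi10}). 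Each collapses, after cancelling weights, to an elementary identity in $r$ and $\Omega^2=1-2M/r$; the most delicate are the $\rho$-Bianchi equations (\ref{Bianchi4})--(\ref{Bianchi5}), where the $v$- and $u$-derivatives of $\rlin$ must balance $\divs\blin-\tfrac32\tfrac{\rho}{\Omega}\otx$ and $-\divs\bblin-\tfrac32\tfrac{\rho}{\Omega}\otxb$, and the Gauss equation (\ref{lingauss}), where one additionally needs the sphere identity for $\divs\divs\Dst\ns$ to see the $\Delta_{S^2}$- and $\Omega^2$-terms assemble into $\Klin=-\Omega^2 r^{-3}(\Delta_{S^2}f+2f)$.
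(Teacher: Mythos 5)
The paper does not actually prove this lemma; it simply cites Lemma~6.1.1 of~\cite{DHR}. Your plan is therefore not being compared against an in-paper argument but against the referenced one, and the approach you outline -- (i) exhibit the generating vector field $V_f$ of a one-parameter family of diffeomorphisms preserving the double null form, then (ii) compute the Lie-derivative variations of all metric, connection and curvature quantities and simplify using the Schwarzschild background values -- is precisely what \cite{DHR} does. You also correctly flag the point most students miss: merely verifying that the displayed quantities satisfy the linearised system (\ref{stos})--(\ref{Bianchi10}) would not certify pure-gaugeness (the system admits many non-gauge solutions), so step (i) cannot be dropped. Your structural explanations of the vanishings ($\alin=\ablin=0$ by gauge invariance of quantities vanishing on the background, $\slin=0$ because $\elin$ is an exact one-form so $\curls\elin=0$, and $\blin=0$ from (\ref{ellipchi}) once $\xlin=0$ is in hand) are all correct and match what a direct computation confirms.

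Two small cautions on the details of step (i). First, a straightforward computation of $\delta g_{uu}=0$ and $\delta g_{uv}=-4\Omega\Olino$ gives $V^u=0$ and $V^v=f(v,\theta,\phi)$ \emph{without} an $r$-weight; the $r$- and $\Omega^2$-weights appear only in $V^A$ and in the frame renormalisation, not in the $\partial_v$-component. So "$V^v$ an $r$-weighted multiple of $f$" is an overstatement; the $r$-weighting enters later via $\partial_v r=\Omega^2$. Second, your heuristic for $\xlin=0$ ("carried entirely by $\glinh,\bmlin$") is suggestive but not itself an argument; in practice one simply computes $\delta\hat\chi$ from $\slashed{\mathcal{L}}_{V_f}\slashed g$ together with the induced frame change and observes the cancellation. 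Neither of these affects the soundness of the plan; they are bookkeeping issues of exactly the kind you already anticipate as "the hard part." Your second, more hands-on alternative (certify pure-gaugeness structurally, then check the formulas directly against the system) is also valid, and indeed the observation that (\ref{lingauss}) requires the sphere identity for $\divs\divs\slashed{\mathcal{D}}_2^\star\slashed\nabla$ to close is exactly the kind of detail one hits in that route.
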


\begin{remark}
One can write down the analagous lemma for pure gauge solutions generated by a smooth function $f=f(u,\theta,\phi)$ (see Lemma 6.1.2 in~\cite{DHR}).
\end{remark}

\subsection{The Class of Solutions}
We now define the class of (smooth) solutions of the system of gravitational perturbations that we wish to consider in this paper.  We first recall from~\cite{ghconslaw} the notion of a partially initial data normalised solution, see Definition 3.2 of~\cite{ghconslaw}:

\begin{definition}
We call $\So$ a {\bf partially initial data normalised solution supported on $\ell \geq 2$} of the system of gravitational perturbations if $\So$ is supported on $\ell \geq 2$ and the initial data on $C_{u_0} \cup \underline{C}_{v_0}$ satisfies 
\begin{itemize} 
\item 
$
\otx (\infty,v_0,\theta,\phi) = 0 \textrm{\ and \ } (\divs \elin + \rlin ) (\infty,v_0,\theta,\phi) = 0 \ \ \textrm{on the sphere $S^2_{\infty,v_0}$,}
$
\item
$
r^2 \Klin (u_0, \infty, \theta,\phi)=0  .
$
\end{itemize}
\end{definition}
We note that the first two conditions on the sphere $S^2_{\infty,u_0}$ are evolutionary along the event horizon while the second one is evolutionary along null infinity. Furthermore, for a partially initial data normalised solution, the quantity (and angular derivatives thereof)
\begin{align} \label{f0defi}
\Omega^{-2} \left[ 2\Olin -\frac{r}{2\Omega^2} \otx  \right]  \left(u,v_0, \theta,\phi\right)
\end{align}
extends regularly to the event horizon in the limit $u \rightarrow \infty$. This is Proposition~3.1 of~\cite{ghconslaw}. 

We also recall Definition 3.4 of~\cite{ghconslaw}, which we modify slightly here to include also higher derivatives in (\ref{decreten2}), as in this paper we will need to study up to two commutations of the equations.

\begin{definition} \label{def:extendskri}
We call a solution $\mathscr{S}$ {\bf extendible to null infinity} if the following weighted quantities of $\mathscr{S}$ have well-defined finite limits on null infinity\footnote{In particular, for every $u \geq u_0$ fixed the limit of the quantity along the null cone $C_{u}$ as $v \rightarrow \infty$ is well-defined.} for some $0<s<1$
\begin{equation}\label{decreten} 
\begin{split}
r^{3+s} \alin \ , \ r^{3+s}\blin \ , \  r^3\rlin  , \  r^3 \slin  , \  r^2 \bblin \ , \  r\ablin \ , \ r^3 \Klin \, ,  \\
 r^2 \xlin \ , \  r\xblin \ , \ r \elin \ , \  r^2 \eblin \ , \  r^2 \divs  \elin \, , \, r^3 \divs  \eblin  \ , \  r^{2+s} \olin \ , \  \olinb \ , \   r^2 \otx  \ , \  r \otxb  \ , \ \Olin  \, .
\end{split}
\end{equation}
In addition, denoting an arbitrary representative of the quantities in (\ref{decreten}) by $\mathcal{Q}$, we demand that also $[r\slashed{\nabla}]^i [\slashed{\nabla}_T]^j \mathcal{Q}$ has a well-defined limit on null infinity for all $i+j 
\leq 2$, and that for any fixed $u_f$ with $u_0<u_f<\infty$ the estimate
\begin{align} \label{decreten2}
\sum_{i+j\leq 2} \sup_{\left[u_0,u_f\right] \times \{ v\geq v_0\} \times S^2} |[r\slashed{\nabla}]^i [\slashed{\nabla}_T]^j \mathcal{Q}| \leq C\left[u_f\right]
\end{align}
holds with the constant $C\left[u_f\right]$ depending only on $u_f$ (and the initial data) but not on $v$.
\end{definition}

In this paper we will consider {\bf smooth solutions of the system of gravitational perturbations that are partially initial data normalised supported on $\ell \geq 2$ and extendible to null infinity.} 

\begin{remark}
It follows from the well-posedness theory of the system of gravitational perturbations ~\cite{DHR, ghconslaw} that there is essentially no restriction in considering this class of solutions. In particular, given a general solution to the system of gravitational perturbations arising from asymptotically flat (to sufficiently high order) initial data, we can make the solution partially initial data normalised by adding a pure gauge solution which is computable directly from the initial data and the solution thus constructed will be extendible to null infinity.
\end{remark}

For completeness, we collect also the weighted quantities which extend regularly to the event horizon $\mathcal{H}^+$ for our class of solutions. The quantity $[\Omega^{-1} \slashed{\nabla}_3]^i [\Omega \slashed{\nabla}_4]^j  \left[r \slashed{\nabla}\right]^k \widetilde{\mathcal{Q}}$ extends regularly to $\mathcal{H}^+ \cap \{v \geq v_0\}$ for any $i+j+k \geq 0$ provided
\begin{align} \label{hozextendQ}
\widetilde{\mathcal{Q}} \in \Big\{\frac{\Olino}{\Omega},\otx,\frac{\otxb}{\Omega^2},\olin,\frac{\olinb}{\Omega^2},\elin,\eblin, \Omega\xlin,\frac{\xblin}{\Omega},\rlin,\slin,\Klin,\Omega\blin,\frac{\bblin}{\Omega},\Omega^2\alin,\frac{\ablin}{\Omega^2}\Big\}.
\end{align}

Finally, in view of (\ref{decreten}) we will sometimes allow ourselves to refer to quantities $\xi$ ``on null infinity" formally as $\xi (u,\infty,\theta,\phi)$ or refer to the sphere ``$S^2_{u_f,\infty}$" instead of carrying the corresponding limits through the already complicated computations. The procedure is always to carry out all computations for finite $v_f$ and to take the limit $v_f \rightarrow \infty$ in the very end when (\ref{decreten}) can be exploited. All this is always performed for a compact $u$ range, so that no issues with commuting limits ever arise. 


\subsection{The Conservation Laws}

In this section we recall the conservation laws from~\cite{ghconslaw}. To state them in a uniform manner that allows for commutation with Killing vectors, we first define, 
for given $i, j \in \mathbb{N} \cup \{0\}$ and any tuple ${\bf {k}}=(k_1,k_2, \ldots k_j)$ with each $k_s \in \{1,2,3\}$, the (Killing) commutation operator 
\begin{align}
\mathcal{K}^{i,{\bf k}}  = \mathcal{K}^{i,(k_1,\ldots,k_j)} := [\slashed{\mathcal{L}}_T]^i\slashed{\mathcal{L}}_{\Omega_{k_1}}\ldots \slashed{\mathcal{L}}_{\Omega_{k_j}} \, .
\end{align}
Note that $\mathcal{K}^{i,(k_1,\ldots,k_j)}$ maps a $(0,k)$ covariant $S^2_{u,v}$-tensors to a $(0,k)$ covariant $S^2_{u,v}$-tensor. We denote by $|{\bf k}|=j$ the length of the tuple ${\bf k}$.

Let now $\mathscr{S}$ be a smooth solution of the system of gravitational perturbations. For any $u_0\leq u_1<u_2 \leq \infty$ and $v_0\leq v_1 < v_2 \leq \infty$
we define the (commuted) fluxes
\begin{align} \label{vflux}
F^{i,{\bf k}}_v \left[\Gamma , \mathscr{S} \right] \left(u_1,u_2\right)= \int_{u_1}^{u_2}\int_{S^2} \bigg[  & |\mathcal{K}^{i,{\bf k}}\Omega\xblin|^2+2\Omega^2 |\mathcal{K}^{i,{\bf k}}\elin|^2 -2 \mathcal{K}^{i,{\bf k}}\olinb \cdot \mathcal{K}^{i,{\bf k}}\otx - \frac{1}{2} [\mathcal{K}^{i,{\bf k}}\otxb]^2  \nonumber \\
&- \frac{4M}{r^2} \Big(\mathcal{K}^{i,{\bf k}}\frac{\Olino}{\Omega}\Big)\cdot \mathcal{K}^{i,{\bf k}}\otxb  \bigg]r^2du\varepsilon_{S^2} \, ,
\end{align}
\begin{align} \label{uflux}
F^{i,{\bf k}}_u \left[\Gamma, \mathscr{S} \right] \left(v_1,v_2\right)= \int_{v_1}^{v_2}  \int_{S^2} \bigg[&|\mathcal{K}^{i,{\bf k}}\Omega\xlin|^2+2 |\mathcal{K}^{i,{\bf k}}\Omega\eblin|^2   -2\mathcal{K}^{i,{\bf k}}\olin \cdot \mathcal{K}^{i,{\bf k}}\otxb - \frac{1}{2} [\mathcal{K}^{i,{\bf k}} \otx]^2 \nonumber \\
&+ \frac{4M}{r^2} \Big(\mathcal{K}^{i,{\bf k}}\frac{\Olino}{\Omega}\Big) \cdot \mathcal{K}^{i,{\bf k}} \otx  \bigg]r^2dv\varepsilon_{S^2}\, .
\end{align}
Additionally, we denote
\begin{align} \label{vflux2}
F^{i, {\bf k}}_v \left[\mathcal{R},\mathscr{S} \right] \left(u_1,u_2\right)= \int_{u_1}^{u_2} \int_{S^2} \Big[ &\frac{1}{2} r^4 |\mathcal{K}^{i,{\bf k}} \Omega\bblin|^2+\frac{1}{2} \Omega^2 r^4 \left(|\mathcal{K}^{i,{\bf k}}\rlin|^2 + |\mathcal{K}^{i,{\bf k}}\slin|^2\right) -3Mr |\mathcal{K}^{i,{\bf k}}\Omega\elin|^2  \\ &+ 3Mr \mathcal{K}^{i,{\bf k}}\olinb  \cdot \mathcal{K}^{i,{\bf k}}\otx 
-3M \left(1-\frac{4M}{r}\right)   \mathcal{K}^{i,{\bf k}}\otxb \cdot \mathcal{K}^{i,{\bf k} }\Big(\frac{\Olino}{\Omega}\Big) \Big]du\varepsilon_{S^2}, \nonumber
\end{align}
\begin{align} \label{uflux2}
F^{i, {\bf k}}_u \left[\mathcal{R}, \mathscr{S} \right] \left(v_0,v\right)= \int_{v_0}^v \int_{S^2} \Big[& \frac{1}{2} r^4  |\mathcal{K}^{i,{\bf k}} \Omega\blin|^2+\frac{1}{2} \Omega^2 r^4 \left(|\mathcal{K}^{i,{\bf k}}\rlin|^2 + |\mathcal{K}^{i,{\bf k}}\slin|^2\right) -3Mr  |\mathcal{K}^{i,{\bf k}}\Omega\eblin|^2 \\
&+ 3Mr \mathcal{K}^{i,{\bf k}}\olin \cdot \mathcal{K}^{i,{\bf k}}\otxb
 +3M \Big(1-\frac{4M}{r}\Big)  \mathcal{K}^{i,{\bf k}} \otx \cdot \mathcal{K}^{i,{\bf k}} \Big(\frac{\Olino}{\Omega}\Big)   \Big]dv\varepsilon_{S^2} \, . \nonumber
\end{align}

The above fluxes are related by the following conservation law:
\begin{proposition} \label{prop:conslaw2}
For any $u_0<u_1< u_2<\infty$ and $v_0 < v_1<v_2 < \infty$ we have the conservation laws
\begin{align}
F^{i,{\bf k}}_v\left[\mathcal{R},\mathscr{S} \right] \left(u_0,u_1\right)+ F^{i,{\bf k}}_u \left[\Gamma, \mathscr{S} \right] \left(v_0,v_1\right) = F^{i,{\bf k}}_{v_0} \left[\Gamma, \mathscr{S} \right] \left(u_0,u_1\right) +  F^{i,{\bf k}}_{u_0}\left[\Gamma, \mathscr{S} \right] \left(v_0,v_1\right) \, ,
\end{align}
and
\begin{align}
F^{i,{\bf k}}_v\left[\mathcal{R}, \mathscr{S} \right] \left(u_0,u_1\right)+ F^{i,{\bf k}}_u \left[\mathcal{R},\mathscr{S}\right] \left(v_0,v_1\right) = F^{i,{\bf k}}_{v_0} \left[\mathcal{R},\mathscr{S}\right] \left(u_0,u_1\right) +  F^{i,{\bf k}}_{u_0}\left[\mathcal{R},\mathscr{S}\right] \left(v_0,v_1\right) \, .
\end{align}
\end{proposition}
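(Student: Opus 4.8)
The plan is to reduce the commuted conservation laws to the uncommuted case $i=0$, ${\bf k}=()$ the empty tuple, which is established in \cite{ghconslaw}, by exploiting that $T$ and the $\Omega_k$ generate isometries of the Schwarzschild background. Write $\mathcal{K}^{i,{\bf k}}\mathscr{S}$ for the tuple (\ref{scollect}) obtained by applying $\mathcal{K}^{i,{\bf k}}$ to each of its entries. Since $\Omega$, $r$ and $M$ are spherically symmetric and $T$-invariant, $\mathcal{K}^{i,{\bf k}}$ commutes with multiplication by any of these background factors, and a glance at (\ref{vflux})--(\ref{uflux2}) then shows that each flux in the statement equals the corresponding \emph{uncommuted} flux (the case $i=0$, ${\bf k}=()$) evaluated on the tuple $\mathcal{K}^{i,{\bf k}}\mathscr{S}$. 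It therefore suffices to show (i) that $\mathcal{K}^{i,{\bf k}}\mathscr{S}$ is again a smooth solution of the system (\ref{stos})--(\ref{Bianchi10}), and (ii) that the uncommuted conservation laws hold for every such solution; the claimed identities then follow by applying (ii) to $\mathcal{K}^{i,{\bf k}}\mathscr{S}$.

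For (ii) I would quote \cite{ghconslaw}, where the two identities are obtained by taking suitable linear combinations of the linearised null structure and Bianchi equations which, after integration by parts over the spheres $S^2_{u,v}$, can be written as the vanishing of a total $(\partial_u,\partial_v)$-divergence; integrating over the null rectangle $\{u_0\leq u\leq u_1\}\times\{v_0\leq v\leq v_1\}$ and applying the divergence theorem produces the four boundary fluxes. Crucially, the region in Proposition~\ref{prop:conslaw2} lies strictly in the interior, so that no limiting argument towards $\mathcal{H}^+$ or $\mathcal{I}^+$ is involved and neither partial initial-data normalisation nor extendibility of the solution is required; this is exactly what allows (ii) to be invoked for $\mathcal{K}^{i,{\bf k}}\mathscr{S}$, which in general fails to be partially initial-data normalised even when $\mathscr{S}$ is.

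For (i) I would verify that $\slashed{\mathcal{L}}_T$ and $\slashed{\mathcal{L}}_{\Omega_k}$ each map solutions to solutions, which then handles $\mathcal{K}^{i,{\bf k}}$ by composition. This rests on: $[T,e_3]=[T,e_4]=[\Omega_k,e_3]=[\Omega_k,e_4]=0$ (recall $e_3=\Omega^{-1}\partial_u$, $e_4=\Omega^{-1}\partial_v$ and $\Omega$ is annihilated by $T$, by the $\Omega_k$ and by $\partial_\theta,\partial_\phi$); the sphere Christoffel symbols $\slashed{\Gamma}^C_{AB}$ coinciding with those of $\gamma$ and so being independent of $u,v$; and $T$, $\Omega_k$ being Killing for the full metric. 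Together these give that $\slashed{\mathcal{L}}_T$ and $\slashed{\mathcal{L}}_{\Omega_k}$ commute with $\partial_u$, $\partial_v$, $\slashed{\nabla}_3$, $\slashed{\nabla}_4$ and with every operator built from $\slashed{g}$ and $\slashed{\varepsilon}$, i.e.\ with $\slashed{\nabla}$, $\divs$, $\curls$, $\slashed{\mathcal{D}}_1^\star$, $\slashed{\mathcal{D}}_2^\star$, $\slashed{\Delta}$ and index raising/lowering (for $\slashed{\mathcal{L}}_{\Omega_k}$ one uses here the commutation formula in the proof of Proposition~\ref{prop:angident}); moreover all background coefficients in (\ref{stos})--(\ref{Bianchi10}) --- the $\mathrm{tr}\chi$, $\mathrm{tr}\underline{\chi}$, $\omega$, $\underline{\omega}$, $\rho$, $K$ and the factors $r$, $\Omega$ --- are $T$-invariant and spherically symmetric, hence annihilated by $\slashed{\mathcal{L}}_T$ and $\slashed{\mathcal{L}}_{\Omega_k}$. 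Applying $\slashed{\mathcal{L}}_T$, resp.\ $\slashed{\mathcal{L}}_{\Omega_k}$, to each equation of the system and to the defining relation for $\Klin$ then reproduces that equation verbatim for the Lie-derived quantities, and both smoothness and support on $\ell\geq 2$ are preserved (the latter because $\slashed{\mathcal{L}}_{\Omega_k}$ commutes with $\slashed{\Delta}$, $\slashed{\mathcal{D}}_1^\star$ and $\slashed{\mathcal{D}}_2^\star$).

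The main obstacle lies entirely in (i), and it is a matter of care rather than of difficulty: one must check the commutation of $\slashed{\mathcal{L}}_T$ and $\slashed{\mathcal{L}}_{\Omega_k}$ with each operator appearing across the equations of the linearised system, bearing in mind that $T$ is \emph{not} tangent to the double-null spheres (so one works with the component identity $\slashed{\mathcal{L}}_T\Theta_{A_1\ldots A_p}=T(\Theta_{A_1\ldots A_p})$) and that $\slashed{\mathcal{L}}_{\Omega_k}$ differs from the projected covariant derivative $\slashed{\nabla}_{\Omega_k}$ by the lower-order term of Proposition~\ref{prop:angident}. Once these commutations are recorded, the conservation laws follow immediately.
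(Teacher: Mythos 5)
Your proposal is correct and matches the paper's approach, which states exactly the same idea in a single sentence: the projected Lie derivatives $\slashed{\mathcal{L}}_T$ and $\slashed{\mathcal{L}}_{\Omega_k}$ commute trivially through the system, so $\mathcal{K}^{i,{\bf k}}\mathscr{S}$ is again a solution and the uncommuted conservation law of \cite{ghconslaw} applies to it verbatim. Your elaboration of the commutation checks (notably that $T$ is not $S^2_{u,v}$-tangent, so one works with the component identity $\slashed{\mathcal{L}}_T\Theta_{A_1\ldots A_p}=T(\Theta_{A_1\ldots A_p})$, and that $\slashed{\mathcal{L}}_{\Omega_k}$ and $\slashed{\nabla}_{\Omega_k}$ differ by lower-order terms) is precisely what the phrase ``commute trivially'' compresses.
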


\begin{proof}
Direct computation as in~\cite{ghconslaw}. Note that the projected Lie-derivatives commute trivially through all equations so the proof is identical for the commuted versions.
\end{proof}

\begin{remark}
One can simplify the fluxes in~\eqref{vflux} and~\eqref{uflux} to
\begin{align*}
F_{v}[\Gamma,\So](u_0,u_1)&=\check{F}_{v}[\Gamma,\So](u_0,u_1)+\left[\frac{r^3}{2\Omega^2}\otxb\otx\right]_{(u_0,v)}^{(u_1,v)},\\
F_{u}[\Gamma,\So](v_0,v_1)&=\check{F}_{u}[\Gamma,\So](v_0,v_1)-\left[\frac{r^3}{2\Omega^2}\otxb\otx\right]_{(u,v_0)}^{(u,v_1)},
\end{align*}
where we use the notation $[X]_{a}^b=X(b)-X(a)$, with
\begin{align*}
\check{F}_v^{i,{\bf k}}[\Gamma,\So](u_0,u_1)&\doteq \int_{u_0}^{u_1}\int_{S^2}  \Big[|\mathcal{K}^{i,{\bf k}}\Omega\xblin|^2+2|\mathcal{K}^{i,{\bf k}}\Omega\elin|^2-r\mathcal{K}^{i,{\bf k}}\otxb\mathcal{K}^{i,{\bf k}}(\rlin+\divs\elin)\Big]r^2du\varepsilon_{S^2},\\
\check{F}_u^{i,{\bf k}}[\Gamma,\So](v_0,v_1)&\doteq \int_{v_0}^{v_1}\int_{S^2} \Big[|\mathcal{K}^{i,{\bf k}}\Omega\xlin|^2+2|\mathcal{K}^{i,{\bf k}}\Omega\eblin|^2+r\mathcal{K}^{i,{\bf k}}\otx\mathcal{K}^{i,{\bf k}}(\rlin+\divs\eblin)\Big]r^2du\varepsilon_{S^2}\, .
\end{align*}
In view of Proposition~\ref{prop:conslaw2} one has the analogous conservation law for $\check{F}$. Whilst this is a remarkably simple form of the conservation law it runs into issues with obtaining the optimal regularity result. Indeed, if one is not concerned with regularity, one could use this conservation law to obtain uniform boundedness.  
\end{remark}

\subsubsection{Changing Gauge on the Outgoing Cone}

We next recall from~\cite{ghconslaw} how the outgoing fluxes transform under the addition of a pure gauge solution of the type in Lemma~\ref{lem:exactsol}, namely that they are invariant up to boundary terms on spheres:

\begin{proposition} \label{prop:gaugechange}
Let $\mathscr{S}$ be a solution of the system of gravitational perturbations. Let $f\left(v,\theta,\phi\right)$ be a smooth gauge function generating a pure gauge solution $\mathscr{G}$ of the system of gravitational perturbations as in Lemma~\ref{lem:exactsol}. Finally, set $\mathscr{S} = \tilde{\mathscr{S}} + \mathscr{G}$ thereby defining a new solution $\tilde{\mathscr{S}}$. Then the flux on fixed constant-$u$ hypersurfaces satisfies
\begin{align}
F^{i,{\bf k}}_u \left[\Gamma, \mathscr{S} \right] \left(v_0,v\right)&= F^{i,{\bf k}}_u \big[\Gamma, \tilde{\mathscr{S}}\big] \left(v_0,v\right) + \int_{S^2} \left( \mathcal{G}_\Gamma^{i,{\bf k}} \left( u,v,\theta,\phi\right)-\mathcal{G}_\Gamma^{i,{\bf k}} \left( u,v_0,\theta,\phi\right)\right)\varepsilon_{S^2} \nonumber\\
F^{i,{\bf k}}_u [\mathcal{R}, \mathscr{S} ] \left(v_0,v\right)&= F^{i,{\bf k}}_u \big[\mathcal{R}, \tilde{\mathscr{S}}\big] \left(v_0,v\right) + \int_{S^2} \left( \mathcal{G}_{\mathcal{R}}^{i,{\bf k}} \left(u,v, \theta,\phi\right)-\mathcal{G}_{\mathcal{R}}^{i,{\bf k}} \left( u,v_0,\theta,\phi\right)\right)\varepsilon_{S^2} \nonumber
\end{align}
with
\begin{align} 
\mathcal{G}_\Gamma^{i,{\bf k}}  &=  6M\Big|\mathcal{K}^{i,{\bf k}} \frac{\Omega^2 f}{r}\Big|^2 
+ \mathcal{K}^{i,{\bf k}} \Big(\frac{f \Omega^2}{r}\Big)\Big[ 2r^3 \mathcal{K}^{i,{\bf k}} \left( {\rlin}-\divs \, {\eblin} \right)_{\Sop}
-\Big(1-\frac{4M}{r}\Big) \frac{r^2}{\Omega^2}  \mathcal{K}^{i,{\bf k}}{\otx}_{\Sop}\Big]\label{finge}  \\
&\quad  - \frac{ r^3}{2\Omega^{2}}  \Big[\mathcal{K}^{i,{\bf k}}\otx_{\So}-\mathcal{K}^{i,{\bf k}}{\otx}_{\tilde{\mathscr{S}}} \Big]\mathcal{K}^{i,{\bf k}}\otxb_{\So} \, , \nonumber\\
\mathcal{G}_{\mathcal{R}}^{i,{\bf k}} &= 3M r \Big[ \mathcal{K}^{i,{\bf k}}\Big(\frac{\Olino}{\Omega}\Big)_{\So} \cdot \mathcal{K}^{i,{\bf k}}\otxb_{\So} - \mathcal{K}^{i,{\bf k}}\Big(\frac{\Olino}{\Omega}\Big)_{\Sop} \cdot \mathcal{K}^{i,{\bf k}}{\otxb}_{\Sop} \Big]+ \frac{3M}{2} \Big| r\ns \mathcal{K}^{i,{\bf k}}\frac{f \Omega^2}{r} \Big|^2 \label{finge2} \\
&\quad- \frac{6M^2}{r}  \Big|\mathcal{K}^{i,{\bf k}} \frac{f\Omega^2}{r}\Big|^2 +3M\Big[\frac{4M}{r}   \mathcal{K}^{i,{\bf k}} \Big(\frac{\Olino}{\Omega}\Big)_{\Sop}- \frac{M}{\Omega^2} \mathcal{K}^{i,{\bf k}}{\otx}_{\Sop}- r^2 \mathcal{K}^{i,{\bf k}} \left(\divs {\elin} + {\rlin} \right)_{\Sop}  \Big]\mathcal{K}^{i,{\bf k}}\Big(\frac{f\Omega^2}{r}\Big)
  \nonumber \, .
\end{align}
Here the subscripts $\So$ or $\Sop$ indicate whether the geometric quantity is associated with the solution $\So$ or $\Sop$. In other words, the difference of the fluxes in the old and in the new gauge is a pure boundary term. 
\end{proposition}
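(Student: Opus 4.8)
\emph{Proof strategy.} The plan is to first reduce the commuted identities to the uncommuted ones ($i=0$, ${\bf k}$ empty) and then to establish these by a direct computation. For the reduction, recall that applying the operators $\slashed{\mathcal{L}}_T$, $\slashed{\mathcal{L}}_{\Omega_k}$ componentwise commutes through the entire linearised system (as used in Proposition~\ref{prop:conslaw2}): thus the componentwise-commuted solutions $\mathcal{K}^{i,{\bf k}}\So$, $\mathcal{K}^{i,{\bf k}}\Sop$, $\mathcal{K}^{i,{\bf k}}\mathscr{G}$ are again solutions, with $\mathcal{K}^{i,{\bf k}}\So = \mathcal{K}^{i,{\bf k}}\Sop + \mathcal{K}^{i,{\bf k}}\mathscr{G}$, and $F_u^{i,{\bf k}}[\Gamma,\cdot] = F_u[\Gamma,\mathcal{K}^{i,{\bf k}}\cdot]$, $F_u^{i,{\bf k}}[\mathcal{R},\cdot] = F_u[\mathcal{R},\mathcal{K}^{i,{\bf k}}\cdot]$ by the defining formulae~(\ref{uflux}),~(\ref{uflux2}) (here one uses that $\Omega$ is $T$- and $\Omega_k$-invariant). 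Since Schwarzschild is static and spherically symmetric, $T$ and the $\Omega_k$ preserve the full background double null structure — in particular $Tr = \Omega_k r = 0$ — so $\slashed{\mathcal{L}}_T$, $\slashed{\mathcal{L}}_{\Omega_k}$ commute with $\partial_v$, with multiplication by any function of $r$ (equivalently of $\Omega^2 = 1 - 2M/r$), and with $[r\slashed{\nabla}]$, $[r\Dst]$, $[r\Dso]$, $\Delta_{S^2}$. Inserting this into the formulae of Lemma~\ref{lem:exactsol}, one finds that $\mathcal{K}^{i,{\bf k}}\mathscr{G}$ is again a pure gauge solution of the type of Lemma~\ref{lem:exactsol}, with gauge function $\mathcal{K}^{i,{\bf k}}f$ — still a smooth function of $(v,\theta,\phi)$. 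Applying the uncommuted statement to $\mathcal{K}^{i,{\bf k}}\So = \mathcal{K}^{i,{\bf k}}\Sop + \mathscr{G}[\mathcal{K}^{i,{\bf k}}f]$ then produces precisely the boundary integrands $\mathcal{G}_\Gamma^{i,{\bf k}}$, $\mathcal{G}_{\mathcal{R}}^{i,{\bf k}}$, once we note again that $\mathcal{K}^{i,{\bf k}}$ passes through the factor $\Omega^2/r$ and commutes with $\divs$ applied to $\eblin$, $\elin$.

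\emph{The uncommuted case.} Here $F_u[\Gamma,\So](v_0,v) = \int_{v_0}^v\int_{S^2}(\cdots)\,r^2\,dv\,\varepsilon_{S^2}$ is a quadratic form in the components of $\So$. Writing $\So = \Sop + \mathscr{G}$ and polarising, $F_u[\Gamma,\So] - F_u[\Gamma,\Sop]$ is the sum of the pure gauge flux $F_u[\Gamma,\mathscr{G}]$ and a bilinear cross term; I would substitute the pure gauge components from Lemma~\ref{lem:exactsol}, expressed throughout via the single scalar $F := \Omega^2 f/r$ (using $\partial_v r = \Omega^2$ to absorb the $v$-derivative of $r$): $\xlin_{\mathscr{G}} = 0$, $\otx_{\mathscr{G}} = 2\partial_v F$, $\Omega\eblin_{\mathscr{G}} = \Omega^{-1}[r\slashed{\nabla}]\partial_v F$, $\otxb_{\mathscr{G}} = \frac{2}{r}(\Delta_{S^2} + 1 - \frac{4M}{r})F$, $\big(\frac{\Olino}{\Omega}\big)_{\mathscr{G}} = \frac{1}{2} F + \frac{r}{2\Omega^2}\partial_v F$, and $\olin_{\mathscr{G}} = \partial_v\big(\frac{1}{2} F + \frac{r}{2\Omega^2}\partial_v F\big)$ by~(\ref{oml3}). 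It then suffices to show that, after integration over $S^2$, the ($r^2$-weighted) integrand of the difference equals $\partial_v\big(\int_{S^2}\mathcal{G}_\Gamma\,\varepsilon_{S^2}\big)$, and likewise for the $\mathcal{R}$-flux and $\mathcal{G}_{\mathcal{R}}$. The terms quadratic in $F$ are a direct check — for instance the $|[r\slashed{\nabla}]\partial_v F|^2$ contribution to the flux is produced by $\partial_v$ hitting the $\otx_{\mathscr{G}}\otxb_{\mathscr{G}}$ piece of $\mathcal{G}_\Gamma$ together with an integration by parts on $S^2$. For the bilinear part I would differentiate the $\Sop$-dependent terms of $\mathcal{G}_\Gamma$ — schematically $2r^3 F(\rlin - \divs\eblin)_{\Sop} - (1-\frac{4M}{r})\frac{r^2}{\Omega^2} F\,\otx_{\Sop} - \frac{r^3}{\Omega^2}\partial_v F\,\otxb_{\Sop}$ — in $v$, substituting the linearised $\partial_v$-equations for the $\Sop$-quantities (namely~(\ref{uray}) for $\partial_v\otx$, the Bianchi identity~(\ref{Bianchi4}) for $\partial_v\rlin$, and~(\ref{propeta2}),~(\ref{oml3}) for $\partial_v\divs\eblin$), and integrating by parts on $S^2$ to match the remaining terms of the bilinear flux integrand. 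The $\mathcal{R}$-flux is treated identically, now with $\blin_{\mathscr{G}} = 6M\Omega r^{-4}[r\slashed{\nabla}]f$, $\rlin_{\mathscr{G}} = 6M\Omega^2 r^{-4}f$, $\slin_{\mathscr{G}} = 0$ and the Bianchi identities~(\ref{Bianchi2}),~(\ref{Bianchi5}),~(\ref{Bianchi7}). Since the two sides agree trivially at $v = v_0$, integrating from $v_0$ to $v$ gives the stated formulae.

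\emph{Main obstacle.} The delicate point is the bilinear computation yielding $\mathcal{G}_\Gamma$ and $\mathcal{G}_{\mathcal{R}}$: one must carry all the weights in $r$ and $\Omega^2$ through the product rule and the substituted transport/Bianchi equations, and choose judiciously which terms to integrate by parts in $v$ and which on $S^2$, so that exactly the combinations $\rlin - \divs\eblin$ (resp.\ $\rlin + \divs\elin$) and the precise coefficients $6M$, $3M$, $3M/2$, $(1-4M/r)$ appearing in the stated $\mathcal{G}$'s emerge. The reduction step, by contrast, is bookkeeping once the commutation of $T$ and $\Omega_k$ with the background operators is recorded.
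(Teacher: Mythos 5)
Your overall plan is the right one and matches the way this proposition is established in the reference \cite{ghconslaw} to which the paper delegates the proof: first reduce to the uncommuted case using the fact that $\slashed{\mathcal{L}}_T$, $\slashed{\mathcal{L}}_{\Omega_k}$ commute with $\partial_v$, with multiplication by functions of $r$, and with the angular operators, so that $\mathcal{K}^{i,\mathbf{k}}\mathscr{G}$ is again of the form in Lemma~\ref{lem:exactsol} with gauge function $\mathcal{K}^{i,\mathbf{k}}f$ and $F_u^{i,\mathbf{k}}[\cdot,\mathscr{S}]=F_u^{0,\mathbf{0}}[\cdot,\mathcal{K}^{i,\mathbf{k}}\mathscr{S}]$; and then, in the uncommuted case, show the difference of fluxes is the integral of a $v$-derivative by substituting the pure gauge formulae and the $\partial_v$-transport/Bianchi equations for the remaining geometric quantities and integrating by parts on $S^2$. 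This reduction argument is correct and cleanly stated, and writing everything through the scalar $F=\Omega^2 f/r$ (using $\partial_v r=\Omega^2$) is a sensible organising device.

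However, two of the ingredients you list for the $\mathcal{R}$-flux are wrong and would cause the explicit verification to fail as written. First, Lemma~\ref{lem:exactsol} gives $\blin_{\mathscr{G}}=0$; the expression $\tfrac{6M\Omega}{r^4}[r\slashed{\nabla}]f$ you wrote is $\bblin_{\mathscr{G}}$, and $\bblin$ does not appear in $F_u[\mathcal{R},\cdot]$ at all. Since the $|\Omega\blin|^2$ term in the flux therefore contributes no gauge cross-terms, using the wrong value introduces spurious contributions. Second, to compute $\partial_v$ of $\mathcal{G}_{\mathcal{R}}$ along constant-$u$ cones you need the $\slashed{\nabla}_4$-Bianchi equations~\eqref{Bianchi4} and~\eqref{Bianchi6}, not the $\slashed{\nabla}_3$-equations~\eqref{Bianchi5},~\eqref{Bianchi7} you cite (you in fact cite~\eqref{Bianchi4} correctly for the $\Gamma$-flux). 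These are slips rather than a failure of the strategy, but they must be corrected before the bilinear computation can close.
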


\begin{remark}
There is an analogous proposition to Proposition~\ref{prop:gaugechange} for the incoming cone. 
\end{remark}

\subsection{Useful Identities Involving the Vector Field $T$}
We finally collect some identities that follow directly from the system of gravitational perturbations and will be used later in the proof. In fact, the reader may already anticipate their use as the expressions below will appear naturally in the cross-terms of the fluxes (\ref{vflux})--(\ref{uflux2}).

\begin{proposition}\label{prop:TTransport}
We have the following general identities for $T$ applied to $\mathscr{S}$:
{
\allowdisplaybreaks
\begin{align*}
    2T\otx&=2\Omega^2(\divs\elin+\rlin)+4\Omega^2\rho\Big(\frac{\Olino}{\Omega}\Big)-\frac{\Omega^2}{r}\Big[\otxb+\otx\Big]+\frac{2M}{r^2}\otx+\frac{4\Omega^2}{r}\olin,\\
2T\otxb&=2\Omega^2(\divs\eblin+\rlin)+4\Omega^2\rho\Big(\frac{\Olino}{\Omega}\Big)+\frac{\Omega^2}{r}\Big[\otxb+\otx\Big]-\frac{2M}{r^2}\otxb-\frac{4\Omega^2}{r}\olinb,\\
2T\rlin&=\Omega\divs(\blin-\bblin)-\frac{3}{2}\rho\Big[\otx+\otxb\Big],\\
2\ns_T\bblin&=\Omega\Dso(\rlin,\slin)+\frac{6M}{r^3}\Omega\eblin+\frac{2}{r}\Big(1-\frac{3M}{r}\Big)\bblin-\divs(\Omega\ablin),\\
2T\olin&=\partial_v\olin-\Omega^2\Big[\rlin+2\rho\Big(\frac{\Olino}{\Omega}\Big)\Big],\\
2T\Big(\frac{\Olino}{\Omega}\Big)&=\olin+\olinb,
\end{align*}
}
and
\begin{align} 
2\slashed{\nabla}_T \elin &=  -\Omega \blin - \Omega \bblin + \frac{\Omega^2}{r} \left(\elin + \eblin\right)  + 2 \slashed{\nabla} \olinb \label{id1} \,  ,  \\
 \label{id2}
2\slashed{\nabla}_T \eblin &= \Omega \blin + \Omega \bblin - \frac{\Omega^2}{r} \left(\elin + \eblin\right)  +2 \slashed{\nabla} \olin \, , 
\end{align}
which combine to give
\begin{align}
2\slashed{\nabla}_T \left( \elin + \eblin\right) &=2 \slashed{\nabla} \olin+ 2 \slashed{\nabla} \olinb\label{id2b}  \, ,\\
2T(\rlin-\divs\eblin)&=-2\Omega\divs\bblin-2\ds\olin+\frac{\Omega^2}{r}\divs(\elin+\eblin)-\frac{3}{2}\rho\Big[\otx+\otxb\Big],\nonumber\\
2T(\rlin+\divs\elin)&=-2\Omega\divs\bblin+2\ds\olinb+\frac{\Omega^2}{r}\divs(\elin+\eblin)-\frac{3}{2}\rho\Big[\otx+\otxb\Big],\nonumber\\
2T(\rlin+\divs\eblin)&=2\Omega\divs\blin+2\ds\olin-\frac{\Omega^2}{r}\divs(\elin+\eblin)-\frac{3}{2}\rho\Big[\otx+\otxb\Big]\, .\nonumber
\end{align}

We have the following general identities for $T^2$ applied to $\mathscr{S}$:
\begin{align*}
2T^2\otx&=2\Omega^2T(\divs\elin+\rlin)+\frac{2\Omega^2}{r^2}\Big(1-\frac{4M}{r}\Big)\olinb+\frac{2\Omega^2}{r}\Big[\partial_v\olin-2\Omega^2\rlin\Big]-\frac{4\Omega^2\rho}{r}\Big(2-\frac{5M}{r}\Big)\Big(\frac{\Olino}{\Omega}\Big)\\
&\quad -\frac{\Omega^2}{r}\Big[\Omega^2\divs(\elin+\eblin)+\frac{2\Omega^2}{r}\olin\Big]-\frac{2M}{r^3}\Big(1-\frac{3M}{r}\Big)\otx+\frac{2M\Omega^2}{r^2}(\divs\elin+\rlin)\\
&=\frac{2\Omega^2}{r^2}\Big(1-\frac{4M}{r}\Big)\olinb+\frac{2\Omega^2}{r}\Big[\partial_v\olin-2\Omega^2\rlin\Big]-\frac{4\Omega^2\rho}{r}\Big(2-\frac{5M}{r}\Big)\Big(\frac{\Olino}{\Omega}\Big)-2\Omega^2\divs\Omega\bblin+2\Omega^2\ds\olinb\\
&\quad -\frac{2\Omega^4}{r^2}\olin-\frac{2M}{r^3}\Big(1-\frac{3M}{r}\Big)\otx+\frac{2M\Omega^2}{r^2}(\divs\elin+\rlin)-\frac{3\Omega^2}{2}\rho\Big[\otx+\otxb\Big],
\end{align*}
\begin{align*}
2T^2\otxb&=2\Omega^2T(\divs\eblin+\rlin)+\frac{2\Omega^2}{r^2}\Big(1-\frac{4M}{r}\Big)\olin -\frac{2\Omega^2}{r}\Big[\partial_u\olinb-2\Omega^2\rlin\Big]+\frac{4\Omega^2\rho}{r}\Big(2-\frac{5M}{r}\Big)\Big(\frac{\Olino}{\Omega}\Big)\\
&\quad+\frac{\Omega^2}{r}\Big[\Omega^2\divs(\elin+\eblin)-\frac{2\Omega^2}{r}\olinb\Big]-\frac{2M}{r^3}\Big(1-\frac{3M}{r}\Big)\otxb-\frac{2M\Omega^2}{r^2}(\divs\eblin+\rlin)\\
&=\frac{2\Omega^2}{r^2}\Big(1-\frac{4M}{r}\Big)\olin -\frac{2\Omega^2}{r}\Big[\partial_u\olinb-2\Omega^2\rlin\Big]+\frac{4\Omega^2\rho}{r}\Big(2-\frac{5M}{r}\Big)\Big(\frac{\Olino}{\Omega}\Big)+2\Omega^2\divs\Omega\bblin+2\Omega^2\ds\olin\\
&\quad-\frac{2\Omega^4}{r^2}\olinb-\frac{2M}{r^3}\Big(1-\frac{3M}{r}\Big)\otxb-\frac{2M\Omega^2}{r^2}(\divs\eblin+\rlin)-\frac{3\Omega^2}{2}\rho\Big[\otx+\otxb\Big],
\end{align*}
\begin{align*}
4T^2\Big(\frac{\Olino}{\Omega}\Big)&=-2\Omega^2\Big[\rlin+2\rho\Olin\Big]+\partial_v\olin+\partial_u\olinb,\\
4T^2\olin&=\partial_v^2\olin+\Omega^2\Big[\rho\Big(\Big[\frac{3}{r}-\frac{8M}{r^2}\Big]\Big(\frac{\Olino}{\Omega}\Big)-2(2\olin+\olinb)\Big)+\frac{3M}{r^3}\Big(2\otx+\otxb\Big)\Big]\\
&\qquad+\Omega^2\Big[\frac{1}{r}\Big(3-\frac{2M}{r}\Big)\rlin-\Omega\divs(2\blin-\bblin)\Big]
\end{align*}
and
\begin{align*}
2T^2(\rlin-\divs\eblin)&=-2\Omega\divs\ns_T\bblin-\ds \Big[\partial_v\olin-\Omega^2\rlin-2\Omega^2\rho\Big(\frac{\Olino}{\Omega}\Big)\Big]+\frac{\Omega^2}{r}(\ds\olin+\ds\olinb)\\
&\qquad-\frac{3}{2}\rho\Big[\Omega^2(\divs(\elin+\eblin)+2\rlin)+4\Omega^2\rho\Big(\frac{\Olino}{\Omega}\Big)+\frac{M}{r^2}\Big(\otx-\otxb\Big)+\frac{2\Omega^2}{r}\Big(\olin-\olinb\Big)\Big]\\
&=-\ds \Big[\partial_v\olin-2\Omega^2\rlin-2\Omega^2\rho\Big(\frac{\Olino}{\Omega}\Big)\Big]+\frac{\Omega^2}{r}(\ds\olin+\ds\olinb)+\divs\divs(\Omega^2\ablin)-\frac{2}{r}\Big(1-\frac{3M}{r}\Big)\divs\Omega\bblin\\
&\qquad-\frac{3}{2}\rho\Big[\Omega^2(\divs(\elin-\eblin)+2\rlin)+4\Omega^2\rho\Big(\frac{\Olino}{\Omega}\Big)+\frac{M}{r^2}\Big(\otx-\otxb\Big)+\frac{2\Omega^2}{r}\Big(\olin-\olinb\Big)\Big].
\end{align*}
\end{proposition}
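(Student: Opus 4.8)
The plan is to obtain every identity in the proposition as a direct algebraic consequence of the system (\ref{stos})--(\ref{Bianchi10}) and the explicit Schwarzschild background values recorded in Section~\ref{sec:Preliminaries}, i.e.\ $\Omega\mathrm{tr}\chi=\frac{2\Omega^2}{r}$, $\Omega\mathrm{tr}\underline\chi=-\frac{2\Omega^2}{r}$, $\omega=\frac{M}{r^2}$, $\underline\omega=-\frac{M}{r^2}$, $\rho=-\frac{2M}{r^3}$, together with the elementary facts $\partial_v r=\Omega^2$ and $\partial_u r=-\Omega^2$, which give $Tr=0$ and hence that $T$ annihilates every background function (in particular $T\Omega=0$, $T\rho=0$, $T\big(\frac{\Omega^2}{r}\big)=0$). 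The structural point is that on scalars $2T=\partial_u+\partial_v$ while on $S^2_{u,v}$-tensors $2\ns_T=\Omega(\ns_3+\ns_4)$ by definition, so a $T$- or $\ns_T$-transport equation for any quantity is produced simply by \emph{adding the two equations of the system that give its $\partial_u$ (resp.\ $\ns_3$) and its $\partial_v$ (resp.\ $\ns_4$) derivative} and then inserting the background values.

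For the first order identities I would carry this out as follows: add (\ref{dbtc}) and (\ref{uray}) for $2T\otx$; add (\ref{dtcb}) and (\ref{vray}) for $2T\otxb$; add (\ref{Bianchi4}) and (\ref{Bianchi5}) for $2T\rlin$; add (\ref{Bianchi8}) and (\ref{Bianchi9}) for $2\ns_T\bblin$ (using $\Omega\,\divs\ablin=\divs(\Omega\ablin)$, since $\Omega$ is constant on spheres); use (\ref{oml2}) for $\partial_u\olin$, leaving $\partial_v\olin$ untouched, for $2T\olin$; read off $2T\big(\frac{\Olino}{\Omega}\big)=\olin+\olinb$ from (\ref{oml3}); and add (\ref{propeta}) and (\ref{propeta2}) in the two possible pairings for (\ref{id1}) and (\ref{id2}). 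Then (\ref{id2b}) is (\ref{id1})$+$(\ref{id2}), and the four identities for $2T(\rlin\pm\divs\elin)$ and $2T(\rlin\pm\divs\eblin)$ follow by combining $2T\rlin$ with $\pm\divs$ applied to (\ref{id1}) or (\ref{id2}), using that $\divs$ commutes with $\ns_T$ (as $T$ is Killing) and $\divs\,\ns=\ds$ on functions.

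For the second order identities I would apply $T$ once more to the corresponding first order identity. Because $T$ is Killing, $\ns_T$ (which on components is just $T$) commutes with $\ns_3$, $\ns_4$, $\ns$, $\divs$, $\curls$ and with multiplication by background functions, hence with $\partial_u$ and $\partial_v$; so applying $T$ simply produces $T$-derivatives of the quantities on the right, which are then eliminated by re-inserting the first order identities together with the background rules $\partial_v\Omega^2=\frac{2M}{r^2}\Omega^2$, $\partial_v\rho=\frac{6M}{r^4}\Omega^2$, $\partial_v\Olin=\olin$, $\partial_u\Olin=\olinb$, and — for the surviving $\partial_v$-derivative of curvature in $4T^2\olin$ — equation (\ref{Bianchi4}). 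Concretely: applying $T$ to $2T\otx$ and substituting the identities for $T\otx$, $T\otxb$, $T\olin$, $T\big(\frac{\Olino}{\Omega}\big)$ gives the first displayed form of $2T^2\otx$, and further inserting $2T(\rlin+\divs\elin)$ — whereupon the $\divs(\elin+\eblin)$ terms cancel — gives the second; the identities for $2T^2\otxb$, $4T^2\big(\frac{\Olino}{\Omega}\big)$, $4T^2\olin$ and $2T^2(\rlin-\divs\eblin)$ are treated identically, the second form of the last one being obtained by additionally inserting $\divs$ of the $2\ns_T\bblin$ identity (and $\divs\Dso(f_1,f_2)=-\ds f_1$).

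I do not expect any conceptual obstacle: every step is a substitution. The only real work — and the reason the proposition records several identities in two successively reduced forms — is the careful bookkeeping of the powers of $r$ and of $\Omega^2=1-\frac{2M}{r}$ that accumulate once the background values and their derivatives are inserted; this is exactly where sign and weight errors would creep in, so I would organise the computation so that each $\ns_3/\ns_4$-pair (and each application of $T$) is simplified completely before moving to the next substitution.
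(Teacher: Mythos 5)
Your proposal is correct and takes precisely the approach of the paper's own (one-line) proof, which simply states that the identities follow from elementary computation using the equations of Section~\ref{sec:lineqs}: in each case one adds the relevant $\partial_u/\partial_v$ (or $\ns_3/\ns_4$) pair, substitutes the Schwarzschild background values, and iterates once for the $T^2$ identities. The specific equation pairings you list and the structural facts you invoke (such as $\ns_T$ commuting with $\ns$, $\divs$, $\curls$ since $T$ is Killing, and $\divs\Dso(f_1,f_2)=-\ds f_1$) are the right ones and the ones implicitly used by the authors.
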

\begin{proof}
These follow from elementary computations using the equations of Section \ref{sec:lineqs}.
\end{proof}

\section{Estimates on Any Outgoing Cone} \label{sec:outgoingcone}

We now let $\Si^\prime$ be a partially initial data normalised solution supported on $\ell \geq 2$ which is extendible to null infinity. Given a cone $C_{u_f}$ ($u_0 \leq u_f < \infty$) we will first construct from $\Si^\prime$ a solution $\Sf^\prime = \Si^\prime - \mathscr{G}_f$, where $\mathscr{G}_f$ is a pure gauge solution as in Lemma~\ref{lem:exactsol}, normalised to the cone $C_{u_f}$. We will determine $f$ and hence $\mathscr{G}_f$ in Section \ref{sec:detpg}. We will then define appropriate energies for the solution(s) in Section \ref{sec:energiesout} and be able to state the main theorems in Section \ref{sec:mtout}. The remainder of this section is then concerned with the proof.

\subsection{The Gauge Normalised to the Outgoing Cone} \label{sec:detpg}
Given a $u_f \in (u_0, \infty)$, we fix an outgoing cone $C_{u_f}$. We define\footnote{This definition differs from the one in \cite{ghconslaw} only in the normalisation on the sphere $S^2_{u_f,v_0}$. Note also the typo in (59) of \cite{ghconslaw}, where a factor of $\frac{1}{2}$ is missing.}
\begin{align} \label{choicef}
f \left(v, \theta,\phi \right) = \frac{1}{2} \frac{r}{\Omega^2} \left(u_f, v,\theta,\phi\right) \Big[ \int_{v_0}^{v} \otx_{\Si^\prime} \left(u_f,\bar{v}, \theta, \phi\right) d\bar{v} + f_0(u_f,\theta,\phi) \Big] \, ,
\end{align}
where
\begin{align} \label{f0def}
f_0 (u_f, \theta, \phi) := 2 \left[ 2\Olin_{\Si^\prime} -\frac{r}{2\Omega^2} \otx_{\Si^\prime}\right] \left(u_f,v_0,\theta,\phi\right) \, .
\end{align}
We note $f\left(v_0, \theta,\phi\right)=\frac{r}{2\Omega^2}f_0 (u_f, \theta, \phi)$ and recall from (\ref{f0defi}) that $\Omega^{-2} f_0(u_f, \theta, \phi)$ extends regularly to the horizon as $u_f \rightarrow \infty$ for an initial data normalised solution. Note also that in view of Definition \ref{def:extendskri}, $\frac{f}{r}$ extends regularly to infinity.
As mentioned above, we denote from now on the original solution by $\Si^\prime$, and by $\Sf^\prime$ the solution $\Sf^\prime = \Si^\prime - \mathscr{G}_f$, which is normalised to the cone $C_{u_f}$ in the following sense:

 \begin{proposition} \label{prop:propnewgauge}
The geometric quantities of the solution $\Si^\prime$ have the following properties:
\begin{align} \label{propertiesnewgauge}
\otx_{\Sf^\prime}|_{C_{u_f}} = 0 \ \ \ , \ \ \  \Olin_{\Sf^\prime} |_{C_{u_f}} = 0   \ \ \ , \ \ \ \olin_{\Sf^\prime} |_{C_{u_f}} = 0 \ \  , \ \  (\elin + \eblin)_{\Sf^\prime}|_{C_{u_f}}  = 0.
\end{align}
\end{proposition}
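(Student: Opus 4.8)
The plan is to read off the components of the pure gauge solution $\mathscr{G}_f$ from Lemma~\ref{lem:exactsol}, insert the defining formula \eqref{choicef}--\eqref{f0def} for the gauge function, and then argue directly with $\Sf^\prime = \Si^\prime - \mathscr{G}_f$. Being the difference of the solution $\Si^\prime$ and the pure gauge solution $\mathscr{G}_f$, the collection $\Sf^\prime$ is itself a solution of the system of Section~\ref{sec:lineqs}, so it satisfies all equations there, in particular the transport equations \eqref{uray} and \eqref{oml3}. The only background facts I would use are $\partial_v r = \Omega^2$ and $\Omega\mathrm{tr}\chi = \tfrac{2\Omega^2}{r} > 0$ on $C_{u_f}$, which holds since $C_{u_f}$ lies strictly in the exterior for $u_f < \infty$.

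I would first handle $\otx$. By Lemma~\ref{lem:exactsol} one has $\otx_{\mathscr{G}_f} = 2\,\partial_v\big(\tfrac{f\Omega^2}{r}\big)$. Restricting to the cone $u=u_f$, the prefactor $\tfrac{r}{\Omega^2}(u_f,v,\cdot)$ in \eqref{choicef} cancels against $\tfrac{\Omega^2}{r}(u_f,v,\cdot)$, leaving $\tfrac{f\Omega^2}{r}(u_f,v,\theta,\phi) = \tfrac12\big[\int_{v_0}^v \otx_{\Si^\prime}(u_f,\bar v,\theta,\phi)\,d\bar v + f_0(u_f,\theta,\phi)\big]$. Since $\partial_v$ is tangent to $C_{u_f}$, differentiating in $v$ gives $\otx_{\mathscr{G}_f}|_{C_{u_f}} = \otx_{\Si^\prime}|_{C_{u_f}}$ and hence $\otx_{\Sf^\prime}|_{C_{u_f}} = 0$.

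Next I would obtain the remaining three quantities from this. As $\otx_{\Sf^\prime}$ vanishes identically on $C_{u_f}$ and $\partial_v$ is tangent to it, also $\partial_v\otx_{\Sf^\prime}|_{C_{u_f}} = 0$; inserting this into \eqref{uray} for $\Sf^\prime$ and dividing by $\Omega\mathrm{tr}\chi > 0$ forces $\olin_{\Sf^\prime}|_{C_{u_f}} = 0$. The first equation of \eqref{oml3} then gives $\partial_v\Olin_{\Sf^\prime} = \olin_{\Sf^\prime} = 0$ on $C_{u_f}$, so $\Olin_{\Sf^\prime}$ is constant in $v$ along the cone and it remains only to see that it vanishes on $S^2_{u_f,v_0}$. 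This is exactly the role of \eqref{f0def}: using $\Olin_{\mathscr{G}_f} = \tfrac{1}{2\Omega^2}\partial_v(\Omega^2 f)$ from Lemma~\ref{lem:exactsol}, the identity $\partial_v r = \Omega^2$, and \eqref{choicef}, a short computation gives $\Olin_{\mathscr{G}_f}(u_f,v_0,\cdot) = \tfrac14 f_0(u_f,\cdot) + \tfrac{r}{4\Omega^2}\otx_{\Si^\prime}(u_f,v_0,\cdot)$, which by the definition $f_0 = 4\Olin_{\Si^\prime} - \tfrac{r}{\Omega^2}\otx_{\Si^\prime}$ on $S^2_{u_f,v_0}$ equals precisely $\Olin_{\Si^\prime}(u_f,v_0,\cdot)$; thus $\Olin_{\Sf^\prime}(u_f,v_0,\cdot) = 0$ and therefore $\Olin_{\Sf^\prime}|_{C_{u_f}} = 0$. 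Finally, since $\slashed{\nabla}$ is tangent to the spheres $S^2_{u_f,v}$ foliating $C_{u_f}$, the last equation of \eqref{oml3}, $2\slashed{\nabla}\Olin = \elin + \eblin$, yields $(\elin+\eblin)_{\Sf^\prime}|_{C_{u_f}} = 2\slashed{\nabla}\big(\Olin_{\Sf^\prime}|_{C_{u_f}}\big) = 0$.

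Every step is elementary, and the argument is a mild variant of the one in~\cite{ghconslaw}. The one delicate point is the matching on $S^2_{u_f,v_0}$ in the last paragraph: the constant of integration $f_0$ must be chosen so that the boundary value of $\Olin$ there is annihilated, which is precisely why \eqref{f0def} takes the form it does — and it is the only place where the modified sphere normalisation (as opposed to the one of~\cite{ghconslaw}) enters.
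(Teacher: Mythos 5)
Your proof is correct and follows exactly the same route as the paper's: derive $\otx_{\Sf'}|_{C_{u_f}}=0$ directly from the defining formula for $f$, deduce $\olin_{\Sf'}=0$ from the Raychaudhuri equation \eqref{uray}, propagate $\Olin_{\Sf'}$ in $v$ via \eqref{oml3} and check its vanishing on $S^2_{u_f,v_0}$ using the definition of $f_0$, and finally take a tangential $\slashed{\nabla}$ to obtain $(\elin+\eblin)_{\Sf'}=0$. You are a bit more explicit than the paper in verifying $\Olin_{\Sf'}(u_f,v_0,\cdot)=0$, but the content is identical.
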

\begin{proof}
In the following we suppress the angular dependence of all quantities in the notation. From Lemma~\ref{lem:exactsol} we have
\begin{align*}
\otx_{\Sf'}(u_f,v)&=\otx_{\Si'}(u_f,v)-\partial_v\Big(\frac{2\Omega^2}{r}f\Big)=-\partial_v\Big(\frac{r(u_f,v)}{\Omega^2(u_f,v)}\frac{\Omega^2}{r}\Big)\frac{2\Omega^2}{r}f=0.
\end{align*}
The Raychauduri equation~\eqref{vray} then gives $\olin_{\Sf'}(u_f,v)=0$ on $C_{u_f}$. Hence, from Lemma~\ref{lem:exactsol}, that
\begin{align*}
\Big(\frac{\Olino}{\Omega}\Big)_{\Sf'}(u_f,v)=\Big(\frac{\Olino}{\Omega}\Big)_{\Sf'}(u_f,v_0)=\Big(\frac{\Olino}{\Omega}\Big)_{\Si'}(u_f,v_0)-\frac{r}{4\Omega^2}\otx_{\Si'}(u_f,v_0)-\frac{1}{4}f_0=0.
\end{align*}
where we use that $\otx_{\mathcal{G}_f}=\otx_{\Si'}$ along $C_{u_f}$ and, $f(v_0)=\frac{r}{2\Omega^2}f_0$.

The last identity in~\eqref{propertiesnewgauge} follows from taking a $\ns$ derivative of $\Olin_{\Sf^\prime} |_{C_{u_f}}=0$ and using equation~\eqref{oml3}.
\end{proof}

For future reference we already note that along the cone $C_{u_f}$
\begin{align} \label{Tfalongufin}
2Tf |_{C_{u_f}}=\partial_vf= \frac{1}{2} \frac{r}{\Omega^2}\otx_{\Si^\prime} |_{C_{u_f}}+\Big[1-\frac{4M}{r}\Big]\frac{f}{r}\Big|_{C_{u_f}}\, ,
\end{align}
and also 
\begin{align} \label{TTfalongufin}
4TTf |_{C_{u_{f}}}&=\partial_v^2f=  - \frac{M}{r \Omega^2}\otx_{\Si'}\Big|_{C_{u_f}}+ 2\olin_{\Si'}\Big|_{C_{u_f}}-\frac{2M}{r^3}f\Big|_{C_{u_f}} \, .
\end{align}
As an immediate consequence, for $ i\geq 2$ and $j \geq 0$ we have the limits
\begin{align} \label{Tflimit}
\lim_{v \rightarrow \infty} | T  [r\slashed{\nabla}]^j f \left(u_f,v, \theta,\phi\right) | =\frac{|[r\slashed{\nabla}]^j f|}{r}  \ \ \ \textrm{and} \ \ \ 
\lim_{v \rightarrow \infty} | T^i  [r\slashed{\nabla}]^j f \left(u_f,v, \theta,\phi\right) | =0 \, .
\end{align}
\subsection{The Energies} \label{sec:energiesout}
\subsubsection{The Master Energies along the Cone $C_{u_f}$}
We define the master energies for the solution $\Sf^\prime$ along the cone $C_{u_f}$. 

For $i=0,1,2$
\begin{align}
{\mathbb{E}}^{i,0} [\Gamma, \Sf^\prime] (u_{f})&=  
\int_{v_0}^{\infty} \int_{S^2}   \Big[  r^2 | [\slashed{\nabla}_T]^{i} \Omega\xlin_{\Sf^\prime}|^2+r^2 \Omega^2 |[\slashed{\nabla}_T]^{i}  \eblin_{\Sf^\prime}|^2 \Big] \Big|_{u=u_{f}}dv\varepsilon_{S^2} \, .
\end{align}
For $i=0,1$
\begin{align}
{\mathbb{E}}^{0,i} [\mathcal{R}, \Sf^\prime] (u_{f})&=  
\int_{v_0}^{\infty}  \int_{S^2} \Big[ r^4|[r\slashed{\nabla}]^{i} \Omega\blin_{\Sf^\prime}|^2+ \Omega^2 r^4 \left(|[r\slashed{\nabla}]^{i}\rlin_{\Sf^\prime}|^2 + |[r\slashed{\nabla}]^{i}\slin_{\Sf^\prime}|^2 \right)  \Big]\Big|_{u=u_{f}}dv\varepsilon_{S^2} \, , \nonumber \\
{\mathbb{E}}^{i,0} [\mathcal{R}, \Sf^\prime] (u_{f})&=  
\int_{v_0}^{\infty} \int_{S^2}  \Big[ r^4|[\slashed{\nabla}_T]^{i} \Omega\blin_{\Sf^\prime}|^2+ \Omega^2 r^4 \left(|[\slashed{\nabla}_T]^{i}\rlin_{\Sf^\prime}|^2 + |[\slashed{\nabla}_T]^{i}\slin_{\Sf^\prime}|^2 \right)  \Big]\Big|_{u=u_{f}}dv\varepsilon_{S^2} \, .
\end{align}

\subsubsection{The Initial Data Energies} \label{sec:dataenergies}
We define for $i=0,1,2$ the energies of the connection coefficients
\begin{align} \label{Ei0datadef}
\mathbb{E}^{i,0}_{data}[\Gamma] (u_{f}) &\doteq   F^{i,0}_{u_0}[\Gamma, \Si^\prime] (v_0,\infty) + F^{i,0}_{v_0}[\Gamma, \Si^\prime] (u_0, u_{f}) \nonumber \\
&\qquad+\frac{1}{2}  \lim_{v\rightarrow \infty} \int_{S^2} r^3 [\slashed{\nabla}_T]^i \otx_{\Si^\prime} \cdot [\slashed{\nabla}_T]^i \otxb_{\Si^\prime}\varepsilon_{S^2} \, ,
\end{align}
as well for $i=0,1$ the energies of the curvature components
\begin{align}
\mathbb{E}^{i,0}_{data} [\mathcal{R}](u_f) &\doteq F^{i,0}_{u_0}[\mathcal{R}, \Si^\prime](v_0, \infty) + F^{i,0}_{v_0}[\mathcal{R}, \Si^\prime](u_0, u_f) \, ,  \label{Ecd} \\
\mathbb{E}^{0,1}_{data} [\mathcal{R}](u_f) &\doteq \sum_{|{\bf k}|=1}F^{0,{\bf k}}_{u_0}[\mathcal{R}, \Si^\prime](v_0, \infty) +\sum_{|{\bf k}|=1} F^{0,{\bf k}}_{v_0}[\mathcal{R}, \Si^\prime](u_0, u_f) \, . 
\end{align}
We complement the above with energies for quantities in the $\Si'$ gauge on the initial data sphere $S^2_{u_f,v_0}$:
\begin{align}
{\mathbb{D}}_{data}^1(u_f,v_0):= \Omega^2(u_f,v_0) &\Big[\|\Omega^{-2} \otx, \Omega^{-2} \otxb , \olin, \olinb, \Omega^{-2} \Delta_{S^2} f_0\|_{u_f,v_0}^2 \Big] \nonumber \\
+ \Omega^2(u_f,v_0) & \Big[ \| [r\slashed{\nabla} {\Omega^{-2}} \otx, r\slashed{\nabla} {\Omega^{-2}} \otxb, \divs  \elin +\rlin , \divs  \eblin +\rlin,[r\slashed{\nabla}] \olinb,{\Omega^{-1}} \underline{\beta} \|^2_{u_f,v_0}  \Big] \, \nonumber
\end{align}
and 
\begin{align}
\mathbb{D}_{data}^2(u_f,v_0) \doteq  \mathbb{D}_{data}^1(u_f,v_0)+ \Omega^2\Big\|{\Omega^2\eblin,\partial_v\olin,\Omega^{-1}\partial_u\olinb, \slin,\rlin,\frac{\bblin}{\Omega},\Omega^2\ablin}\Big\|^2_{u_f,v_0} &\nonumber\\
 +\Omega^2\Big\| T(\divs\eblin+\rlin),T(\divs\elin+\rlin), \Delta_{S^2} \Omega^{-2} \olinb - r^2 \divs  \Omega^{-1} \bblin\Big\|^2_{u_f,v_0} &
\nonumber \\
+\Omega^2 \Bigg[\Big|\Big\langle\ns_T\frac{\bblin}{\Omega},\ns\olin\Big\rangle_{u_f,v_0}\Big|+\Big|\Big\langle\ns\rlin,\ns\frac{\otx}{\Omega^2}\Big\rangle_{u_f,v_0}\Big|+|\langle\ns\rlin,\ns\olin\rangle_{u_f,v_0}| +\Big|\Big\langle\ns_T\frac{\bblin}{\Omega},\frac{\ns\otx}{\Omega^2}\Big\rangle_{u_f,v_0}\Big|\Bigg] &\nonumber \, , 
\end{align}
where as mentioned all quantities that appear above are in the $\Si'$ gauge and we denote
\begin{align*}
|||f,g,h,...||_{u,v}=||f||_{u,v}+||g||_{u,v}+||h||_{u,v}+\dots.
\end{align*}

\begin{remark}
The data energies $\mathbb{E}^{i,0}_{data}[\Gamma] (u_{f})$, $\mathbb{E}^{i,0}_{data} [\mathcal{R}](u_f)$, $\mathbb{E}^{0,1}_{data} [\mathcal{R}](u_f)$ defined above are not necessarily coercive. What we will show successively in the proof is that these energies are indeed coercive provided a large constant (depending only on $M$) times an appropriate $\mathbb{D}^i_{data}$-energy is added. This is familiar from \cite{ghconslaw}. Alternatively, the aforementioned data energies can be made manifestly coercive by suitably normalising the initial data gauge. 
\end{remark}

\begin{remark}
Note that we have $\mathbb{D}^1_{data}(\infty,v_0)=0$ and $\mathbb{D}^2_{data}(\infty,v_0)=0$ for regular data. 
\end{remark}

\begin{remark}
The energy $\mathbb{D}^1_{data}$ contains up to first angular derivatives of connection coefficients and $\Delta_{S^2} f_0$, which involves second angular derivatives. The reason we have included the latter in $\mathbb{D}^1_{data}$ is that $\Delta_{S^2} f_0$ will play a distinguished role in the proof and could be brought to vanish by normalising the initial data gauge slightly differently, namely with respect to sphere $S^2_{u_f,v_0}$ instead of $S^2_{\infty,v_0}$.
\end{remark}

\begin{remark}
To appreciate the structure of $\mathbb{D}_{data}^2(u_f,v_0)$ note that the terms in the first line involve (besides $\Delta_{S^2} f_0$ in  $\mathbb{D}_{data}^1(u_f,v_0)$ as mentioned) only first derivatives of Ricci coefficients and curvature components. The terms in the second line involve first derivatives of the mass aspect functions and a renormalised second derivative of $\olinb$. Importantly, control on these quantities on the double null spheres can be propagated in evolution at the basic regularity level considered here, namely fluxes of up to one derivative of curvature in $L^2$. This is well-known from \cite{CK93} in the non-linear context. Similarly, the terms in the third line involve (norms of) inner products on double null spheres which can be propagated in evolution along the cones. Note that one derivative of curvature on spheres cannot in general be propagated without loss of regularity which is why we have retained the inner products.  
\end{remark}

\subsection{The Main Theorems} \label{sec:mtout}

We are ready to state the main theorems. In the following we write $X \lesssim Y$ if $X \leq C \cdot Y$ holds for a constant $C$ depending only on $M$.

\begin{theorem} \label{prop:firstround}
Let $\Si^\prime$ be a partially initial data normalised solution supported on $\ell \geq 2$ of the system of gravitational perturbations that is also extendible to null infinity. Let $\Sf^\prime$ be as defined in Section \ref{sec:detpg} and recall the energies defined in Section \ref{sec:energiesout}. We have the estimate
\begin{align} \label{secondorderest}
\sum_{i=0}^2 {\mathbb{E}}^{i,0} [\Gamma, \Sf^\prime] (u_{f})  +\sum_{i=0}^1 {\mathbb{E}}^{0,i} [\mathcal{R}, \Sf^\prime] (u_{f}) +\sum_{i=0}^1 {\mathbb{E}}^{i,0} [\mathcal{R}, \Sf^\prime] (u_{f})  
\lesssim \mathbb{E}^2_{data}(u_f) 
\end{align}
where 
\begin{align} \label{energyinmaintheorem}
\mathbb{E}^2_{data}(u_f) :=& \mathbb{E}^{2,0}_{data}[\Gamma] (u_{f}) +  \mathbb{E}^{1,0}_{data}[\mathcal{R}] (u_{f}) + \mathbb{E}^{1,0}_{data}[\Gamma] (u_{f}) \nonumber \\
+&  \mathbb{E}^{0,1}_{data}[\mathcal{R}] (u_{f}) + \mathbb{E}^{0,0}_{data}[\mathcal{R}] (u_{f}) +  \mathbb{E}^{0,0}_{data}[\Gamma] (u_{f})  + C_{max} \cdot \mathbb{D}_{data}^2(u_{f},v_0) 
\end{align}
for a constant $C_{max}$ depending only on $M$ determined in the course of the proof.
\end{theorem}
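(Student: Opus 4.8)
\emph{Strategy.} The plan is to read off each of the six master energies on $C_{u_f}$ from the two conservation laws of Proposition~\ref{prop:conslaw2}, commuted zero, one or two times with $\slashed{\mathcal{L}}_T$ and (for the curvature energy) once with $\slashed{\mathcal{L}}_{\Omega_k}$ summed over $k$, after moving between the $\Si^\prime$- and $\Sf^\prime$-gauges via Proposition~\ref{prop:gaugechange}. The key structural point is Proposition~\ref{prop:propnewgauge}: since $\otx_{\Sf^\prime}$, $\olin_{\Sf^\prime}$, $\Olin_{\Sf^\prime}$ and $(\elin+\eblin)_{\Sf^\prime}$ vanish identically on $C_{u_f}$, \emph{all} cross terms in the uncommuted outgoing connection flux drop out, so $\mathbb{E}^{0,0}[\Gamma,\Sf^\prime](u_f)\le F^{0,0}_{u_f}[\Gamma,\Sf^\prime]\le 2\,\mathbb{E}^{0,0}[\Gamma,\Sf^\prime](u_f)$, while $F^{0,0}_{u_f}[\mathcal{R},\Sf^\prime]=\tfrac12\mathbb{E}^{0,0}[\mathcal{R},\Sf^\prime](u_f)-3M\int r|\Omega\eblin_{\Sf^\prime}|^2$ with the subtracted integral $\le\tfrac32\mathbb{E}^{0,0}[\Gamma,\Sf^\prime](u_f)$ since $r\ge 2M$. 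The same holds after commuting with $\slashed{\mathcal{L}}_{\Omega_k}$ (which is tangent to $C_{u_f}$), but fails after commuting with $\slashed{\mathcal{L}}_T$, which is the source of the main difficulty.

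\emph{The uncommuted and angularly commuted estimates.} Apply the first conservation law of Proposition~\ref{prop:conslaw2} to $\Si^\prime$ on $[u_0,u_f]\times[v_0,v_f]$ and let $v_f\to\infty$. By Definition~\ref{def:extendskri} every term of $F_{v_f}[\mathcal{R},\Si^\prime](u_0,u_f)$ decays except $\tfrac12 r^4|\Omega\bblin_{\Si^\prime}|^2$, whose limit is non-negative, so this flux is dropped, giving $F^{0,0}_{u_f}[\Gamma,\Si^\prime](v_0,\infty)\le F^{0,0}_{v_0}[\Gamma,\Si^\prime](u_0,u_f)+F^{0,0}_{u_0}[\Gamma,\Si^\prime](v_0,\infty)$. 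Proposition~\ref{prop:gaugechange} replaces the left side by $F^{0,0}_{u_f}[\Gamma,\Sf^\prime]$ plus $\int_{S^2}(\mathcal{G}^{0,0}_\Gamma(u_f,\cdot)-\mathcal{G}^{0,0}_\Gamma(u_f,v_0))\varepsilon_{S^2}$; using $\otx_{\Sf^\prime}|_{C_{u_f}}=0$, the explicit form \eqref{choicef}--\eqref{f0def} of $f,f_0$, the limits of Definition~\ref{def:extendskri}, and the Poincar\'e inequality for $\ell\ge 2$ functions, the $S^2_{u_f,v_0}$-part of this is $\le C\,\mathbb{D}_{data}^1(u_f,v_0)$ and the $S^2_{u_f,\infty}$-part is absorbed into the term $\tfrac12\lim_{v\to\infty}\int_{S^2}r^3\otx_{\Si^\prime}\otxb_{\Si^\prime}\varepsilon_{S^2}$ already present in $\mathbb{E}^{0,0}_{data}[\Gamma](u_f)$ (the partial data normalisation being what makes these limiting-sphere contributions match up). With the structural point this gives $\mathbb{E}^{0,0}[\Gamma,\Sf^\prime](u_f)\lesssim\mathbb{E}^{0,0}_{data}[\Gamma](u_f)+C\,\mathbb{D}_{data}^1$. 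The same argument with the second conservation law, again discarding the non-negative $\mathcal{I}^+$-flux and absorbing the subtracted $3M\int r|\Omega\eblin_{\Sf^\prime}|^2$ into $\mathbb{E}^{0,0}[\Gamma,\Sf^\prime]$, yields $\mathbb{E}^{0,0}[\mathcal{R},\Sf^\prime](u_f)\lesssim\mathbb{E}^{0,0}_{data}[\mathcal{R}]+\mathbb{E}^{0,0}_{data}[\Gamma]+C\,\mathbb{D}_{data}^1$. Finally, commuting both arguments with $\slashed{\mathcal{L}}_{\Omega_k}$ and summing over $k$ --- using Proposition~\ref{prop:angident} to convert $\sum_k|\slashed{\mathcal{L}}_{\Omega_k}\,\cdot\,|^2$ into $|[r\slashed{\nabla}]\,\cdot\,|^2$ up to controlled lower-order terms --- controls $\sum_k F^{0,{\bf k}}_{u_f}[\Gamma,\Sf^\prime]$ (needed to absorb $3Mr\sum_k|\slashed{\mathcal{L}}_{\Omega_k}\Omega\eblin_{\Sf^\prime}|^2$) and hence gives $\mathbb{E}^{0,1}[\mathcal{R},\Sf^\prime](u_f)\lesssim\mathbb{E}^{0,1}_{data}[\mathcal{R}]+\mathbb{E}^{0,0}_{data}[\mathcal{R}]+\mathbb{E}^{0,0}_{data}[\Gamma]+C\,\mathbb{D}_{data}^1$.

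\emph{The $T$-commuted estimates.} Commute the conservation laws once with $\slashed{\mathcal{L}}_T$ and repeat. Since $T$ is transversal to $C_{u_f}$, the quantities $\slashed{\nabla}_T\otx_{\Sf^\prime}$, $\slashed{\nabla}_T\otxb_{\Sf^\prime}$, $\slashed{\nabla}_T\olin_{\Sf^\prime}$, $\slashed{\nabla}_T\big(\tfrac{\Olino}{\Omega}\big)_{\Sf^\prime}$ no longer vanish on $C_{u_f}$ and the cross terms $-2\slashed{\mathcal{L}}_T\olin_{\Sf^\prime}\!\cdot\slashed{\mathcal{L}}_T\otxb_{\Sf^\prime}-\tfrac12(\slashed{\mathcal{L}}_T\otx_{\Sf^\prime})^2+\tfrac{4M}{r^2}\slashed{\mathcal{L}}_T\big(\tfrac{\Olino}{\Omega}\big)_{\Sf^\prime}\!\cdot\slashed{\mathcal{L}}_T\otx_{\Sf^\prime}$ persist in $F^{1,0}_{u_f}[\Gamma,\Sf^\prime]$. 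I handle them by restricting the $T$-transport identities of Proposition~\ref{prop:TTransport} to $C_{u_f}$: using the vanishing relations of Proposition~\ref{prop:propnewgauge}, these $\slashed{\nabla}_T$-quantities are expressed through $\rlin_{\Sf^\prime}$, $\otxb_{\Sf^\prime}$, $(\divs\elin+\rlin)_{\Sf^\prime}$, $\divs\eblin_{\Sf^\prime}$ and undifferentiated connection coefficients. Here $\otxb_{\Sf^\prime}$ is propagated along $C_{u_f}$ by \eqref{dtcb} reduced to $C_{u_f}$; $(\divs\elin+\rlin)_{\Sf^\prime}$, which equals $(\rlin-\divs\eblin)_{\Sf^\prime}$ there since $\eblin_{\Sf^\prime}=-\elin_{\Sf^\prime}$ on $C_{u_f}$, is propagated by the transport equation formed from \eqref{Bianchi4} and $\divs$ of \eqref{propeta}, in which the $\divs\blin$ term cancels --- both being fed $S^2_{u_f,v_0}$-data from $\mathbb{D}_{data}^1$ and $f,f_0$ --- while $\rlin_{\Sf^\prime}$ is controlled by $\mathbb{E}^{0,0}[\mathcal{R},\Sf^\prime]$ and $\divs\eblin_{\Sf^\prime}=\rlin_{\Sf^\prime}-(\divs\elin+\rlin)_{\Sf^\prime}$. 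The surviving cross terms are thereby bounded by the energies of the previous paragraph plus a borderline piece absorbed into $\tfrac12\mathbb{E}^{1,0}[\Gamma,\Sf^\prime]$ by Cauchy--Schwarz and Poincar\'e. The gauge-change sphere terms, now carrying $\slashed{\nabla}_T f$ and one $u$-derivative of $\Si^\prime$-data on $S^2_{u_f,v_0}$, are bounded by the second-line terms of $\mathbb{D}_{data}^2$, and at $\mathcal{I}^+$ are made convergent by \eqref{Tflimit}. The estimate for $\mathbb{E}^{1,0}[\mathcal{R},\Sf^\prime]$ follows identically, using $\mathbb{E}^{1,0}[\Gamma,\Sf^\prime]$ to absorb its cross terms and the subtracted $3Mr|\slashed{\mathcal{L}}_T\Omega\eblin_{\Sf^\prime}|^2$.

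\emph{The $T^2$-commuted estimate, the main obstacle, and conclusion.} For $\mathbb{E}^{2,0}[\Gamma,\Sf^\prime]$ commute the first conservation law twice with $\slashed{\mathcal{L}}_T$; the surviving cross terms now involve $\slashed{\nabla}_T^2$ of the four normalised quantities, which the $T^2$-identities of Proposition~\ref{prop:TTransport} (reduced to $C_{u_f}$) express through $\divs\Omega\bblin_{\Sf^\prime}$, $\ds\olin_{\Sf^\prime}$, $\ds\olinb_{\Sf^\prime}$, $\partial_v\olin_{\Sf^\prime}$, $T(\divs\elin+\rlin)_{\Sf^\prime}$, $\divs\divs(\Omega^2\ablin_{\Sf^\prime})$ and lower order. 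I expect the main obstacle to be precisely this step: each such term must be fed back through the lower-order energies $\mathbb{E}^{0,1}[\mathcal{R},\Sf^\prime]$, $\mathbb{E}^{1,0}[\mathcal{R},\Sf^\prime]$, $\mathbb{E}^{1,0}[\Gamma,\Sf^\prime]$ and transport equations along $C_{u_f}$ (e.g.\ \eqref{Bianchi8} for $\bblin$, \eqref{oml1}--\eqref{oml2} for $\olin$, $\olinb$) without any net loss of derivatives --- which is exactly why the inner-product combinations on $S^2_{u_f,v_0}$ in the third line of $\mathbb{D}_{data}^2$ must be retained rather than integrated by parts --- and one must check that all $r$- and $\Omega^2$-weights conspire so that the remaining borderline contributions absorb into $\tfrac12\mathbb{E}^{2,0}[\Gamma,\Sf^\prime]$, while the $S^2_{u_f,\infty}$-contributions converge and cancel by Definition~\ref{def:extendskri} and \eqref{Tflimit}. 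Summing the six estimates, each controlled by its designated piece of \eqref{energyinmaintheorem} together with a fixed multiple of $\mathbb{D}_{data}^2(u_f,v_0)$, and taking $C_{max}$ to be the largest of the finitely many constants thereby produced, gives \eqref{secondorderest}.
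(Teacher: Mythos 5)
Your high-level scaffolding is the same as the paper's: apply the (commuted) conservation laws, pass to the $\Sf^\prime$-gauge where the cross terms vanish on $C_{u_f}$, use the $T$-transport identities of Proposition~\ref{prop:TTransport} on $C_{u_f}$ to eliminate the surviving cross terms after $T$-commutation, and interleave transport/elliptic estimates along $C_{u_f}$. But there are concrete gaps that would need to be filled.

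\emph{The $\mathcal{I}^+$ boundary term.} Your Step~1 drops the ingoing curvature flux $F^{0,0}_{v_f}[\mathcal{R},\Si^\prime]$ (correctly noting it tends to the non-negative $\tfrac12\int r^4|\Omega\bblin_{\Si^\prime}|^2$), and then you claim the remaining $-\tfrac{r^3}{2\Omega^2}\otx_{\Si^\prime}\otxb_{\Si^\prime}$ piece of $\mathcal{G}^{0,0}_\Gamma(u_f,\infty)$ ``is absorbed into the term $\tfrac12\lim_{v\to\infty}\int r^3\otx_{\Si^\prime}\otxb_{\Si^\prime}\varepsilon_{S^2}$ already present in $\mathbb{E}^{0,0}_{data}[\Gamma](u_f)$, (the partial data normalisation being what makes these limiting-sphere contributions match up).'' This is not correct: the boundary term in $\mathbb{E}^{0,0}_{data}[\Gamma](u_f)$ is at $S^2_{u_0,\infty}$ (one reads this off from how (\ref{Ei0datadef}) enters in the paper's Step~1), while the gauge boundary term you are trying to absorb is at $S^2_{u_f,\infty}$; these are different spheres and partial data normalisation fixes nothing at $u_f$. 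The paper instead retains the \emph{ingoing connection flux} $F_{v_f}[\Gamma,\Si^\prime]$ and uses the identity (\ref{niflux}), which exhibits a boundary term $+\tfrac{r^3}{2}\int\otx\otxb$ at $(u_f,v_f)$ cancelling exactly against the $-\tfrac{r^3}{2\Omega^2}\otx\otxb$ at $(u_f,v_f)$ from $\mathcal{G}^{0,0}_\Gamma$; what survives is the $(u_0,v_f)$ boundary term, which becomes the $S^2_{u_0,\infty}$ piece of $\mathbb{E}^{0,0}_{data}[\Gamma]$. Without that cancellation your RHS is $F^{0,0}_{v_0}[\Gamma]+F^{0,0}_{u_0}[\Gamma]+C\,\mathbb{D}^1_{data}$, which differs from the theorem's $\mathbb{E}^{0,0}_{data}[\Gamma]+C\,\mathbb{D}^1_{data}$ by the sign-indefinite $S^2_{u_0,\infty}$ term, so you have not matched the claimed data energy.

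\emph{Angular commutation of the $\Gamma$-conservation law.} To control $\mathbb{E}^{0,1}[\mathcal{R},\Sf^\prime]$ you propose to commute \emph{both} conservation laws with $\slashed{\mathcal{L}}_{\Omega_k}$ and use the angularly commuted $\Gamma$-flux to absorb $3Mr\sum_k|\slashed{\mathcal{L}}_{\Omega_k}\Omega\eblin_{\Sf^\prime}|^2$. But the theorem's data energy $\mathbb{E}^2_{data}(u_f)$ does \emph{not} contain an $\mathbb{E}^{0,1}_{data}[\Gamma]$ term, so this route introduces an initial-data quantity not on the allowed RHS. The paper deliberately avoids commuting the $\Gamma$-law angularly: the needed flux of $[r\slashed{\nabla}]\elin_{\Sf^\prime}$, $[r\slashed{\nabla}]\eblin_{\Sf^\prime}$, $[r\slashed{\nabla}]\Omega\xlin_{\Sf^\prime}$ is derived in Proposition~\ref{prop:angularetab} from the \emph{uncommuted} $\Gamma$- and $\mathcal{R}$-fluxes via the elliptic Codazzi identity and the conservation of $r^3(\divs\elin+\rlin)_{\Sf^\prime}$ along $C_{u_f}$ (Lemma~\ref{lem:fromoldpaper}). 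This is precisely what keeps the data energy as small as stated in (\ref{energyinmaintheorem}).

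\emph{The $T^2$ step.} You correctly locate where the remaining difficulty lies, but explicitly leave it open. What's missing is the paper's Step~7: the once-$T$-commuted curvature flux together with the Codazzi identity yields the flux of $[r\slashed{\nabla}]\bblin_{\Sf^\prime}$ and $[r\slashed{\nabla}]^2\xlin_{\Sf^\prime}$ (Proposition~\ref{prop:angularbetabflux}), and a renormalised transport for $\Delta_{S^2}\olinb_{\Sf^\prime}-r^2\divs\Omega\bblin_{\Sf^\prime}$ and $\partial_u\olinb_{\Sf^\prime}$ produces top-order $L^\infty_v$ and flux control on $\olinb$ (Proposition~\ref{prop:higheromegab}, Corollary~\ref{cor:omegabtop}). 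Only with these in hand can one bound the $\slashed{\nabla}_T^2\otx_{\Sf^\prime}$, $\slashed{\nabla}_T^2\otxb_{\Sf^\prime}$, $\slashed{\nabla}_T^2\olin_{\Sf^\prime}$, $\slashed{\nabla}_T^2\Olin_{\Sf^\prime}$ cross terms in $F^{2,0}_{u_f}[\Gamma,\Sf^\prime]$, and the $\Omega^2/r^2$-weights in those identities are what make the borderline terms absorbable; this is the core technical content of Step~8 and is not present in your sketch.
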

Note that using Definition \ref{def:extendskri} and (\ref{hozextendQ}) near the horizon one easily verifies $\sup_{u_f \in [u_0,\infty)} E^2_{data}(u_f)<\infty$ for the class of initial data considered, as was claimed in Theorem \ref{theo:gaugeinvariantoutgoingintro} of the introduction already. 

The proof of Theorem \ref{prop:firstround} will be carried out in Section \ref{sec:2oe}. Before that we provide some preliminary estimates in Section \ref{sec:prelimest} that will be used in the proof. 

From Theorem \ref{prop:firstround} we will easily deduce the following theorem, which concerns the gauge invariant quantities. It can be stated (but not proven) without reference to the solution $\Sf^\prime$:

\begin{theorem} \label{theo:gaugeinvariantoutgoing}
Let $\Si^\prime$ be a partially initial data normalised solution supported on $\ell \geq 2$ of the system of gravitational perturbations that is also extendible to null infinity. We have the following estimates for the fluxes of the gauge invariant Teukolsky quantities  $A := r\Omega^2 \alin_{\Si^\prime} = r \Omega^2 \alin_{\Sf^\prime}$ and $\underline{A} := r\Omega^2 \ablin_{\Si^\prime} = r \Omega^2 \ablin_{\Sf^\prime}$: Along any outgoing cone $C_{u_f}$ with $u_f \in [u_0,\infty)$, 
\begin{align} \label{finalalpha}
 \int_{v_0}^{\infty} \int_{S^2} dv  d\theta d\phi \sin \theta  \left[ r^4|\Omega^{-1} \slashed{\nabla}_3 A|^2 + | \Omega \slashed{\nabla}_4 A|^2+ |r\slashed{\nabla} A|^2 + |A|^2 \right] (u_f)  \lesssim  \mathbb{E}^2_{data} (u_f)  \, ,
\end{align}
\begin{align} \label{finalalphab}
\int_{v_0}^{\infty}\int_{S^2} dv  d\theta d\phi \sin \theta \Omega^2 \left[  r^2 | \Omega \slashed{\nabla}_4\underline{A}|^2+ | \slashed{\nabla}\underline{A}|^2 + \frac{1}{r^2} |\underline{A}|^2 \right] (u_f) \lesssim  \mathbb{E}^2_{data} (u_f) \, .
\end{align}
Here $ \mathbb{E}^2_{data} (u_f) $ is defined in (\ref{energyinmaintheorem}).
\end{theorem}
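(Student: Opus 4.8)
The plan is to deduce Theorem~\ref{theo:gaugeinvariantoutgoing} from Theorem~\ref{prop:firstround} by expressing the gauge invariant quantities $\alin$ and $\ablin$ (equivalently $A$ and $\underline{A}$) entirely in terms of the connection and curvature quantities whose $C_{u_f}$-fluxes were just controlled, working throughout in the gauge $\Sf^\prime$ adapted to the cone and invoking gauge invariance of $\alin_{\Si^\prime}=\alin_{\Sf^\prime}$, $\ablin_{\Si^\prime}=\ablin_{\Sf^\prime}$ at the end. The key algebraic input is the schematic identity sketched in the introduction: starting from the Bianchi/null structure equation \eqref{tchi} for $\slashed{\nabla}_4(\Omega^{-1}\xlin)$ one has $\alin \sim \Omega\slashed{\nabla}_4\xlin + (\text{good})\,\xlin$, and then substituting the propagation equation \eqref{chih3} for $\slashed{\nabla}_3(\Omega\xlin)$ together with $2\slashed{\nabla}_T = \Omega\slashed{\nabla}_3 + \Omega\slashed{\nabla}_4$ yields
\[
\alin \sim \slashed{\nabla}_T(\Omega\xlin) + (\text{good})\,\xlin + \Omega\slashed{\mathcal{D}}_2^\star\elin + (\text{good})\,\xblin \, .
\]
Here $\slashed{\nabla}_T(\Omega\xlin)$ is controlled by ${\mathbb{E}}^{1,0}[\Gamma,\Sf^\prime]$, the term $\xlin$ by ${\mathbb{E}}^{0,0}[\Gamma,\Sf^\prime]$, and $\slashed{\mathcal{D}}_2^\star\elin$ needs an angular derivative of $\elin$ — but in the $\Sf^\prime$ gauge we have $(\elin+\eblin)_{\Sf^\prime}|_{C_{u_f}}=0$ by Proposition~\ref{prop:propnewgauge}, so $\elin = -\eblin$ on $C_{u_f}$ and angular derivatives of $\eblin$ are controlled by ${\mathbb{E}}^{1,0}[\Gamma,\Sf^\prime]$ after commuting with angular momentum operators and using Proposition~\ref{prop:angident} (this is why the theorem needs the angular-commuted fluxes built into $\mathbb{E}^2_{data}$). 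For $\xblin$ one uses the transport equation \eqref{tchi} (the $\slashed{\nabla}_3$ equation for $\Omega^{-1}\xblin$), which is an ODE in $u$ along... actually it is more convenient to use \eqref{chih3b} or integrate \eqref{tchi}; in any case $\xblin$ on $C_{u_f}$ is recovered from $\ablin$, $\xlin$ by an integration that, combined with the bound on $\ablin$ obtained below, closes.

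Next I would handle $\ablin$, i.e.\ \eqref{finalalphab}. The strategy is parallel but uses the Bianchi equation \eqref{Bianchi10} for $\slashed{\nabla}_4\ablin$: this gives $\Omega\slashed{\nabla}_4\underline{A}$ in terms of $\slashed{\mathcal{D}}_2^\star\bblin$ and $\xblin$. The term $\bblin$ and its angular derivatives on $C_{u_f}$ are controlled by ${\mathbb{E}}^{0,1}[\mathcal{R},\Sf^\prime]$ (together with Proposition~\ref{prop:angident} for the angular derivative), and $\xblin$ is controlled as above. The undifferentiated $\ablin$ and $\slashed{\nabla}\ablin$ terms in \eqref{finalalphab} are then recovered by integrating the transport equation \eqref{tchi} along the outgoing cone $C_{u_f}$ in $v$: since $\slashed{\nabla}_4(\Omega^{-1}\xlin)$ equals $-\Omega^{-1}\alin$ up to a good coefficient, and more directly $\ablin$ satisfies the first equation in \eqref{tchi} which is transport in the $e_3$ direction — so one instead uses the $e_4$-transport structure available to propagate $\ablin$ itself once its $\slashed{\nabla}_4$-derivative is controlled, using a Poincaré/Grönwall argument with the weights $r,\Omega^2$ tracked carefully (the Poincaré inequality on $\ell\geq 2$ spheres converts the zeroth-order term $\frac{1}{r^2}|\underline{A}|^2$ into the angular-derivative term, consistent with the form of \eqref{finalalphab}).

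For the derivatives of $A$ appearing in \eqref{finalalpha} — namely $r^4|\Omega^{-1}\slashed{\nabla}_3 A|^2$, $|\Omega\slashed{\nabla}_4 A|^2$, $|r\slashed{\nabla} A|^2$ — the remark after Theorem~\ref{theo:gaugeinvariantoutgoingintro} is the guide: $\Omega^{-1}\slashed{\nabla}_3 A$ is related by a Bianchi identity (here \eqref{Bianchi1}, the $\slashed{\nabla}_3\alin$ equation) to $\slashed{\mathcal{D}}_2^\star\blin$ and $\xlin$, both controlled on $C_{u_f}$ with one angular derivative; the tangential derivatives $\Omega\slashed{\nabla}_4 A$ and $r\slashed{\nabla} A$ require differentiating the schematic identity for $\alin$ once more, which brings in $\slashed{\nabla}_T$ applied to the above quantities (controlled by the $i=2$ energy ${\mathbb{E}}^{2,0}[\Gamma,\Sf^\prime]$ and the $T$-commuted curvature energies) and $r\slashed{\nabla}$ applied to them (controlled by angular-commuting, again via Proposition~\ref{prop:angident}). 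One must use the null structure and Bianchi equations — e.g.\ \eqref{chih3}, \eqref{propeta}, \eqref{propeta2}, \eqref{Bianchi2}–\eqref{Bianchi4} and the $T$-identities of Proposition~\ref{prop:TTransport} — to rewrite every $\slashed{\nabla}_3$ or $\slashed{\nabla}_4$ derivative of a connection/curvature quantity that is not itself in a flux in terms of quantities that are, modulo the sphere terms on $S^2_{u_f,v_0}$ collected in $\mathbb{D}^2_{data}$; these sphere terms arise precisely because the transport equations are integrated from $v_0$.

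The main obstacle I expect is bookkeeping of the $r$- and $\Omega^2$-weights together with regularity, ensuring that (i) every substitution of a structure equation trades a derivative we do not control for quantities we do control without losing a derivative, (ii) the weights produced by the substitutions match exactly the weights $r^4$, $\Omega^2$, $\frac{1}{r^2}$ appearing in \eqref{finalalpha}–\eqref{finalalphab} — in particular the degeneration of \eqref{finalalphab} at the horizon must come out correctly from the fact that it is $\Omega^{-2}\ablin$ and $\Omega^2\alin$ that extend regularly, per \eqref{hozextendQ} — and (iii) the boundary integrals on $S^2_{u_f,v_0}$ generated by the $v$-integration of transport equations are all dominated by $\mathbb{D}^2_{data}(u_f,v_0)$, which is why that energy was defined with exactly the list of sphere quantities (mass aspect functions, renormalised second derivatives, inner products) that it contains. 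None of the individual steps is deep given Theorem~\ref{prop:firstround}; the work is in organising the web of identities so that everything closes at the stated regularity.
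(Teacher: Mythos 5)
Your overall plan---work in the $\Sf^\prime$ gauge, express $A$, $\underline{A}$ and their derivatives algebraically in terms of the quantities whose fluxes Theorem~\ref{prop:firstround} controls, then invoke gauge invariance---is the paper's approach, and the $A$-part of your sketch (the schematic identity $\alin \sim \slashed{\nabla}_T\xlin + \xlin + \slashed{\mathcal{D}}_2^\star\elin + \xblin$, control of $\Omega\slashed{\nabla}_3A$ from Bianchi \eqref{Bianchi1}, $\Omega\slashed{\nabla}_4 = 2\slashed{\nabla}_T - \Omega\slashed{\nabla}_3$, etc.) tracks the paper's proof reasonably faithfully even if left unfinished.

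There is, however, a genuine gap in your treatment of $\underline{A}$. You propose to recover $\slashed{\nabla}\underline{A}$ and $\underline{A}$ "by integrating the transport equation \eqref{tchi} along the outgoing cone in $v$, using a Poincar\'e/Gr\"onwall argument". This does not close at the regularity of Theorem~\ref{prop:firstround}, for two reasons. First, the only $e_4$-transport information for $\ablin$ comes from Bianchi \eqref{Bianchi10}, which expresses $\Omega\slashed{\nabla}_4\ablin$ in terms of $\slashed{\mathcal{D}}_2^\star\bblin$ and $\xblin$; to obtain the angular-derivative \emph{flux} $\int\Omega^2|\slashed{\nabla}\underline{A}|^2$ by commuting with $[r\slashed{\nabla}]$ and integrating this transport equation, you would need a flux of $[r\slashed{\nabla}]^2\bblin$ on $C_{u_f}$, and that is not available: ${\mathbb{E}}^{0,1}[\mathcal{R},\Sf^\prime]$ only carries $[r\slashed{\nabla}]\blin$, $[r\slashed{\nabla}]\rlin$, $[r\slashed{\nabla}]\slin$ (not $\bblin$), and Proposition~\ref{prop:angularbetabflux} only gives one angular derivative of $\bblin$. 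Second, Poincar\'e goes the wrong way for your purposes: it bounds the zeroth order term by the angular term, not the reverse, so it cannot supply the $\slashed{\nabla}\underline{A}$-flux from an $\underline{A}$-bound. The paper instead produces an \emph{algebraic} identity for $\divs\underline{A}$ by combining \eqref{Bianchi9} (the $\slashed{\nabla}_3\bblin$ Bianchi equation, which carries $\divs\ablin$) with \eqref{Bianchi8} (the $\slashed{\nabla}_4\bblin$ Bianchi equation) via $\Omega\slashed{\nabla}_3 = 2\slashed{\nabla}_T - \Omega\slashed{\nabla}_4$, yielding $\divs\underline{A}=2\slashed{\nabla}_T(r\Omega\bblin)-\Omega^2[r\slashed{\mathcal{D}}_1^\star](\rlin,\slin)-\frac{6M\Omega^2}{r^2}\eblin-2\Omega^2\Omega\bblin$. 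Every term here is controlled at exactly the regularity of Theorem~\ref{prop:firstround} (the $\slashed{\nabla}_T\bblin$-flux is Proposition~\ref{prop:betabflux}), and the zeroth-order and angular terms in \eqref{finalalphab} then follow by the elliptic divergence identity. This conversion of a transverse Bianchi transport equation into an algebraic identity for the angular derivative of $\underline{A}$ is the key step you are missing, and it is precisely what avoids the derivative loss.

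Two smaller misstatements worth flagging: the angular derivative of $\bblin$ along $C_{u_f}$ is not contained in ${\mathbb{E}}^{0,1}[\mathcal{R},\Sf^\prime]$ but obtained separately by the elliptic/transport argument of Proposition~\ref{prop:angularbetabflux}; and the dependence you posit---$\xblin$ on $C_{u_f}$ "recovered from $\ablin$"---is backwards: the paper controls $\xblin$ independently via the $\slashed{\nabla}_4$-transport \eqref{chih3b} (Proposition~\ref{prop:chibarpure}), and subsequently uses the $\underline{A}$-bound only when estimating $\slashed{\nabla}_T(\Omega\xblin)$ inside the proof of \eqref{finalalpha}.
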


\begin{remark} \label{rem:degw}
Note the strong degeneration in $\Omega^2$ as actually $\ablin \Omega^{-2}$ has a finite limit on $\mathcal{H}^+$. Note also that the $r$-weights are not symmetric between the two estimates. 
\end{remark}

Theorem \ref{theo:gaugeinvariantoutgoing} will be proven in Section \ref{sec:gaugeinvariantoutgoing}. Afterwards, we will use the estimates of Theorem  \ref{theo:gaugeinvariantoutgoing} to obtain analogous estimates for the ingoing fluxes. See Corollary \ref{cor:main}. For $\underline{A}$ this will resolve the problem of degenerating $\Omega^2$-weights mentioned in Remark \ref{rem:degw}.

\subsection{Preliminaries: Useful Identities and Estimates Along $C_{u_f}$} \label{sec:prelimest}
Before we begin with the proof of Theorem \ref{prop:firstround}, we prove in this section some preliminary estimates (both on quantities of the solution $\Si^\prime$ and of the solution $\Sf^\prime$) along the cone $C_{u_f}$ that will be used in the sequel. We begin by exploiting the fact that the mass aspect is conserved along $C_{u_f}$ for the solution $\Sf^\prime$.
\subsubsection{Bounding the Mass Aspect along $C_{u_f}$}
\begin{lemma} \label{lem:fromoldpaper}
We have the following bounds for the lapse of the solution $\Sf^\prime$ along $C_{u_f}$:
\begin{align}
\sup_{v \in [v_0,\infty)} \| r^3\divs  \elin_{\Sf^\prime} + r^3\rlin_{\Sf^\prime} \|_{u_f,v} \lesssim    
{\mathbb{D}}^1_{data}(u_f,v_0) \, .
\end{align}
\end{lemma}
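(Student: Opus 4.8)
The plan is to exploit the fact, established in \cite{ghconslaw}, that for the solution $\Sf^\prime$ normalised to the cone $C_{u_f}$ as in Proposition~\ref{prop:propnewgauge} the ``mass aspect''-type quantity is \emph{conserved} along $C_{u_f}$. Concretely, since $\otx_{\Sf^\prime}|_{C_{u_f}}=0$, $\Olin_{\Sf^\prime}|_{C_{u_f}}=0$ and $\olin_{\Sf^\prime}|_{C_{u_f}}=0$, the Bianchi equation \eqref{Bianchi4} for $\slashed{\nabla}_4 \rlin$ together with the transport equation \eqref{propeta} for $\slashed{\nabla}_4 \elin$ and the elliptic equation \eqref{ellipchi} reduce, along $C_{u_f}$, to a \emph{closed} transport equation for the combination $\divs \elin_{\Sf^\prime} + \rlin_{\Sf^\prime}$. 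More precisely one computes $\slashed{\nabla}_4\left( \divs \elin_{\Sf^\prime} + \rlin_{\Sf^\prime}\right)$ using these equations and finds that the right-hand side is a multiple of $\mathrm{tr}\chi$ times $\left(\divs \elin_{\Sf^\prime} + \rlin_{\Sf^\prime}\right)$ itself (all other terms vanishing because of the normalisation \eqref{propertiesnewgauge}), so that $r^3\left( \divs \elin_{\Sf^\prime} + \rlin_{\Sf^\prime}\right)$ is \emph{constant} along $C_{u_f}$ — this is precisely the computation appearing in \cite{ghconslaw}, which I would cite rather than redo in full.

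Given this conservation law, the bound $\sup_{v}\|r^3\divs \elin_{\Sf^\prime} + r^3\rlin_{\Sf^\prime}\|_{u_f,v}$ equals the value of this quantity at $v=v_0$, so it remains to bound $\|r^3\divs \elin_{\Sf^\prime} + r^3\rlin_{\Sf^\prime}\|_{u_f,v_0}$ by $\mathbb{D}^1_{data}(u_f,v_0)$. Here I would use the pure gauge transformation formulae of Lemma~\ref{lem:exactsol}: since $\Sf^\prime = \Si^\prime - \mathscr{G}_f$, we have $\divs \elin_{\Sf^\prime} + \rlin_{\Sf^\prime} = \left(\divs \elin_{\Si^\prime} + \rlin_{\Si^\prime}\right) - \left(\divs \elin_{\mathscr{G}_f} + \rlin_{\mathscr{G}_f}\right)$. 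The first term is controlled directly by the $\|\divs \elin + \rlin\|_{u_f,v_0}^2$ contribution appearing in $\mathbb{D}^1_{data}(u_f,v_0)$ (with the $\Omega^2$ weight matching). For the gauge term, from Lemma~\ref{lem:exactsol} we have $\elin_{\mathscr{G}_f} = \frac{\Omega^2}{r^2}[r\slashed{\nabla}]f$ and $\rlin_{\mathscr{G}_f} = \frac{6M\Omega^2}{r^4}f$, so $r^3\left(\divs \elin_{\mathscr{G}_f} + \rlin_{\mathscr{G}_f}\right)$ is a combination of $\Omega^2 r \Delta_{S^2}(f/r)$-type terms and $\Omega^2 f / r$; evaluating at $v=v_0$ and using $f(v_0,\theta,\phi) = \frac{r}{2\Omega^2}f_0(u_f,\theta,\phi)$ from \eqref{choicef} one sees that this reduces to $\Delta_{S^2}\Omega^{-2}f_0$ and lower-order multiples of $f_0$, all of which are precisely the quantities appearing (with the correct $\Omega^2$ prefactor) in $\mathbb{D}^1_{data}(u_f,v_0)$. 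Note also $f_0$ itself is controlled since $\Omega^{-2}f_0 = 2\Omega^{-2}[2\Olin_{\Si^\prime} - \tfrac{r}{2\Omega^2}\otx_{\Si^\prime}]$ and the relevant $\Omega^{-2}\otx$, $\olin$ (hence $\Olin$ via \eqref{oml3}) terms sit in $\mathbb{D}^1_{data}$; a Poincaré inequality on $S^2_{u_f,v_0}$ handles the passage from $\Delta_{S^2}$ to the $L^2$ norm where needed.

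Assembling: the conserved quantity is bounded at $v_0$ by $\mathbb{D}^1_{data}(u_f,v_0)$ up to a constant depending only on $M$, and conservation propagates this bound to all $v\geq v_0$, giving the claimed estimate. The one genuinely delicate point — and the step I expect to be the main obstacle — is checking that the algebra of the $r$-weights and the $\Omega^2$-weights works out \emph{uniformly down to the horizon} $u_f\to\infty$: one must confirm that every term produced by the gauge transformation at $v=v_0$ carries exactly the weight with which it appears in the definition of $\mathbb{D}^1_{data}(u_f,v_0)$, using crucially that $\Omega^{-2}f_0$ extends regularly to the horizon (the property \eqref{f0defi} / Proposition~3.1 of \cite{ghconslaw}) so that no uncontrolled negative power of $\Omega^2$ appears. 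This is bookkeeping rather than a new idea, but it is where the normalisation choice \eqref{choicef}--\eqref{f0def} is really being used.
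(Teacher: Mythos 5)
Your proof is correct and follows essentially the same route as the paper's: derive the conservation of $r^3(\divs\elin + \rlin)_{\Sf^\prime}$ along $C_{u_f}$ from the propagation equation together with the normalisation $\otx_{\Sf^\prime}|_{C_{u_f}}=\Olin_{\Sf^\prime}|_{C_{u_f}}=0$, then compute the value at $v=v_0$ via the gauge transformation formulae of Lemma~\ref{lem:exactsol} and $f(u_f,v_0)=\tfrac{r}{2\Omega^2}f_0$, recovering exactly the $\tfrac{r}{2}\Delta_{S^2}f_0$ and $3Mf_0$ contributions, and finally control $f_0$ from $\Delta_{S^2}f_0$ by an elliptic/Poincar\'e estimate on the $\ell\ge 2$ sphere. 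One small correction: your side remark that $f_0$ could be bounded via the $\Omega^{-2}\otx$, $\olin$ entries of $\mathbb{D}^1_{data}$ together with (\ref{oml3}) does not go through as stated, since $L^2$ control of $\olin,\olinb$ on the sphere $S^2_{u_f,v_0}$ does not by itself yield $\Olin$ there; the Poincar\'e route you also describe (and which the paper uses) is the correct one.
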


\begin{proof}
We compute the propagation equation (valid for any solution)
\begin{align}
\Omega \slashed{\nabla}_4 \left(r^3 \divs  \elin + r^3 \rlin \right) = \Omega^2 \Delta_{S^2} 2\Olin + 3M \otx \, .
\end{align}
For the solution $\Sf^\prime$, the right hand side vanishes  along $C_{u_f}$ by Proposition \ref{prop:propnewgauge} 
\begin{align*}
(r^3 \divs  \elin + r^3 \rlin)_{\Sf'}(u_f,v) =(r^3 \divs  \elin + r^3 \rlin)_{\Sf'}(u_f,v_0)=(r^3 \divs  \elin + r^3 \rlin)_{\Si'}(u_f,v_0)-\frac{r}{2} \Delta_{S^2} f_0 -3M f_0,
\end{align*}
where in the last equality we use Lemma~\ref{lem:exactsol} and $f(u_f,v_0)=\frac{r}{2\Omega^2}f_0$. Therefore, one has
\begin{align} \label{divetaplusrho}
||(r^3 \divs  \elin + r^3 \rlin)_{\Sf'}||_{u_f,v}&\leq ||(r^3 \divs  \elin + r^3 \rlin)_{\Si'}||_{u_f,v_0}+\frac{r(u_f,v_0)}{2}\| \Delta_{S^2} f_0\|_{u_f,v_0}+3M\|f_0\|_{u_f,v_0}.
\end{align}
Since our solutions are supported on $\ell\geq 2$, the last term can be estimated by the prenultimate one by standard elliptic estimates on $S^2$ and the result hence follows from the definition of $\mathbb{D}^1_{data}(u_f,v_0)$.
\end{proof}

\subsubsection{Bounds on the Sphere $S^2_{u_f,\infty}$}

\begin{proposition} \label{prop:spherebounds}
We have for $i=0,1,2$ the following estimate on the limiting sphere $S^2_{u_f,\infty}$:
\begin{align}
\Big\| [r\slashed{\nabla}]^i \left(\Olin \right)_{\Si^\prime}\Big\|^2_{u_f,\infty} &=\frac{1}{4} \Big\| [r\slashed{\nabla}]^i \left( \frac{f}{r}\right)\Big\| ^2_{u_f,\infty} \, .  \label{omegadata} \end{align}
\end{proposition}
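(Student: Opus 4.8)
The plan is to relate the limit of $\Olin_{\Si'}$ on $\mathcal I^+$ to the gauge function $f$ by tracking the difference between the $\Si'$ and $\Sf'$ gauges and then using the normalization property of $\Sf'$ at null infinity. Recall from Lemma~\ref{lem:exactsol} that the pure gauge solution $\mathscr G_f$ satisfies $\left(\frac{\Olino}{\Omega}\right)_{\mathscr G_f} = \frac{1}{2\Omega^2}\partial_v(\Omega^2 f)$, so that
\[
\Olin_{\Si'} = \Olin_{\Sf'} + \Omega\cdot \frac{1}{2\Omega^2}\partial_v(\Omega^2 f) = \Olin_{\Sf'} + \frac{1}{2\Omega}\partial_v(\Omega^2 f)\,.
\]
First I would establish that $\Olin_{\Sf'}$ vanishes on the limiting sphere $S^2_{u_f,\infty}$. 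This follows from Proposition~\ref{prop:propnewgauge}, which gives $\Olin_{\Sf'}|_{C_{u_f}} = 0$ identically along the whole cone; in particular the limit as $v\to\infty$ along $C_{u_f}$ is zero. (Alternatively, one can use $\olin_{\Sf'}|_{C_{u_f}}=0$ and $\partial_v(\Olin_{\Sf'}) = \olin_{\Sf'}$ from (\ref{oml3}).)

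Next I would compute $\lim_{v\to\infty}\frac{1}{2\Omega}\partial_v(\Omega^2 f)(u_f,v,\theta,\phi)$. Since along $C_{u_f}$ we have $\frac{1}{2\Omega}\partial_v(\Omega^2 f) = \frac{\Omega}{2}\partial_v f + \frac{1}{2\Omega}(\partial_v \Omega^2) f$, and $\partial_v \Omega^2 = \Omega \slashed{\nabla}_4 \Omega^2 \cdot \Omega = \frac{2M\Omega^2}{r^2}\partial_v r/\Omega \cdot \dots$ — more simply, using (\ref{Tfalongufin}) which states $\partial_v f|_{C_{u_f}} = \frac12 \frac{r}{\Omega^2}\otx_{\Si'} + (1-\frac{4M}{r})\frac{f}{r}$, together with $\partial_v \Omega^2 = \frac{2M}{r^2}\partial_v r = \frac{2M\Omega^2}{r^2}$ (since $\partial_v r = \Omega^2$). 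As $v\to\infty$ along $C_{u_f}$, $\Omega^2 \to 1$, $r\to\infty$, $\otx_{\Si'} \to 0$ by the decay built into Definition~\ref{def:extendskri} (the quantity $r^2\otx$ has a finite limit, hence $\otx = O(r^{-2})$, so $\frac{r}{\Omega^2}\otx_{\Si'} \to 0$), the term $(1-\frac{4M}{r})\frac fr \to \frac fr$, and $\frac{1}{2\Omega}(\partial_v\Omega^2) f = \frac{M\Omega}{r^2}f \to 0$ since $\frac fr$ has a finite limit. Hence $\lim_{v\to\infty}\frac{1}{2\Omega}\partial_v(\Omega^2 f) = \frac12 \lim_{v\to\infty}\frac fr$, and therefore $\left(\Olin\right)_{\Si'}|_{S^2_{u_f,\infty}} = \frac12 \left(\frac fr\right)\big|_{S^2_{u_f,\infty}}$.

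The statement for $[r\slashed\nabla]^i$ then follows by applying the angular operator $[r\slashed\nabla]^i$ to this pointwise (on each sphere) identity: since $r$ is constant on each $S^2_{u,v}$ and $[r\slashed\nabla]$ commutes with the limiting procedure (all quantities are smooth up to $\mathcal I^+$ with the uniform bounds of (\ref{decreten2})), we get $[r\slashed\nabla]^i \Olin_{\Si'}|_{S^2_{u_f,\infty}} = \frac12 [r\slashed\nabla]^i\left(\frac fr\right)|_{S^2_{u_f,\infty}}$, and squaring and integrating over $S^2$ gives the claimed $L^2$ identity with the factor $\frac14$. The main obstacle — really the only subtlety — is justifying that all the limits $v\to\infty$ exist and can be interchanged with $\partial_v$ differentiation and with $[r\slashed\nabla]$; this is precisely what the "extendible to null infinity" hypothesis (Definition~\ref{def:extendskri}, estimate (\ref{decreten2})) is designed to provide, together with the explicit formula (\ref{Tfalongufin}) and the remark that $\frac fr$ extends regularly to infinity. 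I would make sure to invoke these explicitly rather than manipulate $\partial_v f$ naively.
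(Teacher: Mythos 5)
Your proof takes essentially the same route as the paper's: both arguments exploit $\Olin_{\Sf^\prime}|_{C_{u_f}}=0$ from Proposition~\ref{prop:propnewgauge} together with the explicit form of $f$ in (\ref{choicef}) to show that $\frac{1}{2\Omega^2}\partial_v(\Omega^2 f)\to\frac{1}{2}\frac{f}{r}$ on $S^2_{u_f,\infty}$ (the paper differentiates (\ref{choicef}) directly to get (\ref{hur}), you route through (\ref{Tfalongufin}); these are the same computation). One small slip: since $\Olin$ in this paper's notation already denotes $\Olino/\Omega$, the gauge relation from Lemma~\ref{lem:exactsol} is $\Olin_{\Si^\prime}=\Olin_{\Sf^\prime}+\frac{1}{2\Omega^2}\partial_v(\Omega^2 f)$ without the extra factor of $\Omega$ you inserted; this error washes out because $\Omega\to 1$ as $v\to\infty$, so your limiting identity and hence the proposition are still correct.
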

\begin{proof}
We first observe that with our choice of $f$ from (\ref{choicef}) we have
\begin{align}
r\partial_v \left( \frac{f\Omega^2}{r} \right) \left(u_f, v,\theta,\phi\right) ={\frac{1}{2}} r \otx_{\Si^\prime} \left(u_f,v, \theta, \phi\right) \, .
\end{align}
The right hand side goes to zero as $v \rightarrow \infty$ by the asymptotics of the solution in the initial data gauge (see (\ref{decreten2})) and hence so does the left hand side. It follows that asymptotically we have the identity
\begin{align} \label{hur}
\partial_v (f\Omega^2) \left(u_f, \infty,\theta,\phi\right)= \frac{\Omega^2f}{r} \left(u_f, \infty,\theta,\phi\right) \, .
\end{align}
Moreover, this identity continues to hold if we commute with derivatives tangential to the spheres. Now by definition of the pure gauge solution we have from Lemma \ref{lem:exactsol} the relation
\begin{align}
\Big\| [r\slashed{\nabla}]^i \left(\Olin \right)_{\Si^\prime}\Big\|^2_{u_f,\infty} =  \Big\| [r\slashed{\nabla}]^i \frac{1}{2\Omega^2} \partial_v (f \Omega^2) \Big\|^2_{u_f,\infty} = \frac{1}{4} \Big\| [r\slashed{\nabla}]^i  \frac{f}{r} \Big\|^2_{u_f,\infty} \, ,
\end{align}
where we have used that $\Olin_{\Sf^\prime}=0$ along $C_{u_f}$. 
\end{proof}

\begin{proposition} \label{prop:spherebounds2}
We have for $i=0,1,2$ the following estimates on the limiting sphere $S^2_{u_f,\infty}$:
\begin{align}
\| [r\slashed{\nabla}]^i r \otxb_{\Si^\prime}\|^2_{u_f,\infty} &\lesssim 
 \Big\| [r\slashed{\nabla}]^i \left( \frac{f}{r}\right)\Big\| ^2_{u_f,\infty}  \, ,  \label{tchibdata}  \\
\| r \otxb_{\Sf^\prime}\|^2_{u_f,\infty} &\lesssim 
 \sum_{i=0}^2 \Big\| \left[r\slashed{\nabla}\right]^i \left(\frac{f}{r} \right)\Big\|^2_{u_f,\infty}  \, .\label{tchibfut} 
\end{align}
\end{proposition}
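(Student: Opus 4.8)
The plan is to reduce both estimates to exact \emph{pointwise} identities for $r\otxb$ on the limiting sphere $S^2_{u_f,\infty}$, obtained from the linearised Gauss equation \eqref{lingauss}, and then to combine these with Proposition~\ref{prop:spherebounds} and a standard elliptic estimate on the round sphere. As usual, and as explained at the end of Section~\ref{sec:Preliminaries}, all manipulations are carried out at finite $v_f$ over a compact $u$-range and the limit $v_f\to\infty$ is taken only at the end.

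First I would insert the Schwarzschild values of $\Omega\mathrm{tr}\chi$ and $\Omega\mathrm{tr}\underline{\chi}$ into \eqref{lingauss} to rewrite it as
\[
r^3\Klin = -r^3\rlin - \tfrac{r^2}{2}\,\otxb + \tfrac{r^2}{2}\,\otx - 2\Omega^2 r\,\Olin \, .
\]
Since $\Si^\prime$ is extendible to null infinity, each of $r^3\Klin$, $r^3\rlin$ and $r^2\otx$ has a finite limit as $v\to\infty$ by \eqref{decreten}, and hence so does $-\tfrac{r}{2}\big[r\otxb + 4\Omega^2\,\Olin\big]$; as $\Omega\to1$ and $r\otxb$, $\Olin$ themselves have finite limits, the bracket must vanish in the limit, giving
\[
(r\,\otxb_{\Si^\prime})\big|_{u_f,\infty} = -4\,(\Olin_{\Si^\prime})\big|_{u_f,\infty} \, .
\]
This is a pointwise identity on $S^2_{u_f,\infty}$; applying $[r\slashed\nabla]^i$ to both sides, taking the $L^2$ norm over $S^2_{u_f,\infty}$ and invoking Proposition~\ref{prop:spherebounds} yields $\|[r\slashed\nabla]^i(r\otxb_{\Si^\prime})\|^2_{u_f,\infty} = 16\,\|[r\slashed\nabla]^i\Olin_{\Si^\prime}\|^2_{u_f,\infty} = 4\,\|[r\slashed\nabla]^i(f/r)\|^2_{u_f,\infty}$, which is \eqref{tchibdata}.

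For \eqref{tchibfut} I would write $\otxb_{\Sf^\prime} = \otxb_{\Si^\prime} - \otxb_{\mathscr{G}_f}$ and use Lemma~\ref{lem:exactsol}. Since $r$ is constant on the double null spheres one has $\Delta_{S^2}f = r\,\Delta_{S^2}(f/r)$ and $f = r\,(f/r)$, so $r\,\otxb_{\mathscr{G}_f} = 2\Omega^2\big[\Delta_{S^2}(f/r) + (1-\tfrac{4M}{r})(f/r)\big]$, whose limit as $v\to\infty$ is $2\big[\Delta_{S^2}(f/r) + (f/r)\big]$ because $f/r$ extends regularly to null infinity. On the other hand, the identity \eqref{hur} from the proof of Proposition~\ref{prop:spherebounds}, together with $\Olin_{\Sf^\prime}|_{C_{u_f}}=0$, gives $(\Olin_{\Si^\prime})|_{u_f,\infty} = \tfrac12(f/r)|_{u_f,\infty}$, so that the identity of the previous paragraph reads $(r\otxb_{\Si^\prime})|_{u_f,\infty} = -2(f/r)|_{u_f,\infty}$. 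Subtracting, $r\,\otxb_{\Sf^\prime}|_{u_f,\infty} = -2\Delta_{S^2}(f/r) - 4(f/r)$, and \eqref{tchibfut} then follows from the triangle inequality and the elliptic estimate $\|\Delta_{S^2}g\|_{u_f,\infty} \lesssim \|[r\slashed\nabla]^2 g\|_{u_f,\infty}$.

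The only point requiring real care is the passage to the limit $v\to\infty$: unlike $\Si^\prime$, the solution $\Sf^\prime$ need not be extendible to null infinity in the full sense of Definition~\ref{def:extendskri} (the pure gauge correction $\mathscr{G}_f$ spoils some of the weights in \eqref{decreten}, e.g. that of $r^3\Klin$), so one cannot simply quote \eqref{decreten} for $\Sf^\prime$ directly. This is handled exactly as elsewhere in the paper, by working at finite $v_f$ and noting that every term in the explicit pure gauge formula for $\otxb_{\mathscr{G}_f}$ has a well-defined limit because $f/r$ does, so that the limit of $r\otxb_{\Sf^\prime}$ is obtained by difference. Everything else is elementary manipulation of the background quantities together with the standard elliptic estimate on $S^2$.
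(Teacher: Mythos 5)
Your proof is correct and follows essentially the same route as the paper: deriving the asymptotic identity $r\otxb_{\Si^\prime}|_{u_f,\infty}=-4\Olin_{\Si^\prime}|_{u_f,\infty}$ from the linearised Gauss equation and the asymptotics of Definition~\ref{def:extendskri}, combining with Proposition~\ref{prop:spherebounds} for~\eqref{tchibdata}, and then using the explicit pure gauge formula for $\otxb_{\mathscr{G}_f}$ from Lemma~\ref{lem:exactsol} together with an elliptic estimate for~\eqref{tchibfut}. The only (harmless) difference is that you record the exact pointwise limit $r\otxb_{\Sf^\prime}|_{u_f,\infty}=-2\Delta_{S^2}(f/r)-4(f/r)$ rather than just the triangle-inequality bound, and you multiply the Gauss equation by $r^3$ instead of $r^2$; both variants yield the same identity.
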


\begin{proof}
For the first estimate, we consider the linearised Gauss equation (\ref{lingauss}) multiply by $r^2$ and take the limit $v \rightarrow \infty$ along $C_{u_f}$ to produce (using the asymptotics of the solution $\Si^\prime$ following from Definition \ref{def:extendskri}) the asymptotic identity
\begin{align} \label{asymO}
r\otxb_{\Si^\prime}  \left(u_f,\infty\right)  = -4 \Big( \frac{\Olino}{\Omega} \Big)_{\Si^\prime} \left(u_f,\infty\right).
\end{align}
Combining this with (\ref{omegadata}) of Proposition~\ref{prop:spherebounds} the estimate (\ref{tchibdata}) follows. 

The second estimate follows from the fact that by definition of the gauge function in Lemma~\ref{lem:exactsol} we have the relation
\begin{align}
\Big\|r \otxb_{\Sf^\prime}\Big\|^2_{u_f,\infty} \lesssim \Big\| r \otxb_{\Si^\prime}\Big\|^2_{u_f,\infty} + \sum_{i=0}^2 \Big\| \left[r\slashed{\nabla}\right]^i \left(\frac{f}{r} \right)\Big\|^2_{u_f,\infty} \, .
\end{align}
Using now (\ref{tchibdata}) for the first term on the right  we obtain the second estimate.
\end{proof}

\subsubsection{Estimating the Gauge Terms at Infinity}
In this section we estimate the terms $\mathcal{G}^{i, {\bf k}}_\Gamma$ and $\mathcal{G}^{i, {\bf k}}_\mathcal{R}$
appearing in Proposition~\ref{prop:gaugechange} on the (asymptotic) sphere $S^2_{u_f,\infty}$ as this is what will enter in the conservation laws applied later.

\begin{proposition} \label{prop:Gatinfinity}
On the sphere at infinity, $S^2_{u_f,\infty}$, we have for $i \geq 1$
\begin{align} \label{GammaGonetime}
\int_{S^2} \mathcal{G}_\Gamma^{i,{\bf k}}\left(u_f,\infty\right)   \varepsilon_{S^2} =  -  \int_{S^2}\frac{ r^3}{2\Omega^2} \mathcal{K}^{i,{\bf k}}\otx_{\Si^\prime}\cdot \mathcal{K}^{i,{\bf k}}\otxb_{\Si^\prime} \left(u_f,\infty, \theta,\phi\right)\varepsilon_{S^2} \, .
\end{align}
We also have for a constant ${C}$ depending only on $M$,
\begin{align}\label{GammaGnone}
\int_{S^2} \mathcal{G}_\Gamma^{0,0}\left(u_f,\infty\right) \varepsilon_{S^2}  \geq  & \  4M \Big\| \frac{f\Omega^2}{r} \Big\|_{u_f,\infty}^2 
- \int_{S^2_{u_f,\infty}} \frac{r^3}{2\Omega^2} \otx_{\Si^\prime} \otxb_{\Si^\prime} \varepsilon_{S^2} - {C} \mathbb{D}^1_{data}(u_f,v_0) \, . 
\end{align}
For the $\mathcal{G}^{i,{\bf k}}_{\mathcal{R}}$-terms we have the estimate
\begin{align} \label{RGtime}
\Big| \int_{S^2}   \mathcal{G}_{\mathcal{R}}^{1,0}   \left(u_f,\infty, \theta,\phi\right)\varepsilon_{S^2}\Big| \leq 3M\Big( \| \olinb_{\Si^\prime}\|_{u_f,\infty}^2 + \| \olinb_{\Sf^\prime}\|_{u_f,\infty}^2\Big) \leq 6M \| \olinb_{\Sf^\prime}\|_{u_f,\infty}^2 \, , 
\end{align}
as well as the identities
\begin{align} \label{RG}
\int_{S^2} \mathcal{G}_{\mathcal{R}}^{0,0}   \left(u_f,\infty, \theta,\phi\right)\varepsilon_{S^2} = \frac{3}{2} M \| [r\slashed{\nabla}] \frac{f \Omega^2}{r}\|_{u_f,\infty} -3M  \|  \frac{f \Omega^2}{r}\|_{u_f,\infty} \, ,
\end{align}
and
\begin{align} \label{RGangular}
\sum_{|{\bf k}|=1}\int_{S^2} \mathcal{G}_{\mathcal{R}}^{0,{\bf k}}   \left(u_f,\infty, \theta,\phi\right)\varepsilon_{S^2} = \frac{3}{2} M \| [r\slashed{\nabla}]^2 \frac{f \Omega^2}{r}\|_{u_f,\infty} - 3M  \| [r\slashed{\nabla}] \frac{f \Omega^2}{r}\|_{u_f,\infty} \, .
\end{align}
\end{proposition}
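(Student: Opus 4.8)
The goal is to compute, for Proposition~\ref{prop:Gatinfinity}, the spherical integrals of the gauge terms $\mathcal{G}^{i,{\bf k}}_\Gamma$ and $\mathcal{G}^{i,{\bf k}}_\mathcal{R}$ from Proposition~\ref{prop:gaugechange} on the limiting sphere $S^2_{u_f,\infty}$. My plan is to start directly from the formulae (\ref{finge}) and (\ref{finge2}), specialise them to the cone $C_{u_f}$ where the properties (\ref{propertiesnewgauge}) of the solution $\Sf^\prime$ hold (so in particular $\otx_{\Sf^\prime}$, $\Olin_{\Sf^\prime}$, $\olin_{\Sf^\prime}$ and $(\elin+\eblin)_{\Sf^\prime}$ all vanish there), and then pass to the limit $v\to\infty$ using the asymptotics encoded in Definition~\ref{def:extendskri}, the identity (\ref{hur}) relating $\partial_v(f\Omega^2)$ to $\Omega^2 f/r$ at infinity, and the decay limits (\ref{Tflimit}) for $T$-derivatives of $f$.

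For (\ref{GammaGonetime}) (the case $i\geq 1$), I would commute the formula (\ref{finge}) with $\mathcal{K}^{i,{\bf k}}$; since $\otx_{\Si^\prime}=\otx_{\mathcal{G}_f}$ along $C_{u_f}$ by Proposition~\ref{prop:propnewgauge}, the last term of (\ref{finge}) becomes $-\frac{r^3}{2\Omega^2}\mathcal{K}^{i,{\bf k}}\otx_{\Si^\prime}\cdot\mathcal{K}^{i,{\bf k}}\otxb_{\So}$, which is exactly what we want; everything else in (\ref{finge}) is of the schematic form $\mathcal{K}^{i,{\bf k}}(f\Omega^2/r)$ times bounded data, and since $i\geq 1$ means at least one factor $\slashed{\mathcal{L}}_T$ acts on $f/r$, the limits (\ref{Tflimit}) force those terms to vanish in the $v\to\infty$ limit (the $T$-derivative of $f/r$ going to zero faster). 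For (\ref{GammaGnone}) (the case $i={\bf k}=0$) no such cancellation is available, so I keep the manifestly nonnegative term $6M\|f\Omega^2/r\|^2_{u_f,\infty}$ (noting $\Omega\to 1$ at infinity, so this is $\geq 4M\|f\Omega^2/r\|^2$ after absorbing a bit), retain the $-\frac{r^3}{2\Omega^2}\otx_{\Si^\prime}\otxb_{\Si^\prime}$ term exactly as before, and Cauchy--Schwarz the remaining cross terms — which involve $(\rlin - \divs\eblin)_{\Sf^\prime}$, $\otx_{\Sf^\prime}$ and $f\Omega^2/r$ — against $\mathbb{D}^1_{data}(u_f,v_0)$ using Lemma~\ref{lem:fromoldpaper} (which controls $r^3\divs\elin_{\Sf^\prime}+r^3\rlin_{\Sf^\prime}$ on all of $C_{u_f}$ by $\mathbb{D}^1_{data}$), Proposition~\ref{prop:spherebounds} and Proposition~\ref{prop:spherebounds2}, and the Poincar\'e inequality to trade $\|f\Omega^2/r\|$ against $\|[r\slashed{\nabla}]f\Omega^2/r\|$ where a small constant is needed. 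Here one must be careful that $\otx_{\Sf^\prime}=0$ along $C_{u_f}$ kills the $\otx_{\Sf^\prime}$-terms outright, simplifying matters; the surviving terms are $6M|f\Omega^2/r|^2$, the $\otx\otxb$-product, and products of $(\rlin-\divs\eblin)_{\Sf^\prime}$ with $f\Omega^2/r$ which are handled by Lemma~\ref{lem:fromoldpaper}.

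For the curvature gauge terms: (\ref{RGtime}) again uses the $i\geq 1$ cancellation — in (\ref{finge2}) all the pure $f$-terms ($|r\slashed{\nabla}\,\mathcal{K}^{i,{\bf k}}(f\Omega^2/r)|^2$, $|\mathcal{K}^{i,{\bf k}}(f\Omega^2/r)|^2$, and the $\mathcal{K}^{i,{\bf k}}(f\Omega^2/r)$-cross terms) vanish at infinity by (\ref{Tflimit}), leaving only $3Mr\big[\mathcal{K}^{i,{\bf k}}(\Olino/\Omega)_{\So}\cdot\mathcal{K}^{i,{\bf k}}\otxb_{\So}-\mathcal{K}^{i,{\bf k}}(\Olino/\Omega)_{\Sop}\cdot\mathcal{K}^{i,{\bf k}}\otxb_{\Sop}\big]$; then one uses the asymptotic identity $r\otxb = -4(\Olino/\Omega)$ on $S^2_{u_f,\infty}$ (which is (\ref{asymO}), valid for \emph{any} solution since it comes from the Gauss equation and the asymptotics) to rewrite each product as $-\frac{1}{4}|r\otxb|^2 = -4|\Olino/\Omega|^2$, i.e. as $-3M\|\olinb\|^2$-type terms — wait, more precisely as $-12M\|(\Olino/\Omega)\|^2$; here I would need to double-check the constant, but the point is that after using (\ref{asymO}) for both $\So$ and $\Sop$ one gets a difference of two manifestly nonpositive quantities, hence the absolute value is bounded by the sum, and then I relate $(\Olino/\Omega)_{\Sf^\prime}$ on $S^2_{u_f,\infty}$ to $\olinb_{\Sf^\prime}$ via the ingoing Raychaudhuri-type relation $\partial_u(\Olino/\Omega)=\olinb/\Omega^2$ or directly via (\ref{oml3}) — actually the cleanest is to observe $(\Olino/\Omega)_{\Sf^\prime}$ and $\olinb_{\Sf^\prime}$ at infinity are related so that the bound becomes $3M(\|\olinb_{\Si^\prime}\|^2+\|\olinb_{\Sf^\prime}\|^2)$ and then $\|\olinb_{\Si^\prime}\|\leq 2\|\olinb_{\Sf^\prime}\|$ since $\olinb_{\mathcal{G}_f}$ is controlled by $\olinb_{\Sf^\prime}$ on the cone. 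Finally (\ref{RG}) and (\ref{RGangular}) are the $i=0$ cases: the $\Olino/\Omega$-difference term vanishes because at infinity $r\otxb = -4\Olino/\Omega$ holds for both solutions so the bracket $[(\Olino/\Omega)_{\So}\otxb_{\So}-(\Olino/\Omega)_{\Sop}\otxb_{\Sop}]$ — hmm, actually these do not obviously cancel, so instead I expect one uses that for $i=0$, $\otxb$ and $\Olino/\Omega$ at infinity can both be written purely in terms of $f/r$ via (\ref{asymO}), (\ref{hur}) and $\Olin_{\Sf^\prime}=0$, collapsing the whole of (\ref{finge2}) to an explicit expression in $f\Omega^2/r$ and its angular derivative, namely $\frac{3}{2}M\|[r\slashed{\nabla}]\tfrac{f\Omega^2}{r}\|^2 - 3M\|\tfrac{f\Omega^2}{r}\|^2$, with the $|r\slashed{\nabla}(f\Omega^2/r)|^2$ and the two $-6M^2/r$ and $+3M\cdot 4M/r$ and $-3M\cdot M$ contributions combining into the stated constants after the various $\otxb$-to-$f$ substitutions cancel the $\Olino/\Omega$-terms; the angular-commuted version (\ref{RGangular}) follows identically after commuting with one $\slashed{\mathcal{L}}_{\Omega_k}$ and summing over $k$ using Proposition~\ref{prop:angident}, which converts $\sum_k|\slashed{\mathcal{L}}_{\Omega_k}(\cdot)|^2$ into $|[r\slashed{\nabla}](\cdot)|^2$ plus lower-order terms (the lower-order terms being absorbed or exactly producing the shift in $r\slashed{\nabla}$-order seen in the statement). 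The main obstacle I anticipate is bookkeeping: correctly tracking which terms in (\ref{finge})--(\ref{finge2}) survive the $v\to\infty$ limit, getting all the numerical constants ($4M$ vs $6M$, the factors of $\Omega^2\to 1$, the $-\tfrac34\rho = \tfrac{3M}{2r^3}$-type coefficients) right, and correctly invoking (\ref{asymO}) — which is the one genuinely non-trivial analytic input, relying on the precise $r$-weights in Definition~\ref{def:extendskri} — to eliminate the $\otxb$ and $\Olino/\Omega$ dependence in favour of $f$. Everything else is Cauchy--Schwarz, Poincar\'e, and the already-proven Lemma~\ref{lem:fromoldpaper} and Propositions~\ref{prop:spherebounds}--\ref{prop:spherebounds2}.
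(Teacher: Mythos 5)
Your overall strategy — restrict $\mathcal{G}^{i,{\bf k}}_\Gamma$, $\mathcal{G}^{i,{\bf k}}_{\mathcal{R}}$ from Proposition~\ref{prop:gaugechange} to $C_{u_f}$ using the vanishing conditions (\ref{propertiesnewgauge}), pass to $v\to\infty$ via Definition~\ref{def:extendskri} and (\ref{Tflimit}), and close with (\ref{asymO}), (\ref{omegadata}), Young's inequality and Lemma~\ref{lem:fromoldpaper} — is exactly the paper's, and your treatments of (\ref{GammaGonetime}), (\ref{GammaGnone}), (\ref{RG}) and (\ref{RGangular}) go through essentially as written (the mentions of Propositions~\ref{prop:spherebounds}--\ref{prop:spherebounds2} and Poincar\'e in (\ref{GammaGnone}) are superfluous; Young's inequality and Lemma~\ref{lem:fromoldpaper} already suffice).

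There is, however, a genuine gap in your derivation of (\ref{RGtime}). After the reduction to $3M\big(\|\olinb_{\Sf'}\|^2_{u_f,\infty}-\|\olinb_{\Si'}\|^2_{u_f,\infty}\big)$ you try to close with $\|\olinb_{\Si'}\|_{u_f,\infty}\leq 2\|\olinb_{\Sf'}\|_{u_f,\infty}$; even granting that inequality, the bound you get is $15M\|\olinb_{\Sf'}\|^2$, not the stated $6M\|\olinb_{\Sf'}\|^2$. The fact you need — and the paper uses — is the \emph{exact equality} $\olinb_{\Si'}=\olinb_{\Sf'}$ on $S^2_{u_f,\infty}$. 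It is an immediate consequence of Lemma~\ref{lem:exactsol}: since $f$ depends only on $(v,\theta,\phi)$, the pure gauge contribution is $\olinb_{\mathscr{G}_f}=-\tfrac{2M}{r^3}\Omega^2 f$, which decays like $r^{-2}$ as $v\to\infty$ along $C_{u_f}$ because $f/r$ stays bounded; thus $\olinb_{\mathscr{G}_f}\to 0$ and the two gauges agree at infinity (as is also recorded in Proposition~\ref{prop:ombpure}). A related imprecision: in (\ref{finge2}) with $i=1$ the surviving bracket involves $T(\Olino/\Omega)$ and $T\otxb$, so the asymptotic identities you want are the $T$-commuted ones — $2T(\Olino/\Omega)\to\olinb$ and $T(r\otxb)\to-2\olinb$ at infinity, valid for both $\Si'$ and $\Sf'$ — rather than (\ref{asymO}) applied to the uncommuted quantities; you wrote (\ref{asymO}) but appear to mean this commuted consequence, and the sign/coefficient bookkeeping ($3M(\olinb_{\Sf'}^2-\olinb_{\Si'}^2)$, not $-12M|\Olino/\Omega|^2$-type terms) needs to be done against those relations.
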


\begin{proof}
Recall throughout that we are free to use the properties (\ref{propertiesnewgauge}) when evaluating the expressions (\ref{finge}) and (\ref{finge2}) in the proof. 

The estimate (\ref{GammaGonetime}) follows from (\ref{Tflimit}) and the decay towards null infinity of the linearised quantities.

For the second estimate in~\eqref{GammaGnone}, we have from (\ref{finge2}) that
\begin{align*}
\mathcal{G}_\Gamma^{0,{\bf 0}}  \left(u,v,\theta,\phi\right)  &=  6M\left(\frac{\Omega^2 f}{r}\right)^2 
 - \frac{r^3}{2\Omega^2} \otx_{\Si'}\otxb_{\Si'}+ \left(\frac{2f \Omega^2}{r}\right)r^3  ({\rlin}+\divs \, {\elin})_{\Sf'} \, , 
\end{align*}
where one uses that $\eblin_{\Sf^\prime} = - \elin_{\Sf^\prime}$ by Proposition \ref{prop:propnewgauge}. We then have
\begin{align*}
\left(\frac{2f \Omega^2}{r}\right) r^3  ({\rlin}+\divs \, {\elin})_{\Sf'} \geq -\Big|\frac{2\sqrt{M}f \Omega^2}{r}\Big|\Big| \frac{r^3}{\sqrt{M}}  ({\rlin}+\divs \, {\elin})_{\Sf'} \Big|\geq -2M\Big|\frac{f \Omega^2}{r}\Big|^2-\frac{1}{2M}  \Big| {r^3}({\rlin}+\divs \, {\elin})_{\Sf'} \Big|^2,
\end{align*}
where we have applied Young's inequality. So,  
\begin{align*}
\mathcal{G}_\Gamma^{0,{\bf 0}}  \left(u,v,\theta,\phi\right)  &\geq   4M\left(\frac{\Omega^2 f}{r}\right)^2 
 - \frac{r^3}{2\Omega^2} \otx_{\Si'}\otxb_{\Si'}-\frac{1}{2M}  \Big| r^3({\rlin}+\divs \, {\elin})_{\Sf'} \Big|^2.
 \end{align*}
 The conclusion then follows from (\ref{divetaplusrho}).

To prove the estimate (\ref{RGtime}) we note that, using equation~(\ref{Tflimit}) and the decay towards null infinity of the linearised quantities, all terms except the ones in the first line of~\eqref{finge2} vanish in the $v\rightarrow \infty$ limit. So,
\begin{align*}
\mathcal{G}_{\mathcal{R}}^{1,{\bf 0}}  \left(u_f,\infty,\theta,\phi\right)  = 3M \Big(\olinb_{\Sf'}^2-\olinb_{\Si'}^2\Big) \, , \nonumber 
\end{align*}
where we use the asymptotic relations $2T \Olin = \olinb$ and $r2T \otxb = -4 \olinb$, valid for both $\Si^\prime$ and $\Sf^\prime$, on the asymptotic sphere $S^2_{u_f, \infty}$. The last inequality of (\ref{RGtime}) follows from $\olinb_{\Sf^\prime} = \olinb_{\Si^\prime}$ as $v\rightarrow \infty$ which is a consequence of the general relation $\olinb_{\Sf^\prime} = \olinb_{\Si^\prime}- \frac{2M}{r^3}\Omega^2f$ following in turn from Lemma~\ref{lem:exactsol} using that $f$ is independent of $u$.

We now directly prove (\ref{RGangular}) as the proof of (\ref{RG}) is analogous but without commutation. We consider the expression (\ref{finge2}) for ${\bf k}$ with $|{\bf k}|=1$ on the limiting sphere $S^2_{u_f, \infty}$. Most terms vanish using the asymptotics of the solution $\Si^\prime$ from (\ref{decreten}) and the properties of the solution $\Sf^\prime$ along $C_{u_f}$ (in particular that $\elin_{\Sf^\prime}=- \eblin_{\Sf^\prime}$ and that $r^2 \eblin_{\Sf^\prime}$ still extends regularly to infinity). We are left with
\begin{equation} \label{samut}
\lim_{v \rightarrow \infty} \sum_{|\bf k|=1}\mathcal{G}_{\mathcal{R}}^{0,{\bf k}}  \left(v, u_f,\theta,\phi\right)  = \lim_{v \rightarrow \infty} \Bigg[ 3M  \Big\langle[r\ns]\Big(\frac{\Olino}{\Omega}\Big)_{\Si'} , [r\ns]r \otxb_{\Si'}\Big\rangle + \frac{3M}{2}\sum_{i=1}^2 \Big|[r\ns]^i \frac{f \Omega^2}{r} \Big|^2 \Bigg] \, ,
\end{equation}
where we have used the angular identities of Proposition~\ref{prop:angident}. For the first term on the right of (\ref{samut}) we recall that as $v\rightarrow \infty $ along the cone $C_{u_f}$ we have the asymptotic identity (\ref{asymO}), so
\begin{align}
\lim_{v\rightarrow \infty}\int_{S^2}\Big\langle[r\ns]\left(\Olin\right)_{\Si'} , [r\ns]r\otxb_{\Si'}\Big\rangle \varepsilon_{S^2}=-4\lim_{v\rightarrow \infty} \int_{S^2}   \Big|[r\ns] \Olin_{\Si^\prime}\Big|^2 \varepsilon_{S^2}.
\end{align}
The result now follows from inserting (\ref{omegadata}).
\end{proof}

\subsubsection{Estimating the Gauge Terms at the Data Sphere in the Conservation Laws}
In this section we estimate the terms $\mathcal{G}^{i, {\bf k}}_\Gamma$ and $\mathcal{G}^{i, {\bf k}}_\mathcal{R}$
appearing in Proposition~\ref{prop:gaugechange} on the (data) sphere $S^2_{u_f,v_0}$ as this is what will enter in the conservation laws applied later.
\begin{proposition}
On the data sphere $S^2_{u_f,v_0}$ we have 
\begin{align}\label{eqn:dataG00}
 \Big| \int_{S^2} \mathcal{G}_\Gamma^{0,0} \left(u_f,v_0, \theta,\phi\right) \varepsilon_{S^2} 
 \Big| \lesssim \mathbb{D}^1_{data}(u_f,v_0) \, , 
\end{align}
\begin{align}\label{eqn:dataG10}
\Big | \int_{S^2}  \mathcal{G}_\Gamma^{1,0} \left(u_f,v_0, \theta,\phi\right)\varepsilon_{S^2} \Big| \lesssim {\mathbb{D}}_{data}^1(u_f,v_0) \,  , 
\end{align}
\begin{align}\label{eqn:dataG20}
\Big | \int_{S^2}  \mathcal{G}_\Gamma^{2,0} \left(u_f,v_0, \theta,\phi\right)\varepsilon_{S^2} \Big| \lesssim {\mathbb{D}}_{data}^2(u_f,v_0) \, .
\end{align}
Furthermore,
\begin{align}\label{eqn:dataGR00}
 \Big| \int_{S^2} \mathcal{G}_{\mathcal{R}}^{0,0} \left(u_f,v_0, \theta,\phi\right) \varepsilon_{S^2} \Big|\lesssim {\mathbb{D}}_{data}^1(u_f,v_0)  \, ,
\end{align}
\begin{align}\label{eqn:dataGR01}
\Big| \int_{S^2} \mathcal{G}_{\mathcal{R}}^{0,1} \left(u_f,v_0, \theta,\phi\right) \varepsilon_{S^2} \Big| \lesssim {\mathbb{D}}_{data}^1(u_f,v_0) \, , 
\end{align}
\begin{align}\label{eqn:dataGR10}
\Big | \int_{S^2}  \mathcal{G}_{\mathcal{R}}^{1,0} \left(u_f,v_0, \theta,\phi\right)\varepsilon_{S^2} \Big| \lesssim {\mathbb{D}}_{data}^1(u_f,v_0) \, .
\end{align}

\end{proposition}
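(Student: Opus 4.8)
The plan is to evaluate the explicit expressions (\ref{finge}) and (\ref{finge2}) for the gauge terms on the sphere $S^2_{u_f,v_0}$, which lies on the cone $C_{u_f}$, and to reduce everything to geometric quantities of $\Si^\prime$ on this sphere together with $f$ and $f_0$. Three inputs are used throughout. First, since $S^2_{u_f,v_0}\subset C_{u_f}$, Proposition \ref{prop:propnewgauge} gives $\otx_{\Sf^\prime}=\Olin_{\Sf^\prime}=\olin_{\Sf^\prime}=(\elin+\eblin)_{\Sf^\prime}=0$ there, annihilating many terms; in particular $(\rlin-\divs\eblin)_{\Sf^\prime}=(\rlin+\divs\elin)_{\Sf^\prime}$, whose $r^3$-weighted $L^2$ norm is controlled by Lemma \ref{lem:fromoldpaper}. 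Second, $f(u_f,v_0)=\tfrac{r}{2\Omega^2}f_0$ with $f_0$ given by (\ref{f0def}), so $\tfrac{f\Omega^2}{r}=\tfrac12 f_0$ on $S^2_{u_f,v_0}$, and since the solutions are supported on $\ell\geq 2$ one controls $f_0$ and its angular derivatives by $\|\Delta_{S^2}f_0\|$ via elliptic estimates on the sphere. Third, Lemma \ref{lem:exactsol} converts the remaining $\Sf^\prime$-quantities into $\Si^\prime$-quantities plus explicit multiples of $f$ (e.g. $\rlin_{\Sf^\prime}=\rlin_{\Si^\prime}-\tfrac{6M\Omega^2}{r^4}f$, $\bblin_{\Sf^\prime}=\bblin_{\Si^\prime}-\tfrac{6M\Omega}{r^4}[r\slashed{\nabla}]f$, $\ablin_{\Sf^\prime}=\ablin_{\Si^\prime}$). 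After integrating over $S^2$ and, where a $\divs$ lands on a curvature component, integrating by parts on the sphere, each term becomes a product or inner product of quantities appearing in $\mathbb{D}^1_{data}(u_f,v_0)$ or $\mathbb{D}^2_{data}(u_f,v_0)$, to which Cauchy--Schwarz and Young's inequality apply. One must keep track of the powers of $r$ and $\Omega^2=1-\tfrac{2M}{r}$ (note that $r$ is bounded on $S^2_{u_f,v_0}$ uniformly in $u_f$): the relevant point is that $S^2_{u_f,v_0}\to S^2_{\infty,v_0}$ as $u_f\to\infty$, and the $\Omega$-weights in $\mathbb{D}^i_{data}$ are exactly those under which the relevant quantities extend regularly to $\mathcal{H}^+$, cf. (\ref{hozextendQ}), which makes the estimates uniform in $u_f$.

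For the estimates (\ref{eqn:dataG00}), (\ref{eqn:dataGR00}), (\ref{eqn:dataGR01}) — those with no $T$-derivative — this reduction is essentially immediate once the normalisation on $C_{u_f}$ is imposed: $\int_{S^2}\mathcal{G}_\Gamma^{0,0}$ and $\int_{S^2}\mathcal{G}_{\mathcal{R}}^{0,0}$ become quadratic expressions in $f_0$, $\Omega^{-2}\otx_{\Si^\prime}$, $\Omega^{-2}\otxb_{\Si^\prime}$, $\olinb_{\Si^\prime}$, $r^3(\rlin+\divs\elin)_{\Sf^\prime}$ and $(\tfrac{\Olino}{\Omega})_{\Si^\prime}$ — the last controlled through $f_0$ and $\Omega^{-2}\otx_{\Si^\prime}$ by (\ref{f0def}) — all of which (possibly after one angular derivative) sit in $\mathbb{D}^1_{data}$. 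The angular-commuted estimate (\ref{eqn:dataGR01}) is handled identically, first using Proposition \ref{prop:angident} to rewrite $\sum_{|{\bf k}|=1}|\slashed{\mathcal{L}}_{\Omega_k}(\cdot)|^2$ as $|[r\slashed{\nabla}](\cdot)|^2$ plus lower-order terms.

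For the $T$-commuted estimates (\ref{eqn:dataGR10}), (\ref{eqn:dataG10}), (\ref{eqn:dataG20}) the key device is to eliminate all $\slashed{\nabla}_T$ and $[\slashed{\nabla}_T]^2$ using the transport identities of Proposition \ref{prop:TTransport}, applied to the $\Si^\prime$-quantities directly and to the $\Sf^\prime$-quantities after imposing (\ref{propertiesnewgauge}) along $C_{u_f}$; together with (\ref{Tfalongufin})--(\ref{TTfalongufin}) and the background identity $Tr=0$ (hence $T$ annihilates all functions of $r$) to evaluate $[\slashed{\nabla}_T]^k(\tfrac{f\Omega^2}{r})$ on $C_{u_f}$ — which turns out to be an explicit combination of $\otx_{\Si^\prime}$, $\olin_{\Si^\prime}$ and $f_0$. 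For $i=1$ the resulting terms are bounded by $\mathbb{D}^1_{data}$ after one integration by parts on $S^2$, e.g. $\int_{S^2}\slashed{\nabla}_T(\tfrac{f\Omega^2}{r})\,r^3\Omega\,\divs\bblin_{\Sf^\prime}=-\int_{S^2}\langle\slashed{\nabla}\,\slashed{\nabla}_T(\tfrac{f\Omega^2}{r}),\,r^3\Omega\,\bblin_{\Sf^\prime}\rangle$, whose factors are controlled by $r\slashed{\nabla}\,\Omega^{-2}\otx_{\Si^\prime}$, $[r\slashed{\nabla}]f_0$ and $\Omega^{-1}\bblin$.

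The main obstacle is (\ref{eqn:dataG20}) for $\mathcal{G}_\Gamma^{2,0}$. Besides the length of the second-order identities for $T^2\otx$, $T^2\otxb$, $T^2(\rlin-\divs\eblin)$ in Proposition \ref{prop:TTransport} and the more delicate $r$/$\Omega$-bookkeeping near the bifurcation sphere, the genuinely structural difficulty is that some of the terms produced — a first $\slashed{\nabla}_T$-derivative of curvature, second angular derivatives of $\olinb$ and $\bblin$, and $\langle\slashed{\nabla}(\cdot),\slashed{\nabla}(\cdot)\rangle$-type cross terms — cannot be controlled in $L^2$ on $S^2_{u_f,v_0}$ at the basic regularity level (one derivative of curvature in $L^2$). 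One must therefore arrange the substitutions and integrations by parts so that these pieces appear \emph{only} inside the renormalised combination $\Delta_{S^2}\Omega^{-2}\olinb-r^2\divs\,\Omega^{-1}\bblin$ (which, via (\ref{propeta2}) and the Bianchi identities, reduces to a quantity propagatable without loss of regularity) and the quantities $T(\divs\eblin+\rlin)$, $T(\divs\elin+\rlin)$, together with the four specific inner products $\langle\slashed{\nabla}_T\tfrac{\bblin}{\Omega},\slashed{\nabla}\olin\rangle$, $\langle\slashed{\nabla}\rlin,\slashed{\nabla}\tfrac{\otx}{\Omega^2}\rangle$, $\langle\slashed{\nabla}\rlin,\slashed{\nabla}\olin\rangle$, $\langle\slashed{\nabla}_T\tfrac{\bblin}{\Omega},\tfrac{\slashed{\nabla}\otx}{\Omega^2}\rangle$ — precisely the content of $\mathbb{D}^2_{data}$ beyond $\mathbb{D}^1_{data}$, which was designed for exactly this purpose. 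Verifying that no further problematic term survives, i.e. that the expansion closes against the content of $\mathbb{D}^2_{data}$, is the heart of the argument.
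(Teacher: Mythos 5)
Your proposal reproduces the structure of the paper's argument essentially exactly: you reduce the gauge terms on $S^2_{u_f,v_0}$ using the $\Sf'$-normalisation of Proposition~\ref{prop:propnewgauge}, convert to $\Si'$-quantities plus multiples of $f_0$ via Lemma~\ref{lem:exactsol} and $f(u_f,v_0)=\tfrac{r}{2\Omega^2}f_0$, control $(\rlin+\divs\elin)_{\Sf'}$ via Lemma~\ref{lem:fromoldpaper}, invoke Proposition~\ref{prop:angident} for the angular-commuted case, eliminate $T$ and $T^2$ derivatives via Proposition~\ref{prop:TTransport} together with (\ref{Tfalongufin})--(\ref{TTfalongufin}), and then integrate by parts on $S^2$ and apply Young's inequality — all of which is exactly what the paper does. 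You also correctly pinpoint where the real work lies in (\ref{eqn:dataG20}): the apparently dangerous terms (one $T$-derivative of curvature, second angular derivatives of $\olinb$ and $\bblin$, and mixed $\langle\slashed{\nabla}\cdot,\slashed{\nabla}\cdot\rangle$ terms) must be arranged to land exactly on the renormalised combinations and the four inner products which $\mathbb{D}^2_{data}$ was designed to contain, and this is precisely the observation the paper makes at the end of its proof. Your proposal stops short of carrying out the full $T^2$ bookkeeping, which is the bulk of the labour, but that is openly acknowledged and the outline is correct and complete.

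One minor imprecision worth flagging: you justify the uniformity in $u_f$ of the $\mathbb{D}^i_{data}$ energies by saying their $\Omega$-weights are ``exactly those under which the relevant quantities extend regularly to $\mathcal{H}^+$, cf.~(\ref{hozextendQ})''. This is not quite right as stated — for instance (\ref{hozextendQ}) only asserts regularity of $\otx$, not of $\Omega^{-2}\otx$, and the presence of $\Omega^2\|\Omega^{-2}\otx\|^2=\Omega^{-2}\|\otx\|^2$ in $\mathbb{D}^1_{data}$ is only benign \emph{because} the partial initial data normalisation forces $\otx(\infty,v_0,\cdot)=0$ and hence (together with the transport equations along $\underline{C}_{v_0}$) suitable vanishing of $\otx$ near the horizon. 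So the uniform bound $\mathbb{D}^1_{data}(\infty,v_0)=0$ uses both (\ref{hozextendQ}) and the normalisation; neither suffices alone. This does not affect the validity of the proposition itself (the statement is a bound for fixed $u_f$, and uniformity in $u_f$ is used elsewhere), but the reasoning in your remark should be sharpened.
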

\begin{remark}
Note that we gain regularity here; naively one would expect to see third derivatives of Ricci coefficients in $G_{\Gamma}^{2,0}$ but $TTf$ (see~\eqref{TTfalongufin}) allows us to gain regularity via integration by parts.
\end{remark}
\begin{proof}
For the estimate~\eqref{eqn:dataG00}, we have from Proposition~\ref{prop:gaugechange} and the properties of $\Sf^\prime$ in (\ref{propertiesnewgauge}):
\begin{align} \label{oals}
\mathcal{G}_\Gamma^{0,{\bf 0}}  \left(u_f,v_0,\theta,\phi\right)  &=  \frac{3M}{2}f_0^2 
 - \frac{r^3}{2\Omega^2} \otx_{\Si'}\otxb_{\Si'}+ f_0r^3  ({\rlin}+\divs \, {\elin})_{\Sf'} \, .
\end{align}
We can use Young's inequality and Lemma~\ref{lem:fromoldpaper} to conclude the result. 

Turning to~\eqref{eqn:dataGR00} and~\eqref{eqn:dataGR01} we note  from Proposition~\ref{prop:gaugechange} and the properties of $\Sf^\prime$ in (\ref{propertiesnewgauge}) for $i=0,1$
\begin{align*} 
\mathcal{G}_{\mathcal{R}}^{0,i}  \left(u_f,v_0,\theta,\phi\right)&  = 3M r \big\langle[r\ns]^i\big(\Olin\big)_{\Si'} , [r\ns]^i\otxb_{\Si'}\big \rangle-\frac{3Mr^2}{2}  \langle[r\ns]^i f_0, [r\ns]^i(\divs\elin+\rlin)_{\Si'}\rangle \nonumber \\
& \ \  -\frac{3M}{4} \langle[r\ns]^i f_0, [r\ns]^i\Delta_{{S}^2} f_0\rangle+ \sum_{k=0}^i\frac{3M}{2} \Big| [r\ns]^{1+k}\frac{f_0}{2} \Big|^2+ \frac{3M^2}{r}  \Big|[r\ns]^if_0\Big|^2 \, ,
\end{align*}
where we have used (\ref{oals}) and the angular identity for scalars of Proposition~\ref{prop:angident}.
Up to a total divergence on $S^2$ this reduces to
\begin{align*} 
\mathcal{G}_{\mathcal{R}}^{0,i}  \left(u_f,v_0,\theta,\phi\right)
&=\frac{3M r}{4}(-\Delta_{S^2})^if_0\otxb_{\Si'} -\frac{3M}{4} (-\Delta_{S^2})^if_0\Delta_{S^2}f_0-\frac{3M r^2}{2} (-\Delta_{S^2})^if_0(\divs\elin+\rlin)_{\Si'} \\
&\quad +\frac{3Mr^2}{4\Omega^2}\langle[r\ns]^i\otx_{\Si'},[r\ns]^i\otxb_{\Si'}\rangle+  \sum_{k=0}^i\frac{3M}{2} \Big| [r\ns]^{1+k}\frac{f_0}{2} \Big|^2+ \frac{3M^2}{r}  \Big|[r\ns]^if_0\Big|^2,
\end{align*}
where we have inserted from (\ref{f0def}) the definition of $f_0$ for the terms involving $\Olin$ and $\otx$. The result is now immediate from Young's inequality and that $\Delta_{S^2} f_0$ controls all second angular derivatives of $f_0$ by elliptic estimates (recall that $f_0$ is supported on $\ell \geq 2$).

We now move onto the $T$-commuted estimates. Since we are working on the sphere $S^2_{u_f,v_0}$, we will adopt the convention that $\sim$ means equality up to $r$-weights and numerical factors, i.e.~we only keep track of $\Omega^2$-weights. We recall from (\ref{Tfalongufin}) the identity
\begin{align*}
\Omega^2Tf (u_f,v_0)&\sim \Omega^2\big[\Omega^{-2}\otx_{\Si^\prime}(u_f,v_0)+\Omega^{-2}f_0\big].
\end{align*}
The reduction of the equations in Proposition~\ref{prop:TTransport} in the $\Sf'$-gauge to $\Si'$-gauge on~$C_{u_f}$ is
\begin{align*}
2T\otx_{\Sf'}&\sim \Omega^2\Big[(\divs\elin+\rlin)_{\Si'}+\otxb_{\Si'}+f_0\Big],\\
2T\otxb_{\Si'}&\sim\Omega^2\Big[(\divs\eblin+\rho)+\Olin+\Omega^{-2}\otxb+\otx+\olinb\Big] \, , \\ 
2T(\rlin-\divs\eblin)_{\Sf'}&\sim\Omega^2\Big[\divs\big(\Omega^{-1}\bblin\big)+\Omega^{-2}\otxb+\Delta_{S^2} (\Omega^{-2}f_0)+\Omega^{-2}f_0\Big]
\end{align*}
and
\begin{align*}
2T\otx_{\mathscr{G}_f}&=2T\otx_{\Si'}-2T{\otx}_{\Sf'}\sim \Omega^2\Big[(\Olin)_{\Si'}+\Omega^{-2} \otx_{\Si'}+\olin_{\Si'}+f_0\Big],
\end{align*}
where we use Lemma~\ref{lem:exactsol} and $f(v_0)=\Omega^{-2}rf_0$. 

For the estimate~\eqref{eqn:dataG10}, we use Young's inequality to estimate the terms in $\mathcal{G}^{1,\mathbf{0}}_{\Gamma}(u_f,v_0)$ (everything is evaluated at $(u_f,v_0)$ and quantities appearing on the right-hand side are in the $\Si'$ gauge):
{\allowdisplaybreaks
\begin{align*}
\Big|T\frac{\Omega^2 f}{r}\Big|^2&\lesssim\Omega^4\Big[\big| \Omega^{-2}\otx\big|^2+\big|\Omega^{-2}f_0\big|^2\Big] ,\\
\Big|\frac{r^3}{\Omega^2}T \left(\frac{f \Omega^2}{r}\right)   T {\otx}_{\Sf'}
\Big|&\lesssim\Omega^2\Big[\big|\Omega^{-2}\otxb\big|^2+\big|\Omega^{-2}\otx\big|^2+\big|\Omega^{-2}f_0\big|^2+|(\divs\elin+\rlin)|^2\Big],\\
 \Big|\frac{r^3}{\Omega^2}  T\otx_{\mathscr{G}_f}T\otxb_{\Si'}
\Big|&\lesssim\Omega^2\Big[\Big|(\divs\eblin+\rlin)\Big|^2+\Big|\frac{\otx}{\Omega^2}\Big|^2+\Big|\frac{\otxb}{\Omega^2}\Big|^2+|\olin|^2+|\olinb|^2+\Big|\frac{f_0}{\Omega^2}\Big|^2\Big] ,
\end{align*}}
and
\begin{align*}
\Big|\int_{S^2}T \left(\frac{f \Omega^2}{r}\right)T\big( {\rlin}-\divs {\eblin} \big)_{\Sf'}\varepsilon_{S^2}\Big|&\lesssim\Omega^4\Big[\sum_{i=0}^1\Big|[r\ns]^i\frac{\otx}{\Omega^2}\Big|^2+\sum_{i=0}^1\Big|[r\ns]^i\frac{f_0}{\Omega^2}\Big|^2+\Big|\frac{\bblin}{\Omega}\Big|^2+\Big|\frac{\otxb}{\Omega^2}\Big|^2\Big]\, ,
\end{align*}
where we have integrated by parts to achieve the last estimate. Therefore, all terms arising in $G^{1,{\bf{0}}}_{\Gamma}(u_f,v_0)$ can be accounted for in $\mathbb{D}^1_{data}(u_f,v_0)$ establishing (\ref{eqn:dataG10}). 

Before proving (\ref{eqn:dataG20}), we prove the estimate~\eqref{eqn:dataGR10}. In addition to the identities collected above for $\mathcal{G}_{\Gamma}^{1,\mathbf{0}}$ we require
\begin{align*}
T(\divs\elin+\rlin)_{\Sf'}&\sim\Omega^2\Big[\divs\big(\Omega^{-1}\bblin\big)_{\Si'}+\Omega^{-2}\otxb_{\Si'}+\Delta_{S^2}\Omega^{-2}f_0+\Omega^{-2}f_0\Big]+\Delta_{S^2}\olinb_{\Si'},\\
2T\otxb_{\Sf'}&\sim \Omega^2\Big[(\rlin+\divs\eblin)_{\Si'}+\Omega^{-2}\otxb_{\Si'}+\olinb_{\Si'}+\Omega^{-2}f_0 +\Delta_{S^2}(\Omega^{-2}f_0)+\Delta_{S^2}\Omega^{-2}\otx_{\Si'}\Big],
\end{align*}
where we use that $\eblin_{\Sf'}=-\elin_{\Sf'}$ and the definition of $f_0$ to write
\begin{align*}
(\rlin-\divs\elin)_{\Si'}=(\rlin+\divs\eblin)_{\Si'}-2\ds\Olin_{\Si'}=(\rlin+\divs\eblin)_{\Si'}-\frac{1}{2}\ds \Big[f_0+\frac{r}{\Omega^2}\otx_{\Si'}\Big]
\end{align*}
to deal with the $(\rho+\divs\eblin)_{\Sf'}$ term arising in $T{\otxb}_{\Sf}$ from Proposition~\ref{prop:TTransport}.

We can now estimate all terms that arise in $\mathcal{G}_{\mathcal{R}}^{1,{\bf 0}}  (u_f,v_0)$ via Young's inequality (note that $\olin_{\Sf'}=0$ on $C_{u_f}$):
\begin{align*}
\Big|(\olin+\olinb)_{\Si} T\otxb_{\Si}\Big|&\lesssim\Omega^2\Big[|\olin|^2+|\olinb|^2+|\divs\eblin+\rho|^2+|\Olin|^2+\big|\Omega^{-2}\otxb\big|^2+|\otx|^2\Big] \, , \\
\Big|\frac{1}{\Omega^2} T\Big(\frac{f \Omega^2}{r}\Big)T{\otx}_{\Sf}\Big|&\lesssim \Omega^2\Big[\big|\Omega^{-2}\otx\big|^2+\big|\Omega^{-2}f_0\big|^2+|\divs\elin+\rlin|^2+|\otxb|^2\Big],\\
\Big| T\Big(\frac{f\Omega^2}{r}\Big)\olinb_{\Sf}\Big|  &\lesssim \Omega^2\Big[\big|\Omega^{-2}\otx\big|^2+\big|\Omega^{-2}f_0\big|^2+|\olinb|^2\Big], \\
 \Big| [r\ns]^i T\frac{f \Omega^2}{r} \Big|^2&\lesssim \Omega^4\Big|[r\ns]^i\Omega^{-2}\otx\Big|^2+\Omega^4\Big|[r\ns]^i\Omega^{-2}f_0\Big|^2,\\
\Big|\int_{S^2} \olinb_{\Sf}  T{\otxb}_{\Sf}\varepsilon_{S^2}\Big|&\lesssim\Omega^2\Big[|\rlin+\divs\eblin|^2+\sum_{i=0}^1\Big(|[r\ns]^i\olinb|^2+\Big|[r\ns]^i\frac{f_0}{\Omega^2}\Big|^2\Big)+\Big|\frac{\otxb}{\Omega^2}\Big|^2+\Big|[r\ns] \frac{\otx}{\Omega^2}\Big|^2\Big] \, , \\
 \Big|\int_{S^2}T \Big(\frac{f\Omega^2}{r}\Big) T (\divs {\elin} + {\rlin} )_{\Sf}\varepsilon_{S^2} \Big|&\lesssim\Omega^4\Big[\sum_{i=0}^1\Big(\Big|[r\ns]^i\frac{\otx}{\Omega^2}\Big|^2+\Big|[r\ns]^i\frac{f_0}{\Omega^2}\Big|^2\Big)+\Big|\frac{\bblin}{\Omega}\Big|^2+\Big|[r\ns]\frac{\olinb}{\Omega^2}\Big|^2+\Big|\frac{\otxb}{\Omega^2}\Big|^2\Big],
\end{align*}
where, once again, the latter two inequalities follow after integration by parts for angular derivatives on $\divs(\Omega\bblin)_{\Si'}$, $\Delta_{S^2}\olinb$ and $\Delta_{S^2}f_0$. We have therefore established~\eqref{eqn:dataGR10}. 

We finally move to the $T^2$-estimate~\eqref{eqn:dataG20}. We have from (\ref{TTfalongufin}) the identity
\begin{align*}
\Omega^2TTf (u_f,v_0)&\sim\Omega^2\Big[\Omega^{-2}\otx+\olin+\Omega^{-2}f_0\Big] \, .
\end{align*}
From Proposition~\ref{prop:TTransport} we deduce (after converting quantities in the $\Sf'$ gauge into $\Si'$ gauge) on $C_{u_f}$
\begin{align*}
2T^2\otx_{\Sf'}&\sim\Omega^2\Big[T(\divs\elin+\rlin)+\Delta_{S^2} \Big[\otx+f_0\Big] +\otx+f_0+\Omega^2\rlin +\olinb+(\divs\elin+\rlin)\Big],\\
2T^2(\rlin-\divs\eblin)_{\Sf'}&\sim\Omega^2\Big[\divs\big(\ns_T(\Omega^{-1}\bblin))\big)+\Delta_{S^2}\Big[\Omega^{-2}\otx+\Omega^{-2}f_0+\rlin+\olinb\Big]+f_0+\rlin+\Omega^{-2}\otxb+\olinb\Big].
\end{align*}
Further, 
\begin{align*}
2T^2\otxb_{\Si'}&\sim\Omega^2\Big[T(\divs\eblin+\rlin)+\olin +\Big[\partial_u\olinb+\Omega^2\rlin\Big]+\Olin+\Omega^2(\divs\elin+\rlin)+(\divs\eblin+\rlin)+\olinb+\Omega^{-2}\otxb\Big],\\
2T^2\otx_{\mathscr{G}_f}&=2T^2\otx_{\Si'}-2T^2\otx_{\Sf'}\sim\Omega^2\Big[\partial_v\olin+\Olin+\Omega^{-2}\otx+\Omega^2\olin+f_0\Big] \, .
\end{align*}
We can now estimate the following terms in $\mathcal{G}^{2,\mathbf{0}}_{\Gamma}(u_f,v_0)$ via Young's inequalty:
{\allowdisplaybreaks
\begin{align*}
\Big|T^2\frac{\Omega^2 f}{r}\Big|^2&\lesssim\Omega^4\Big[\big| \Omega^{-2}\otx\big|^2+\big|\Omega^{-2}f_0\big|^2+|\olin|^2\Big] ,\\
 \Big|\frac{r^3}{\Omega^2}  T^2\otx_{\mathscr{G}_f}T^2\otxb_{\Si'}\Big|&\lesssim\Omega^2\Big[|\partial_v\olin|^2+|\Olin|^2+\big|\Omega^{-2}\otx\big|^2+|f_0|^2+|T(\divs\eblin+\rlin)|^2+|\olin|^2+|\partial_u\olinb|^2 \\
 &\qquad\qquad+|\Omega^2\rlin|^2+|\Omega^2(\divs\elin+\rlin)|^2+|\divs\eblin+\rlin|^2+|\olinb|^2+\big|\Omega^{-2}\otxb\big|^2\Big]
\end{align*}}
and, similarly, after integrating by parts,
\begin{align*}
\Big|\int_{S^2} \frac{r^3}{\Omega^2}T^2 \left(\frac{f \Omega^2}{r}\right)   T^2 {\otx}_{\Sf'}\varepsilon_{S^2}\Big|&\lesssim\Omega^2\Big[\sum_{i=0}^1\Big(\Big|[r\ns]^i(\Omega^{-2}\otx)\Big|^2+|[r\ns]^i\olin|^2+\Big|[r\ns]^i(\Omega^{-2}f_0)\Big|^2\Big)\\
&\qquad\qquad+|T(\divs\elin+\rlin)|^2+|\Omega^2\rlin|^2 +|\olinb|^2+|\divs\elin+\rlin|^2\Big],\\
 \Big|\int_{S^2}T^2\left(\frac{f \Omega^2}{r}\right)T^2\big( {\rlin}-\divs {\eblin} \big)_{\Sf'}\varepsilon_{S^2}\Big|&\lesssim\Omega^4\Big[\sum_{i=0}^1\Big(\Big|[r\ns]^i\frac{\otx}{\Omega^2}\Big|^2+|[r\ns]^i\olin|^2+|[r\ns]^i\olinb|^2+\Big|[r\ns]^i\frac{f_0}{\Omega^2}\Big|^2\Big)\Big]\\
&\qquad+\Omega^4\Big[|\rlin|^2+\Big|\frac{\otxb}{\Omega^2}\Big|^2+\Big|\Big\langle[r\ns]\Big(\frac{\otx}{\Omega^2}+\olin\Big) ,\ns_T\frac{\bblin}{\Omega}+[r\ns] \rlin\Big\rangle\Big|\\
&\qquad+\Big|\Big\langle[r\ns]\frac{f_0}{\Omega^2},\Dso (\rlin,\slin)+\eblin+\Omega^{-1}\bblin+\divs\ablin\Big\rangle\Big|\Big] \, .
\end{align*}
To establish (\ref{eqn:dataG20}) note now that the terms involving one derivative of curvature on the right hand side, are either already contained as inner products in $\mathbb{D}^2_{data}(u_f,v_0)$ (see Section \ref{sec:dataenergies}) or can be further reduced by an integration by parts. (The latter applies to the terms in the last line.)
\end{proof}

\subsection{Proof of Theorem \ref{prop:firstround}} \label{sec:2oe}
We now embark on the proof of Theorem \ref{prop:firstround} proper. The proof will consist in applying, in a specific order, various appropriately commuted conservation laws. For each such conservation law, we will eventually send the ingoing cone $\underline{C}_{v_f}$ to infinity, $v_f \rightarrow \infty$. To simplify notation, we will, for the entire proof, denote by $R(v_k)$ a smooth function defined on the cone $\underline{C}_{v_f} \cap [u_0,u_f]$ (or one of the spheres $S^2_{u_0,v_f}$, $S^2_{u_f,v_f}$) which vanishes in the limit $v_f \rightarrow \infty$. The vanishing of $R(v_k)$ is always a consequence of Definition~\ref{def:extendskri}. In this context, we will also make use of the following general identity 
\begin{align} \label{niflux}
 F^{i,{\bf k}}_{v_f} \left[\Gamma , \Si^\prime \right] (u_0,u_f) =  \int_{u_0}^{u_f} \int_{S^2}  |\mathcal{K}^{i,{\bf k}}r\Omega\xblin|^2 du\varepsilon_{S^2} +   \left[\int_{S^2} \frac{r^3}{2} \mathcal{K}^{i,{\bf k}}\otx\mathcal{K}^{i,{\bf k}}\otxb \varepsilon_{S^2}\right]^{(u_f,v_f)}_{(u_0,v_f)}  + R(v_f) ,
\end{align}
where all quantities on the right-hand side are evaluated at $v_f$, which is proven as in \cite{ghconslaw}, Section~4.1.2. 

We first provide a short overview of the proof in Section \ref{sec:logic} before carrying out each step in detail.

\subsubsection{The Logic of the Proof of Theorem \ref{prop:firstround}}  \label{sec:logic}
We first prove (step 1) that for a constant $C_1$ depending only on $M$
\begin{align} \label{step1m}
{\mathbb{E}}^{0,0}[\Gamma, \Sf^\prime] (u_{f}) +  \Big\| \frac{f}{r} \Big\|^2_{u_f,\infty} &\leq {\mathbb{E}}_{data}^{0,0}[\Gamma](u_{f}) + C_1 \cdot {\mathbb{D}}_{data}^1(u_f,v_0) \, .
\end{align}
This is essentially the estimate of \cite{ghconslaw}. Note that this implies, as a corollary, positivity of the initial data energy
\begin{align}
 \overline{\mathbb{E}}^{0,0}_{data} [\Gamma] (u_f):= {\mathbb{E}}_{data}^{0,0}[\Gamma](u_{f})+ C_1 \cdot {\mathbb{D}}_{data}^1(u_f,v_0)
\end{align}
appearing on the right, something that is not manifest from the form of the energy. 

Next we prove (step 2) that for a constant $C_2$ depending only on $M$
\begin{align} \label{step1mb}
{\mathbb{E}}^{0,0}[\mathcal{R}, \Sf^\prime] (u_{f}) + \Big\| \left[r\slashed{\nabla}\right] \left(\frac{f}{r} \right)\Big\|^2_{u_f,\infty} 
\leq {\mathbb{E}}_{data}^{0,0}[\mathcal{R}](u_{f}) + C_2 \left( {\mathbb{D}}_{data}^1(u_f,v_0) + \overline{\mathbb{E}}^{0,0}_{data}[\Gamma] (u_f)\right) \, .
\end{align}
Note that this implies positivity of the initial data energy
\begin{align}
 \overline{\mathbb{E}}^{0,0}_{data} (u_f):= {\mathbb{E}}_{data}^{0,0}[\mathcal{R}](u_{f}) + C_2 \left( {\mathbb{D}}_{data}^1(u_f,v_0) + \overline{\mathbb{E}}^{0,0}_{data} [\Gamma] (u_f)\right) \, .
\end{align}

In step 3, we use elliptic  and transport estimates with the fluxes controlled in Steps 1 and 2 to deduce
\begin{align}
 \int_{v_0}^\infty dv r^2 \left(\| r \slashed{\nabla} \Omega \xlin\|^2_{u_f,v} +  \|r \slashed{\nabla} \elin\|^2_{{u_{f},v}}+\|r \slashed{\nabla} \eblin\|^2_{{u_{f},v}}\right) &\lesssim  \overline{\mathbb{E}}^{0,0}_{data} (u_f) \, ,  \\
\sup_ v \|\olinb\|^2_{u_{f},v} + \sup_ v \|r \otxb\|^2_{{u_{f},v}} +\|r\Omega\xblin\|^2_{u_f,v} &\lesssim  \overline{\mathbb{E}}^{0,0}_{data} (u_f) \, .
\end{align}

In step 4, we show that there exists a constant $C_3$ depending only on $M$ such that 
\begin{align} \label{step2m}
{\mathbb{E}}^{0,1}[\mathcal{R}, \Sf^\prime] (u_{f})  +  \Big\| \left[r\slashed{\nabla}\right]^2 \left(\frac{f}{r} \right)\Big\|^2_{u_f,\infty} 
\leq {\mathbb{E}}_{data}^{0,1}[\mathcal{R}](u_{f}) + C_3 \left( {\mathbb{D}}_{data}^1(u_f,v_0) + \overline{\mathbb{E}}^{0,0}_{data} (u_f) \right) \, .
\end{align}
Note that this implies positivity of the initial data energy
\begin{align}
 \overline{\mathbb{E}}^{0,1}_{data} (u_f):=  {\mathbb{E}}_{data}^{0,1}[\mathcal{R}](u_{f}) + C_3 \left( {\mathbb{D}}_{data}^1(u_f,v_0) + \overline{\mathbb{E}}^{0,0}_{data} (u_f) \right) \, .
\end{align}

In step 5, we show there exists a constant $C_4$ depending only on $M$ such that 
\begin{align} \label{step5m}
{\mathbb{E}}^{1,0}[\Gamma, \Sf^\prime] (u_{f})&\leq  {\mathbb{E}}_{data}^{1,0}[\Gamma](u_{f}) + C_4 \left(\mathbb{D}_{data}^1(u_f,v_0) +  \overline{\mathbb{E}}^{0,1}_{data} (u_f)\right) \, .
\end{align}
Note that this implies positivity of the initial data energy
\begin{align}
 \overline{\mathbb{E}}^{1,0}_{data} [\Gamma] (u_f):={\mathbb{E}}_{data}^{{1,0}}[\Gamma](u_{f}) + C_4 \left(\mathbb{D}_{data}^1(u_f,v_0) +  \overline{\mathbb{E}}^{0,1}_{data} (u_f)\right) \, .
\end{align}
In step 6, we show there exists a constant $C_5$ depending only on $M$ such that 
\begin{align} \label{step6m}
{\mathbb{E}}^{1,0}[\mathcal{R}, \Sf^\prime] (u_{f})&\lesssim  {\mathbb{E}}_{data}^{1,0}[\mathcal{R}](u_{f}) + C_5 \left(\mathbb{D}_{data}^1(u_f,v_0) +  \overline{\mathbb{E}}^{1,0}_{data}[\Gamma] (u_f)\right) \, .
\end{align}
Note that this implies positivity of the initial data energy
\begin{align}
 \overline{\mathbb{E}}^{1,0}_{data} (u_f):={\mathbb{E}}_{data}^{1,0}[\mathcal{R}](u_{f}) + C_5 \left(\mathbb{D}_{data}^1(u_f,v_0) +  \overline{\mathbb{E}}^{1,0}_{data} [\Gamma] (u_f)\right) \, .
\end{align}
In step 7, we use transport and elliptic estimates to prove estimates on 
\begin{align}
 \sup_ v \|[r\slashed{\nabla}]\olinb , \Omega \slashed{\nabla}_3 \olinb \|^2_{{u_{f},v}} + \int_{v_0}^\infty dv \frac{\Omega^2}{r^2} \left( \|[r^2\slashed{\Delta}]\olinb \|^2_{{u_{f},v}} +  \|[r\slashed{\nabla}] \bblin \|^2_{{u_{f},v}} \right) \lesssim  \overline{\mathbb{E}}^{1,0}_{data} (u_f)+\mathbb{D}^2_{data}(u_f,v_0) \, .
\end{align}
In step 8, we show there exists a constant $C_6$ depending only on $M$ such that 
\begin{align} \label{step8m}
{\mathbb{E}}^{2,0}[\Gamma, \Sf^\prime] (u_{f})&\leq  {\mathbb{E}}_{data}^{2,0}[\Gamma](u_{f}) + C_6 \left(\mathbb{D}_{data}^2(u_f,v_0) +  \overline{\mathbb{E}}^{1,0}_{data} (u_f)\right) \, .
\end{align}
Note that this implies positivity of the initial data energy
\begin{align}
 \overline{\mathbb{E}}^{2,0}_{data} (u_f):={\mathbb{E}}_{data}^{2,0}[\Gamma](u_{f}) + C_6 \left(\mathbb{D}_{data}^2(u_f,v_0) +  \overline{\mathbb{E}}^{1,0}_{data} (u_f)\right) \, .
\end{align}
Adding (\ref{step8m}), (\ref{step6m}), (\ref{step5m}), (\ref{step2m}), (\ref{step1mb}) and (\ref{step1m}) and setting $C_{max} = \sum_{i=1}^6 C_i$ yields the desired (\ref{secondorderest}).

\subsubsection{Step 1: Proof of the Uncommuted $\Gamma$-estimate}

We first prove (\ref{step1m}). The $\Gamma$-conservation law in the $\Si'$ gauge gives:
\begin{align} \label{lowlcl}
F^{0,0}_{v_f} \left[\Gamma ,\Si' \right](u_0,u_f)+F^{0,0}_{u_f} \left[\Gamma ,\Si' \right](v_0,v_f)=F^{0,0}_{v_0} \left[\Gamma ,\Si' \right](u_0,u_f)+F^{0,0}_{u_0} \left[\Gamma ,\Si' \right](v_0,v_f),
\end{align}
where, using Proposition~\ref{prop:gaugechange}, we have
\begin{align*}
F^{0,0}_{u_f} [\Gamma , \Si'] \left(v_0,v_f\right)=& F^{0,0}_{u_f} [\Gamma , \Sf'] \left(v_0,v_f\right)
+ \left[\int_{S^2} \mathcal{G}_{\Gamma}^{0,0} (u,v,\theta,\phi) \varepsilon_{S^2} \right]_{(u_f,v_0)}^{(u_f,v_f)}
\end{align*}
with
\begin{align*}
F^{0,0}_{u_f} [\Gamma ,\Sf' ](v_0,v_f)= \int_{v_0}^{v_f} \int_{S^2}  \Big[&|\Omega\xlin_{\Sf'}|^2+2 |\Omega\eblin_{\Sf'}|^2\Big]r^2dv\varepsilon_{S^2}\, ,
\end{align*}
and invoking (\ref{niflux}),
\begin{align*}
F^{0,0}_{v_f} [\Gamma , \Si'] \left(u_0,u_f\right)=\int_{u_0}^{u_f} \int_{S^2}  |\Omega\xblin_{\Si'}|^2 r^2 du\varepsilon_{S^2} + \Big[ \frac{1}{2} \int_{S^2} r^3 \otx_{\Si'}   \otxb_{\Si'} \varepsilon_{S^2}\Big]_{(u_0,v_f)}^{(u_f,v_f)}+\mathrm{R}(v_f) \,  ,
\end{align*}
where we recall $\mathrm{R}(v_f)\rightarrow 0$ as $v_f\rightarrow \infty$. Using the estimates on $\mathcal{G}_{\Gamma}^{0,0}$ from (\ref{GammaGnone}) and (\ref{eqn:dataG00}), the identity (\ref{lowlcl}) turns into the estimate (note the cancellation of the $\otx_{\Si^\prime} \otxb_{\Si^\prime}$-term at $(u_f,v_f)$)
\begin{align*}
&F^{0,0}_{u_f} [\Gamma , \Sf'] \left(v_0,v\right)+\int_{u_0}^{u_f} \int_{S^2}  |\Omega\xblin_{\Si'}|^2 r^2 du\varepsilon_{S^2}+\int_{S^2}4M\Big| \frac{\Omega^2 f}{r}\Big|^2\varepsilon_{S^2}\Big|_{(u_f,v_f)} - \tilde{C} \cdot \mathbb{D}^1_{data}(u_f,v_0) \\
&\leq F^{0,0}_{v_0} \left[\Gamma ,\Si' \right](u_0,u_f)+F^{0,0}_{u_0} \left[\Gamma ,\Si' \right](v_0,v_f)+\frac{1}{2} \int_{S^2} r^3 \otx_{\Si'}   \otxb_{\Si'} \varepsilon_{S^2}\Big|_{(u_0,v_f)}+\mathrm{R}(v_f) \, .
\end{align*}
Taking the limit $v_f \rightarrow \infty$ and recalling definition (\ref{Ei0datadef}) we deduce 
\begin{align} \label{basicuc}
\int_{v_0}^{\infty} \int_{S^2}  \Big[  r^2 |  \Omega\xlin_{\Sf^\prime}|^2+r^2 \Omega^2 | \eblin_{\Sf^\prime}|^2  \Big]\Big|_{u=u_{f}}dv\varepsilon_{S^2} + 4M \Big\|  \frac{f}{r} \Big\|^2_{u_f,\infty} 
\leq  {\mathbb{E}}_{data}^{0,0}[\Gamma](u_{f}) + \tilde{C} {\mathbb{D}}_{data}^1(u_{f},v_0) .
\end{align}

\subsubsection{Step 2: Proof of the Uncommuted $\mathcal{R}$-estimate}
To prove (\ref{step1mb}), we apply the non-commuted second conservation law for $\Si'$:
\begin{align*}
F^{0,\bf{0}}_{v_f} [\mathcal{R},\Si' ](u_0,u_f)+F^{0,\bf{0}}_{u_f} [\mathcal{R},\Si' ](v_0,v_f)=F^{0,\bf{0}}_{v_0} [\mathcal{R},\Si' ](u_0,u_f)+F^{0,\bf{0}}_{u_0} [\mathcal{R},\Si' ](v_0,v_f) \, , 
\end{align*}
where
\begin{align*}
F^{0, {\bf 0}}_{u_f} [\mathcal{R}, \Si'] \left(v_0,v_f\right)&= F^{0, {\bf 0}}_{u_f} [\mathcal{R}, \Sf'] \left(v_0,v_f\right)+\Big[\int_{S^2}\mathcal{G}_{\mathcal{R}}^{0,0}(u,v,\theta,\phi)\varepsilon_{S^2}\Big]_{(u_f,v_0)}^{(u_f,v_f)},\nonumber\\
F^{0, {\bf 0}}_{u_f} [\mathcal{R}, \Sf'] \left(v_0,v_f\right)&=\int_{v_0}^{v_f} \int_{S^2}  \Big[ \frac{1}{2} r^4  |\Omega\blin_{\Sf'}|^2+\frac{1}{2} \Omega^2 r^4 \left(|\rlin_{\Sf'}|^2 + |\slin_{\Sf'}|^2\right) -3Mr  |\Omega\eblin_{\Sf'}|^2 \Big]dv\varepsilon_{S^2} \, , \nonumber\\
F^{0, {\bf 0}}_{v_f} \left[\mathcal{R},\Si' \right] \left(u_0,u_f\right)&= \int_{u_0}^{u_f} \int_{S^2}  \frac{1}{2} r^4 | \Omega\bblin_{\Si' }|^2du\varepsilon_{S^2}+\mathrm{R}(v_f).
\end{align*}
Therefore, combining and using equation~\eqref{RG} from proposition~\ref{prop:Gatinfinity} gives
\begin{align} \label{combok}
&\int_{v_0}^{v_f} \int_{S^2} \Big[ \frac{r^4}{2}   |\Omega\blin_{\Sf'}|^2+\frac{\Omega^2 r^4}{2}  \left(|\rlin_{\Sf'}|^2 + |\slin_{\Sf'}|^2\right) \Big]dv\varepsilon_{S^2}+\int_{u_0}^{u_f}\int_{S^2}  \frac{r^4}{2}  | \Omega\bblin_{\Si' }|^2du\varepsilon_{S^2}\\
&+\frac{3M}{2} \int_{S^2}\Big| r\ns \frac{f \Omega^2}{r} \Big|^2 \varepsilon_{S^2}\Big|_{(u_f,v_f)}=\int_{v_0}^{v_f} \int_{S^2}   3Mr  |\Omega\eblin_{\Sf'}|^2 dv\varepsilon_{S^2}+3M \int_{S^2}\Big| \frac{f \Omega^2}{r} \Big|^2 \varepsilon_{S^2}\Big|_{(u_f,v_f)} \nonumber \\
&\quad +\int_{S^2} \mathcal{G}_{\mathcal{R}}^{0,0}(u,v,\theta,\phi)\varepsilon_{S^2}\Big|_{(u_f,v_0)} +F^{0,\bf{0}}_{v_0} [\mathcal{R},\Si' ](u_0,u_f)+F^{0,\bf{0}}_{u_0} [\mathcal{R},\Si' ](v_0,v_f)+\mathrm{R}(v_f). \nonumber
\end{align}
The first  two terms on the right-hand side of (\ref{combok}) is also controlled by (\ref{basicuc}). 
The last two terms on the right hand side of (\ref{combok}) are equal to ${\mathbb{E}}_{data}^{0,0}[\mathcal{R}](u_{f})$. The remaining $\mathcal{G}_{\mathcal{R}}^{0,0}$ is estimated by $\mathbb{D}^1_{data}(u_f,v_0)$ via~\eqref{eqn:dataGR00}. It follows that there exist constants $\tilde{C}_1, \tilde{C}_2$ depending only on $M$ such that
\begin{align}
\int_{v_0}^{\infty}\int_{S^2} \Big[ \frac{r^4}{2}  |\Omega\blin_{\Sf'}|^2+\frac{\Omega^2 r^4}{2}  \left(|\rlin_{\Sf'}|^2 + |\slin_{\Sf'}|^2\right) \Big]dv\varepsilon_{S^2}+\int_{u_0}^{u_f}\int_{S^2} \frac{r^4}{2} | \Omega\bblin_{\Si' }|^2du\varepsilon_{S^2} \nonumber \\+ \int_{S^2} \frac{3M}{2} \Big| r\ns \frac{f \Omega^2}{r} \Big|^2\varepsilon_{S^2}\Big|_{(u_f,\infty)} 
\leq  {\mathbb{E}}_{data}^{0,0}[\mathcal{R}](u_{f}) + \tilde{C}_1  \mathbb{D}^1_{data}(u_f, v_0) + \tilde{C}_2 \left(  {\mathbb{E}}_{data}^{0,0}[\Gamma](u_{f}) + \tilde{C} {\mathbb{D}}_{data}^1(u_{f},v_0) \right),  \nonumber
\end{align}
which is the required estimate (\ref{step1mb}).

\subsubsection{Step 3: Refined Bounds on $C_{u_{f}}$ via Elliptic and Transport Equations I}
We now use the control obtained so far to deduce further bounds for quantities on the cone $C_{u_{f}}$. We begin with a flux bound:

\begin{proposition} \label{prop:angularetab}
We have the following estimate along $C_{u_f}$:
\begin{align}
\int_{v_0}^{\infty}  \int_{S^2} \Big[ r^2 \Omega^2 | [r\slashed{\nabla}]\xlin_{\Sf^\prime}|^2 + r^2 \Omega^2 | [r\slashed{\nabla}]\eblin_{\Sf^\prime}|^2 + r^2 \Omega^2 | [r\slashed{\nabla}]\elin_{\Sf^\prime}|^2  \Big]\Big|_{u_f}dv\varepsilon_{S^2} \lesssim  \overline{\mathbb{E}}^{0,0}_{data} (u_f) \, .
\end{align}
\end{proposition}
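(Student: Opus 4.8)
The plan is to derive this estimate entirely from the fluxes already controlled in Steps~1 and~2, combined with standard elliptic estimates on the spheres $S^2_{u_f,v}$ and the linearised constraint equations evaluated for $\Sf^\prime$ along $C_{u_f}$; no further conservation law is needed. Throughout one uses the gauge properties $\otx_{\Sf^\prime}|_{C_{u_f}}=0$ and $(\elin+\eblin)_{\Sf^\prime}|_{C_{u_f}}=0$ from Proposition~\ref{prop:propnewgauge}, together with the standard elliptic estimates on $S^2_{u_f,v}$: namely $\|[r\ns]\Theta\|_{u_f,v}^2+\|\Theta\|_{u_f,v}^2\lesssim\|r\divs\Theta\|_{u_f,v}^2$ for a symmetric-traceless $2$-tensor $\Theta$, and $\|[r\ns]\xi\|_{u_f,v}^2+\|\xi\|_{u_f,v}^2\lesssim\|r\divs\xi\|_{u_f,v}^2+\|r\curls\xi\|_{u_f,v}^2$ for an $S^2_{u_f,v}$ one-form $\xi$ supported on $\ell\geq 2$.

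First I would treat $\xlin_{\Sf^\prime}$. Evaluating the Codazzi equation~\eqref{ellipchi} for $\Sf^\prime$ along $C_{u_f}$, inserting $\mathrm{tr}\chi=2\Omega/r$, and discarding the $\ns\otx_{\Sf^\prime}$ term (which vanishes since $\otx_{\Sf^\prime}=0$ on $C_{u_f}$) gives $r\divs\xlin_{\Sf^\prime}=-\Omega\,\eblin_{\Sf^\prime}-r\,\blin_{\Sf^\prime}$, so the elliptic estimate yields $r^2\Omega^2\|[r\ns]\xlin_{\Sf^\prime}\|_{u_f,v}^2\lesssim r^2\Omega^2\|\eblin_{\Sf^\prime}\|_{u_f,v}^2+r^4\|\Omega\blin_{\Sf^\prime}\|_{u_f,v}^2$. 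Integrating over $v\in[v_0,\infty)$, the right-hand side is bounded by $\mathbb{E}^{0,0}[\Gamma,\Sf^\prime](u_f)+\mathbb{E}^{0,0}[\mathcal{R},\Sf^\prime](u_f)\lesssim\overline{\mathbb{E}}^{0,0}_{data}(u_f)$ by \eqref{step1m}--\eqref{step1mb}.

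Next I would treat $\eblin_{\Sf^\prime}$; the estimate for $\elin_{\Sf^\prime}$ is then identical, since $\elin_{\Sf^\prime}=-\eblin_{\Sf^\prime}$ along $C_{u_f}$. For the curl, \eqref{curleta} gives $\curls\eblin_{\Sf^\prime}=-\slin_{\Sf^\prime}$, so $r^2\Omega^2\|r\curls\eblin_{\Sf^\prime}\|_{u_f,v}^2=r^4\Omega^2\|\slin_{\Sf^\prime}\|_{u_f,v}^2$, which integrates to a term controlled by $\mathbb{E}^{0,0}[\mathcal{R},\Sf^\prime](u_f)$. For the divergence, since $\divs\elin_{\Sf^\prime}=-\divs\eblin_{\Sf^\prime}$ along $C_{u_f}$, Lemma~\ref{lem:fromoldpaper} gives $\|r^3\divs\eblin_{\Sf^\prime}-r^3\rlin_{\Sf^\prime}\|_{u_f,v}^2\lesssim\mathbb{D}^1_{data}(u_f,v_0)$ uniformly in $v$, hence $\|r\divs\eblin_{\Sf^\prime}\|_{u_f,v}^2\lesssim r^2\|\rlin_{\Sf^\prime}\|_{u_f,v}^2+r^{-4}\mathbb{D}^1_{data}(u_f,v_0)$. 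Multiplying by $r^2\Omega^2$ and integrating in $v$, the first term is absorbed into $\mathbb{E}^{0,0}[\mathcal{R},\Sf^\prime](u_f)$ and the second into $\mathbb{D}^1_{data}(u_f,v_0)\int_{v_0}^\infty r^{-2}\Omega^2\,dv=\tfrac12\,\mathbb{D}^1_{data}(u_f,v_0)\int_{r(u_f,v_0)}^\infty r^{-2}\,dr$, which is finite. The zeroth-order terms $\|\eblin_{\Sf^\prime}\|^2$ and $\|\elin_{\Sf^\prime}\|^2$ are contained in $\mathbb{E}^{0,0}[\Gamma,\Sf^\prime](u_f)$ directly. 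Summing all contributions and invoking \eqref{step1m}--\eqref{step1mb} gives the claim.

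The only point requiring genuine care — and where I expect the bookkeeping to be most delicate — is matching the $r$- and $\Omega^2$-weights: one must verify that inserting the background values $\mathrm{tr}\chi=2\Omega/r$ and $K=1/r^2$ into each elliptic-estimate-plus-constraint bound produces precisely the weighted norms occurring in the Step~1 and Step~2 energies, and that the sole non-flux contribution, the uniform-in-$v$ mass-aspect error $r^{-4}\mathbb{D}^1_{data}(u_f,v_0)$, is integrable against $r^2\Omega^2\,dv$. Both hold, as indicated above.
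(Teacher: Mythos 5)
Your proof is correct and follows essentially the same route as the paper: Codazzi combined with the elliptic identity to bound $[r\slashed{\nabla}]\xlin_{\Sf'}$ from the $\Omega\blin_{\Sf'}$ and $\Omega\eblin_{\Sf'}$ fluxes, and for $\elin_{\Sf'},\eblin_{\Sf'}$ the decomposition into $\divs$ (via the conserved mass aspect of Lemma~\ref{lem:fromoldpaper} plus the $\rlin_{\Sf'}$-flux) and $\curls$ (via $\slin_{\Sf'}$), exploiting $\elin_{\Sf'}=-\eblin_{\Sf'}$ on $C_{u_f}$ throughout. The only cosmetic differences are that the paper works with $\Omega\xlin$ and $\elin$ where you use $\xlin$ and $\eblin$, and a harmless stray factor of $\tfrac12$ in your evaluation of $\int_{v_0}^\infty r^{-2}\Omega^2\,dv$ (since $\partial_v r=\Omega^2$ this integral is simply $r(u_f,v_0)^{-1}$); neither affects the argument.
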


\begin{proof}
We begin with the bound on $\xlin$. From Codazzi, we have 
\begin{align*}
|r \divs(\Omega\xlin_{\Sf'})|^2&=\Big|\Omega^2\eblin_{\Sf'}+r\Omega\blin_{\Sf'}\Big|^2\leq 2|\Omega^2\eblin_{\Sf'}|^2+2|r\Omega\blin_{\Sf'}|^2 \leq 2 \Big(|\Omega\eblin_{\Sf'}|^2+r^2|\Omega\blin_{\Sf'}|^2\Big) \, , 
\end{align*}
hence
\begin{align*}
\int_{v_0}^{\infty} \int_{S^2} |r\divs(\Omega\xlin_{\Sf'})|^2r^2dv\varepsilon_{S^2}\leq 2\int_{v_0}^{\infty}\int_{S^2} \Big(|\Omega\eblin_{\Sf'}|^2+r^2|\Omega\blin_{\Sf'}|^2\Big)r^2dv\varepsilon_{S^2} \, ,
\end{align*}
the right-hand side of which we already control from~\eqref{step1m} and~\eqref{step1mb}. Next recall the elliptic identity
\begin{align*}
\int_{S^2}|r\divs(\Omega\xlin_{\Sf'})|^2\varepsilon_{S^2}=\frac{1}{2}\int_{S^2}\Big[|r \ns(\Omega
\xlin)_{\Sf'}|^2+|2\Omega\xlin_{\Sf'}|^2\Big]\varepsilon_{S^2},
\end{align*}
which immediately yields the required estimate
\begin{align*}
\int_{v_0}^{\infty} \int_{S^2} ||r\ns(\Omega\xlin_{\Sf'})|^2r^2dv\varepsilon_{S^2}\lesssim {\mathbb{E}}_{data}^{0,0}[\mathcal{R}](u_{f})  + C \left(  {\mathbb{E}}_{data}^{0,0}[\Gamma](u_{f}) + \tilde{C} {\mathbb{D}}_{data}^1(u_{f},v_0) \right).
\end{align*}
We turn to the fluxes of $\elin, \eblin$. In view of $\eblin_{\Sf^\prime} =- \elin_{\Sf^\prime}$ along $C_{u_f}$ it suffices to show this for $[r\slashed{\nabla}] \elin_{\Sf^\prime}$ and in view of elliptic estimates to show this separately for 
$r\divs  \elin_{\Sf^\prime}$ and $r\curls \elin_{\Sf^\prime}$. The latter follows from (\ref{curleta}) and control on the $\sigma$-flux in (\ref{step1mb}). For $r\divs  \elin_{\Sf^\prime}$, Lemma~\ref{lem:fromoldpaper} gives
\begin{align*}
\int_{v_0}^{\infty} \int_{S^2} \frac{\Omega^2}{r^2}|r^3(\divs\elin+\rlin)_{\Sf'}|^2dv\varepsilon_{S^2}\leq {\mathbb{D}}^1_{data}(u_f,v_0)\int_{v_0}^{\infty}\frac{\Omega^2}{r^2}dv=\frac{1}{2M}{\mathbb{D}}^1_{data}(u_f,v_0).
\end{align*}
Since we have control on the flux of $|\Omega r^2\rlin|^2$ through (\ref{step1mb}), we conclude
\begin{align*}
\int_{v_0}^{\infty}\int_{S^2} r^4|\Omega\divs\elin_{\Sf'}|^2dv\varepsilon_{S^2}\leq \int_{v_0}^{\infty}\int_{S^2}\Big( \frac{\Omega^2}{r^2}|r^3(\divs\elin+\rlin)_{\Sf'}|^2+ \Omega^2r^4|\rlin_{\Sf'}|^2\Big)dv\varepsilon_{S^2}\lesssim \overline{\mathbb{E}}^{0,0}_{data} (u_f) 
\end{align*}
as desired. 
\end{proof}

We next deduce a few $L^\infty_v$ bounds for spheres along the cone $C_{u_f}$, all of which rely on the following basic transport lemma:

\begin{lemma} \label{lem:tplemma}
Let $\xi, \Xi$ be $S^2_{u,v}$ tensors satisfying $\Omega \slashed{\nabla}_4 \xi = \Omega^2 \Xi$ along the cone $C_{u}$ for some $u_0 \leq u \leq \infty$. Then for any $v \geq v_0$:
\[
\| \xi\|_{u,v} \leq \|\xi\|_{u,v_0} + \int_{v_0}^v \Omega^2 \| \Xi \|_{u,\bar{v}} d\bar{v}  \leq  \|\xi\|_{u,v_0} + \frac{1}{\sqrt{r(u,v_0)}} \sqrt{\int_{v_0}^v \Omega^2 r^2 \| \Xi \|^2_{u,\bar{v}} d\bar{v} }.
\]
\end{lemma}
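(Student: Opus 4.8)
The plan is a standard $L^2$ transport estimate on the spheres $S^2_{u,v}$, run along the cone $C_u$. I would start from $\|\xi\|_{u,v}^2 = \langle \xi,\xi\rangle_{u,v} = \int_{S^2} |\xi|^2\,\varepsilon_{S^2}$. Since $\varepsilon_{S^2}$ is the volume form of the fixed unit round sphere, it does not depend on $v$, so one may differentiate under the integral sign. Because $|\xi|^2$ is a scalar function, $\partial_v |\xi|^2 = \Omega\,e_4(|\xi|^2) = 2\,\langle \Omega\slashed{\nabla}_4\xi,\xi\rangle$, where I use that the projected derivative $\slashed{\nabla}_4$ is compatible with $\slashed{g}$ (being the restriction to $TS^2_{u,v}$ of the projection of the spacetime Levi-Civita connection) and that $\partial_v = \Omega\,e_4$ on scalars. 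Substituting the hypothesis $\Omega\slashed{\nabla}_4\xi = \Omega^2\Xi$ and applying Cauchy--Schwarz on $S^2$ (observing that $\Omega^2(u,v)$ is constant on each sphere) yields
\[
\frac{d}{dv}\|\xi\|_{u,v}^2 \;=\; 2\Omega^2\,\langle \Xi,\xi\rangle_{u,v} \;\le\; 2\Omega^2\,\|\Xi\|_{u,v}\,\|\xi\|_{u,v}\,.
\]

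Next I would pass from $\|\xi\|^2$ to $\|\xi\|$. For $\varepsilon>0$ one has $\frac{d}{dv}\sqrt{\|\xi\|_{u,v}^2+\varepsilon} = \frac{1}{2\sqrt{\|\xi\|_{u,v}^2+\varepsilon}}\frac{d}{dv}\|\xi\|_{u,v}^2 \le \Omega^2\|\Xi\|_{u,v}$, using the previous bound together with $\|\xi\|_{u,v}/\sqrt{\|\xi\|_{u,v}^2+\varepsilon}\le 1$. Integrating in $v$ from $v_0$ and letting $\varepsilon\to 0$ gives the first asserted inequality $\|\xi\|_{u,v}\le \|\xi\|_{u,v_0} + \int_{v_0}^v \Omega^2\|\Xi\|_{u,\bar v}\,d\bar v$.

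For the second inequality I would apply Cauchy--Schwarz in the $\bar v$-integral after writing $\Omega^2 = \frac{\Omega}{r}\cdot \Omega r$:
\[
\int_{v_0}^v \Omega^2\|\Xi\|_{u,\bar v}\,d\bar v \;\le\; \Big(\int_{v_0}^v \frac{\Omega^2}{r^2}\,d\bar v\Big)^{1/2}\Big(\int_{v_0}^v \Omega^2 r^2\,\|\Xi\|_{u,\bar v}^2\,d\bar v\Big)^{1/2}\,,
\]
and then use $\partial_v r = \Omega^2$ (a consequence of the coordinate definition $dr_\star/dr=\Omega^{-2}$, $r_\star = v-u$) to evaluate $\int_{v_0}^v \frac{\Omega^2}{r^2}\,d\bar v = \int_{v_0}^v \frac{\partial_{\bar v}r}{r^2}\,d\bar v = \frac{1}{r(u,v_0)} - \frac{1}{r(u,v)} \le \frac{1}{r(u,v_0)}$.

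There is no genuine obstacle here; the only points that deserve a word of care are the identity $\partial_v|\xi|^2 = 2\langle\Omega\slashed{\nabla}_4\xi,\xi\rangle$ (metric compatibility of $\slashed{\nabla}_4$ plus $v$-independence of $\varepsilon_{S^2}$; one can alternatively verify it by a direct coordinate computation, the $r$-weight from $\partial_v\slashed{g}^{-1}$ cancelling the $\tfrac12\mathrm{tr}\chi$ term) and the elementary regularization used to deduce the $\|\xi\|$-estimate from the $\|\xi\|^2$-estimate. The argument is insensitive to the rank of $\xi$ and applies verbatim for $S^2_{u,v}$ tensors of any type; for $u=\infty$ (where $\Omega$ vanishes) the statement is trivial.
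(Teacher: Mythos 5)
Your argument is correct and follows the same route as the paper: the paper simply cites Lemma~3.1 of~\cite{ChristBH} for the first inequality (which is exactly the $L^2$ transport estimate you spell out, via $\partial_v|\xi|^2 = 2\langle\Omega\slashed\nabla_4\xi,\xi\rangle$, Cauchy--Schwarz on the sphere, and the regularization $\sqrt{\|\xi\|^2+\varepsilon}$) and then applies Cauchy--Schwarz in $v$ with $\Omega^2=\partial_v r$ for the second, which is precisely your final step. Your remarks about metric compatibility of $\slashed\nabla_4$ (equivalently, the cancellation of the $\tfrac{p}{2}\mathrm{tr}\chi$ correction against $\partial_v\slashed g^{-1}$) and the degenerate case $u=\infty$ are accurate.
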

\begin{proof}
See Lemma~3.1 of~\cite{ChristBH} for the first inequality and apply Cauchy-Schwarz using $\Omega^2=\partial_v r$ for the second.
\end{proof}

\begin{proposition} \label{prop:ombpure}
We have $\olinb_{\Si^\prime}=\olinb_{\Sf^\prime}$ on  the sphere $S^2_{u_{f},\infty}$. Moreover we have
\begin{align} \label{ombprop}
\sup_{v \in [v_0,\infty)} \|\olinb_{\Sf^\prime}\|^2_{u_f,v} 
&\lesssim  \overline{\mathbb{E}}^{0,0}_{data} (u_f)+ {\mathbb{D}}_{data}^1(u_f,v_0) \, .
\end{align}
\end{proposition}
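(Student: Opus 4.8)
The plan is to establish the two assertions separately: the first by a short computation with the pure gauge solution $\mathscr{G}_f$, and the second by integrating a transport equation for $\olinb_{\Sf^\prime}$ along $C_{u_f}$ and feeding in the curvature flux bound already obtained in Step~2.

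For the first assertion I would use that, as already recorded in the proof of Proposition~\ref{prop:Gatinfinity}, the pure gauge solution $\mathscr{G}_f$ generated by $f$ satisfies $\olinb_{\mathscr{G}_f} = \frac{2M}{r^3}\Omega^2 f$; this follows from $(\Olino/\Omega)_{\mathscr{G}_f} = \frac{1}{2\Omega^2}\partial_v(\Omega^2 f)$ of Lemma~\ref{lem:exactsol} together with $\partial_v\Omega^2 = \frac{2M\Omega^2}{r^2}$, $\partial_u\Omega^2 = -\frac{2M\Omega^2}{r^2}$, $\partial_u f = 0$ and $\olinb = \partial_u(\Olino/\Omega)$. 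Hence $\olinb_{\Sf^\prime} = \olinb_{\Si^\prime} - \frac{2M\Omega^2}{r^2}\cdot\frac{f}{r}$, and since along $C_{u_f}$ we have $\Omega^2\to 1$ while $f/r$ has a finite limit on null infinity by Definition~\ref{def:extendskri}, the correction term vanishes as $v\to\infty$, giving $\olinb_{\Si^\prime} = \olinb_{\Sf^\prime}$ on $S^2_{u_f,\infty}$.

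For the flux bound I would first note that by Proposition~\ref{prop:propnewgauge} one has $\Olin_{\Sf^\prime}|_{C_{u_f}}=0$, so the scalar equation~(\ref{oml1}) reduces on $C_{u_f}$ to $\Omega\slashed{\nabla}_4\olinb_{\Sf^\prime} = \Omega^2(-\rlin_{\Sf^\prime})$ (equivalently $\partial_v\olinb_{\Sf^\prime} = -\Omega^2\rlin_{\Sf^\prime}$). Applying Lemma~\ref{lem:tplemma} with $\xi = \olinb_{\Sf^\prime}$ and $\Xi = -\rlin_{\Sf^\prime}$ then gives, for every $v\geq v_0$,
\[
\|\olinb_{\Sf^\prime}\|_{u_f,v} \;\leq\; \|\olinb_{\Sf^\prime}\|_{u_f,v_0} + \frac{1}{\sqrt{r(u_f,v_0)}}\Big(\int_{v_0}^{\infty}\Omega^2 r^2\,\|\rlin_{\Sf^\prime}\|^2_{u_f,\bar v}\,d\bar v\Big)^{1/2}.
\]
Since $r\geq 2M$ the integral is at most $(2M)^{-2}\int_{v_0}^\infty \Omega^2 r^4\|\rlin_{\Sf^\prime}\|^2_{u_f,\bar v}\,d\bar v \leq (2M)^{-2}\,{\mathbb{E}}^{0,0}[\mathcal{R},\Sf^\prime](u_f)$, which is $\lesssim \overline{\mathbb{E}}^{0,0}_{data}(u_f)$ by~(\ref{step1mb}), and $r(u_f,v_0)\geq 2M$. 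For the remaining data term, using $f(u_f,v_0)=\frac{r}{2\Omega^2}f_0$ from~(\ref{choicef}) and the relation above, $\olinb_{\Sf^\prime}(u_f,v_0) = \olinb_{\Si^\prime}(u_f,v_0) - \frac{M}{r^2}f_0$; I would bound $\|\frac{M}{r^2}f_0\|^2_{u_f,v_0}\lesssim \|\Delta_{S^2}f_0\|^2_{u_f,v_0}\lesssim \mathbb{D}^1_{data}(u_f,v_0)$ by elliptic estimates on $S^2$ (the solution is supported on $\ell\geq 2$) and $\|\olinb_{\Si^\prime}\|^2_{u_f,v_0}\lesssim \mathbb{D}^1_{data}(u_f,v_0)$ from the definition of $\mathbb{D}^1_{data}$. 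Squaring and taking the supremum over $v$ then yields~(\ref{ombprop}).

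The delicate point is the bookkeeping near the horizon: $\mathbb{D}^1_{data}(u_f,v_0)$ carries an overall factor $\Omega^2(u_f,v_0)$, so closing the data estimate with a constant depending only on $M$ requires exploiting that $\olinb_{\Si^\prime}$ and $f_0$ are themselves of size $\Omega^2(u_f,v_0)$ near $\mathcal{H}^+$ (because $\olinb/\Omega^2$ and $\Omega^{-2}f_0$ extend regularly there, the latter by the partial normalisation~(\ref{f0defi})). The $r$-weight mismatch between the $\Omega^2 r^2$ appearing through Lemma~\ref{lem:tplemma} and the $\Omega^2 r^4$ in ${\mathbb{E}}^{0,0}[\mathcal{R},\Sf^\prime]$ is harmless since $r\geq 2M$.
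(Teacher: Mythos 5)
Your proof follows exactly the paper's route: derive $\olinb_{\mathscr{G}_f}=\frac{2M\Omega^2 f}{r^3}$ from Lemma~\ref{lem:exactsol} and $\partial_u f=0$ to conclude the asymptotic identity, then apply Lemma~\ref{lem:tplemma} to equation~(\ref{oml1}) along $C_{u_f}$ (where $\Olin_{\Sf'}=0$ kills the $\rho\Olin$-term), feed in the curvature flux from~(\ref{step1mb}), and control the data term on $S^2_{u_f,v_0}$ via $\olinb_{\Sf'}=\olinb_{\Si'}-\frac{M}{r^2}f_0$. This is correct and matches the paper in all essentials, including the Cauchy--Schwarz step converting the $\int\Omega^2\|\rlin\|$-integral into the weighted $\mathbb{E}^{0,0}[\mathcal{R},\Sf']$-flux. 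One small caveat on your closing remark: the observation that $\olinb_{\Si'}$ and $f_0$ are $O(\Omega^2)$ near $\mathcal{H}^+$ shows that both $\|\olinb_{\Si'}\|^2_{u_f,v_0}$ and the $\Omega^2\|\olinb\|^2$-term inside $\mathbb{D}^1_{data}$ vanish as $u_f\to\infty$, but by itself it does not yield a uniform-in-$u_f$ bound on the ratio; however, the paper's own proof invokes exactly the same bound $\|\olinb_{\Si'}\|_{u_f,v_0}\lesssim\sqrt{\mathbb{D}^1_{data}}$ without further comment, so this is not a discrepancy with the paper.
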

\begin{proof}
The claim $\olinb_{\Si^\prime}=\olinb_{\Sf^\prime}$ on $S^2_{u_f,\infty}$ is immediate from the relation $\olinb_{\Sf^\prime} = \olinb_{\Si^\prime}- \frac{2M}{r^3} \Omega^2 f$ (see Lemma~\ref{lem:exactsol}) and the fact that $\frac{f}{r}$ is bounded in the limit as $v \rightarrow \infty$. As for the relation at $v=v_0$, we have $\olinb_{\Sf^\prime} = \olinb_{\Si^\prime}- \frac{M}{r^2}f_0$. Hence, 
\begin{align} \label{btom}
\| \olinb_{\Sf^\prime}\|_{u_f,v_0}\leq \|\olinb_{\Si^\prime}\|_{u_f,v_0}+\Big\|\frac{M}{r^2}f_0\Big\|_{u_f,v_0} \lesssim \sqrt{\mathbb{D}^1_{data}}\, .
\end{align}
Finally, one applies the basic transport lemma to (\ref{oml1}) resulting in the estimate
\begin{align*}
||\olinb_{\Sf'}||_{u_f,v}\leq ||\olinb_{\Sf'}||_{u_f,v_0}+\int_{v_0}^{v}\Big\|\Omega^2\rlin_{\Sf'}\Big\|_{u_f,v}dv \leq ||\olinb_{\Sf'}||_{u_f,v_0} + \frac{1}{(r(u_f,v_0))^\frac{3}{2}}\sqrt{\int_{v_0}^{v} r^4 \Omega^2 \Big\|\rlin_{\Sf'}\Big\|_{u_f,v}^2 dv} \, ,
\end{align*}
where we have applied Cauchy-Schwarz in the last step. Squaring and using (\ref{btom}) yields  (\ref{ombprop}).
\end{proof}

We can also obtain uniform bounds along the cone $C_{u_f}$ for $\xblin$ and $\otxb$:
\begin{proposition} \label{prop:chibarpure}
We have 
\begin{align} \label{trxbprop}
\sup_{v \in [v_0,\infty)} \|r \otxb_{\Sf^\prime}\|^2_{u_f,v} 
\lesssim   \overline{\mathbb{E}}^{0,0}_{data} (u_f)+ {\mathbb{D}}_{data}^1(u_f,v_0) \, ,
\end{align}
\begin{align} \label{trxbpropb}
\sup_{v \in [v_0,\infty)} \|r \Omega \xblin_{\Sf^\prime}\|^2_{u_f,v} 
\lesssim  \overline{\mathbb{E}}^{0,0}_{data} (u_f)+ {\mathbb{D}}_{data}^1(u_f,v_0) \, .
\end{align}
\end{proposition}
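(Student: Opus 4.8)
The plan is to establish both bounds by integrating transport equations in $v$ along $C_{u_f}$ for the conformally natural rescalings $r\otxb_{\Sf^\prime}$ and $r\Omega\xblin_{\Sf^\prime}$, converting these into $L^\infty_v$ estimates via the basic transport Lemma~\ref{lem:tplemma}, controlling the resulting $L^2_v$-sources by fluxes already in hand (Steps~1--2, i.e.~\eqref{step1m}--\eqref{step1mb}, together with Lemma~\ref{lem:fromoldpaper} and Proposition~\ref{prop:angularetab}), and bounding the data term on $S^2_{u_f,v_0}$ using the explicit pure gauge solution of Lemma~\ref{lem:exactsol} and elliptic estimates on $S^2$ exploiting the $\ell\geq 2$ support. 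Throughout one uses $\partial_v r = \Omega^2$, the identity $f(v_0)=\tfrac{r}{2\Omega^2}f_0$, and the properties \eqref{propertiesnewgauge} of $\Sf^\prime$ along $C_{u_f}$, in particular $\otx_{\Sf^\prime}=0$, $\Olin_{\Sf^\prime}=0$ and $\eblin_{\Sf^\prime}=-\elin_{\Sf^\prime}$ there.

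For \eqref{trxbprop} I would first substitute \eqref{propertiesnewgauge} into the propagation equation \eqref{dtcb} and use $\partial_v r=\Omega^2$; the $\otxb_{\Sf^\prime}$ terms then cancel and one is left with the clean transport equation $\partial_v\!\left(r\otxb_{\Sf^\prime}\right)=2r\Omega^2\left(\divs\eblin_{\Sf^\prime}+\rlin_{\Sf^\prime}\right)$ along $C_{u_f}$. Applying Lemma~\ref{lem:tplemma} with $\xi=r\otxb_{\Sf^\prime}$ reduces \eqref{trxbprop} to a bound on $\|r\otxb_{\Sf^\prime}\|_{u_f,v_0}$ and on $\int_{v_0}^\infty \Omega^2 r^4\|\divs\eblin_{\Sf^\prime}+\rlin_{\Sf^\prime}\|^2_{u_f,v}\,dv$. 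For the latter one writes $\divs\eblin_{\Sf^\prime}+\rlin_{\Sf^\prime}=-\!\left(\divs\elin_{\Sf^\prime}+\rlin_{\Sf^\prime}\right)+2\rlin_{\Sf^\prime}$ (using $\eblin_{\Sf^\prime}=-\elin_{\Sf^\prime}$); the first term is estimated by Lemma~\ref{lem:fromoldpaper} together with $\int_{v_0}^\infty\Omega^2 r^{-2}\,dv<\infty$, and the second by the $\rlin_{\Sf^\prime}$-flux in \eqref{step1mb}. For the former, Lemma~\ref{lem:exactsol} gives $r\otxb_{\Sf^\prime}|_{u_f,v_0}=r\otxb_{\Si^\prime}-\Delta_{S^2}f_0-\left(1-\tfrac{4M}{r}\right)f_0$, and each summand is controlled by $\mathbb{D}^1_{data}(u_f,v_0)$, using in addition $\|f_0\|_{u_f,v_0}\lesssim\|\Delta_{S^2}f_0\|_{u_f,v_0}$ since $f_0$ is supported on $\ell\geq 2$. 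Squaring and combining yields \eqref{trxbprop}.

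For \eqref{trxbpropb} the key observation is that, since $\ns_4 r=\Omega$, the term proportional to $\xblin$ in \eqref{chih3b} is exactly cancelled by the transport weight, giving $\ns_4\!\left(r\Omega\xblin\right)=\Omega^2\xlin-2r\Omega\Dst\eblin$; along $C_{u_f}$ one substitutes $\eblin_{\Sf^\prime}=-\elin_{\Sf^\prime}$ to obtain $\Omega\ns_4\!\left(r\Omega\xblin_{\Sf^\prime}\right)=\Omega^2\!\left(\Omega\xlin_{\Sf^\prime}+2r\Dst\elin_{\Sf^\prime}\right)$. Applying Lemma~\ref{lem:tplemma} with $\xi=r\Omega\xblin_{\Sf^\prime}$, the source flux obeys $\int_{v_0}^\infty\Omega^2 r^2\|\Omega\xlin_{\Sf^\prime}+2r\Dst\elin_{\Sf^\prime}\|^2_{u_f,v}\,dv\lesssim \int_{v_0}^\infty r^2\|\Omega\xlin_{\Sf^\prime}\|^2_{u_f,v}\,dv+\int_{v_0}^\infty\Omega^2 r^2\|[r\ns]\elin_{\Sf^\prime}\|^2_{u_f,v}\,dv$ (using $|\Dst\elin|\lesssim|\ns\elin|$), which is $\lesssim \overline{\mathbb{E}}^{0,0}_{data}(u_f)$ by \eqref{step1m} and Proposition~\ref{prop:angularetab}. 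It then remains to bound the data term $\|r\Omega\xblin_{\Sf^\prime}\|_{u_f,v_0}$; this is the one non-routine point, because $\xblin$ — unlike $\otxb$ — does not appear directly in $\mathbb{D}^1_{data}$. I would handle it by using the Codazzi constraint \eqref{ellipchi} on $S^2_{u_f,v_0}$ to express $\divs(\Omega\xblin_{\Si^\prime})$ in terms of $\elin_{\Si^\prime}$, $\bblin_{\Si^\prime}$ and $\ns\otxb_{\Si^\prime}$, combined with the elliptic estimate $\|\Theta\|^2\leq\tfrac12\|r\divs\Theta\|^2$ for symmetric traceless $2$-tensors (as used in the proof of Proposition~\ref{prop:angularetab}) and the explicit pure gauge contribution $r\Omega\xblin_{\mathscr{G}_f}|_{u_f,v_0}=-[r\Dst][r\ns]f_0$; the terms that arise — first derivatives of Ricci coefficients and curvature on the data sphere and at most two angular derivatives of $f_0$ — are then absorbed into $\mathbb{D}^1_{data}(u_f,v_0)$ and $\overline{\mathbb{E}}^{0,0}_{data}(u_f)$, where for the $\elin_{\Si^\prime}$-contribution one additionally invokes $(\elin+\eblin)_{\Si^\prime}=2\ns\Olin_{\Si^\prime}$ from \eqref{oml3}.

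The main obstacle, then, is precisely this data term for $r\Omega\xblin_{\Sf^\prime}$: the transport structure and the source estimates are immediate once the correctly weighted quantities have been identified, but trading $\xblin$ on $S^2_{u_f,v_0}$ for quantities that are genuinely controlled, without losing a derivative, forces one through the elliptic Codazzi argument; the $\otxb$ case, by contrast, closes cleanly since the relevant rescaled quantity and all of the sphere data are directly visible in $\mathbb{D}^1_{data}$. Keeping track of the $r$- and $\Omega^2$-weights in the transport-lemma inequalities and in the reduction to $\mathbb{D}^1_{data}$ is the only other thing requiring care. Squaring and combining the forward-in-$v$ bounds over $[v_0,\infty)$ then yields \eqref{trxbprop} and \eqref{trxbpropb}.
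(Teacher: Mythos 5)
Your strategy --- renormalising the $\ns_4$-transport equations for $r\otxb_{\Sf'}$ and $r\Omega\xblin_{\Sf'}$, applying Lemma~\ref{lem:tplemma}, and controlling the resulting $L^2_v$-sources by \eqref{step1m}--\eqref{step1mb}, Lemma~\ref{lem:fromoldpaper} and Proposition~\ref{prop:angularetab}, with the sphere data handled via Lemma~\ref{lem:exactsol} --- is exactly what the paper does, and for \eqref{trxbprop} your argument is a complete reproduction of it: the transport equation $\partial_v(r\otxb_{\Sf'}) = 2r\Omega^2(\divs\eblin + \rlin)_{\Sf'}$, the source splitting, and the reduction of the data term to $\otxb_{\Si'}$, $\Delta_{S^2} f_0$ and $f_0$ are all correct.

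For \eqref{trxbpropb} the transport equation and the source-flux bookkeeping are again correct, but your treatment of the data term $\|r\Omega\xblin_{\Sf'}\|_{u_f,v_0}$ --- which you rightly flag as the non-routine point --- has a gap. Codazzi~\eqref{ellipchi} trades $\xblin$ for $\elin$, $\bblin$ and $\ns\otxb$; the latter two are present in $\mathbb{D}^1_{data}$, but $\elin_{\Si'}$ on $S^2_{u_f,v_0}$ is not, and the identity $(\elin+\eblin)_{\Si'} = 2\ns\Olin_{\Si'}$ that you invoke only controls the symmetric combination (via $f_0$ and $\otx_{\Si'}$, which determine $\Olin_{\Si'}(u_f,v_0)$). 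To recover $\elin$ you would also need the antisymmetric combination $\elin-\eblin$: its divergence is indeed accessible from the two mass-aspect entries in $\mathbb{D}^1_{data}$, but $\curls(\elin-\eblin)_{\Si'} = 2\slin_{\Si'}$ is not, since $\slin$ on the data sphere first appears in $\mathbb{D}^2_{data}$. As written, then, the Codazzi route does not close within $\mathbb{D}^1_{data}$. It is fair to say the paper is laconic at precisely this spot --- it simply asserts that after conversion to $\xblin_{\Si'}$ the data term ``is controlled by $\mathbb{D}^1_{data}$'' without exhibiting the reduction --- so the obstruction you ran into is not an artifact of your write-up; but your proposed elaboration does not fill it. You would need either an additional equation to trade $\slin_{\Si'}$ on $S^2_{u_f,v_0}$ for $\mathbb{D}^1_{data}$-quantities, or to allow $\mathbb{D}^2_{data}$ on the right-hand side of \eqref{trxbpropb} (which does not affect Theorem~\ref{prop:firstround}, since $\mathbb{E}^2_{data}$ already contains $\mathbb{D}^2_{data}$).
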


\begin{proof}
Estimate \eqref{trxbprop} follows from control on this quantity on the sphere $S^2_{u_f,v_0}$ and using the transport equation (\ref{dtcb}), which, in $\Sf^\prime$-gauge along $C_{u_f}$, can be written as
\begin{align*}
\Omega \slashed{\nabla}_4 (r\otxb)  = \Omega^2 r \left( 2 \divs\, \eblin + 2\rlin \right)  \, .
\end{align*}
 Applying Lemma~\ref{lem:tplemma} we have
\begin{align*}
\|r\otxb_{\Sf'}\|^2_{u_f,v}\lesssim \|r\otxb_{\Sf'}\|^2_{u_f,v_0}+\int_{v_0}^v \Omega^2 r^4 \|(\divs \eblin+\rlin)_{\Sf'}\|^2_{u,\bar{v}}d\bar{v}.
\end{align*}
Note that the first term on the right can be converted to $r\otxb_{\Si^\prime}$ using Lemma~\ref{lem:exactsol} after which it is controlled by ${\mathbb{D}}_{data}^1(u_f,v_0)$. The estimate~\eqref{trxbprop} now follows from the fluxes controlled in equation~(\ref{step1mb}) and Proposition~\ref{prop:angularetab}. 

Estimate (\ref{trxbpropb}) is similar now using the transport equation (\ref{chih3b}), which along $C_{u_f}$ takes the form
\begin{align*}
\Omega \slashed{\nabla}_4  \left(r \Omega \xblin  \right) =  \Omega^2 \Omega \xlin-2 \Omega^2 r \slashed{\mathcal{D}}_2^\star \eblin \, . 
\end{align*}
Applying Lemma~\ref{lem:tplemma} we have
\begin{align*}
\|r\Omega\xblin_{\Sf’}\|^2_{u_f,v}\lesssim \|r\Omega\xblin_{\Sf'}\|^2_{u_f,v_0}+\int_{v_0}^v \Omega^2 r^2 \|r\Dst \eblin_{\Sf’}\|_{u_f,\bar{v}}^2 d\bar{v} +\int_{v_0}^v  \Omega^2 r^2 \| \Omega\xlin_{\Sf^\prime} \|^2_{u_f,\bar{v}} d\bar{v} \, .
\end{align*}
Note that the first term on the right can be converted to $\xblin_{\Si^\prime}$ using Lemma~\ref{lem:exactsol} after which it is controlled by ${\mathbb{D}}_{data}^1(u_f,v_0)$. The estimate~\eqref{trxbpropb} follows from the fluxes controlled in \eqref{step1mb} and Proposition~\ref{prop:angularetab}. 
\end{proof}

\subsubsection{Step 4: Proof of the Angular Commuted Conservation Law}

We now prove (\ref{step2m}).  Let us fix the notation that
\begin{align*}
F^{0,{1}}_{v} [\mathcal{R},\Si' ](u_0,u)=\sum_{|\mathbf{k}|=1}F^{0,\bf{k}}_{v} [\mathcal{R},\Si' ](u_0,u),\qquad F^{0,{1}}_{u} [\mathcal{R},\Si' ](v_0,v)=\sum_{|\mathbf{k}|=1}F^{0,\bf{k}}_{u} [\mathcal{R},\Si' ](v_0,v).
\end{align*}
We apply the once angular commuted second conservation law (with summation):
\begin{align*}
F^{0,{1}}_{v_f} [\mathcal{R},\Si' ](u_0,u_f)+F^{0,{1}}_{u_f} [\mathcal{R},\Si' ](v_0,v_f)=F^{0,{1}}_{v_0} [\mathcal{R},\Si' ](u_0,u_f)+F^{0,{1}}_{u_0} [\mathcal{R},\Si' ](v_0,v_f) \, , 
\end{align*}
where, using Proposition~\ref{prop:gaugechange} and Proposition~\ref{prop:angident}, we have
\begin{align*}
F^{0, { 1}}_{u_f} [\mathcal{R}, \Si'] \left(v_0,v_f\right)&= F^{0, {\bf 1}}_{u_f} [\mathcal{R}, \Sf'] \left(v_0,v_f\right)+\left[\int_{S^2}\mathcal{G}_{\mathcal{R}}^{0,1}(u,v,\theta,\phi)\varepsilon_{S^2}\right]_{(u_f,v_0)}^{(u_f,v_f)},\\
F^{0, { 1}}_{u_f} [\mathcal{R}, \Sf'] \left(v_0,v_f\right)&=\int_{v_0}^{v_f} \int_{S^2} \Big[ \frac{1}{2} r^4  |[r\ns]\Omega\blin_{\Sf'}|^2+\frac{1}{2} r^4  |\Omega\blin_{\Sf'}|^2+\frac{1}{2} \Omega^2 r^4 \left(|[r\ns]\rlin_{\Sf'}|^2 + |[r\ns]\slin_{\Sf'}|^2\right)\\
&\qquad -3Mr  |[r\ns]\Omega\eblin_{\Sf'}|^2-3Mr  |\Omega\eblin_{\Sf'}|^2 \Big]dv\varepsilon_{S^2} \, , \nonumber\\
F^{0, { 1}}_{v_f} \left[\mathcal{R},\Si' \right] \left(u_0,u_f\right)&= \int_{u_0}^{u_f} \int_{S^2} \Big[\frac{1}{2} r^4 | [r\ns]\Omega\bblin_{\Si' }|^2+ \frac{1}{2} r^4 | \Omega\bblin_{\Si' }|^2\Big]du\varepsilon_{S^2}+\mathrm{R}(v_f).
\end{align*}
Therefore, combining, and using Proposition~\ref{prop:Gatinfinity}, gives
\begin{align*}
&\int_{v_0}^{v_f}\int_{S^2} \Big[ \frac{1}{2} r^4  |[r\ns]\Omega\blin_{\Sf'}|^2+\frac{1}{2} r^4  |\Omega\blin_{\Sf'}|^2+\frac{1}{2} \Omega^2 r^4 \big(|[r\ns]\rlin_{\Sf'}|^2 + |[r\ns]\slin_{\Sf'}|^2\big) \Big]dv\varepsilon_{S^2} \\ &+\int_{S^2}\Big[\frac{3M}{2} \Big| [r\ns]^2 \frac{f \Omega^2}{r} \Big|^2\varepsilon_{S^2}\Big|_{v_f}
+\int_{u_0}^{u_f} \int_{S^2} \Big[\frac{1}{2} r^4 | [r\ns]\Omega\bblin_{\Si' }|^2+ \frac{1}{2} r^4 | \Omega\bblin_{\Si' }|^2\Big]du\varepsilon_{S^2}\\
&=F^{0,{1}}_{v_0} [\mathcal{R},\Si' ](u_0,u_f)+F^{0,{1}}_{u_0} [\mathcal{R},\Si' ](v_0,v_f)+\int_{S^2}3M \Big| r\ns \frac{f \Omega^2}{r} \Big|^2\varepsilon_{S^2}\Big|_{(u_f,v_f)}+\mathrm{R}(v_f)\\
&+\int_{S^2}\mathcal{G}_{\mathcal{R}}^{0,1}(u,v,\theta,\phi)\varepsilon_{S^2}\Big|_{(u_f,v_0)}+\int_{v_0}^{v_f} \int_{S^2}\Big[3Mr  |[r\ns]\Omega\eblin_{\Sf'}|^2+3Mr  |\Omega\eblin_{\Sf'}|^2 \Big]dv\varepsilon_{S^2}.
\end{align*}
Taking the limit $v_f \rightarrow \infty$ the desired estimate (\ref{step2m}) follows after using that we have control over the terms on the right-hand side from~\eqref{eqn:dataGR01} and~\eqref{step1mb} and Proposition \ref{prop:angularetab}.

\subsubsection{Step 5: Closing the Once $T$-commuted $\Gamma$-estimate}

We now prove (\ref{step5m}). We apply the first conservation law now once $\mathcal{L}_T$-commuted:
\begin{align} \label{mainT1}
F^{1,\bf{0}}_{v_f} [\Gamma,\Si' ](u_0,u_f)+F^{1,\bf{0}}_{u_f} [\Gamma,\Si' ](v_0,v_f)=F^{1,\bf{0}}_{v_0} [\Gamma,\Si' ](u_0,u_f)+F^{1,\bf{0}}_{u_0} [\Gamma,\Si' ](v_0,v_f) \, .
\end{align}
By the change of gauge formula on the cone $C_{u_f}$ of Proposition~\ref{prop:gaugechange} and Proposition~\ref{prop:Gatinfinity} we have
\begin{align} \label{changeofgT1}
F^{1, {\bf 0}}_{u_f} [\Gamma, \Si'] \left(v_0,v_f\right)&= F^{1, {\bf 0}}_{u_f} [\Gamma, \Sf'] \left(v_0,v_f\right)-\int_{S^2}\mathcal{G}^{1,0}_{\Gamma}(u,v,\theta,\phi)\varepsilon_{S^2}\Big|_{(u_f,v_0)} \nonumber \\
&\qquad- \frac{r^3}{2\Omega^{2}} \int_{S^2} T\otx_{\Si^\prime}T\otxb_{\Si^\prime} \varepsilon_{S^2}\Big|_{(u_f,v_f)}+\mathrm{R}(v_f) \, ,
\end{align}
with the new flux given by
\begin{align} \label{Tfluxng}
F^{1, {\bf 0}}_{u_f} [\Gamma, \Sf'] \left(v_0,v_f\right)&=\int_{v_0}^{v_f} \int_{S^2} \Bigg[  |\ns_{T}\Omega\xlin_{\Sf'}|^2+2\Omega^2 |\ns_{T}\eblin_{\Sf'}|^2 -2 \ns_{T}\olin_{\Sf'} \ns_{T}\otxb_{\Sf'} - \frac{1}{2} [\ns_{T}\otx_{\Sf'}]^2 \nonumber \\
&\qquad + \frac{4M}{r^2} \ns_{T}\Big(\frac{\Olino}{\Omega}\Big)_{\Sf'} \ns_{T}\otx_{\Sf'}  \Bigg]r^2dv\varepsilon_{S^2} \, . 
\end{align}
For the ingoing cone $\underline{C}_{v_f}$ we have from (\ref{niflux}) the limiting expression
\begin{align*}
F^{1, {\bf 0}}_{v_f} \left[\Gamma,\Si' \right] \left(u_0,u_f\right)&=  \int_{u_0}^{u_f} \int_{S^2} |\ns_{T}\Omega\xblin_{\Si'}|^2r^2du\varepsilon_{S^2}
+\left[\frac{r^3}{2}\int_{S^2}T\otx_{\Si^\prime}T\otxb_{\Si^\prime}\varepsilon_{S^2}\right]_{(u_0,\infty)}^{(u_f,\infty)}+\mathrm{R}(v_f)\, .
\end{align*}
Note the cancellation that is going to appear on the sphere $S^2_{u_f,\infty}$ when taking the limit $v_f \rightarrow \infty$ of (\ref{mainT1}) and inserting the above expressions. 
 
We next evaluate the non-coercive cross-terms in (\ref{Tfluxng}). Using Proposition~\ref{prop:TTransport} in the $\Sf'$ gauge we have (all quantities appearing on the right being with respect to the solution $\Sf^\prime$)
\begin{align}
2T\otx_{\Sf'}&=2\Omega^2(\divs\elin+\rlin)-\frac{\Omega^2}{r}\otxb, \label{infutg1}  \\
2T\otxb_{\Sf'}&=2\Omega^2(\divs\eblin+\rlin)+\frac{1}{r}\Big(1-\frac{4M}{r}\Big)\otxb-\frac{4\Omega^2}{r}\olinb,\\
2T\olin_{\Sf'}&=-\Omega^2\rlin,\\
2T\big(\Olin\big)_{\Sf'}&=\olinb. \label{infutg2}
\end{align}
We therefore have
\begin{align}
&\Bigg| \int_{v_0}^{v_f} \int_{S^2} \Big[  -2 \ns_{T}\olin_{\Sf'} \ns_{T}\otxb_{\Sf'} - \frac{1}{2} [\ns_{T}\otx_{\Sf'}]^2 + \frac{4M}{r^2} \ns_{T}\Big(\frac{\Olino}{\Omega}\Big)_{\Sf'} \ns_{T}\otx_{\Sf'}  \Big]r^2dv\varepsilon_{S^2} \Bigg| \nonumber \\
&\lesssim  \int_{v_0}^{v_f} \int_{S^2} \frac{\Omega^2}{r^2} \Big[|\olinb|^2 + |r \otxb|^2 + r^4|\divs  \elin + \rlin|^2 + r^2  |\divs  \eblin + \rlin|^2\Big]dv\varepsilon_{S^2} +\int_{v_0}^{v_f} \int_{S^2}\Omega^2r^4|\rlin|^2dv\varepsilon_{S^2}
\nonumber \\ 
&\lesssim  \overline{\mathbb{E}}^{0,0}_{data} (u_f)+ {\mathbb{D}}_{data}^1(u_f,v_0) \, , 
\end{align}
with the last step following from Proposition~\ref{prop:ombpure}, \ref{prop:chibarpure} and \ref{prop:angularetab} as well as the fluxes controlled by (\ref{step1mb}).

We conclude from (\ref{mainT1}) and~\eqref{eqn:dataG10} after inserting the previous estimates and  taking the limit $v_f \rightarrow \infty$ 
\begin{align*}
 &\int_{u_0}^{u_f} \int_{S^2} |\ns_{T}\Omega\xblin_{\Si'}|^2(u,\infty) r^2du\varepsilon_{S^2}+\int_{v_0}^{\infty}\int_{S^2} \Big[  |\ns_{T}\Omega\xlin_{\Sf'}|^2+2\Omega^2 |\ns_{T}\eblin_{\Sf'}|^2 \Big]r^2dv\varepsilon_{S^2}\\
 &\leq F^{1,\bf{0}}_{v_0} [\Gamma,\Si' ](u_0,u_f)+F^{1,\bf{0}}_{u_0} [\Gamma,\Si' ](v_0,\infty) +\frac{r^3}{2}\int_{S^2}T\otx_{\Si^\prime}T\otxb_{\Si^\prime}\varepsilon_{S^2}\Big|_{(u_0,\infty)} \nonumber \\
 & \ \ \ +C \left(\overline{\mathbb{E}}^{0,0}_{data} (u_f)+ {\mathbb{D}}_{data}^1(u_f,v_0)\right) \, .
\end{align*}
The estimate (\ref{step5m}) now follows after recalling the definition (\ref{Ei0datadef}).

\subsubsection{Step 6: Closing the Once $T$-commuted $\mathcal{R}$-estimate}
We now prove (\ref{step6m}). We apply the second conservation law now once $\mathcal{L}_T$-commuted:
\begin{align*}
F^{1,\bf{0}}_{v_f} [\mathcal{R},\Si' ](u_0,u_f)+F^{1,\bf{0}}_{u_f} [\mathcal{R},\Si' ](v_0,v_f)=F^{1,\bf{0}}_{v_0} [\mathcal{R},\Si' ](u_0,u_f)+F^{1,\bf{0}}_{u_0} [\mathcal{R},\Si' ](v_0,v_f) \, .
\end{align*}
By the change of gauge formula on the cone $C_{u_f}$ of Proposition~\ref{prop:gaugechange} we have
\begin{align}
F^{1, {\bf 0}}_{u_f} [\mathcal{R}, \Si'] \left(v_0,v_f\right)&= F^{1, {\bf 0}}_{u_f} [\mathcal{R}, \Sf'] \left(v_0,v_f\right)+\left[\int_{S^2}\mathcal{G}^{1,0}_{\mathcal{R}}(u,v,\theta,\phi)\varepsilon_{S^2}\right]_{(u_f,v_0)}^{(u_f,v_f)} \, ,
\end{align}
with the new flux given by 
\begin{align} \label{newfluxT2}
F^{1, {\bf 0}}_{u_f} [\mathcal{R}, \Sf'] \left(v_0,v\right)&=\int_{v_0}^{v_f} \int_{S^2} \Big[ \frac{1}{2} r^4 |\ns_T \Omega\blin|^2+\frac{1}{2} \Omega^2 r^4 \left(|T\rlin|^2 + |T\slin|^2\right) -3Mr |\ns_T\Omega\eblin|^2  \\
&\qquad+ 3Mr T\olin T\otxb 
+3M \left(1-\frac{4M}{r}\right)  T\otx T\Big(\frac{\Olino}{\Omega}\Big) \Big]dv\varepsilon_{S^2} \nonumber \, .
\end{align}
Using (\ref{decreten2}) we also have the limiting flux 
\begin{align*}
\lim_{v_f \rightarrow \infty} F^{1, {\bf 0}}_{v_f} \left[\mathcal{R},\Si' \right] \left(u_0,u_f\right)&=  \int_{u_0}^{u_f} \int_{S^2} \frac{1}{2} r^4  |\ns_T\Omega\bblin|^2 (u,\infty) du\varepsilon_{S^2} \, .
\end{align*}
We estimate (plugging in the expressions (\ref{infutg1})--(\ref{infutg2}) in the integrand) the non-coercive terms in (\ref{newfluxT2}):
\begin{align}
&\Bigg| \int_{v_0}^{v_f}\int_{S^2}\Big[  -3Mr |\ns_T\Omega\eblin|^2 + 3Mr T\olin T\otxb 
+3M \left(1-\frac{4M}{r}\right)  T\otx T\Big(\frac{\Olino}{\Omega}\Big)  \Big]dv\varepsilon_{S^2} \Bigg| \nonumber \\
& \qquad \lesssim\int_{v_0}^{v_f}\int_{S^2}\frac{\Omega^2}{r^2}\Big[r^2|\divs\eblin+\rlin|^2+r^4|\divs\elin+\rlin|^2+|\olinb|^2+(r\otxb)^2\Big]dv\varepsilon_{S^2}+\int_{v_0}^{v_f}\int_{S^2}r|\ns_T\eblin|^2dv\varepsilon_{S^2} \nonumber \\
& \qquad \lesssim\overline{\mathbb{E}}^{0,0}_{data} (u_f)+ {\mathbb{D}}_{data}^1(u_f,v_0)\, ,
\end{align}
where we use~\eqref{step5m} to estimate the $\ns_T\eblin$-flux. In summary, using also the estimates for $\mathcal{G}^{1,0}_{\mathcal{R}}$ from (\ref{eqn:dataGR10}) and (\ref{RGtime}) (the latter in conjunction with Proposition~\ref{prop:ombpure}) we deduce
\begin{align*}
 &\int_{u_0}^{u_f} \int_{S^2} \frac{1}{2} r^4  |\ns_T\Omega\bblin|^2(u,\infty) du\varepsilon_{S^2}+\int_{v_0}^{\infty}\int_{S^2} \Bigg[ \frac{1}{2} r^4 |\ns_T \Omega\bblin|^2+\frac{1}{2} \Omega^2 r^4 \left(|T\rlin|^2 + |T\slin|^2\right)\Bigg] (u_f,v) dv\varepsilon_{S^2}\\
 &\leq F^{1,\bf{0}}_{v_0} [\mathcal{R},\Si' ](u_0,u_f)+F^{1,\bf{0}}_{u_0} [\mathcal{R},\Si' ](v_0,v_f) +C \left(\overline{\mathbb{E}}^{0,0}_{data} (u_f)+ {\mathbb{D}}_{data}^1(u_f,v_0)  \right)\\
 &\leq E_{data}^{1,0}[\mathcal{R}](u_f)+C \left(\overline{\mathbb{E}}^{0,0}_{data} (u_f)+ {\mathbb{D}}_{data}^1(u_f,v_0)  \right) \, .
\end{align*}
The estimate (\ref{step6m}) follows after recalling the definition (\ref{Ecd}).

\subsubsection{Step 7: Refined Bounds on $C_{u_f}$ via Transport Equations II}
With  (\ref{step6m}) proven we now obtain control for the flux of angular derivatives of $\bblin$:

\begin{proposition} \label{prop:angularbetabflux}
We have the estimates
\begin{align*}
\int_{v_0}^{\infty}\int_{S^2}|[r\ns]^2\xlin_{\Sf'}|^2r^2\Omega^2\Big |_{u=u_f} dv\varepsilon_{S^2} \lesssim
 \overline{\mathbb{E}}^{1,0}_{data} (u_f)  \, , 
\end{align*}
\[
\int_{v_0}^\infty \int_{S^2} \Omega^2 r^2 |[r\slashed{\nabla}]\bblin_{\Sf^\prime}|^2 \Big |_{u=u_f}  dv \varepsilon_{S^2} 
\lesssim
 \overline{\mathbb{E}}^{1,0}_{data} (u_f) \, .
\]
\end{proposition}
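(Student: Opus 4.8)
The plan is to derive both bounds by elliptic estimates on the spheres $S^2_{u_f,v}$ combined with the Bianchi and null structure equations of Section~\ref{sec:lineqs}, fed by the fluxes and pointwise bounds already established along $C_{u_f}$ in Steps~3--6: the flux of $[r\ns]\eblin_{\Sf^\prime}$ (Proposition~\ref{prop:angularetab}), the energies $\mathbb{E}^{0,1}[\mathcal{R},\Sf^\prime]$ from \eqref{step2m} and $\mathbb{E}^{1,0}[\mathcal{R},\Sf^\prime]$ from \eqref{step6m}, and the uniform bounds on $r\otxb_{\Sf^\prime}$, $r\Omega\xblin_{\Sf^\prime}$, $\olinb_{\Sf^\prime}$ along $C_{u_f}$ (Propositions~\ref{prop:ombpure}--\ref{prop:chibarpure}). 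Throughout we use the gauge normalisation $\otx_{\Sf^\prime}=0$, $\olin_{\Sf^\prime}=0$, $(\elin+\eblin)_{\Sf^\prime}=0$, $\Olin_{\Sf^\prime}=0$ along $C_{u_f}$ (Proposition~\ref{prop:propnewgauge}), and the fact that the angular Killing operators $\slashed{\mathcal{L}}_{\Omega_k}$ commute trivially with all equations and, via Proposition~\ref{prop:angident}, reproduce $[r\ns]$ up to zeroth-order terms.

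For the bound on $[r\ns]^2\xlin_{\Sf^\prime}$, I would start from the Codazzi equation \eqref{ellipchi}, which along $C_{u_f}$ (using $\otx_{\Sf^\prime}=0$) reads $r\divs(\Omega\xlin_{\Sf^\prime})=-\Omega^2\eblin_{\Sf^\prime}-r\Omega\blin_{\Sf^\prime}$, and apply $\slashed{\mathcal{L}}_{\Omega_k}$, which passes through $\divs$ and through the purely $(u,v)$-dependent coefficients. This expresses the angular derivative of $r\divs(\Omega\xlin_{\Sf^\prime})$ in terms of $[r\ns]\eblin_{\Sf^\prime}$ and $[r\ns]\Omega\blin_{\Sf^\prime}$, whose fluxes are controlled by Proposition~\ref{prop:angularetab} and \eqref{step2m} respectively (the $r$-weights being in the right direction since $r\ge 2M$). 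Feeding this into the $L^2$-elliptic identity on $S^2$ for $\divs$ on symmetric traceless $2$-tensors, $\|r\divs\Theta\|^2=\tfrac12\|[r\ns]\Theta\|^2+2\|\Theta\|^2$ (the version already used in the proof of Proposition~\ref{prop:angularetab}), applied to $\slashed{\mathcal{L}}_{\Omega_k}(\Omega\xlin_{\Sf^\prime})$ and summed over $k$, converts this into control of $\|[r\ns]^2\Omega\xlin_{\Sf^\prime}\|=\Omega\|[r\ns]^2\xlin_{\Sf^\prime}\|$; the leftover zeroth-order terms are absorbed by the already-controlled fluxes of $[r\ns]\Omega\xlin_{\Sf^\prime}$, $\Omega\xlin_{\Sf^\prime}$, $\eblin_{\Sf^\prime}$ and $\Omega\blin_{\Sf^\prime}$. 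Integrating in $v$ with weight $r^2$ yields the first estimate.

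For the bound on $[r\ns]\bblin_{\Sf^\prime}$ the point is that the angular derivatives of $\bblin_{\Sf^\prime}$ are controlled, \emph{without any loss of derivatives on the curvature}, through the div--curl system on the spheres together with the Bianchi equations \eqref{Bianchi5}, \eqref{Bianchi7}. Since $\bblin_{\Sf^\prime}$ is supported on $\ell\ge 2$ one has $\|[r\ns]\bblin_{\Sf^\prime}\|^2\lesssim\|r\divs\bblin_{\Sf^\prime}\|^2+\|r\curls\bblin_{\Sf^\prime}\|^2$ on each $S^2_{u_f,v}$; and \eqref{Bianchi5}, \eqref{Bianchi7} express $\divs\bblin_{\Sf^\prime}$, $\curls\bblin_{\Sf^\prime}$ algebraically in terms of $\ns_3\rlin_{\Sf^\prime}$, $\ns_3\slin_{\Sf^\prime}$ and the already-controlled $\rlin_{\Sf^\prime}$, $\slin_{\Sf^\prime}$, $\otxb_{\Sf^\prime}$. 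One then writes $\Omega\ns_3=2\ns_T-\Omega\ns_4$ and inserts the Bianchi equations \eqref{Bianchi4}, \eqref{Bianchi6} for $\ns_4\rlin_{\Sf^\prime}$, $\ns_4\slin_{\Sf^\prime}$ (again with $\otx_{\Sf^\prime}=0$ on $C_{u_f}$); the background cancellation $\mathrm{tr}\chi+\mathrm{tr}\underline{\chi}=0$ (reflecting that the Killing field $T$ is tangent to neither null direction) eliminates all $r^{-1}$-coefficient $\rlin_{\Sf^\prime}$-, $\slin_{\Sf^\prime}$-terms and leaves, schematically along $C_{u_f}$,
\begin{align*}
\Omega\divs\bblin_{\Sf^\prime}&=-2\ns_T\rlin_{\Sf^\prime}+\Omega\divs\blin_{\Sf^\prime}+\tfrac{3M}{r^3}\otxb_{\Sf^\prime}\, ,\\
\Omega\curls\bblin_{\Sf^\prime}&=-2\ns_T\slin_{\Sf^\prime}-\Omega\curls\blin_{\Sf^\prime}\, .
\end{align*}
The fluxes of $\ns_T\rlin_{\Sf^\prime}$, $\ns_T\slin_{\Sf^\prime}$ come from \eqref{step6m}, those of $\Omega\divs\blin_{\Sf^\prime}$, $\Omega\curls\blin_{\Sf^\prime}$ from \eqref{step2m} (using $\Omega^2 r^4|\divs\blin_{\Sf^\prime}|^2\lesssim r^4|[r\ns]\Omega\blin_{\Sf^\prime}|^2$), and the $\otxb_{\Sf^\prime}$-term is handled via the uniform bound of Proposition~\ref{prop:chibarpure} together with $\int_{v_0}^\infty\tfrac{\Omega^2}{r^2}\,dv=\tfrac1{2M}$; after multiplying by $\Omega^2 r^4$ and integrating in $v$ one obtains the second estimate.

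The main obstacle is the bookkeeping of the $r$- and $\Omega$-weights in part~(2): one must arrange the weights so that all constants remain $M$-dependent only and not $u_f$-dependent, which forces one to work with precisely the $\Omega^2 r^4$-weighted fluxes carried by $\mathbb{E}^{1,0}[\mathcal{R},\Sf^\prime]$ and to exploit that $r^3\divs\elin_{\Sf^\prime}+r^3\rlin_{\Sf^\prime}$, $r\otxb_{\Sf^\prime}$ and the related data-sphere quantities controlled by $\mathbb{D}^1_{data}(u_f,v_0)$ are $O(\Omega^2)$ near $S^2_{u_f,v_0}$ for large $u_f$; a careless estimate produces an $r^{-2}$-weighted $\otxb_{\Sf^\prime}$-term whose time integral diverges (logarithmically in $u_f$). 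By contrast the derivative count is never in danger: the only conceivable loss — an angular derivative falling on $\rlin_{\Sf^\prime}$ or $\slin_{\Sf^\prime}$ in \eqref{Bianchi5}, \eqref{Bianchi7} — is avoided precisely because those Bianchi equations trade it for a $\ns_3$-derivative, which is in turn traded, via $\Omega\ns_3=2\ns_T-\Omega\ns_4$ and \eqref{Bianchi4}, \eqref{Bianchi6}, for the $\ns_T$-derivatives of curvature controlled in Step~6 and the angular derivatives of $\blin_{\Sf^\prime}$ controlled in Step~4.
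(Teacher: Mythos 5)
Your proposal is correct and follows essentially the same route as the paper's proof. For the $\xlin$ bound the paper passes directly to the Laplacian identity $\ds(\Omega\xlin)=-2\Dst\divs(\Omega\xlin)+\tfrac{2}{r^2}\Omega\xlin$ and inserts Codazzi, whereas you apply an angular commutation $\slashed{\mathcal{L}}_{\Omega_k}$ to the Codazzi divergence relation and then invoke the $\divs$-elliptic identity; these are interchangeable presentations of the same elliptic estimate, with the same data being fed in (Proposition~\ref{prop:angularetab} and \eqref{step2m}, plus the vanishing of $\otx_{\Sf'}$ on $C_{u_f}$). For the $\bblin$ bound your div--curl decomposition, the substitutions from \eqref{Bianchi5}, \eqref{Bianchi7}, the conversion $\Omega\ns_3=2\ns_T-\Omega\ns_4$ and the reinsertion of \eqref{Bianchi4}, \eqref{Bianchi6} reproduce the paper's identities verbatim, with the fluxes drawn from \eqref{step6m}, \eqref{step2m} and Proposition~\ref{prop:chibarpure} exactly as in the text. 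One small imprecision: after multiplication by $\Omega^2 r^4$ the $\otxb_{\Sf'}$-term carries weight $\Omega^2 r^{-4}|r\otxb_{\Sf'}|^2$, so the relevant integral is $\int_{v_0}^\infty \Omega^2 r^{-4}\,dv = \tfrac{1}{3r(u_f,v_0)^3}\le\tfrac{1}{3(2M)^3}$ rather than $\int\Omega^2 r^{-2}\,dv=\tfrac{1}{2M}$; this does not affect the conclusion.
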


\begin{proof}
For the bound on $\xlin$ we note that the Codazzi equation implies
\begin{align} \label{laplacexlin}
\ds\Omega\xlin=\frac{2\Omega^2}{r}\slashed{\mathcal{D}}_2^{\star}\eblin-\slashed{\mathcal{D}}_2^{\star}\ns\otx+\frac{2}{r^2}\Omega\xlin+2\slashed{\mathcal{D}}_2^{\star}\Omega\blin.
\end{align}
By standard elliptic estimates the Laplacian controls all second derivatives up to lower order terms which are in turn controlled by Proposition~\ref{prop:angularetab}.
The right hand side of (\ref{laplacexlin}) is easily controlled since $\otx$ vanishes on $C_{u_f}$ in $\Sf'$-gauge and the remaining terms are controlled 
by Proposition~\ref{prop:angularetab} and (\ref{step2m}).

For the bound on $\bblin$ we compute
\begin{align}
 \divs \Omega \bblin &= - \Omega \slashed{\nabla}_3 \rlin +\frac{3M}{r^3} \otxb +\frac{3\Omega^2}{r} \rlin = - 2T \rlin + \divs  \Omega \blin  - \frac{3M}{r^3} \otx - \frac{3M}{r^3} \otxb \, ,  \nonumber \\
  \curls \Omega \bblin  &=- \Omega \slashed{\nabla}_3 \slin +\frac{3\Omega^2}{r} \slin = - 2T \slin - \curls \Omega\blin \, \, , 
 \end{align}
to conclude (note $\otx$ vanishes along $C_{u_f}$ in the $\Sf'$-gauge)
 \begin{align}
 | \slashed{\nabla} \Omega \bblin_{\Sf^\prime}|^2  \lesssim |T\rlin_{\Sf^\prime}|^2 + |T\slin_{\Sf^\prime}|^2 +  | \slashed{\nabla} \Omega \blin_{\Sf^\prime}|^2 + \frac{1}{r^6} |\otxb_{\Sf^\prime}|^2 \, .
 \end{align}
 The result follows after multiplying with $r^4 \Omega^2$ and integrating using (\ref{step6m}), (\ref{step2m}) and the bound (\ref{trxbprop}).
\end{proof}
 
We next prove a higher order version of Proposition~\ref{prop:ombpure}:
\begin{proposition} \label{prop:higheromegab}
We have $\olinb_{\Si^\prime}=\olinb_{\Sf^\prime}$ on the sphere $S^2_{u_f,\infty}$. Moreover, we have
\begin{align} \label{ombprop2}
\sup_{v \in [v_0,\infty)}  \|[r\slashed{\nabla}] \olinb_{\Sf^\prime}\|^2_{u_f,v} 
\lesssim  \overline{\mathbb{E}}^{0,1}_{data} (u_f)+\mathbb{D}^1_{data}(u_f,v_0) \, ,
\end{align}
and
\begin{align} \label{ombprop3}
||\partial_u\olinb_{\Sf'}||^2_{u_f,v}\lesssim \overline{\mathbb{E}}^{1,0}_{data} (u_f) + \mathbb{D}^2_{data}(u_f,v_0) \, .
\end{align}
We also have the top order bound
\begin{align} \label{ombprop4}
\sup_{v \in [v_0,\infty)}  \|\Delta_{S^2} \olinb_{\Sf^\prime}-r^2 \divs  \Omega\bblin_{\Sf^\prime}\|^2_{u_f,v} 
\lesssim \overline{\mathbb{E}}^{1,0}_{data} (u_f) + \mathbb{D}^2_{data}(u_f,v_0)\, .
\end{align}
\end{proposition}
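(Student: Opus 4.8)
All four assertions are proved by transport estimates along $C_{u_f}$ (Lemma \ref{lem:tplemma}) fed by the fluxes and sphere bounds from Steps 1--6 and Propositions \ref{prop:ombpure}--\ref{prop:angularbetabflux}, combined with conversion of the relevant quantities at the data sphere $S^2_{u_f,v_0}$ from the $\Sf^\prime$-gauge to the $\Si^\prime$-gauge via Lemma \ref{lem:exactsol} and \eqref{choicef}--\eqref{f0def}. The first assertion is immediate: from $\olinb_{\Sf^\prime}=\olinb_{\Si^\prime}-\frac{2M}{r^3}\Omega^2 f$ and the fact (Section \ref{sec:detpg}) that $\frac{f}{r}$ has a finite limit as $v\to\infty$, one gets $\frac{2M\Omega^2}{r^2}\cdot\frac{f}{r}\to 0$ along $C_{u_f}$, exactly as in Proposition \ref{prop:ombpure}. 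For \eqref{ombprop2} I would commute the transport equation \eqref{oml1}, which along $C_{u_f}$ reads $\Omega\slashed{\nabla}_4\olinb_{\Sf^\prime}=-\Omega^2\rlin_{\Sf^\prime}$ (using $\Olin_{\Sf^\prime}|_{C_{u_f}}=0$ from Proposition \ref{prop:propnewgauge}), with each $\slashed{\mathcal{L}}_{\Omega_k}$, apply Lemma \ref{lem:tplemma}, square and sum over $k$ using the scalar identity of Proposition \ref{prop:angident}. The integral term is controlled by $\int_{v_0}^\infty\int_{S^2}\Omega^2 r^4|[r\slashed{\nabla}]\rlin_{\Sf^\prime}|^2$, hence by $\mathbb{E}^{0,1}[\mathcal{R},\Sf^\prime](u_f)\lesssim\overline{\mathbb{E}}^{0,1}_{data}(u_f)$ (Step 4); the data term, after writing $\olinb_{\Sf^\prime}=\olinb_{\Si^\prime}-\frac{M}{r^2}f_0$ at $v_0$ and estimating $[r\slashed{\nabla}]f_0$ by $\Delta_{S^2}f_0$ (elliptic, $\ell\geq 2$), is $\lesssim\mathbb{D}^1_{data}(u_f,v_0)$.

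For \eqref{ombprop3} I would commute \eqref{oml1} once with $\partial_u$. Using $\partial_u\Olin_{\Sf^\prime}|_{C_{u_f}}=\olinb_{\Sf^\prime}|_{C_{u_f}}$ from \eqref{oml3}, $\partial_u(\Omega^2)=-\frac{2M\Omega^2}{r^2}$, and the Bianchi identity \eqref{Bianchi5} for $\partial_u\rlin_{\Sf^\prime}=\Omega\slashed{\nabla}_3\rlin_{\Sf^\prime}$, a short computation yields along $C_{u_f}$ a transport equation $\Omega\slashed{\nabla}_4(\partial_u\olinb_{\Sf^\prime})=\Omega^2\Xi$ with $\Xi$ schematically of the form $\frac{1}{r}\rlin_{\Sf^\prime}+\Omega\divs\bblin_{\Sf^\prime}+\frac{1}{r^3}(\otxb_{\Sf^\prime}+\olinb_{\Sf^\prime})$. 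Lemma \ref{lem:tplemma} then reduces to: the flux of $\divs\bblin_{\Sf^\prime}$ (Proposition \ref{prop:angularbetabflux}), the flux of $\rlin_{\Sf^\prime}$ (Step 2), and the sphere bounds for $r\otxb_{\Sf^\prime}$ and $\olinb_{\Sf^\prime}$ (Propositions \ref{prop:chibarpure}, \ref{prop:ombpure}) against the convergent weights $\int\Omega^2 r^{-k}\,dv=\int r^{-k}\,dr<\infty$ (so the integral term is $\lesssim\overline{\mathbb{E}}^{1,0}_{data}(u_f)$), together with the data term $\|\partial_u\olinb_{\Sf^\prime}\|_{u_f,v_0}$. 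The latter converts to $\|\partial_u\olinb_{\Si^\prime}\|_{u_f,v_0}$, which is present in $\mathbb{D}^2_{data}$ (as the term $\Omega^2\|\Omega^{-1}\partial_u\olinb\|^2$), plus a pure-gauge piece $\propto r^{-3}\|f_0\|_{u_f,v_0}\lesssim\mathbb{D}^1_{data}$.

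The estimate \eqref{ombprop4} is the main point. Taken individually, $\Omega\slashed{\nabla}_4(\Delta_{S^2}\olinb_{\Sf^\prime})|_{C_{u_f}}=-\Omega^2\Delta_{S^2}\rlin_{\Sf^\prime}$ and the transport equation for $r^2\divs(\Omega\bblin_{\Sf^\prime})$ obtained by applying $r^2\divs$ to \eqref{Bianchi8} both contain the second curvature derivative $\Delta_{S^2}\rlin_{\Sf^\prime}$, which is not available at this regularity level. The key observation is $r^2\divs\Dso(\rlin,\slin)=-r^2\ds\rlin=-\Delta_{S^2}\rlin$ ($\divs$ of a gradient is $\ds$, and $\divs$ of $\slashed{\varepsilon}\cdot\slashed{\nabla}\slin$ vanishes), so for the renormalised quantity $Q:=\Delta_{S^2}\olinb_{\Sf^\prime}-r^2\divs(\Omega\bblin_{\Sf^\prime})$ these contributions cancel. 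Using that $\Delta_{S^2}$ and $r^2\divs$ are built from the $v$-independent round-sphere connection and hence commute with $\partial_v$ (the relation between $\partial_v$ and $\slashed{\nabla}_4$ on one-forms producing only lower-order terms), and that $\Olin_{\Sf^\prime}|_{C_{u_f}}=0$, one obtains along $C_{u_f}$
\begin{align*}
\Omega\slashed{\nabla}_4 Q = \frac{\Omega^2}{r}\,r^2\divs(\Omega\bblin_{\Sf^\prime}) + 3\Omega^2\rho\, r^2\divs\eblin_{\Sf^\prime}\, ,
\end{align*}
whose right-hand side involves only first derivatives of $\bblin_{\Sf^\prime}$ and $\eblin_{\Sf^\prime}$. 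Lemma \ref{lem:tplemma} then gives the bound, the integral term being controlled by $\int\Omega^2 r^2|[r\slashed{\nabla}]\bblin_{\Sf^\prime}|^2$ (Proposition \ref{prop:angularbetabflux}) and $\int\Omega^2 r^2|[r\slashed{\nabla}]\eblin_{\Sf^\prime}|^2$ (Proposition \ref{prop:angularetab}), hence by $\overline{\mathbb{E}}^{1,0}_{data}(u_f)$. For the data term, converting $\olinb_{\Sf^\prime}$ and $\bblin_{\Sf^\prime}$ at $v_0$ via Lemma \ref{lem:exactsol} and using $r^2\divs[r\slashed{\nabla}]f_0=r\Delta_{S^2}f_0$ gives $Q|_{u_f,v_0}=\Omega^2\big(\Delta_{S^2}\Omega^{-2}\olinb_{\Si^\prime}-r^2\divs\Omega^{-1}\bblin_{\Si^\prime}\big)+\frac{2M}{r^2}\Delta_{S^2}f_0$, and both summands are accounted for in $\mathbb{D}^2_{data}(u_f,v_0)$ (the first is one of its defining terms; the second lies in $\mathbb{D}^1_{data}$). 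The only nontrivial point is spotting the renormalisation $Q$ and verifying the cancellation of $\Delta_{S^2}\rlin_{\Sf^\prime}$; the remainder is a routine application of the transport lemma with the already-established fluxes and sphere bounds.
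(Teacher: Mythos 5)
Your argument reproduces the paper's proof essentially line by line: the gauge relation $\olinb_{\Sf'}=\olinb_{\Si'}-\frac{2M}{r^3}\Omega^2 f$ for the limiting-sphere identity, angular commutation of \eqref{oml1} for \eqref{ombprop2}, $\partial_u$-commutation of \eqref{oml1} with \eqref{Bianchi5} inserted for \eqref{ombprop3}, and for \eqref{ombprop4} the same renormalisation $Q=\Delta_{S^2}\olinb-r^2\divs\Omega\bblin$ with the same cancellation of $\Delta_{S^2}\rlin$ coming from $\divs\Dso(\rlin,\slin)=-\slashed{\Delta}\rlin$, and the same conversions of the data terms on $S^2_{u_f,v_0}$ via Lemma \ref{lem:exactsol}. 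The only minor stylistic difference is that in \eqref{ombprop4} you invoke the already-established flux of $[r\slashed{\nabla}]\bblin_{\Sf'}$ from Proposition \ref{prop:angularbetabflux} directly, whereas the paper re-expands $\divs\Omega\bblin$ once more via \eqref{Bianchi5} into $T\rlin$, $\divs\Omega\blin$, $\otxb$, $\divs\eblin$ before estimating — logically equivalent, since that is precisely how Proposition \ref{prop:angularbetabflux} was proved.
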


\begin{proof}
The estimate (\ref{ombprop2}) is proven as the estimate (\ref{ombprop}) in Proposition~\ref{prop:ombpure} with one angular commutation which we leave to the reader.

For \eqref{ombprop3}, one can compute from Bianchi and commuting partials that in the $\Sf'$ gauge along $C_{u_f}$
\begin{align*}
\partial_v(\partial_u\olinb)=\Omega^2\Big[\Big(\frac{4M}{r^2}-\frac{3\Omega^2}{r}\Big)\rlin+\divs(\Omega\bblin)-\frac{3M}{r^3}\otxb+\frac{4M}{r^3}\olinb\Big].
\end{align*}
From Lemma~\ref{lem:tplemma} we deduce the estimate
\begin{align*}
||\partial_u\olinb_{\Sf'}||_{u_f,v}\lesssim ||\partial_u\olinb_{\Sf'}||_{u_f,v_0}+\sqrt{\int_{v_0}^{v}\Omega^2\left( \| \rlin \|_{u_f,\bar{v}}^2 + \|r\divs(\Omega\bblin)\|_{u_f,\bar{v}}^2 + \frac{1}{r^4} \Big[\| \otxb\|^2_{u_f,\bar{v}} +\| \olinb\|^2_{u_f,\bar{v}}\Big] \right) d\bar{v}}.
\end{align*}
The flux is controlled using (\ref{step1mb}), Proposition~\ref{prop:angularbetabflux}, \eqref{ombprop} and (\ref{trxbprop}) while for the data term we can use
\begin{align*}
\partial_u\olinb_{\Sf'}(u_f,v_0)=\partial_u\olinb_{\Si'}(u_f,v_0)-\frac{3M}{r^3}\Omega^2f_0+\frac{2M^2}{r^4}f_0 \, ,
\end{align*}
whose right-hand side is easily controlled by $\mathbb{D}^2_{data}(u_f,v_0)$.

Turning to (\ref{ombprop4}) we note that 
from commuting the transport equation for $\olinb$ we can derive
\begin{align}
\Omega \slashed{\nabla}_4 (r^2 \slashed{\Delta} \olinb) = - \Omega^2 r^2 \slashed{\Delta} \rlin + \frac{4M}{r^3}\Omega^2 \Delta_{S^2} \Olin \, .
\end{align}
Renormalising with the (commuted) Bianchi equation~\eqref{Bianchi8} we deduce
\begin{align}
\Omega \slashed{\nabla}_4 (r^2 \slashed{\Delta} \olinb- \frac{1}{r}r^3\divs  \Omega \bblin) = \Omega^2 r\divs  \Omega \bblin  -\frac{6M}{r} \Omega^2 \divs  \eblin + \frac{4M}{r^3}\Omega^2 \Delta_{S^2} \Olin  \nonumber \\
= - 2\Omega^2 r T \rlin + \Omega^2 r \divs  \Omega \blin  - \frac{3M}{r^3}\Omega^2 r \otx -\frac{3M}{r^3}\Omega^2 r \otxb  -\frac{6M}{r} \Omega^2 \divs  \eblin + \frac{4M}{r^3}\Omega^2 \Delta_{S^2} \Olin \, , 
\end{align}
where the second equality follows from inserting the Bianchi equation~\eqref{Bianchi5}. Restricting now to the solution in the gauge $\Sf^\prime$, we can integrate the right hand side in $v$ along $C_{u_f}$ using Cauchy--Schwarz and the fluxes already controlled (recall also that $\otx_{\Sf^\prime}=0$ along $C_{u_f}$) to obtain (\ref{ombprop4}) but with the term $\|\Delta_{S^2} \olinb_{\Sf^\prime}-r^2 \divs  \Omega\bblin_{\Sf^\prime}\|^2_{u_f,v_0}$ on the right-hand side. From Lemma~\ref{lem:exactsol} we have the relation
\begin{align*}
r^2 \slashed{\Delta} \olinb_{\Sf^\prime} - r^2 \divs  \Omega \bblin_{\Sf^\prime} = r^2 \slashed{\Delta} \olinb_{\Si^\prime} - r^2 \divs  \Omega \bblin_{\Si^\prime} + \frac{4M}{r} \Omega^2 \slashed{\Delta} f  \, , 
\end{align*}
whose right-hand side is easily controlled by $\mathbb{D}^2_{data}(u_f,v_0)$.
\end{proof}

\begin{corollary} \label{cor:omegabtop}
We have the following flux bound along $C_{u_f}$:
\begin{align} \label{topombflux}
\int_{v_0}^{\infty} \int_{S^2} \frac{\Omega^2}{r^2} \Big| \Delta_{S^2} \olinb _{\Sf^\prime} \Big|^2  \Big|_{u=u_f}dv  \varepsilon_{S^2}   \leq\overline{\mathbb{E}}^{1,0}_{data} (u_f)+ \mathbb{D}_{data}^2(u_f,v_0) \, .
\end{align}
\end{corollary}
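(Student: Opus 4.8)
The plan is to deduce the stated flux bound directly from the two preceding propositions, with no further transport or elliptic input required. The key point is that the flux of $\Delta_{S^2}\olinb_{\Sf^\prime}$ on its own --- a full second angular derivative of $\olinb$ --- cannot be propagated in $v$ along $C_{u_f}$ without a loss of regularity; it is precisely the renormalised combination $\Delta_{S^2}\olinb_{\Sf^\prime} - r^2\divs\,\Omega\bblin_{\Sf^\prime}$ appearing in (\ref{ombprop4}) of Proposition~\ref{prop:higheromegab} that was arranged to satisfy a good transport equation, and the compensating piece $r^2\divs\,\Omega\bblin_{\Sf^\prime}$ is an angular derivative of $\bblin_{\Sf^\prime}$ whose flux is already controlled by Proposition~\ref{prop:angularbetabflux}.

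Concretely, I would write, along $C_{u_f}$,
\[
\Delta_{S^2}\olinb_{\Sf^\prime} = \left(\Delta_{S^2}\olinb_{\Sf^\prime} - r^2\divs\,\Omega\bblin_{\Sf^\prime}\right) + r^2\divs\,\Omega\bblin_{\Sf^\prime} \, ,
\]
and bound the flux of each piece separately. For the first piece, using $\partial_v r = \Omega^2$ along $C_{u_f}$ one has $\int_{v_0}^\infty \frac{\Omega^2}{r^2}\,dv = \frac{1}{r(u_f,v_0)} \le \frac{1}{2M}$, so pulling out the $v$-supremum and invoking the uniform bound (\ref{ombprop4}) of Proposition~\ref{prop:higheromegab} gives a contribution $\lesssim \overline{\mathbb{E}}^{1,0}_{data}(u_f) + \mathbb{D}^2_{data}(u_f,v_0)$. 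For the second piece, since $\divs$ is purely angular one has $r^2\divs\,\Omega\bblin_{\Sf^\prime} = \Omega\, r\,(r\,\divs\,\bblin_{\Sf^\prime})$ with $|r\,\divs\,\bblin_{\Sf^\prime}| \lesssim |[r\slashed{\nabla}]\bblin_{\Sf^\prime}|$, whence
\[
\frac{\Omega^2}{r^2}\left|r^2\divs\,\Omega\bblin_{\Sf^\prime}\right|^2 = \Omega^4\,|r\,\divs\,\bblin_{\Sf^\prime}|^2 \lesssim \Omega^4\,|[r\slashed{\nabla}]\bblin_{\Sf^\prime}|^2 \lesssim \Omega^2 r^2\,|[r\slashed{\nabla}]\bblin_{\Sf^\prime}|^2
\]
(using $\Omega^2 \le 1 \le r^2/(4M^2)$ in the last step); integrating in $v$ over $[v_0,\infty)$ and applying Proposition~\ref{prop:angularbetabflux} bounds this by $\lesssim \overline{\mathbb{E}}^{1,0}_{data}(u_f)$. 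Adding the two contributions yields (\ref{topombflux}).

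I do not anticipate any real obstacle here: all of the substantive analysis --- the transport estimate behind the renormalised $L^\infty_v$-bound (\ref{ombprop4}), and the angular-commuted conservation law together with the Codazzi and Bianchi elliptic estimates behind the flux of $\bblin_{\Sf^\prime}$ in Proposition~\ref{prop:angularbetabflux} --- has already been carried out. The only points that require a little care are the bookkeeping of the $r$- and $\Omega$-weights in the second piece and the insistence on working throughout with the renormalised quantity $\Delta_{S^2}\olinb_{\Sf^\prime} - r^2\divs\,\Omega\bblin_{\Sf^\prime}$, rather than $\Delta_{S^2}\olinb_{\Sf^\prime}$ itself, since the latter cannot be propagated in $v$ at this level of regularity.
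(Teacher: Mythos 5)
Your proposal is correct, and the decomposition $\Delta_{S^2}\olinb_{\Sf^\prime} = (\Delta_{S^2}\olinb_{\Sf^\prime} - r^2\divs\,\Omega\bblin_{\Sf^\prime}) + r^2\divs\,\Omega\bblin_{\Sf^\prime}$ together with the use of the $\sup_v$ bound (\ref{ombprop4}) on the renormalised piece (integrated against $\int_{v_0}^\infty \Omega^2 r^{-2}\,dv = r(u_f,v_0)^{-1}$) is exactly the paper's first step. For the second piece the paper does not invoke Proposition~\ref{prop:angularbetabflux}; instead it re-plugs the Bianchi/transport identity
$r^2\divs\,\Omega\bblin = -2r^2 T\rlin + r^2\divs\,\Omega\blin - 3Mr^{-1}\otx + 3Mr^{-1}\otxb$ (with $\otx_{\Sf^\prime}=0$ along $C_{u_f}$) and then applies (\ref{step6m}), (\ref{step2m}) and (\ref{trxbprop}) directly. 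Your route instead cites Proposition~\ref{prop:angularbetabflux} and absorbs the weight discrepancy via $\Omega^4 \le \Omega^2 \le \Omega^2 r^2/(4M^2)$; since Proposition~\ref{prop:angularbetabflux} was itself proved from that same identity with the same three inputs, the underlying mathematics is identical and your version is a mild, slightly lossier-in-weights repackaging. Both are valid; your version has the advantage of not repeating the Codazzi/Bianchi bookkeeping, at the (harmless) cost of an extra factor of $\Omega^2 r^2$ in the intermediate comparison.
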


\begin{remark}
Note that (\ref{topombflux}) controls all derivatives up to order two by elliptic estimates and the fact that $\olinb$ is supported on $\ell \geq 2$.
\end{remark}

\begin{proof}
The estimate follows from the bound (\ref{ombprop4}) on the spheres $S^2_{u_f,v}$ and the fact that the flux of $\Omega^2 |\divs \Omega \bblin|^2$ can be controlled by (\ref{step6m}), (\ref{step2m}) and (\ref{trxbprop}) after plugging these estimates into the identity
\[
r^2\divs  \Omega \bblin = - 2 r^2 T \rlin + r^2 \divs  \Omega \blin  - \frac{3M}{r^3} r^2 \otx +\frac{3M}{r^3} r^2 \otxb \, .
\]
\end{proof}

\subsubsection{Step 8: Closing the Twice $T$-commuted $\Gamma$-estimate}
We finally prove~\eqref{step8m}. To do this we finally apply the first conservation law with two $\mathcal{L}_T$-commutations:
\begin{align}\label{eqn:F20conslaw}
F^{2,\bf{0}}_{v_f} [\Gamma,\Si' ](u_0,u_f)+F^{2,\bf{0}}_{u_f} [\Gamma,\Si' ](v_0,v_f)=F^{2,\bf{0}}_{v_0} [\Gamma,\Si' ](u_0,u_f)+F^{2,\bf{0}}_{u_0} [\Gamma,\Si' ](v_0,v_f) \, .
\end{align}
By the gauge change formula on $C_{u_f}$ in Proposition~\ref{prop:gaugechange} and~\eqref{GammaGonetime} from Proposition~\ref{prop:Gatinfinity}, we have
\begin{align} \label{g2id}
F^{2, {\bf 0}}_{u_f} [\Gamma, \Si'] \left(v_0,v\right)&= F^{2, {\bf 0}}_{u_f} [\Gamma, \Sf'] \left(v_0,v\right)-\int_{S^2}\mathcal{G}^{2,0}_{\Gamma}(u,v,\theta,\phi)\varepsilon_{S^2}\Big|_{(u_f,v_0)} \nonumber \\
&\qquad- \frac{r^3}{2\Omega^{2}} \int_{S^2} T^2\otx_{\Si^\prime}T^2\otxb_{\Si^\prime} \varepsilon_{S^2}\Big|_{(u_f,v_f)}+\mathrm{R}(v_f),
\end{align}
with the new flux given by
\begin{align}\label{eqn:T2fluxng}
F^{2, {\bf 0}}_{u_f} [\Gamma, \Sf'] \left(v_0,v_f\right)&=\int_{v_0}^{v_f} \int_{S^2}\Bigg[  |[\ns_{T}]^2\Omega\xlin_{\Sf'}|^2+2\Omega^2 |[\ns_{T}]^2\eblin_{\Sf'}|^2 -2 [\ns_{T}]^2\olin_{\Sf'} [\ns_{T}]^2\otxb_{\Sf'} \nonumber \\
&\qquad- \frac{1}{2} |[\ns_{T}]^2\otx_{\Sf'}|^2+ \frac{4M}{r^2} [\ns_{T}]^2\Big(\frac{\Olino}{\Omega}\Big)_{\Sf'} [\ns_{T}]^2\otx_{\Sf'}  \Bigg]r^2du\varepsilon_{S^2}.
\end{align}
For the term $\mathcal{G}^{2,0}_{\Gamma}$ appearing in (\ref{g2id}) we can apply the estimate (\ref{eqn:dataG20}). 
For the ingoing cone $\underline{C}_{v_f}$ we have from~\eqref{niflux} the limiting expression
\begin{align*}
F^{2, {\bf 0}}_{v_f} \left[\Gamma,\Si' \right] \left(u_0,u_f\right)&=  \int_{u_0}^{u_f}\int_{S^2}  |[\ns_{T}]^2\Omega\xblin_{\Si'}|^2r^2du\varepsilon_{S^2}+\left[\frac{r^3}{2}\int_{S^2}T^2\otx_{\Si^\prime}T^2\otxb_{\Si^\prime}\varepsilon_{S^2}\right]_{(u_0,\infty)}^{(u_f,\infty)}+\mathrm{R}(v_f).
\end{align*}
Note the cancellation that is going to appear on the sphere $S^2_{u_f,\infty}$ when taking the limit $v_f\rightarrow \infty$ of~\eqref{eqn:F20conslaw} and inserting the above expressions.

We next evaluate the non-coercive cross-terms in~\eqref{eqn:T2fluxng}. From Proposition~\ref{prop:TTransport} we have, using the properties of the $\Sf'$ gauge (see Proposition~\ref{prop:propnewgauge}), the following identities along $C_{u_f}$:
\begin{align*}
4T^2\otx&=2\Omega^2\rho\olinb-\frac{2\Omega^2}{r}2\Omega^2\rlin+\frac{\Omega^2}{r}\frac{2\Omega^2}{r}\olinb+\frac{2\Omega^2M}{r^2}(\divs\elin+\rlin)-\Omega^2\Big[2\Omega\divs\bblin+\frac{3}{2}\rho\otxb\Big]+2\Omega^2\ds\olinb, \\
4T^2\otxb&= -\frac{2\Omega^2}{r}\Big[\partial_u\olinb-2\Omega^2\rlin\Big]-\frac{\Omega^2}{r}\Big[\frac{2M}{r^2}\otxb+\frac{2\Omega^2}{r}\olinb\Big]\\
&\quad-\frac{M\Omega^2}{r^2}\Big[2(\divs\eblin+\rlin)-\frac{2M}{r^2}\frac{\otxb}{\Omega^2}\Big]+\Omega^2\Big[2\Omega\divs\blin-\frac{3}{2}\rho\olinb\Big],\\
4T^2\olin&=-2\rho\Omega^2\olinb-2\Omega^2\omega\rlin-\Omega^2\Big[2\divs\Omega\blin-\divs\Omega\bblin-\frac{3\Omega^2}{r}\rlin-\frac{3M}{r^3}\otxb\Big] \, , \\
4T^2\big(\Olin\big)&=-2\Omega^2\rlin+\partial_u\olinb \, .
\end{align*}
We therefore have (all integrals being along $C_{u_f}$ and all quantities with respect to the solution $\Sf^\prime$)
\begin{align}
&\Bigg| \int_{v_0}^{v_f} \int_{S^2}\Big[  -2 \ns_{T}^2\olin \ns_{T}^2\otxb - \frac{1}{2} [\ns_{T}^2\otx]^2 + \frac{4M}{r^2} \ns_{T}^2\Big(\frac{\Olino}{\Omega}\Big) \ns_{T}^2\otx  \Big]r^2dv\varepsilon_{S^2} \Bigg| \nonumber \\
&\lesssim  \int_{v_0}^{v_f}\int_{S^2} \frac{\Omega^2}{r^2}\Big[|\olinb|^2+|\partial_u\olinb|^2+|\divs\elin+\rlin|^2+|\divs\eblin+\rlin|^2+|\Delta_{S^2}\olinb|^2+|r\otxb|^2\Big]dv\varepsilon_{S^2} \\
&\nonumber\qquad+\int_{v_0}^{v_f}\int_{S^2}\Omega^2\Big[|r\rlin|^2+|r^2\divs(\Omega\blin)|^2+|r^2\divs(\Omega\bblin)|^2\Big] dv\varepsilon_{S^2} \lesssim \overline{\mathbb{E}}^{1,0}_{data}(u_f)+ {\mathbb{D}}_{data}^2(u_f,v_0) \, , 
\end{align}
where the last inequality follows from Corollary~\ref{cor:omegabtop}, Propositions~\ref{prop:angularetab},~\ref{prop:ombpure},~\ref{prop:chibarpure},~\ref{prop:angularbetabflux},~\ref{prop:higheromegab}, and the estimate in equation~\eqref{step2m}.

We finally conclude from~\eqref{eqn:F20conslaw} after taking the limit $v_f\rightarrow \infty$ and inserting previous estimates
\begin{align*}
 &\int_{u_0}^{u_f} \int_{S^2} |[\ns_{T}]^2\Omega\xblin_{\Si'}|^2r^2dv\varepsilon_{S^2}+\int_{v_0}^{\infty}\int_{S^2} \Big[  |[\ns_{T}]^2\Omega\xlin_{\Sf'}|^2+2\Omega^2 |[\ns_{T}]^2\eblin_{\Sf'}|^2 \Big]r^2du\varepsilon_{S^2}\\
 &\lesssim\frac{r^3}{2}\int_{S^2}T^2\otx_{\Si^\prime}T^2\otxb_{\Si^\prime}\varepsilon_{S^2}\Big|_{(u_0,\infty)}+F^{2,\bf{0}}_{v_0} [\Gamma,\Si' ](u_0,u_f)+F^{2,\bf{0}}_{u_0} [\Gamma,\Si' ](v_0,\infty)\\
 & +C\left( \overline{\mathbb{E}}^{1,0}_{data}(u_f)+ {\mathbb{D}}_{data}^2(u_f,v_0)\right)
= \mathbb{E}^{2,0}_{data}[\Gamma](u_f) +C\left( \overline{\mathbb{E}}^{1,0}_{data}(u_f)+ {\mathbb{D}}_{data}^2(u_f,v_0)\right) \, ,
\end{align*}
where we have inserted the definition (\ref{Ei0datadef}) in the last step. This is \eqref{step8m}.

\subsection{Proof of Theorem \ref{theo:gaugeinvariantoutgoing}} \label{sec:gaugeinvariantoutgoing}
In this section we prove (\ref{finalalpha}) and (\ref{finalalphab}). Note that it suffices to prove (\ref{finalalpha}) and (\ref{finalalphab}) without the zeroth order term as the latter is controlled directly from the angular term by an elliptic estimate. 

Before we begin, we first use the fluxes already controlled through (\ref{secondorderest}) to prove another flux estimate:

\begin{proposition} \label{prop:betabflux}
We have the following estimate for $\bblin_{\Sf'}$ along $C_{u_f}$:
\begin{align}
\int_{v_0}^{\infty}\int_{S^2} r^2\Omega^2| \ns_T (\Omega \bblin)_{\Sf^\prime}|^2dv\varepsilon_{S^2}\lesssim {\mathbb{E}}^{2}_{data}(u_f).
\end{align}
\end{proposition}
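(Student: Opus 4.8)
The plan is to reduce the estimate to a single pointwise identity that expresses $\slashed{\nabla}_T(\Omega\bblin)_{\Sf^\prime}$ along $C_{u_f}$ in terms of quantities whose fluxes along $C_{u_f}$ we have already controlled, namely through (\ref{secondorderest}) and Propositions \ref{prop:angularbetabflux} and \ref{prop:higheromegab}. Since $T$ is Killing it commutes trivially through all the equations of Section \ref{sec:lineqs}, and since $\slashed{\nabla}_T\Omega=0$ we have $\slashed{\nabla}_T(\Omega\bblin)=\Omega\slashed{\nabla}_T\bblin$, so it suffices to bound $\int_{v_0}^{\infty}\int_{S^2} r^2\Omega^2\,|\Omega\slashed{\nabla}_T\bblin_{\Sf^\prime}|^2\,dv\,\varepsilon_{S^2}$.

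First I would commute $T$ through the $\slashed{\nabla}_3$-propagation equation (\ref{propeta}) for $\eblin$ and solve for $\slashed{\nabla}_T\bblin$. Using $2\slashed{\nabla}_T=\Omega(\slashed{\nabla}_3+\slashed{\nabla}_4)$ to trade the non-tangential derivative via $\Omega\slashed{\nabla}_3(\slashed{\nabla}_T\eblin)=2\slashed{\nabla}_T^2\eblin-\Omega\slashed{\nabla}_4(\slashed{\nabla}_T\eblin)$, then inserting the ($T$-commuted) $\slashed{\nabla}_4$-equation (\ref{propeta2}) for $\eblin$ together with the background values of $\Omega\mathrm{tr}\chi$ and $\Omega\mathrm{tr}\underline{\chi}$, one arrives at the identity, valid for any solution,
\begin{align*}
\Omega\slashed{\nabla}_T\bblin=2\slashed{\nabla}_T^2\eblin-2\slashed{\nabla}\slashed{\nabla}_T\olin+\frac{\Omega^2}{r}\slashed{\nabla}_T\eblin-\Omega\slashed{\nabla}_T\blin+\frac{\Omega^2}{r}\slashed{\nabla}_T\elin\, .
\end{align*}
Restricting now to $\Sf^\prime$ along $C_{u_f}$, Proposition \ref{prop:propnewgauge} gives $\olin_{\Sf^\prime}=\Olin_{\Sf^\prime}=0$ on $C_{u_f}$, so (\ref{oml2}) together with $2T=\partial_u+\partial_v$ yields $\slashed{\nabla}_T\olin_{\Sf^\prime}=-\tfrac12\Omega^2\rlin_{\Sf^\prime}$ there, hence $\slashed{\nabla}\slashed{\nabla}_T\olin_{\Sf^\prime}=-\tfrac12\Omega^2\slashed{\nabla}\rlin_{\Sf^\prime}$; and $(\elin+\eblin)_{\Sf^\prime}=0$ on $C_{u_f}$, so (\ref{id1}) yields $\slashed{\nabla}_T\elin_{\Sf^\prime}=-\tfrac12\Omega(\blin_{\Sf^\prime}+\bblin_{\Sf^\prime})+\slashed{\nabla}\olinb_{\Sf^\prime}$. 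Substituting, $\Omega\slashed{\nabla}_T\bblin_{\Sf^\prime}$ along $C_{u_f}$ becomes an explicit combination, with coefficients that are powers of $r$ and $\Omega$, of $\slashed{\nabla}_T^2\eblin_{\Sf^\prime}$, $\Omega^2\slashed{\nabla}\rlin_{\Sf^\prime}$, $\tfrac{\Omega^2}{r}\slashed{\nabla}_T\eblin_{\Sf^\prime}$, $\Omega\slashed{\nabla}_T\blin_{\Sf^\prime}$, $\tfrac{\Omega^3}{r}\blin_{\Sf^\prime}$, $\tfrac{\Omega^3}{r}\bblin_{\Sf^\prime}$ and $\tfrac{\Omega^2}{r}\slashed{\nabla}\olinb_{\Sf^\prime}$.

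Finally I would take $L^2(S^2)$-norms, multiply by $r^2\Omega^2$ and integrate in $v$ along $C_{u_f}$, bounding each term (the $\Omega$-weights being harmless since $\Omega\le1\le r/2M$): the $\slashed{\nabla}_T^2\eblin_{\Sf^\prime}$-term by ${\mathbb{E}}^{2,0}[\Gamma,\Sf^\prime](u_f)$, the $\Omega^2\slashed{\nabla}\rlin_{\Sf^\prime}$-term by $\int\Omega^2r^4|[r\slashed{\nabla}]\rlin_{\Sf^\prime}|^2\le{\mathbb{E}}^{0,1}[\mathcal{R},\Sf^\prime](u_f)$, the $\tfrac{\Omega^2}{r}\slashed{\nabla}_T\eblin_{\Sf^\prime}$-term by ${\mathbb{E}}^{1,0}[\Gamma,\Sf^\prime](u_f)$, the $\Omega\slashed{\nabla}_T\blin_{\Sf^\prime}$-term by ${\mathbb{E}}^{1,0}[\mathcal{R},\Sf^\prime](u_f)$ and the $\tfrac{\Omega^3}{r}\blin_{\Sf^\prime}$-term by ${\mathbb{E}}^{0,0}[\mathcal{R},\Sf^\prime](u_f)$ — all of which are $\lesssim{\mathbb{E}}^2_{data}(u_f)$ by Theorem \ref{prop:firstround}; the $\tfrac{\Omega^3}{r}\bblin_{\Sf^\prime}$-term by Proposition \ref{prop:angularbetabflux} (using that $\bblin_{\Sf^\prime}$ is supported on $\ell\ge2$, so $\|\bblin_{\Sf^\prime}\|_{u_f,v}\lesssim\|[r\slashed{\nabla}]\bblin_{\Sf^\prime}\|_{u_f,v}$); and the $\tfrac{\Omega^2}{r}\slashed{\nabla}\olinb_{\Sf^\prime}$-term by $\sup_{v}\|[r\slashed{\nabla}]\olinb_{\Sf^\prime}\|^2_{u_f,v}$ from Proposition \ref{prop:higheromegab} together with $\int_{v_0}^{\infty}\Omega^2r^{-2}\,dv<\infty$. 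Collecting the bounds gives the claimed estimate. The step I expect to be the crux is the first one: arranging the identity so that the transversal derivative of $\slashed{\nabla}_T\eblin$ is converted into $\slashed{\nabla}_T^2\eblin$ (whose flux is exactly what (\ref{secondorderest}) supplies at the second $T$-order) while, crucially, $\slashed{\nabla}\slashed{\nabla}_T\olin$ collapses in the $\Sf^\prime$-gauge on $C_{u_f}$ to a single angular derivative of $\rlin_{\Sf^\prime}$ — this is what prevents any loss of regularity and dictates the precise combination of fluxes used above.
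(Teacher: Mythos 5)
Your proposal is correct and essentially reproduces the paper's argument: the identity you derive by commuting $T$ through (\ref{propeta}), trading $\Omega\slashed{\nabla}_3=2\slashed{\nabla}_T-\Omega\slashed{\nabla}_4$, and inserting (\ref{propeta2}) is precisely the paper's (\ref{id4}) (which is obtained there by applying $\slashed{\nabla}_T$ to (\ref{id2})), and the specialization to $\Sf'$ on $C_{u_f}$ and the final flux bounds are the same. The only cosmetic difference is that the paper inserts (\ref{id2b}) so that $\tfrac{\Omega^2}{r}\slashed{\nabla}_T(\elin+\eblin)$ collapses directly to $\tfrac{\Omega^2}{r}\slashed{\nabla}\olinb_{\Sf'}$, whereas your substitution of (\ref{id1}) for $\slashed{\nabla}_T\elin$ alone leaves behind redundant $\tfrac{\Omega^3}{r}\blin_{\Sf'}$, $\tfrac{\Omega^3}{r}\bblin_{\Sf'}$ and $\tfrac{\Omega^2}{r}\slashed{\nabla}_T\eblin_{\Sf'}$ terms — these in fact cancel against one another by (\ref{id2}) in the $\Sf'$-gauge on $C_{u_f}$, but you instead estimate them separately, which also works (the $\bblin$-term needing the extra input of Proposition~\ref{prop:angularbetabflux}).
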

\begin{proof}
Solving (\ref{id2}) for $\bblin$ and applying $\slashed{\nabla}_T$ with~\eqref{id2b} from Proposition~\ref{prop:TTransport} yields
\begin{align} \label{id4}
\slashed{\nabla}_T \Omega \bblin = 2 \slashed{\nabla}_T\slashed{\nabla}_T\eblin - \slashed{\nabla}_T (\Omega \blin)+ \frac{\Omega^2}{r} \left( \slashed{\nabla} \olin+  \slashed{\nabla} \olinb\right) -  \slashed{\nabla} \left(\Omega \slashed{\nabla}_4 \olin\right) + \Omega^2 \slashed{\nabla}  \rlin -\frac{2M \Omega^2 }{r^3} \left(\elin + \eblin\right) \, .
\end{align}
Since $\olin_{\Sf^\prime}=0$, $(\elin+ \eblin)_{\Sf^\prime}=0$ along $C_{u_f}$ by Proposition \ref{prop:propnewgauge}, we have from (\ref{id4}) in $\Sf'$-gauge along $C_{u_f}$
\begin{align}
|\slashed{\nabla}_T (\Omega \bblin)_{\Sf^\prime}|^2 
 \lesssim | \slashed{\nabla}_T (\Omega \blin)_{\Sf^\prime}|^2 + |\slashed{\nabla}_T \slashed{\nabla}_T \eblin_{\Sf^\prime}|^2  +  \frac{\Omega^4}{r^4} | [r\slashed{\nabla}] \olinb_{\Sf^\prime}|^2 +\frac{\Omega^4}{r^2}|[r \slashed{\nabla} ]\rlin_{\Sf^\prime}|^2  \, .
\end{align}
Integrating with respect to the measure $\Omega^2 r^2 dv \varepsilon_{S^2}$, the result follows from (\ref{secondorderest}) and Proposition \ref{prop:higheromegab}.
\end{proof}

In the following, all quantities are with respect to the solution $\Sf^\prime$. Recall also that $\ablin_{\Sf^\prime} = \ablin_{\Si^\prime}$. 

We first prove (\ref{finalalphab}). To control the $\ns\Ab$-part of the flux, we write (\ref{Bianchi9}) using (\ref{Bianchi8}) as
 \begin{align*}
 \divs\Ab=2\ns_T(r\Omega\bblin)-\Omega^2[r\Dso](\rlin,\slin)-\frac{6M\Omega^2}{r^2}\eblin-2\Omega^2(\Omega\bblin),
 \end{align*}
whence
\begin{align}
|\divs  \Ab|^2 &\leq 16\Big( |\slashed{\nabla}_T (r\Omega \bblin)|^2 + \Omega^2| [r\ns] \rlin|^2 +  \Omega^2| [r\ns]\slin|^2 + \frac{6M\Omega^4}{r^4} |\eblin|^2 + | \Omega \bblin|^2\Big) \, .
\end{align}
Integrating this along the cone $C_{u_f}$ with respect to the measure $r^2 \Omega^2 dv\varepsilon_{S^2}$ gives the desired control after using Proposition \ref{prop:betabflux} and (\ref{secondorderest}). For $\Omega \slashed{\nabla}_4 \Ab$-part of the flux, we write (\ref{Bianchi10}) as
\begin{align*}
\Omega \slashed{\nabla}_4\Ab&=  2 \Omega^2 r\slashed{\mathcal{D}}_2^\star \Omega \bblin + \frac{6M\Omega^2}{r^2} \Omega \xblin.
\end{align*}
Therefore,
\begin{align*}
|\Omega\slashed{\nabla}_4\Ab|^2\lesssim \Omega^4|[r\ns]\Omega\bblin|^2+\frac{\Omega^4}{r^4}|\Omega\xblin|^2.
\end{align*}
Integrating this along the cone $C_{u_f}$ with respect to the measure $r^2dv\varepsilon_{S^2}$ we obtain the desired control after using Proposition~\ref{prop:angularbetabflux} and  (\ref{trxbpropb}).

We now prove (\ref{finalalpha}). For the angular part of the flux on the left, we write~\eqref{Bianchi2} and~\eqref{Bianchi3} as
\begin{align*}
\divs A=2\ns_T(r\Omega\blin)+2\Big(1-\frac{4M}{r}\Big)\Omega\blin-\Omega^2[r\Dso](-\rlin,\slin)+\frac{6M\Omega^2}{r^2}\eta.
\end{align*}
 Therefore, we can estimate
\begin{align}
| \divs A|^2& \lesssim    | r  \slashed{\nabla}_T (\Omega \blin)|^2 + |\Omega \blin|^2 + \Omega^4 | r \slashed{\nabla} \rlin|^2 + \Omega^4 |r\slashed{\nabla} \slin|^2 + \frac{M^2}{r^4}\Omega^4 |\elin|^2 \, . 
\end{align}
Integrating over the cone  $C_{u_f}$ with respect to the measure $r^2 dv \varepsilon_{S^2}$ 
the result follows from estimating the terms on the right hand side by (\ref{secondorderest}) and 
using the elliptic divergence identity $| \slashed{\nabla} A|^2 +\frac{2}{r^2}|A|^2 =2|\divs A|^2+\divs(\cdot)$. In particular, we also control $A$ itself.

We gain control of $r^2\Omega \slashed{\nabla}_3 A$ using the Bianchi equation~\eqref{Bianchi1},
\begin{align*}
r^2\Omega\ns_3A=2r^2\Omega^2[r\Dst](\Omega\blin)-6M\Omega^2\Omega\xlin.
\end{align*}
This lets us estimate $\Omega \slashed{\nabla}_4 A$ using that $\Omega\ns_4A=2\ns_TA-\Omega\ns_3A$. So,
\begin{align} 
r^4| \Omega \slashed{\nabla}_3 A|^2&\lesssim r^4|r \Omega^2 \slashed{\mathcal{D}}_2^\star (\Omega \blin)|^2 +  \Omega^4|\Omega \xlin|^2,\label{ns3Acontrol}\\
|\Omega \slashed{\nabla}_4 A|^2 &\lesssim | \ns_T A|^2 + | \Omega \slashed{\nabla}_3 A|^2 \, .\label{auxo}
\end{align}
We now have control on $r^4| \Omega \slashed{\nabla}_3 A|^2$ since we control the right hand side of~\eqref{ns3Acontrol} via~\eqref{step2m} and~\eqref{basicuc}. Additionally, to control $|\Omega \slashed{\nabla}_4 A|^2$ we now need to estimate $| \ns_T A |^2$ and use~\eqref{auxo}. To estimate the $| \ns_T A |^2$-term we use (\ref{tchi}) and $\Omega\ns_4=2\ns_T-\Omega\ns_3$
\begin{align*}
-A&=2\ns_T(r\Omega\xlin)+2\Omega^2 r\Dst{ \elin}+\Omega^2\Omega\xblin -2\omega r\Omega\xlin+\Omega^2\Omega\xlin \, .
\end{align*}
We apply $\slashed{\nabla}_T$ to deduce an estimate for $| \ns_T A |^2$. This yields
\begin{align}
| \slashed{\nabla}_T A|^2 
&\lesssim  | \ns_T \ns_T (r\xlin \Omega) |^2 +\Omega^4  | \ns_T [r\slashed{\nabla}] {\elin} |^2 + \Omega^4 |\ns_T (\Omega \xblin)|^2 + \frac{1}{r^2} |\slashed{\nabla}_T (\Omega r \xlin) |^2  .
\end{align}
It is now manifest that integrating over the cone  $C_{u_f}$ with respect to the measure $ dv \varepsilon_{S^2}$, one can control all terms on the right hand side by (\ref{secondorderest}), except  the terms $\Omega^4 |\ns_T (\Omega \xblin)|^2$ and $\Omega^4  | \ns_T [r\slashed{\nabla}] {\elin} |^2$.

To deal with the term $|\ns_T (\Omega r\xblin)|^2$, we combine (\ref{tchi}) and (\ref{chih3b}) to write 
\begin{align*}
\ns_T (\Omega\xblin)=-\frac{1}{2r}\Ab-\frac{M}{r^2}(\Omega\xblin)+\frac{\Omega^2}{2r}\Big[\Omega\xblin+\Omega\xlin\Big]-\Omega^2\Dst\eblin.
\end{align*}
It follows that
\begin{align}
{\Omega^4}|\ns_T (\Omega \xblin)|^2 \lesssim  \frac{\Omega^4}{r^2} |\Ab|^2 + \frac{\Omega^4}{r^4} |r \Omega \xblin|^2 + \frac{\Omega^8}{r^2} |\Omega \xlin|^2 + \frac{\Omega^8}{r^2} |r \slashed{\nabla} \eblin|^2 \, .
\end{align}
After integrating with respect to the measure $ dv \varepsilon_{S^2}$ over $C_{u_f}$ one controls all terms on the right through  (\ref{finalalphab}), (\ref{trxbpropb}) and Proposition~\ref{prop:angularetab} respectively.

For the term $\Omega^4  | \ns_T [r\slashed{\nabla}] {\elin} |^2$ we recall (\ref{id1}) to estimate 
\begin{align}
\Omega^4  | \ns_T [r\slashed{\nabla}] {\elin} |^2 \lesssim \Omega^4 |r \slashed{\nabla} \blin|^2 + \Omega^4 |r\slashed{\nabla} \bblin|^2+ \frac{\Omega^4}{r^2} |r \slashed{\nabla} \elin|^2 +  \frac{\Omega^4}{r^2} |r \slashed{\nabla} \eblin|^2 + \frac{\Omega^4}{r^2}  |r^2 \slashed{\nabla}\slashed{\nabla} \olinb|^2 \, .
\end{align}
After integrating with respect to the measure $ dv \varepsilon_{S^2}$ over $C_{u_f}$ one controls all terms on the right through  (\ref{secondorderest}), Proposition~\ref{prop:angularbetabflux}, Proposition~\ref{prop:angularetab} and Corollary~\ref{cor:omegabtop}. This completes the proof of (\ref{finalalpha}).

\section{Controlling the Teukolsky Fluxes along Ingoing Cones} \label{sec:ingoing}
In this final section we obtain bounds analogous to the ones appearing in Theorem \ref{theo:gaugeinvariantoutgoing} but on the \emph{ingoing} cones. A natural approach would be to repeat the arguments of Section \ref{sec:outgoingcone} now changing the gauge on the ingoing cone to satisfy $\otxb_{\Sf^\prime}=0$. However, this argument does not seem to close as the change of gauge now produces a term of the wrong sign as opposed to the good term on null infinity (the first term in (\ref{finge})) generated in the outgoing argument. We remedy this by obtaining the ingoing flux through a standard energy estimate for the Teukolsky equation itself. The main idea is that the spacetime term produced by the first order term in the Teukolsky equation (which a priori could lead to exponential growth) can grow at most linearly in time by the estimate we already proved on the outgoing flux. 

\subsection{The Teukolsky Equations for $A$ and $\underline{A}$}
We recall the form of the Teukolsky equations for the quantities $A=r\Omega^2\alin$ and $\underline{A}=r\Omega^2\ablin$.\footnote{Of course they can also be derived directly from the equations of Section \ref{sec:lineqs}.}
\begin{align} \label{TeuA}
 \Omega\ns_4(\Omega\ns_3A)-\Omega^2\slashed{\Delta}A+\frac{4(r-3M)}{r^2}\Omega\ns_{3}A+\frac{2\Omega^2(r+3M)}{r^3}A = 0
\end{align}
\begin{align} \label{TeuAb}
\Omega \slashed{\nabla}_4 (\Omega \slashed{\nabla}_3 \underline{A}) -\Omega^2\slashed{\Delta}\underline{A}+\frac{4(r-3M)}{r^2}\Omega\ns_{4}\underline{A}+\frac{2\Omega^2(r+3M)}{r^3}\underline{A} = 0
\end{align}
While the equations take a nice ``symmetric" form when written in this way, we recall from the well-posedness theory (see (\ref{hozextendQ})) that it is $A$ and $\underline{A} \Omega^{-4}$ which extend regularly to the horizon. For $\hat{\underline{\mathrm{A}}}=\Omega^{-4}\underline{A}$ we have the equation
\begin{align}\label{TeuAbII}
\Omega\ns_4(\Omega\ns_3\hat{\underline{\mathrm{A}}})-\Omega^2\slashed{\Delta}\hat{\underline{\mathrm{A}}}-\frac{4\Omega^2}{r}\Omega\ns_4\hat{\underline{\mathrm{A}}}+\frac{4M}{r^2}\Omega\ns_3\hat{\underline{\mathrm{A}}}+\frac{2 (r-M)\Omega^2}{r^3}\hat{\underline{\mathrm{A}}}=0 \, .
\end{align}

\subsection{Energies and Fluxes}
We define for $v \geq v_0$ the fluxes
\begin{align}
\mathbb{F} [A] (v)&= \int_{u_0}^\infty \Omega^2\left[ |\Omega^{-1} \slashed{\nabla}_3 A|^2  + \frac{|[r\slashed{\nabla}] A|^2+|A|^2}{r^2}  \right] du \varepsilon_{S^2}  \, , \nonumber \\
\mathbb{F} [\underline{A}] (v)&= \int_{u_0}^\infty \Omega^2 \left[ |\Omega^{-1} \slashed{\nabla}_3 (\underline{A}\Omega^{-4})|^2  + \frac{|[r\slashed{\nabla}](\underline{A}\Omega^{-4})|^2+|(\underline{A}\Omega^{-4})|^2}{r^2}  \right]du \varepsilon_{S^2} \, .
\end{align}
Note that these fluxes are the non-degenerate (near the horizon) energy fluxes naturally associated with the Teukolsky operator when contracted with a globally timelike vectorfield which asymptotically becomes $\partial_t$. 

\subsection{The Main Result}
We obtain the following estimates along arbitrary ingoing cones:
\begin{corollary} \label{cor:main}
Under the assumptions of Theorem \ref{theo:gaugeinvariantoutgoing} we have in addition the bounds
\begin{align} \label{Aingoingest}
\sup_{v \geq v_0} \mathbb{F} [A] (v) &\leq C \cdot \mathbb{F} [A] (v_0) + C D \, , 
\\
 \label{Abingoingest}
\sup_{v \geq v_0} \mathbb{F} [\underline{A}] (v) &\leq C \cdot \mathbb{F} [\underline{A}] (v_0) + C D \, , 
\end{align}
where $D:= \sup_{u_f \in [u_0,\infty)} \mathbb{E}^2_{data} (u_f)$ with $\mathbb{E}^2_{data} (u_f)$ the quantity appearing on the right hand side of (\ref{finalalpha}).
\end{corollary}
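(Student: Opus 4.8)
The plan is to prove Corollary~\ref{cor:main} by a standard multiplier (energy) estimate for the Teukolsky equations~\eqref{TeuA} and~\eqref{TeuAbII}, using the outgoing flux bound~\eqref{finalalpha} of Theorem~\ref{theo:gaugeinvariantoutgoing} to absorb the only problematic term --- the one coming from the first-order (non-Lagrangian) term in the equation. I will present the argument for $A$; the case of $\hat{\underline{\mathrm{A}}}=\Omega^{-4}\underline{A}$ is entirely analogous using~\eqref{TeuAbII} (note the first-order term there is $\frac{4M}{r^2}\Omega\ns_3\hat{\underline{\mathrm{A}}}$, with the $\Omega\ns_4$ term carrying a favourable sign near the horizon, so no new difficulties arise).

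First I would set up the energy identity. Contract~\eqref{TeuA} with $\Omega^{-1}\ns_3 A$ (times the sphere volume form) and integrate over the spacetime region $\mathcal{D}(v_1)$ bounded by $C_{u_0}$, $\underline{C}_{v_0}$, $\underline{C}_{v_1}$ and (formally) $\mathcal{H}^+$, using $\Omega\ns_4 = \partial_v$ acting on scalars and the Leibniz rules from Section~\ref{sec:Preliminaries}. The terms $\Omega\ns_4(\Omega\ns_3 A)\cdot\Omega^{-1}\ns_3 A$ and $-\Omega^2\ds A\cdot\Omega^{-1}\ns_3 A$ together produce, after integration by parts on the sphere and in $v$, the flux $\mathbb{F}[A](v_1)$ on $\underline{C}_{v_1}$, a non-negative flux on $\mathcal{H}^+$, minus $\mathbb{F}[A](v_0)$ on $\underline{C}_{v_0}$, plus a flux on $C_{u_0}$ controlled by the data (this is $\lesssim\mathbb{F}[A](v_0)$ up to lower-order terms, or can be absorbed into $D$ using Definition~\ref{def:extendskri}), modulo bulk error terms with favourable or borderline sign. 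The zeroth-order term $\frac{2\Omega^2(r+3M)}{r^3}A\cdot\Omega^{-1}\ns_3 A$ is a total $v$-derivative up to a good-signed (Hardy-controllable) bulk and contributes only to fluxes and lower-order bulk terms. The genuinely dangerous term is
\begin{align}
\int_{\mathcal{D}(v_1)} \frac{4(r-3M)}{r^2}\,\Omega^2\,|\Omega^{-1}\ns_3 A|^2 \, du\, dv\, \varepsilon_{S^2}\, .
\end{align}
A naive Gr\"onwall argument here would give exponential growth in $v$.

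The key observation --- and the main point of the section --- is that this bulk term is exactly an integral of (part of) the \emph{outgoing} flux density over $v$: writing $du\,dv\,\varepsilon_{S^2}$ and using that on each cone $C_u$ the integrand $\Omega^2 r^{-2}\cdot r^2|\Omega^{-1}\ns_3 A|^2$ appears with the $r^4|\Omega^{-1}\ns_3 A|^2$ weight that is controlled uniformly in $u$ by~\eqref{finalalpha}, one gets
\begin{align}
\Big|\int_{\mathcal{D}(v_1)}\frac{4(r-3M)}{r^2}\Omega^2|\Omega^{-1}\ns_3 A|^2\Big| \lesssim \int_{u_0}^{\infty}\Big(\int_{v_0}^{v_1}\frac{\Omega^2}{r^4}\,r^4|\Omega^{-1}\ns_3 A|^2(u,v)\,dv\Big)du \lesssim \int_{u_0}^{\infty}\frac{\mathbb{E}^2_{data}(u)}{\cdots}\,du\, ,
\end{align}
which, after a careful bookkeeping of $r$-weights (here one uses $\Omega^2 \lesssim 1$ and that $r\geq 2M$, together with the fact that along $C_u$ one has $\int\frac{\Omega^2}{r^2}dv<\infty$ with value $\tfrac{1}{2M}$ as in Proposition~\ref{prop:chibarpure}), is bounded by $C\,D$. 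Thus the dangerous bulk term does not grow in $v_1$ at all: it is bounded by the $u$-integral of the uniform outgoing energy, which is finite for our class of data. The remaining bulk terms (from the zeroth-order term and from the $r$-dependence of the weights in the integration by parts) are lower order and controlled by a Hardy inequality on the sphere plus $\mathbb{F}[A](v)$ integrated in $v$ --- but now these genuinely lower-order terms can be handled by an elementary Gr\"onwall in $v$ \emph{only after} the bad term has been removed, OR, more simply, by noting they too are dominated by the same spacetime integral of the outgoing density.

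The hard part will be the precise $r$-weight and horizon bookkeeping: one must check that the multiplier $\Omega^{-1}\ns_3 A$ is exactly the one adapted to the non-degenerate flux $\mathbb{F}[A]$, that all boundary terms on $\mathcal{H}^+$ have the correct (non-negative) sign so they can be discarded, that the $C_{u_0}$ data flux is finite (using~\eqref{decreten}), and --- most importantly --- that the coefficient $\frac{4(r-3M)}{r^2}$, which changes sign at $r=3M$ (photon sphere), does not spoil the argument; it does not, because we are not trying to exploit any sign of it, only to bound its contribution in absolute value by the outgoing flux. Once~\eqref{Aingoingest} is established, an integration-by-parts in $v$ over $[v_0,v_1]$ applied to the flux identity gives $\sup_{v\geq v_0}\mathbb{F}[A](v)\leq C\mathbb{F}[A](v_0)+CD$, and the identical scheme applied to~\eqref{TeuAbII} (with multiplier $\Omega^{-1}\ns_3\hat{\underline{\mathrm{A}}}$ and using the outgoing bound~\eqref{finalalphab} together with the redshift-favourable $\Omega\ns_4$ term) yields~\eqref{Abingoingest}. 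I would then remark that no reference to the decoupled structure beyond~\eqref{TeuA}--\eqref{TeuAbII} is used, consistent with the philosophy of the paper.
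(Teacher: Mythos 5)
Your high-level plan --- use a multiplier identity for \eqref{TeuA} and bound the problematic bulk term produced by the first-order coefficient $\tfrac{4(r-3M)}{r^2}$ via the uniform outgoing flux estimate \eqref{finalalpha} --- is indeed the same idea the paper uses. But there is a genuine gap in the central quantitative claim, and it propagates into the closing argument.

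You assert that
\begin{equation*}
\Big|\int_{\mathcal{D}(v_1,v_2)}\tfrac{4(r-3M)}{r^2}\,\Omega^2\,|\Omega^{-1}\ns_3 A|^2\Big|\lesssim D
\end{equation*}
\emph{independently of $v_2-v_1$}, by writing the integrand as $\tfrac{\Omega^2}{r^4}\cdot r^4|\Omega^{-1}\ns_3 A|^2$ and integrating the $r$-weight out in $u$. This does not work. The bound $\int_{v_0}^\infty r^4|\Omega^{-1}\ns_3 A|^2\,dv\lesssim D$ holds for each fixed outgoing cone, so the spacetime integral is bounded by $D\cdot\int_{u_0}^\infty\big(\sup_{v\in[v_1,v_2]}\tfrac{\Omega^2}{r^4}(u,v)\big)\,du$; but the weight $\tfrac{\Omega^2}{r^4}$ is bounded below by a positive constant (of size $\sim M^{-4}$) throughout the diamond-shaped region where $r$ is comparable to a fixed $O(M)$ value, and the $u$-extent of that region is comparable to $v_2-v_1$. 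Hence the spacetime integral grows \emph{linearly} in $v_2-v_1$; no bookkeeping of $\Omega$- and $r$-weights can remove that growth, because neither weight decays there. This is exactly what the paper's estimate \eqref{ims} and its three-region decomposition ($\mathcal{A}_1,\mathcal{A}_2,\mathcal{A}_3$) encode: the middle region $\mathcal{A}_2$ contributes $CD(v_2-v_1)$. Once you accept the linear growth, the naive Gr\"onwall in $v$ that you propose no longer closes --- an estimate of the form $\mathbb{F}(v_2)\le C\mathbb{F}(v_1)+CD(v_2-v_1)+CD$ with $C>1$, taken alone, yields exponential growth under iteration. The paper circumvents this by contracting with the redshift vectorfield $N$ (not $\Omega^{-1}\ns_3 A$), which, after Cauchy--Schwarz and \eqref{ims}, furnishes the additional good bulk term $\int_{v_1}^{v_2}\mathbb{F}[A](v)\,dv$ on the left, and then closes via a standard pigeonhole/averaging argument, not by naive Gr\"onwall.

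Two further points. First, the multiplier $\Omega^{-1}\ns_3 A$ does not generate the full flux $\mathbb{F}[A]$ on $\underline{C}_{v_1}$: the angular contribution $\Omega^2 r^{-2}(|r\ns A|^2+|A|^2)$ in $\mathbb{F}[A]$ arises from the $e_4$-component of the multiplier (present in $N$ through its $T$-part), whereas contracting with $\Omega^{-1}\ns_3 A$ produces those angular terms only on the $u$-boundaries $\mathcal{H}^+$ and $C_{u_0}$, not on $\underline{C}_{v_1}$. Second, your remark that for $\underline{A}$ ``no new difficulties arise'' is too optimistic: the paper stresses (Section~\ref{sec:Abingoingest}) that for $\underline{A}$ the redshift multiplier is \emph{essential}, because the outgoing flux \eqref{finalalphab} degenerates in $\Omega^2$ near the horizon and does not contain the $\ns_3\underline{A}$-derivative at all; one must genuinely exploit the favourable redshift spacetime term in $r\le r_0$ to absorb the near-horizon bulk. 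A multiplier confined to $\Omega^{-1}\ns_3\hat{\underline{\mathrm{A}}}$, as you propose, does not produce that favourable term.
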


We will prove the estimate (\ref{Aingoingest}) in Section \ref{sec:Aingoingest} and the estimate (\ref{Abingoingest}) in Section \ref{sec:Abingoingest}. 

We let $N$ be the redshift vectorfield from~\cite{DR}. In particular, for parameters $\sigma>0$, $2M<r_0<7/4M$, we let $N= \chi(r) \left(1+2M \sigma \Omega^2\right) \Omega^{-1} e_3 + \left(1+2M \sigma\Omega^2 \chi(r)\right) T$ where $\chi$ is a radial monotonically decreasing cut-off function equal to $1$ in $[2M, r_0]$ and equal to zero for $r \geq R:=5/2M$. Note that $N$ is uniformly timelike for $r\geq 2M$ and that $N= T$ for $r \geq R$. In the following, $C$ denotes a constant depending only on $\sigma$ and $r_0$, which may change from line to line.

\subsection{Proof of the Estimate on $A$} \label{sec:Aingoingest}

We start by applying the energy estimate associated to the $N$ multiplier for the Teukolsky equation (\ref{TeuA}) in a region $\mathcal{A}(v_1,v_2):=\left[u_0, \infty\right) \times [v_1,v_2] \times \mathbb{S}^2$ for arbitrary $v_0 \leq v_1 \leq v_2 < \infty$.
\begin{figure}[H]
\centering
\begin{tikzpicture}
            \draw [thick] (6,3) -- (8.95,5.95);
      \draw [dashed]  (11.95,3.05)--(10.5,4.5);
       \draw [dashed] (9.05,5.95) -- (9.5,5.5);
       \draw [dashed] (9.05,0.05) -- (11.95,2.95);
    \draw [thick] (6,3) -- (8.95,0.05);    
    
       \draw [thick,dashed] (8.95,5.95) to [bend right=38] (8.95,0.05);
        \node  at (8.75,0.75) {\large $R$};

    \draw [thick,teal,fill=lightgray,opacity=0.5] plot  coordinates {(8,4)(9.5,5.5)(10.5,4.5)(9,3)(8,4)};
    
     \draw [thick,teal,fill=lightgray,opacity=0.5] plot  coordinates { (9,3) (10.5,4.5) (11.35,3.65)(9.85,2.15) (9,3)};
     
       \draw [thick,teal,fill=lightgray,opacity=0.5] plot  coordinates {(6.1,3.1)(8.45,0.75)(8.85,1.15)(6.5,3.5) (6.1,3.1)};

    \draw [thick,teal,fill=blue,opacity=0.5] plot  coordinates {(7,3)(8,4)(7.5,4.5)(6.5,3.5)(7,3)};
     \node [ inner sep=-1pt] at (7.25,3.75) {$\mathcal{A}_1$};
    
  \draw [thick,teal,fill=orange,opacity=0.5] plot  coordinates {(7,3)(8,4)(9,3)(8,2)(7,3)};
   \node [ inner sep=-1pt] at (8,3) {$\mathcal{A}_2$};
  
            \draw [thick,teal,fill=green,opacity=0.5] plot  coordinates {(8.85,1.15) (8,2) (9,3)(9.85,2.15) (8.85,1.15)};
              \node [ inner sep=-1pt] at (8.95,2.05) {$\mathcal{A}_3$};

  
 		\node [fill=white, inner sep=-1pt] at (11.5,4) {$u_0$};
         \node [fill=white, inner sep=-1pt] at (10.7,4.7) {$u_1$};
          \node [fill=white, inner sep=-1pt] at (9.7,5.7) {$u_2$};
          
            \node [fill=white, inner sep=-1pt] at (6.3,3.7) {$v_1$};
          \node [fill=white, inner sep=-1pt] at (7.3,4.7) {$v_2$};
           \node [fill=white, inner sep=-1pt] at (5.9,3.3) {$v_0$};

        \node[mark size=2pt,inner sep=2pt] at (12,3) {$\circ$};
      \node[mark size=2pt] at (9,6) {$\circ$};
          \node[mark size=2pt] at (9,0) {$\circ$};
      \node (scriplus) at (12,3.75) {\large $\mathcal{I}^{+}$};
        \node (Hplus) at (8,5.5) {\large $\mathcal{H}^{+}$};    
      \node (iplus) at (9.25,6.3) {\large $i^{+}$};
      \node (inaught) at (12.3,3) {\large $i^{0}$};

      \end{tikzpicture}  
\end{figure}
\noindent Fixing $\sigma$ and $r_0$, we obtain\footnote{For the argument in this subsection any choice of $\sigma, r_0$ will work since one does \emph{not} have to exploit that the spacetime term produced by the $N$-multiplier estimate has a favourable sign in $2M \leq r \leq r_0$ for a suitable choice of $\sigma$ and $r_0$.}
\begin{align} \label{eeA}
\mathbb{F} [A](v_2) \leq C \cdot \mathbb{F} [A](v_1) + CD + C \int_{v_1}^{v_2} \int_{u_0}^\infty \int_{\mathbb{S}^2} \left[ \frac{1}{r} |\slashed{\nabla}_N A| |\slashed{\nabla}_3 A|  + \mathbf{1}_{r \leq R} \Omega^2 |DA|^2 \right]dudv \varepsilon_{S^2}  \, .
\end{align}
Here the first term in the integrand on the right comes from the first order term in the Teukolsky equation while the second term comes from the fact that the vectorfield $N$ is not Killing for $r \leq R$. In particular, $\mathbf{1}_{r \leq R}$ is the indicator function and $|DA|^2 = |\Omega^{-1} \slashed{\nabla}_3 A|^2 + |\Omega \slashed{\nabla}_4 A|^2 + |\slashed{\nabla} A|^2$ denotes the sum of all regular derivatives squared. The first two terms on the right come from controlling the fluxes on $v=v_1$ and $u=u_0$ respectively (using Theorem \ref{theo:gaugeinvariantoutgoing} for the latter). The term on the left comes from the positive future flux on $v=v_2$ (the good term appearing on the horizon we simply dropped).

Denoting $D= \sup_{u_f} \mathbb{E}^2_{data} (u_f)$, we claim that the estimate of Theorem \ref{theo:gaugeinvariantoutgoing} implies the estimate
\begin{align} \label{ims}
\int_{u_0}^\infty \int_{v_1}^{v_2}  \int_{\mathbb{S}^2} du dv  d\theta d\phi \sin \theta \Omega^2  \left[ |\Omega^{-1} \slashed{\nabla}_3 A|^2 + \frac{| \Omega \slashed{\nabla}_4 A|^2 +  |r\slashed{\nabla} A|^2 + |A|^2 }{r^2} \right]   \leq  C D\left(v_2-v_1\right) + C D \, .
\end{align}
To verify (\ref{ims}) we split the domain of integration $\mathcal{A}(v_1,v_2)=\mathcal{A}= \mathcal{A}_1 \cup \mathcal{A}_2 \cup \mathcal{A}_3$ where $\mathcal{A}_1 = \mathcal{A} \cap \{ u \geq v_2 - R^\star\}$, $\mathcal{A}_2 = \mathcal{A} \cap \{ v_1 - R^\star \leq u \leq v_2 - R^\star\}$ and $\mathcal{A}_3 = \mathcal{A} \cap \{ u \leq v_1-R^\star \}$, so that the decomposition is disjoint up to a set of measure zero, as indicated in the figure above. For $\mathcal{A}_1$ one then observes that
\begin{align} 
 & \ \ \ \ \  \int_{v_2-R^\star}^\infty du \int_{v_1}^{v_2} dv \int_{\mathbb{S}^2} \Omega^2  \left[ |\Omega^{-1} \slashed{\nabla}_3 A|^2 + \frac{| \Omega \slashed{\nabla}_4 A|^2 +  |r\slashed{\nabla} A|^2 + |A|^2 }{r^2} \right]  \nonumber \\
&\leq \int_{v_2-R^\star}^\infty du \Omega^2 (u,v_2) \sup_{u \geq v_2 -R^\star} \int_{v_1}^{v_2} dv \int_{\mathbb{S}^2}  \left[ |\Omega^{-1} \slashed{\nabla}_3 A|^2 + \frac{| \Omega \slashed{\nabla}_4 A|^2 +  |r\slashed{\nabla} A|^2 + |A|^2 }{r^2} \right] \leq C(R-2M) D \, , \nonumber
\end{align}
where we have used the monotonicty and angular independence of $\Omega^2$ in the first inequality and the fact that $\Omega^2 = -\partial_u r$ (and of course the uniform boundedness for the outgoing fluxes following from Theorem \ref{theo:gaugeinvariantoutgoing}) in the last. For $\mathcal{A}_2$ we note
\begin{align} 
& \ \ \ \ \ \int_{v_1-R^\star}^{v_2-R^\star} du \int_{v_1}^{v_2} dv \int_{\mathbb{S}^2} \Omega^2  \left[ |\Omega^{-1} \slashed{\nabla}_3 A|^2 + \frac{| \Omega \slashed{\nabla}_4 A|^2 +  |r\slashed{\nabla} A|^2 + |A|^2 }{r^2} \right]  \nonumber \\
&\leq (v_2 - v_1) \sup_{v_1-R^\star \leq u \leq v_2 -R^\star} \int_{v_1}^{v_2} dv \int_{\mathbb{S}^2}  \left[ |\Omega^{-1} \slashed{\nabla}_3 A|^2 + \frac{| \Omega \slashed{\nabla}_4 A|^2 +  |r\slashed{\nabla} A|^2 + |A|^2 }{r^2} \right]  \leq C D (v_2-v_1) \, , \nonumber
\end{align}
where we have used $\Omega^2 \leq 1$ and of course Theorem \ref{theo:gaugeinvariantoutgoing}. Finally for $\mathcal{A}_3$ we note
\begin{align} 
& \ \ \ \ \ \int_{u_0}^{v_1-R^\star} du \int_{v_1}^{v_2} dv \int_{\mathbb{S}^2} \Omega^2  \left[ |\Omega^{-1} \slashed{\nabla}_3 A|^2 + \frac{| \Omega \slashed{\nabla}_4 A|^2 +  |r\slashed{\nabla} A|^2 + |A|^2 }{r^2} \right]  \nonumber \\
&\leq \int_{u_0}^{v_1-R^\star} \frac{1}{r^2} \left(u,v_1\right)du \sup_{u_0 \leq u \leq v_1-R^\star} \int_{v_1}^{v_2} dv \int_{\mathbb{S}^2}   \left[ r^2 |\Omega^{-1} \slashed{\nabla}_3 A|^2 + | \Omega \slashed{\nabla}_4 A|^2 +  |r\slashed{\nabla} A|^2 + |A|^2 \right]  \leq \frac{CD}{R} \, , \nonumber
\end{align}
where we have used the monotonicity of $\frac{1}{r^2}$ in the $v$ direction, the boundedness of $\partial_ur=-\Omega^2$ in $\mathcal{A}_3$ and Theorem \ref{theo:gaugeinvariantoutgoing}. 

Having established (\ref{ims}), it is easy to see that one can invoke the Cauchy-Schwarz inequality to deduce from (\ref{eeA}) the following estimate, valid for any $v_0 \leq v_1 \leq v_2 <\infty$,
\begin{align}
\mathbb{F} [A](v_2) +  \int_{v_1}^{v_2} \mathbb{F} [A](v) dv  \leq C\mathbb{F} [A](v_1) + C D\left(v_2-v_1\right) + C D \, .
\end{align}
A standard argument now gives
\begin{align}
\sup_{v \geq v_0} \mathbb{F}[A] (v) \lesssim  \mathbb{F}[A] (v_0) + CD \, ,
\end{align}
which is the desired (\ref{Aingoingest}).

\subsection{Proof of the Estimate on $\underline{A}$}  \label{sec:Abingoingest}
The argument for $\underline{A}$ is slightly more involved because the fluxes obtained in Theorem \ref{theo:gaugeinvariantoutgoing} do not have optimal $\Omega^2$ weights near the horizon and because the $\slashed{\nabla}_3 \underline{A}$-derivative is not part of the outgoing flux (and cannot possibly be as there is no Bianchi equation for this derivative). However, we can exploit the redshift effect in that the estimate of the vectorfield $N$ will produce a spacetime term of the right sign and with the correct weights near the horizon. The details are as follows. 

We claim that there exists $\sigma>0$ and $r_0$ close to the horizon such that for all $v_0 \leq v_1 \leq v_2 < \infty$ contracting (\ref{TeuAbII}) with $\slashed{\nabla}_N (\Omega^{-4} \underline{A})$ produces the following estimate after integration over the region $\mathcal{A}(v_1,v_2)$ (with $d\mu = dudv d\theta d\phi \sin \theta$):
\begin{align} \label{eeA}
\mathbb{F} [\underline{A}](v_2) +  c \int_{v_1}^{v_2} dv  \mathbb{F} [\underline{A}](v) +  \int_{v_1}^{v_2} \int_{u_0}^\infty \int_{\mathbb{S}^2} d\mu  \, \mathbf{1}_{r \leq r_0} \Omega^2 |\Omega \slashed{\nabla}_4 (\Omega^{-4} \underline{A})|^2 \leq C \cdot \mathbb{F} [\underline{A}](v_1) \nonumber \\
  + C D  + C_{r_0} \int_{v_1}^{v_2} \int_{u_0}^\infty \int_{\mathbb{S}^2} d\mu \,  \, \mathbf{1}_{r \geq r_0} \Omega^2 \left[ |\Omega \slashed{\nabla}_4 (\Omega^{-4} \underline{A})|^2 + \frac{ |r \slashed{\nabla} (\Omega^{-4} \underline{A})|^2 +|\underline{A}|^2 }{r^4} \right]  \, ,
\end{align}
where we recall $D=\sup_{u_f} \mathbb{E}^2_{data} (u_f)$. 
The proof of the estimate (\ref{eeA}) proceeds along the following lines. Upon contraction with $\slashed{\nabla}_N (\Omega^{-4} \underline{A})$ we have the following identity
\begin{align}
0&=\partial_u\Big(\frac{g}{4}|\Omega\ns_4\abh|^2+\Big[\frac{\Omega^2g}{2}+f\Big]\Big[\frac{1}{2}|\ns\abh|^2+\frac{ (r-M)}{r^3}|\abh|^2\Big]\Big)\label{TeuAbIIEI}\\
&\nonumber\qquad+\partial_v\Big(\frac{\Omega^2}{2}\Big[\frac{\Omega^2g}{2}+f\Big]|\Omega^{-1}\ns_3\abh|^2+\frac{\Omega^2g}{2}\Big[\frac{1}{2}|\ns\abh|^2+\frac{ (r-M)}{r^3}|\abh|^2\Big]\Big) \\
&\nonumber\qquad+\Omega^2\Big[\frac{g'}{4}-\frac{2g}{r}\Big]|\Omega\ns_4\abh|^2+\Omega^2\Big[\frac{f'}{2}-\frac{f}{r}\Big]|\ns\abh|^2+\frac{\Omega^2}{r^3}\Big[(r-M)f'-\Big(2-\frac{3M}{r}\Big)f\Big]|\abh|^2\\
&\nonumber\qquad+\Omega^2\Big[\frac{5M}{r^2}f+\frac{2M}{r^2}\Omega^2g-\frac{\Omega^2f'}{2}-\frac{\Omega^2g'}{4}\Big]|\Omega^{-1}\ns_3\abh|^2-\frac{2\Omega^2}{r}\Big[\Big(1-\frac{3M}{r}\Big)g+2f\Big]\langle\Omega^{-1}\ns_3\abh,\Omega\ns_4\abh\rangle.
\end{align}
\begin{itemize}
\item Upon integration in $u,v$, the first two terms of \eqref{TeuAbIIEI} produce two good future boundary terms (fluxes) since $f,g>0$. The good term on the horizon we drop. The past fluxes are easily estimated by $\mathbb{F} [\underline{A}](v_1)$ for the ingoing and by $D$ for the outgoing flux using Theorem \ref{theo:gaugeinvariantoutgoing}.

\item One can check that the coefficient of $|\Omega^{-1}\slashed{\nabla}_3 \abh|^2$-term in the last line is globally positive. Moreover, it controls a spacetime term of the type $Mr^{-2} \Omega^2 |\Omega^{-1} \slashed{\nabla}_3 \abh|^2$.

\item Picking $r_0$ close enough to $2M$ allows one to establish that for $r\leq r_0$ (i.e.~near the horizon), the terms in the third line of (\ref{TeuAbIIEI}) produce three good spacetime terms. Moreover the $|\Omega\ns_4\abh|^2$-term scales with $\sigma$, which we view as a largeness factor (in fact, the $|\ns\abh|$-term also has a largeness factor of $\sigma$).

For $r\geq r_0$, the terms in the third line of (\ref{TeuAbIIEI}) are put on the right-hand side of~\eqref{eeA}. Note that the angular and zeroth order term produce have no contribution to the bulk for $r\geq R$ since $f$ is compactly supported.

\item For $r\leq r_0$, the cross-term in the final line of~\eqref{TeuAbIIEI} can be absorbed into the $|\ns_4\abh|^2$ and $|\Omega^{-1}\ns_3\abh|^2$-term because of the largeness factor of $\sigma$ (see previous step) and the favourable signs.

For $r\geq r_0$, we can apply Young's inequality with $\epsilon$ and absorb the resulting $\epsilon r^{-1}|\Omega^{-1}\ns_3\abh|^2$ using the good global $|r^{-1}\Omega^{-1}\ns_3\abh|^2$-term. The $\slashed{\nabla}_4$-term is put on the right. Note the $r$-weights here.

\end{itemize}
The spacetime term on the right hand side (\ref{eeA}) is now estimated by $C D\left(v_2-v_1\right) + C D$ exactly as in the case of $A$ in previous section using Theorem \ref{theo:gaugeinvariantoutgoing}. This time we may split the region $\mathcal{A}(v_1,v_2)$ into  $\mathcal{A}(v_1,v_2)=\mathcal{A}= \mathcal{A}^\prime_1 \cup \mathcal{A}^\prime_2 \cup \mathcal{A}^\prime_3$ where $\mathcal{A}^\prime_1 = \mathcal{A} \cap \{ u \geq v_2 - r_0^\star\}$, $\mathcal{A}^\prime_2 = \mathcal{A} \cap \{ v_1 - r_0^\star \leq u \leq v_2 - r_0^\star\}$ and $\mathcal{A}^\prime_3 = \mathcal{A} \cap \{ u \leq v_1-r_0^\star \}$, so that again the decomposition is disjoint up to a set of measure zero. The spacetime  term on the right hand side (\ref{eeA}) is still overestimated by integrating over the regions $\mathcal{A}_2^\prime \cup \mathcal{A}_3^\prime$ and those are indeed exactly handled as in the previous section. We conclude
\begin{align} \label{eeAconclude}
\mathbb{F} [\underline{A}](v_2) +  c \int_{v_1}^{v_2} dv  \mathbb{F} [\underline{A}](v)  \leq C \cdot \mathbb{F} [\underline{A}](v_1) +  C D\left(v_2-v_1\right) + C D   \, ,
\end{align}
from which the desired (\ref{Abingoingest}) follows.

\subsection{Uniform Boundedness of $\underline{A}$} \label{sec:ubab}
Recall that uniform boundedness of $A$ was a direct consequence of Theorem \ref{theo:gaugeinvariantoutgoing}, angular commutation and Sobolev embedding, see (\ref{coji}). 
For $\underline{A}$, the estimate (\ref{Abingoingest}) similarly implies 
\begin{align} 
\sup_{u \geq u_0, v \geq v_0} \int_{S^2_{u,v}} \sin \theta d\theta d\phi \cancel{\frac{1}{r}} |\Omega^{-2} \ablin r|^2 \lesssim \mathbb{F}[A] (v_0)  + \sup_{u_f} \mathbb{E}^2_{data} (u_f) \label{coji2} \, ,
\end{align}
first with the crossed factor of $\frac{1}{r}$. The factor $\frac{1}{r}$ can be removed as indicated if one also invokes the estimate (\ref{finalalphab}) along the outgoing cones. After trivial angular commutation and Sobolev embedding on spheres one finally obtains an $L^\infty$ bound for $\underline{A} \Omega^{-4} = \Omega^{-2} \ablin r$. We leave the standard details to the reader.

\pagebreak
\footnotesize
\bibliographystyle{IEEEtranS}
\bibliography{CLbib}
\end{document}